\newtheorem{theorem}{Theorem}
\newtheorem{lemma}{Lemma}
\newtheorem{fact}{Fact}
\newtheorem{definition}{Definition}
\newtheorem{corollary}{Corollary}
\newtheorem{proposition}{Proposition}
\newtheorem{claim}{Claim}
\newtheorem{assumption}{Assumption}
\newtheorem{remark}{Remark}
\newcommand{\defeq}{:=}
\newcommand{\norm}[1]{\left\lVert#1\right\rVert}
\newcommand{\norms}[1]{\lVert#1\rVert}
\newcommand{\normf}[1]{\left\lVert#1\right\rVert_{\textup{F}}}
\newcommand{\normop}[1]{\left\lVert#1\right\rVert_{\textup{op}}}
\newcommand{\normsop}[1]{\lVert#1\rVert_{\textup{op}}}
\newcommand{\normtr}[1]{\left\lVert#1\right\rVert_{\textup{tr}}}
\newcommand{\inprod}[2]{\left\langle#1, #2\right\rangle}
\newcommand{\eps}{\varepsilon}
\newcommand{\lam}{\lambda}
\newcommand{\R}{\mathbb{R}}
\newcommand{\N}{\mathbb{N}}
\newcommand{\diag}[1]{\textbf{\textup{diag}}\left(#1\right)}
\newcommand{\half}{\frac{1}{2}}
\newcommand{\1}{\mathbbm{1}}
\newcommand{\E}{\mathbb{E}}
\newcommand{\ME}{\mathcal{E}}
\newcommand{\Vol}{\textup{Vol}}
\newcommand{\Nor}{\mathcal{N}}
\newcommand{\Tr}{\textup{Tr}}
\newcommand{\opt}{\textup{OPT}}
\newcommand{\xset}{\mathcal{X}}
\newcommand{\ma}{\mathbf{A}}
\newcommand{\id}{\mathbf{I}}
\definecolor{burntorange}{rgb}{0.8, 0.33, 0.0}
\newcommand{\tO}{\widetilde{O}}
\newcommand{\nnz}{\textup{nnz}}
\newcommand{\Par}[1]{\left(#1\right)}
\newcommand{\Brack}[1]{\left[#1\right]}
\newcommand{\Brace}[1]{\left\{#1\right\}}
\newcommand{\Abs}[1]{\left|#1\right|}
\newcommand{\oracle}{\mathcal{O}}
\newcommand{\Sym}{\mathbb{S}}
\newcommand{\PSD}{\Sym_{\succeq \mzero}}
\newcommand{\PD}{\Sym_{\succ \mzero}}
\newcommand{\mzero}{\mathbf{0}}
\newcommand{\alla}{\mathcal{A}}
\newcommand{\talla}{\widetilde{\mathcal{A}}}
\newcommand{\my}{\mathbf{Y}}
\newcommand{\mb}{\mathbf{B}}
\newcommand{\0}{\mathbb{0}}
\newcommand{\msig}{\boldsymbol{\Sigma}}
\newcommand{\set}{\mathcal{K}}
\newcommand{\ctight}{c_{\textup{tight}}}
\newcommand{\Cset}{C_{\textup{set}}}
\newcommand{\rs}{r_\star}
\newcommand{\mm}{\mathbf{M}}
\newcommand{\allm}{\mathcal{M}}
\newcommand{\xs}{x_\star}
\newcommand{\tma}{\widetilde{\ma}}
\newcommand{\tx}{\tilde{x}}
\newcommand{\tmv}{\mathcal{T}_{\textup{mv}}}
\newcommand{\mx}{\mathbf{X}}
\newcommand{\blam}{\bar{\lam}}
\newcommand{\lamt}{\lam_{\textup{test}}}
\newcommand{\xt}{x_{\textup{test}}}
\newcommand{\At}{A_{\textup{test}}}
\newcommand{\bmy}{\overline{\my}}
\newcommand{\csparse}{C_{\textup{sparse}}}
\newcommand{\bw}{\bar{w}}
\newcommand{\bv}{\bar{v}}
\newcommand{\mw}{\mathbf{W}}
\newcommand{\ms}{\mathbf{S}}
\newcommand{\lap}{\mathbf{L}}
\newcommand{\idnoone}{\id_{V \setminus \1}}
\newcommand{\mc}{\mathbf{C}}
\newcommand{\rsc}{\rs^{\mc}}
\newcommand{\xsc}{\xs^{\mc}}
\newcommand{\xsetc}{\xset^{\mc}}
\newcommand{\olinopt}{\oracle_{\textup{lin-opt}}}
\newcommand{\mn}{\mathbf{N}}
\newcommand{\ball}{\mathbb{B}}
\newcommand{\tms}{\widetilde{\ms}}
\newcommand{\tmb}{\widetilde{\mb}}
\newcommand{\dd}{\textup{d}}
\newcommand{\tlap}{\widetilde{\lap}}
\newcommand{\proj}{\boldsymbol{\Pi}}
\newcommand{\mr}{\mathbf{R}}
\newcommand{\troute}{\mathcal{T}_{\textup{route}}}
\newcommand{\up}{\textup{up}}
\newcommand{\lo}{\textup{lo}}
\newcommand{\md}{\mathbf{D}}
\newcommand{\tg}{\tilde{g}}
\newcommand{\bO}{\Breve{O}}
\renewcommand{\O}{\widetilde{O}} 
\title{Linear-Sized Sparsifiers via \\ Near-Linear Time Discrepancy Theory}
\author{Arun Jambulapati\thanks{University of Washington, {\tt jmblpati@uw.edu}} \and Victor Reis\thanks{University of Washington, {\tt voreis@cs.washington.edu}} \and Kevin Tian\thanks{Microsoft Research, {\tt tiankevin@microsoft.com}}}
\date{}
\begin{document}

\maketitle

\begin{abstract}
Discrepancy theory has provided powerful tools for producing higher-quality objects which ``beat the union bound'' in fundamental settings throughout combinatorics and computer science. However, this quality has often come at the price of more computationally-expensive algorithms. We introduce a new framework for bridging this gap, by allowing for the efficient implementation of discrepancy-theoretic primitives. Our framework repeatedly solves regularized optimization problems to low accuracy to approximate the partial coloring method of \cite{Rothvoss14}, and simplifies and generalizes recent work of \cite{JainSS23} on fast algorithms for Spencer's theorem. In particular, our framework only requires that the discrepancy body of interest has exponentially large Gaussian measure and is  expressible as a sublevel set of a symmetric, convex function. We combine this framework with new tools for proving Gaussian measure lower bounds to give improved algorithms for a variety of sparsification and coloring problems.

As a first application, we use our framework to obtain an $\widetilde{O}(m \cdot \epsilon^{-3.5})$ time algorithm for constructing an $\epsilon$-approximate spectral sparsifier of an $m$-edge graph, matching the sparsity of \cite{BatsonSS14} up to constant factors and improving upon the $\widetilde{O}(m \cdot \epsilon^{-6.5})$ runtime of \cite{LeeS17}. We further give a state-of-the-art algorithm for constructing graph ultrasparsifiers and an almost-linear time algorithm for constructing linear-sized degree-preserving sparsifiers via discrepancy theory; in the latter case, such sparsifiers were not known to exist previously. We generalize these results to their analogs in sparsifying isotropic sums of positive semidefinite matrices. Finally, to demonstrate the versatility of our technique, we obtain a nearly-input-sparsity time constructive algorithm for Spencer's theorem (where we recover a recent result of \cite{JainSS23}).
\end{abstract}

\newpage

\pagenumbering{gobble}
\setcounter{tocdepth}{2}
{
\tableofcontents
}
\newpage
\pagenumbering{arabic}

\section{Introduction}\label{sec:intro}

Throughout the history of theoretical computer science and combinatorics, the development of discrepancy theory has yielded techniques for producing high-quality objects which minimize deviation from typical behavior. As a well-known example which has received significant recent attention, Spencer's ``six standard deviations suffice'' theorem \cite{Spencer85} says that for any matrix $\ma \in \{0, 1\}^{n \times n}$, whose rows indicate $n$ sets $S_i \subseteq [n]$, there exists at least one coloring $x \in \{-1, 1\}^n$ such that $\norm{\ma x}_\infty \le 6\sqrt{n}$; we remark that the problem has a combinatorial interpretation as choosing colors to minimize the largest color discrepancy in any set. Notably, this result improves upon the quality of a random coloring $x$, which achieves $\norm{\ma x}_\infty = \Theta(\sqrt{n \log n})$ with high probability for random set systems, by standard concentration and anti-concentration arguments. Discrepancy-theoretic arguments of a similar nature have led to significant progress in a variety of fields spanning computational geometry, complexity theory, approximation algorithms, numerical analysis, graph theory, and experimental design \cite{Matousek99, Chazelle01, BatsonSS14, HarshawSSZ19, ZadehBGNSS22}. 

However, the development of computationally-efficient algorithms for discrepancy minimization is comparatively rather nascent. To use Spencer's theorem as an example, while \cite{Spencer85} shows the existence of low-discrepancy colorings, these colorings are highly atypical and it is not immediately clear how to find one algorithmically. Roughly a decade ago, a sequence of works \cite{Bansal10, LovettM15, HarveySS14, Rothvoss14, EldanS18} gave \emph{constructive discrepancy} algorithms in settings including Spencer's set coloring problem and variants thereof, settling the polynomial-time computability of high-quality colorings of set systems. The works of \cite{Rothvoss14, EldanS18} in particular provided very general frameworks for discrepancy minimization. For example, given a symmetric, convex ``discrepancy body'' $\set$ (e.g.\ $\set \defeq \{x \in \R^n \mid \norm{\ma x}_\infty = O(\sqrt n)\}$ in Spencer's setting) with an $\exp(-O(n))$ Gaussian measure lower bound, \cite{Rothvoss14} showed that the nearest point in $\set \cap [-1, 1]^n$ to a random Gaussian vector has a constant fraction of tight hypercube constraints with high probability. A similar statement was shown by \cite{EldanS18} for the solution to a suitable linear program, and these ``partial coloring by Gaussian rounding'' subroutines naturally induce full discrepancy minimization algorithms via recursion.

While the constructive frameworks of \cite{Bansal10, LovettM15, HarveySS14, Rothvoss14, EldanS18} are elegant and simple to state, their straightforward implementation requires the black-box use of powerful convex programming primitives such as high-accuracy linear or semidefinite programming solvers, which incur substantial runtime overhead. This has led to a recurring theme in algorithmic discrepancy theory: attaining higher object quality may come at the price of worse computational efficiency.

A recent exciting work by \cite{JainSS23} was able to overcome this hurdle for the specific setting of Spencer's theorem, by leveraging \emph{approximate} linear programming solvers running in near-linear time. Specifically, \cite{JainSS23} designed an algorithm which produces a coloring $x \in \{-1, 1\}^n$ satisfying $\norm{\ma x}_\infty = O(\sqrt n)$ in time $\tO(\nnz(\ma) + n)$, by efficiently solving the linear programs required by the \cite{EldanS18} framework and quantifying their approximation tolerance. However, the \cite{JainSS23} analysis relied on properties tailored to the setting of Spencer's theorem, specifically an exponential lower bound on the probability that a random coloring achieves $\norm{\ma x}_\infty = O(\sqrt n)$. Such a statement is not known to hold in various other discrepancy-theoretic settings of interest, such as the Koml\'os problem, even when the corresponding Gaussian measure lower bound required by the \cite{Rothvoss14, EldanS18} frameworks is known. Our work is motivated by the current lack of a general-purpose, approximation-tolerant framework for discrepancy minimization under the minimal requirements of symmetry, convexity, and a Gaussian measure lower bound, which may pave the way towards near-linear time algorithms in discrepancy-theoretic settings beyond Spencer's theorem.

\subsection{Our results}\label{ssec:results}

\paragraph{Discrepancy minimization framework.} We introduce a new \emph{approximation-tolerant} variant of the Gaussian rounding framework of \cite{Rothvoss14}. Before stating its guarantees, we first briefly recall the main result of \cite{Rothvoss14}, deferring a more extended discussion to Section~\ref{sec:framework}. Given a symmetric convex set\footnote{Throughout the exposition in this section, we let the dimensionality of our coloring variable $x$ be denoted $m$ for consistency with our graph-theoretic applications, and state the implications of our results in Spencer's setting (whose dimension is typically denoted by $n$) consistently in an abuse of notation.} $\set \subseteq \R^m$ with $\gamma_m(\set) = \exp(-O(m))$ (see Section~\ref{sec:prelims} for notation), Theorem 2 of \cite{Rothvoss14} shows that with high probability over a randomly sampled Gaussian vector $g \sim \Nor(\0_m, \id_m)$, letting 
\[\xs \defeq \arg\min_{x \in \set \cap [-1, 1]^m} \norm{x - g}_2,\]
we have $|\{i \mid |[\xs]_i| = 1\}| = \Omega(m)$. In other words, by rounding $g$ to the intersection of $\set$ and $[-1, 1]^m$, a constant fraction of the hypercube constraints $|[\xs]_i| \le 1$ are saturated. When $\set$ is taken to be a discrepancy body, e.g.\ $\set = \{x \in \R^m \mid \norm{\ma x}_\infty = O(\sqrt m)\}$ in Spencer's setting, the result of \cite{Rothvoss14} implies that we can freeze a constant fraction of colors at $\pm 1$ without incurring much discrepancy, and recurse on the uncolored coordinates. High-precision computation of $\xs$ is further polynomial-time tractable, as it is the solution to a convex optimization problem.

Our approximation-tolerant framework also applies to any symmetric, convex $\set \subseteq \R^m$ with large $\gamma_m(\set)$ (as in \cite{Rothvoss14}), but is most easily described when
\[\set = \Brace{x \in \R^m \mid f(x) \le \rho},\]
where $\set$ is expressed as a sublevel set of a symmetric, convex function $f: \R^m \to \R$, and $\rho \ge 0$ is a ``discrepancy radius.'' All symmetric convex sets trivially satisfy this description by letting $\rho = \infty$ and $f$ be the indicator function of the set, but in many applications of interest (explored throughout the paper), $\set$ is naturally characterized as a sublevel set with finite $\rho$. In Section~\ref{sec:framework}, we show that for a parameter $\beta \in (0, 1)$, solving $\tO(1)$ regularized problems of the form\footnote{Here and throughout the paper, we use the notation $\O(\cdot)$ to hide polynomial factors in $\log n$ and $\log(\frac{1}{\eps})$ and $\bO(\cdot)$ to hide polynomial factors in $\log \log n$ and $\log\log(\frac{1}{\eps})$. We defer our notational conventions to Section~\ref{sec:prelims}.}
\[\min_{x \in [-1, 1]^m} f(x) + \lam\norm{x - g}_2^2\]
to additive accuracy $\Theta(\rho\beta^2)$ suffices to produce a point $x \in \set \cap [-1, 1]^m$ with $|\{i \mid |x_i| \ge 1 - \beta\}| = \Omega(m)$. In other words, by approximately solving a sequence of regularized optimization problems to accuracy $\Theta(\rho\beta^2)$, our framework yields a point with low discrepancy and many nearly-saturated hypercube constraints. In applications such as Spencer's setting, this additive accuracy is fairly generous: taking $\beta$ to be inverse-polylogarithmic means we only need to optimize the regularized objective to error $\approx \sqrt m$, which is efficiently implementable in input-sparsity time (after applying reductions from \cite{JainSS23}) via stochastic first-order methods. Furthermore, these nearly-saturated coordinates of our approximate solution $x$ can then be randomly rounded to exactly be in $\{\pm 1\}$ without incurring much additional discrepancy. In our graph-theoretic applications of our approximation-tolerant framework, we handle the nearly-saturated coordinates recursively.

As a direct application of our framework in Section~\ref{sec:framework}, we show how to combine it with existing stochastic optimization methods from \cite{CarmonJST20} to recover the main result of \cite{JainSS23} through arguably a simpler approach, which does not rely on structural properties of Spencer's problem beyond a Gaussian measure lower bound. This result is provided in Section~\ref{sec:spencer}.

\paragraph{Linear-sized sparsifiers.} We next illustrate how our framework allows us to leverage powerful discrepancy-theoretic tools to design fast algorithms for the problem of graph sparsification. Given a parameter $\eps \in (0, 1)$ and a graph $G$ on $n$ vertices and $m$ edges, with Laplacian $\lap_G \in \R^{n \times n}$, the goal is to find a reweighted subgraph $H$ of $G$ with much fewer edges and Laplacian $\lap_H$, satisfying
\[(1 - \eps)\lap_G \preceq \lap_H \preceq (1 + \eps)\lap_G.\]
A well-known result of \cite{SpielmanS11} shows that randomly sampling edges of $G$ proportionally to their effective resistances yields a reweighted subgraph $H$ satisfying the above guarantee with $O(n\log n \cdot \eps^{-2})$ edges. The proof of correctness of such a sampling strategy is via a direct application of a matrix Chernoff concentration bound. Further, by using sketching techniques and specialized linear system solvers, the \cite{SpielmanS11} sparsifiers can be constructed in nearly-linear time $\tO(m)$. However, this size bound is not optimal: a breakthrough work by \cite{BatsonSS14} gave a polynomial-time algorithm for constructing a spectral sparsifier $H$ with $O(n \cdot \eps^{-2})$ edges, later shown to be tight \cite{CarlsonKST19}. We call such a subgraph $H$ a linear-sized sparsifier (an accurate description for constant $\eps$).

The \cite{BatsonSS14} sparsification result is discrepancy-theoretic in nature, as it improves the quality of a randomly sampled sparsifier (where the quality is the number of edges). Interestingly, however, the techniques used to construct linear-sized sparsifiers are quite different than those employed by the aforementioned works \cite{Bansal10, LovettM15, HarveySS14, Rothvoss14, EldanS18}. Indeed, the original announcement of \cite{BatsonSS14} predates all of these works, so at the time Spencer's theorem was not even algorithmic. Instead, the sparsifier of \cite{BatsonSS14} is constructed one edge at a time, guided by the careful use of a potential function which controls the discrepancy. A sequence of follow-up works \cite{ZhuLO15, LeeS18} culminated in the current state-of-the-art algorithm of \cite{LeeS17}, which obtains a linear-sized sparsifier in time $\tO(m \cdot \eps^{-6.5})$, using structured semidefinite program solvers to efficiently implement updates against a modification of \cite{BatsonSS14}'s potential based on the trace of a matrix exponential.

We give an improved algorithm for linear-sized sparsification via our framework. We prove the following in Section~\ref{sec:bss} as a special case of a result on sparsifying isotropic matrix sums (Theorem~\ref{thm:fast_bss}). Our result almost quadratically improves upon the $\eps$ dependence of \cite{LeeS17}.

\begin{restatable}{theorem}{restatebssgraph}\label{thm:bss_graph}
Given a graph $G = (V, E, w_G)$ with $m = |E|$, $n = |V|$, and $\eps \in (0, 1)$, there is a randomized algorithm which in time $\bO(m\log^{4}(m) \log (\frac 1 \eps) \cdot \eps^{-3.5})$ returns $w \in \R^E_{\ge 0}$ satisfying $\nnz(w) = O(\frac n {\eps^2})$, such that with high probability in $n$,
\[(1 - \eps)\lap_G \preceq \sum_{e \in E} w_e b_e b_e^\top \preceq (1 + \eps)\lap_G.\]
\end{restatable}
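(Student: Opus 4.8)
The plan is to derive Theorem~\ref{thm:bss_graph} as a corollary of the more general isotropic matrix-sum sparsification result (Theorem~\ref{thm:fast_bss}), so the bulk of the work is a reduction rather than a fresh argument. First I would recall that the edge Laplacian decomposes as $\lap_G = \sum_{e \in E} w_{G,e} b_e b_e^\top$, so if we rescale each rank-one term by its weight and define $v_e \defeq \sqrt{w_{G,e}} \, b_e$, then $\lap_G = \sum_{e \in E} v_e v_e^\top$. We then whiten by the pseudoinverse: set $u_e \defeq \lap_G^{+/2} v_e$, so that $\sum_{e \in E} u_e u_e^\top = \proj$, the orthogonal projection onto the range of $\lap_G$ (the cut space, i.e. the subspace orthogonal to the all-ones vector on each connected component). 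This is precisely the ``isotropic'' normalization required to invoke Theorem~\ref{thm:fast_bss}: the vectors $u_e$ sum (in the outer-product sense) to the identity on an $(n-1)$-dimensional subspace (assuming $G$ connected; otherwise handle each component separately and add up the edge counts).

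Next I would feed the family $\{u_e\}_{e \in E}$ and the accuracy parameter $\eps$ into Theorem~\ref{thm:fast_bss}, which should output nonnegative weights $w \in \R^E_{\ge 0}$ with $\nnz(w) = O(n \eps^{-2})$ such that $(1-\eps)\proj \preceq \sum_{e} w_e u_e u_e^\top \preceq (1+\eps)\proj$, within the stated runtime $\bO(m \log^4(m)\log(\tfrac{1}{\eps}) \eps^{-3.5})$. The final step is to conjugate back: multiplying the sandwich inequality on both sides by $\lap_G^{1/2}$ (restricted to the range) converts $\proj$ back to $\lap_G$ and $u_e u_e^\top$ back to $w_{G,e} b_e b_e^\top$, yielding
\[
(1-\eps)\lap_G \preceq \sum_{e \in E} w_e w_{G,e} b_e b_e^\top \preceq (1+\eps)\lap_G,
\]
so that reporting the weights $w_e w_{G,e}$ (a harmless renaming, absorbed into the claimed output $w$) gives exactly the guarantee of the theorem, with sparsity and high-probability correctness inherited verbatim from Theorem~\ref{thm:fast_bss}.

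The one genuinely delicate point in this reduction — and the step I expect to require the most care — is that Theorem~\ref{thm:fast_bss} is stated for an \emph{abstract} isotropic sum, but its runtime is in terms of the ambient sparsity/representation of the input vectors, and here the whitened vectors $u_e = \lap_G^{+/2} v_e$ are dense even though the $b_e$ are $2$-sparse. The resolution is standard but must be spelled out: one never forms the $u_e$ explicitly. Instead, whenever Theorem~\ref{thm:fast_bss}'s algorithm needs to apply an operator of the form $\sum_e c_e u_e u_e^\top$ to a vector, or to compute quadratic forms $u_e^\top \mm u_e$, one routes these through approximate Laplacian solves (à la \cite{SpielmanS11, LeeS17}): each such operation reduces to a near-linear-time Laplacian system solve plus sparse matrix-vector products with the incidence matrix. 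Tracking that these approximate solves (i) preserve the spectral guarantees up to lower-order error that can be folded into $\eps$, and (ii) do not blow up the $\eps^{-3.5}$ and $\log$ factors, is the technical heart of deducing the graph statement; everything else is bookkeeping. I would also note that the effective-resistance preprocessing of \cite{SpielmanS11} can be used to first coarsen $G$ to $\tO(n\eps^{-2})$ edges, which is what keeps the leading $m$ in the runtime from being multiplied by extra polynomial-in-$\eps^{-1}$ factors; this preprocessing contributes only an additive $\tO(m)$ term and is dominated.
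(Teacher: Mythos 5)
Your reduction is essentially the paper's proof in Section~\ref{ssec:graph_bss}: whiten the weighted rank-one edge terms by the Laplacian pseudoinverse, invoke Theorem~\ref{thm:fast_bss} on the resulting approximately isotropic family, and conjugate back, with the near-linear Laplacian solver of Proposition~\ref{prop:kmp} supplying all the implicit matrix-vector products. The only cosmetic differences are that the paper bakes the approximate operator $\tlap$ directly into the definition of the matrices $\mm_e \in \R^{E \times E}$ (working in edge space rather than your vertex space $\R^V$, an isospectral choice) and explicitly tracks that $\allm(\1_E)$ is a $(1 \pm O(\eps))$-approximation of the projection $\proj_\lap$, whereas you treat exact whitening as the conceptual step and the approximate solve as an implementation detail — same content, slightly different bookkeeping. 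One small inaccuracy in your closing aside: the initial effective-resistance coarsening (the paper's Lemma~\ref{lem:lose_log}) does not prevent $m$ from being multiplied by $\eps^{-3.5}$ in the stated runtime — the bound really is $\bO(m\log^4(m)\log(\tfrac{1}{\eps})\eps^{-3.5})$, since each $\tlap$ application costs $\Omega(m)$ regardless of the sparsity of the current reweighting; the warm start's role is instead to cap the number of partial-coloring phases at $O(\log\log(n/\delta))$.
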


We remark that Theorem~\ref{thm:bss_graph} follows from an instantiation of our framework in Section~\ref{sec:framework}, using an approximate Laplacian linear system solver of \cite{KoutisMP11} and a recent approximate semidefinite program solver from \cite{JambulapatiT23} to solve the resulting subproblems. Both of these results lie in active areas of research, and improvements therein immediately imply faster algorithms for linear-sized sparsification via our framework. We find it a promising proof-of-concept that we rely on more conventional discrepancy-theoretic techniques (motivated by \cite{Rothvoss14, RR20}) to establish Theorem~\ref{thm:bss_graph}, eschewing conventional wisdom that the use of such techniques comes at a computational price.

To our knowledge, no analog of the structural fact used in \cite{JainSS23} (where the set of sparse reweightings which spectrally approximate $G$ has inverse-exponentially large measure) is known in the setting of spectral sparsification. In demonstrating Theorem~\ref{thm:bss_graph}, it is thus important that our framework only relies on measure lower bounds known for the relevant operator norm balls.

\paragraph{Ultrasparsifiers.} Ultrasparsifiers were introduced in \cite{SpielmanT04} to dramatically reduce the number of edges in a graph at a cost of a larger approximation factor, and can be viewed as ``one-sided $\eps > 1$'' variants of the sparsifiers in our earlier discussion. We define ultrasparsifiers as follows.

\begin{definition}[Graph ultrasparsifiers]\label{def:ultra}
Let $G$ be a graph on $n$ vertices. A graph $H$ is an $(\kappa, \ell)$-ultrasparsifier of $G$ if it has $n-1+\frac{n}{\ell}$ edges and satisfies
\[
\lap_H \preceq \lap_G \preceq \kappa \lap_H.
\]
\end{definition}

In general, taking $\ell \to n$ and the fact that no tree is better than an $O(n)$ approximation of a clique implies the tightest ultrasparsifiers we can hope for in all regimes must have $\kappa = \Omega(\ell)$. The first construction of graph ultrasparsifiers was achieved by \cite{SpielmanT04}, who gave a nearly-linear time algorithm to construct $(\ell\log^{O(1)}(n), \ell)$-ultrasparsifiers for any $\ell \ge 1$. By applying a matrix Chernoff-based random sampling argument, \cite{KoutisMP11} then gave a simpler construction of $(\bO(\ell\log^{2}(n)), \ell)$-ultrasparsifiers in nearly-linear time. Further, using techniques inspired by the \cite{BatsonSS14} potential-based argument, higher-quality ultrasparsifiers are known to be constructible in polynomial time: \cite{KollaMST10} constructed $(\bO(\ell\log(n)), \ell)$-ultrasparsifiers based on state-of-the-art low-stretch spanning trees, and recently \cite{JambulapatiS21} constructed $(\ell^{1+o(1)}\log^{o(1)}(n), \ell)$-ultrasparsifiers by going beyond the low-stretch spanning tree framework. To the best of our knowledge, it is unknown how to adapt the tools developed for fast linear-sized sparsification in \cite{ZhuLO15, LeeS18, LeeS17} to the ultrasparsifier setting, limiting the algorithmic uses of the higher-quality ultrasparsifiers of \cite{KollaMST10, JambulapatiS21}.

In Section~\ref{sec:ultra}, we prove Theorem~\ref{thm:ultra}, a generalization of Theorem~\ref{thm:bss_graph} which extends our discrepancy minimization techniques to the ultrasparsifier setting. In fact, Theorem~\ref{thm:ultra} follows as a corollary of Theorem~\ref{thm:bss_graph} in light of a new discrepancy-theoretic result we prove as Theorem~\ref{thm:gen_rr20}. We then apply Theorem~\ref{thm:ultra} to obtain the following ultrasparsifier construction, matching the guarantees of \cite{KollaMST10, JambulapatiS21} in nearly-linear time and potentially paving the way for their use in fast algorithms.

\begin{restatable}{corollary}{restateultrafast}\label{cor:ultra_fast}
Given a graph $G = (V, E, w_G)$ with $m = |E|$, $n = |V|$, and $\ell \ge 1$, there is a randomized algorithm which in time $\bO(m\log^4(m))$ returns a $(\min(\bO(\ell\log(n)), \ell^{1 + o(1)}\log^{o(1)}(n)), \ell)$-ultrasparsifier of $G$, with high probability in $n$.
\end{restatable}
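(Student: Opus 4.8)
The plan is to derive Corollary~\ref{cor:ultra_fast} by combining the ultrasparsifier-flavored discrepancy result Theorem~\ref{thm:ultra} (a corollary of Theorem~\ref{thm:bss_graph} via the generalized rounding result Theorem~\ref{thm:gen_rr20}) with the known high-quality-but-only-polynomial-time ultrasparsifier constructions of \cite{KollaMST10, JambulapatiS21}. The standard recipe for making slow high-quality sparsification primitives fast is a two-phase preprocessing-then-sparsify scheme: first reduce the edge count of $G$ from $m$ to something nearly linear in $n$ using a cheap nearly-linear-time primitive, and only then invoke the expensive primitive on the now-small graph, so that its polynomial (in the number of edges) runtime becomes nearly-linear in the original $m$.

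Concretely, I would proceed as follows. First, apply a fast spectral sparsification routine --- for instance the Spielman--Srivastava effective-resistance sampling \cite{SpielmanS11}, or better, an application of Theorem~\ref{thm:bss_graph} with constant $\eps$ --- to $G$ in time $\bO(m \log^4(m))$ to obtain a graph $G'$ with $O(n)$ edges satisfying $\frac12 \lap_G \preceq \lap_{G'} \preceq \frac32 \lap_G$ (any constant-factor two-sided approximation suffices). Second, run the expensive ultrasparsifier construction of \cite{KollaMST10} (giving a $(\bO(\ell \log n), \ell)$-ultrasparsifier) and/or \cite{JambulapatiS21} (giving an $(\ell^{1+o(1)} \log^{o(1)} n, \ell)$-ultrasparsifier) on $G'$; since $G'$ has only $O(n) = O(m)$ edges, their polynomial runtime is now $\bO(m\log^4(m))$ or faster after absorbing the reduction. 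Third, observe that an ultrasparsifier $H$ of $G'$ is also an ultrasparsifier of $G$ up to a constant loss in the approximation factor $\kappa$: if $\lap_H \preceq \lap_{G'} \preceq \kappa \lap_H$, then $\lap_H \preceq 2 \lap_G$ and $\lap_G \preceq 2\lap_{G'} \preceq 2\kappa \lap_H$, and rescaling $H$ by a constant restores the one-sided normalization $\lap_H \preceq \lap_G \preceq O(\kappa) \lap_H$ of Definition~\ref{def:ultra} while only changing $\kappa$ by a constant. The edge count $n - 1 + \frac n\ell$ is exactly as in the inner construction since $G'$ has the same vertex set as $G$. Taking the better of the two inner constructions gives the $\min(\bO(\ell \log n), \ell^{1+o(1)}\log^{o(1)}(n))$ bound.

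Alternatively, and perhaps cleaner given the framing of the excerpt, I would derive the corollary directly from Theorem~\ref{thm:ultra}: that theorem is stated as the ultrasparsifier analog of Theorem~\ref{thm:bss_graph} and so should already package a nearly-linear-time reduction producing a reweighting supported on $O(n + n/\ell)$ edges achieving the one-sided spectral bound with the stated $\kappa$; Corollary~\ref{cor:ultra_fast} then follows simply by instantiating the runtime bound of Theorem~\ref{thm:ultra} and quoting the approximation guarantees that Theorem~\ref{thm:gen_rr20} transfers from the \cite{KollaMST10, JambulapatiS21} ``low-stretch''-type quantities into the discrepancy framework. In either route the conceptual content lives in Theorem~\ref{thm:ultra} / Theorem~\ref{thm:gen_rr20}, and Corollary~\ref{cor:ultra_fast} is a bookkeeping statement: count edges, chase the constant in $\kappa$, and add up the runtimes.

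I expect the main obstacle to be making sure the approximation-factor arithmetic goes through without an unintended $\mathrm{poly}(\log n)$ or $\eps$-type blowup --- in particular, verifying that the preprocessing sparsifier's two-sided constant factor composes cleanly with the inner $(\kappa,\ell)$ guarantee to again yield a genuine ultrasparsifier in the sense of Definition~\ref{def:ultra} (one-sided, with the exact edge count $n-1+n/\ell$), rather than something off by the preprocessing distortion. A secondary point to be careful about is that the inner constructions of \cite{KollaMST10, JambulapatiS21} are stated with their own assumptions (e.g. on weights or on access to low-stretch spanning trees), so one must check those are met by $G'$ --- but since $G'$ is an explicit $O(n)$-edge weighted graph this is routine. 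The runtime accounting itself is immediate once the reduction is in place: the $\bO(m \log^4(m))$ term comes from the preprocessing step (Theorem~\ref{thm:bss_graph} with constant $\eps$), and the inner construction contributes only a lower-order polynomial in $n \le m$.
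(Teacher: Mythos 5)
Your primary route has a genuine gap: the ``expensive'' ultrasparsifier constructions of \cite{KollaMST10, JambulapatiS21} are only polynomial time in the number of edges, and that polynomial is not $O(m\cdot\mathrm{polylog})$ --- they proceed via BSS-style potential updates and are nowhere near nearly-linear even on an $O(n)$-edge graph. The paper's introduction explicitly flags this: ``it is unknown how to adapt the tools developed for fast linear-sized sparsification\ldots to the ultrasparsifier setting, limiting the algorithmic uses of the higher-quality ultrasparsifiers of \cite{KollaMST10, JambulapatiS21}.'' So sparsifying $G$ down to $O(n)$ edges and then running those inner constructions does not yield a $\bO(m\log^4 m)$ algorithm; it yields an algorithm with a large $\mathrm{poly}(n)$ term, which is exactly what the corollary is supposed to avoid. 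Your composition lemma (that an ultrasparsifier of a constant-factor spectral approximation is an ultrasparsifier up to constants) is fine but does not rescue the runtime.

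What the paper actually does is borrow from \cite{KollaMST10, JambulapatiS21} only their \emph{low-distortion subgraph} constructions (a low-stretch spanning tree via \cite{AbrahamN19}, or the Proposition~\ref{prop:js21_subgraph} subgraph), both of which run in nearly-linear time --- not their ultrasparsifier algorithms. After reducing to $m=O(n)$ via Theorem~\ref{thm:bss_graph}, it picks such a $\sigma$-distortion subgraph $H$, forms the preconditioner $\kappa\lap_H + \lap_G$, and sets $\mm_e = (\kappa\lap_H+\lap_G)^{\dagger/2}(w_e b_e b_e^\top)(\kappa\lap_H+\lap_G)^{\dagger/2}$ and $\mn = (\kappa\lap_H+\lap_G)^{\dagger/2}(\kappa\lap_H)(\kappa\lap_H+\lap_G)^{\dagger/2}$, so that $\mn + \sum_e \mm_e = \idnoone$ and the trace bound $\Tr(\sum_e\mm_e) \le \sigma/\kappa$ holds. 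Theorem~\ref{thm:ultra}, whose sparsity is driven by this trace bound (via the new measure bound Theorem~\ref{thm:gen_rr20}), then hands back an $O(\sigma/\kappa)$-sparse reweighting that is a constant spectral approximation to $\kappa\lap_H + \lap_G$, hence an $O(\kappa)$ approximation to $\lap_G$; choosing $\kappa$ and reparameterizing in $\ell$ gives the corollary. Your second, vaguer route gestures in this direction by calling it ``bookkeeping,'' but it skips the actual mechanism --- the low-distortion-subgraph preconditioning and the trace budget $\tau = \sigma/\kappa$ feeding into Theorem~\ref{thm:ultra} --- which is where the content of the corollary lives.
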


\paragraph{Degree-preserving sparsifiers.} As a final application of our framework, we turn our attention to degree-preserving sparsification, a primitive introduced by \cite{ChuGPSSW18} which asks for a spectral sparsifier that also preserves the weighted degrees of all vertices. Degree-preserving sparsification has arisen as a natural middle ground between undirected graph sparsification and directed graph sparsification. Concretely, Eulerian sparsifiers of Eulerian graphs have become the de facto notion of directed graph sparsification due to e.g.\ \cite{CohenKPPSV16, PengS22}, which show how to reduce linear system solving in directed graph Laplacians to Eulerian sparsification. A technique known as short-cycle decomposition, introduced in \cite{ChuGPSSW18} and refined in \cite{LiuSY19, ParterY19}, was previously used to obtain efficient constructions of both (undirected) degree-preserving sparsifiers and (directed) Eulerian sparsifiers by repeatedly performing degree-preserving operations. This motivates the study of degree-preserving sparsification as a stepping stone towards more complex notions of sparsifiers.

Perhaps surprisingly, even the existence of linear-sized degree-preserving sparsifiers (with $O(n \cdot \eps^{-2})$ edges) is not known. The state-of-the-art existential result in the literature is degree-preserving sparsifiers with $O(n\log^2 n \cdot \eps^{-2})$ edges, constructible in polynomial time \cite{ChuGPSSW18, ParterY19}; we note that \cite{ParterY19} also gives an almost-linear time construction of $O(n\log^3 n \cdot \eps^{-2})$-sized degree-preserving sparsifiers.\footnote{We additionally remark that a size bound of $O(n\log n \cdot \eps^{-2})$ is known as folklore in the community: it follows from combining the short-cycle decomposition of \cite{ChuGPSSW18} with \cite{BatsonSS14}.} In Section~\ref{sec:degree}, we give the following improved degree-preserving sparsification result.

\begin{restatable}{theorem}{restatemaindegree}\label{thm:main_degree}
Given a graph $G = (V, E, w_G)$ with $m = |E|$, $n = |V|$, and $\eps \in (0, 1)$, there is a randomized algorithm which returns $w \in \R^E_{\ge 0}$ satisfying $\nnz(w) = O(\frac n {\eps^2})$, $|\mb|^\top w = |\mb|^\top w_G$ in time
\[O\Par{(m + \troute(\alpha))\cdot \alpha^2 \cdot \textup{poly}\Par{\frac{\log m}{\eps}}},\]
for any $\alpha \ge 1$ following the notation of Definition~\ref{def:route}, such that with high probability in $n$,
\[(1 - \eps)\lap_G \preceq \sum_{e \in E} w_e b_e b_e^\top \preceq (1 + \eps)\lap_G.\]
\end{restatable}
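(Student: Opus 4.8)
The plan is to realize degree-preserving sparsification as an instance of the approximation-tolerant partial coloring framework of Section~\ref{sec:framework}, run \emph{inside the subspace of degree-preserving reweightings} and iterated until the support shrinks to $O(n\eps^{-2})$. Observe that if we only ever replace the current weight vector $w$ by $w + \delta$ for $\delta$ in the kernel of $|\mb|^\top$, then the weighted degree vector $|\mb|^\top w$ never changes; so it suffices to maintain $w \ge \0$, a spectral approximation guarantee, and a steadily shrinking support, while always perturbing within $\ker|\mb|^\top$. The outer loop is the usual one: in round $t$ we produce a perturbation that (i) lies in a degree-preserving subspace, (ii) lies in $[-1,1]$ per coordinate after a change of variables so that $w \ge \0$ is preserved, (iii) saturates a constant fraction of coordinates, each saturation zeroing out at least one edge, and (iv) moves $\lap_G^{-1/2}(\lap_{w} - \lap_{w'})\lap_G^{-1/2}$ by at most $\rho_t$ in operator norm, where $\rho_t$ decays geometrically so that $\sum_t \rho_t \le \eps$. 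Summing (iii) over the rounds drives the support to $O(n\eps^{-2})$, and (iv) gives the spectral guarantee by telescoping; the loop must stop near $O(n\eps^{-2})$ edges because below that threshold the Gaussian measure lower bound required for partial coloring fails, exactly as in Theorem~\ref{thm:bss_graph}.

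\textbf{The degree-preserving subspace.} The key structural ingredient is the routing object of Definition~\ref{def:route} with parameter $\alpha$: it decomposes the current graph, at cost $\troute(\alpha)$, into a collection $\mathcal{C}$ of short even cycles together with a leftover edge set that the routing guarantees is small (and shrinking with the support). Each even cycle $C \in \mathcal{C}$ carries a one-dimensional family of degree-preserving perturbations — alternate $\pm c$ around $C$, which cancels at every vertex (odd cycles are combined in pairs or deferred to the leftover set) — and scaling up to $c_C \defeq \min_{e \in C} w_e$ keeps $w \ge \0$. Collecting one scalar variable $t_C \in [-1,1]$ per cycle and the spectral-effect matrices $\mm_C \defeq c_C \sum_{e_i \in C} (-1)^i \lap_G^{-1/2} b_{e_i} b_{e_i}^\top \lap_G^{-1/2}$, the subproblem is to choose $t \in [-1,1]^{\mathcal{C}}$ with $\Omega(|\mathcal{C}|)$ saturated coordinates and $\normop{\sum_C t_C \mm_C} \le \rho_t$. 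Since each saturated $t_C = \pm 1$ sends the minimum-weight edge of $C$ on the appropriate parity class to weight zero, saturating $\Omega(|\mathcal{C}|)$ coordinates removes $\Omega(|\mathcal{C}|) = \Omega((m_t - (\text{leftover}))/\alpha)$ edges, which yields the round count $\tO(\alpha)$. The discrepancy body $\Brace{t \mid \normop{\sum_C t_C \mm_C} \le \rho_t}$ is a symmetric convex operator-norm sublevel set, so its Gaussian measure is lower bounded by the same matrix-concentration tools behind Theorem~\ref{thm:fast_bss}; the parameter $\alpha$ enters only through the congestion of the routing, which controls the matrix variance $\normop{\sum_C \mm_C^2}$ governing the bound, and hence how small $\rho_t$ may be taken while retaining measure $\exp(-O(|\mathcal{C}|))$.

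\textbf{Implementing the rounds and accounting.} Feeding $f(t)$ equal to the (rescaled) operator-norm function and radius $\rho = \rho_t$ into the framework of Section~\ref{sec:framework}, each round reduces to solving $\tO(1)$ regularized problems $\min_{t \in [-1,1]^{\mathcal{C}}} f(t) + \lam\norm{t - g}_2^2$ to additive accuracy $\Theta(\rho_t\beta^2) = \tilde\Theta(\eps/\alpha)$ with $\beta$ inverse-polylogarithmic, which we carry out with the approximate semidefinite program solver of \cite{JambulapatiT23}, using the approximate Laplacian solver of \cite{KoutisMP11} to apply $\lap_G^{-1/2}$, exactly as in the proof of Theorem~\ref{thm:bss_graph}; the near-saturated coordinates $|t_C| \ge 1-\beta$ are then randomly rounded to exact $\pm 1$, which automatically preserves degrees because every direction $\mm_C$ is degree-preserving, and costs only $O(\rho_t)$ additional spectral error by a matrix Chernoff bound. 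Each round therefore costs $\tO(\alpha \cdot (m + \troute(\alpha)) \cdot \mathrm{poly}(\log m / \eps))$, the extra $\alpha$ tracking the solve accuracy and the conditioning of the $\mm_C$; multiplying by the $\tO(\alpha)$ rounds gives the stated $O((m + \troute(\alpha)) \cdot \alpha^2 \cdot \mathrm{poly}(\log m / \eps))$ runtime. Finally one verifies $\nnz(w) = O(n\eps^{-2})$ at termination and $|\mb|^\top w = |\mb|^\top w_G$ throughout, since every step stays in $\ker|\mb|^\top$.

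\textbf{Main obstacle.} I expect the crux to be the Gaussian measure lower bound for the restricted body $\Brace{t \mid \normop{\sum_C t_C \mm_C} \le \rho}$ with the right joint dependence on $\alpha$, $\eps$, and $|\mathcal{C}|$: one must show a routing of congestion/length $\sim \alpha$ makes the $\lap_G^{-1/2}$-normalized matrices $\mm_C$ sufficiently isotropic (small $\normop{\sum_C \mm_C^2}$) that $\rho$ as small as $\eps/\mathrm{poly}(\alpha,\log m)$ still yields measure $\exp(-O(|\mathcal{C}|))$ — and to do so while simultaneously guaranteeing that saturating a constant fraction of cycles actually deletes edges, so the support contracts geometrically. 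Balancing these opposing demands (short cycles favor edge deletion, but longer or denser routings favor the measure bound, with $\alpha$ the knob), confirming the accumulated spectral errors telescope to $\eps$ as the support decreases to $O(n\eps^{-2})$, and handling the leftover non-cycle edges and odd cycles without violating $|\mb|^\top w = |\mb|^\top w_G$, is where the real work lies; the remainder is a direct transcription of the proof of Theorem~\ref{thm:bss_graph}.
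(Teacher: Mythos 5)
Your proposal diverges substantially from the paper's argument, and the divergence is not a harmless reparameterization: it rests on a misreading of Definition~\ref{def:route} and leaves the hardest step unproved. The oblivious routing in the paper is a \emph{flow routing matrix} $\mr \in \R^{E \times V}$ with $\mb^\top \mr = \id_V$ and $\ell_\infty \to \ell_\infty$ congestion $\alpha$, not a short-cycle decomposition; you appear to be conflating it with the short-cycle machinery of \cite{ChuGPSSW18, ParterY19}, which the paper cites only as prior work. Consequently your whole cycle-parameterized scheme — one scalar $t_C$ per even cycle $C$, signed matrices $\mm_C = c_C \sum_i (-1)^i \lap_G^{-1/2} b_{e_i}b_{e_i}^\top \lap_G^{-1/2}$, and a partial coloring over $[-1,1]^{\mathcal{C}}$ — is not what the paper does, and it runs into a genuine obstruction you flag but do not resolve: the matrices $\mm_C$ are indefinite, so the PSD precondition of Corollary~\ref{cor:gaussian_measure} (and of Theorem~\ref{thm:gen_rr20}) is violated, and neither $\alla(\1_{\mathcal{C}}) \preceq \id_n$ nor the requisite $|\mathcal{C}|/n$ lower bound is established after the change of variables. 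Your ``same matrix-concentration tools'' hand-wave does not survive this; you would be proving a new measure lower bound for a differently structured (non-PSD, cycle-indexed) operator norm ball, which is precisely the kind of thing Section~\ref{sec:measure} works hard to avoid needing.

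The paper sidesteps all of this by staying in \emph{edge} variables and imposing the degree constraint as a linear subspace, which Proposition~\ref{prop:many_tight} already tolerates. Concretely, each phase randomly bipartitions the current support so that (by Lemma~\ref{lem:bipartite_b}) $\ker|\mb|^\top$ becomes ordinary circulation space $\ker(\mb^\top\mw)$; the measure bound then follows from Corollary~\ref{cor:gaussian_measure} together with the subspace-tolerant version of the partial coloring theorem (\cite{RR22}), with no new anti-concentration needed. The subproblems $\min f_\lam$ over $[-1,1]^m \cap \ker(\mb^\top\mw)$ are solved by a Frank-Wolfe method (Lemma~\ref{lem:minflam_constraint}) whose linear-optimization oracle over circulation space is implemented via the oblivious routing and $\ell_\infty$-gradient descent (Lemma~\ref{lem:linear_opt_oracle}); this is where $\alpha^2$ actually enters — as the Frank-Wolfe/oracle iteration count — not as ``$\alpha$ rounds times $\alpha$ overhead'' as in your accounting, since the paper runs only $O(\log m)$ phases. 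Finally, the rounding of near-tight edge coordinates to exactly tight is not done by independent randomization (which would wreck the degree constraint on edge variables); it is a deterministic cycle-cancellation via link/cut trees (Lemma~\ref{lem:degree-preserve-round}) that zeros out $\Omega(m)$ edges per phase while keeping every weight in $[0, 2[w_G]_e]$. So: your high-level instinct — ``run the partial coloring framework restricted to degree-preserving perturbations, round, and iterate'' — is aligned with the paper, but the actual decomposition, the source of the Gaussian measure bound, the role of $\alpha$, the round count, and the rounding mechanism are all different, and the one piece you yourself identify as the crux is exactly the piece that is missing.
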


The runtime of Theorem~\ref{thm:main_degree} is parameterized by the quality and runtime of state-of-the-art \emph{oblivous routings}, defined formally in Definition~\ref{def:route}. As we recall in Proposition~\ref{prop:klos}, current oblivous routings \cite{KelnerLOS14} achieve $\troute(\alpha) = mn^{o(1)}$ for $\alpha = n^{o(1)}$, so Theorem~\ref{thm:main_degree} runs in almost-linear time $m^{1 + o(1)} \cdot \text{poly}(\eps^{-1})$. However, polylogarithmic-quality \emph{congestion approximators} are known to be constructible in time $m\cdot\text{polylog}(n)$ \cite{Peng16}, which are closely-related to oblivous routings; any improvements on oblivous routing quality would then be reflected in Theorem~\ref{thm:main_degree} as well.

Finally, we note that expander decompositions (and the fact that electric routings are good oblivous routings for expanders) can improve the runtime of Theorem~\ref{thm:main_degree}, at the cost of a logarithmic overhead in the sparsity. We give an algorithm achieving this runtime-sparsity tradeoff as Theorem~\ref{thm:degree_polylog}.

\subsection{Technical overview}\label{ssec:techniques}

\paragraph{Gaussian rounding via regularized optimization.} Consider the setting of the \cite{Rothvoss14} framework, where symmetric, convex $\set \subseteq \R^m$ has $\exp(-O(m))$ Gaussian measure. Further, let $g \sim \Nor(\0_m, \id_m)$, and $\xs \in \set \cap [-1, 1]^m$ minimize the distance to $g$, (so $\xs$ has a constant fraction of coordinates at $\pm 1$). Our starting point is the observation that if $x \in \set \cap [-1, 1]^m$ satisfies $\norm{x - \xs}_2^2 = O(m\beta^2)$ for an appropriate constant, then it must have many coordinates in $[-1, -1+\beta] \cup [1 - \beta, 1]$; a simple proof is provided in Lemma~\ref{lem:many_nearly_tight}. Moreover, strong convexity of $\norm{x - g}_2^2$ then implies that to achieve this distance bound, it suffices to find $x \in \set \cap [-1, 1]^m$ minimizing the squared distance to $g$ up to $O(m\beta^2)$, formalized in Lemma~\ref{lem:sc_distsquare}. In settings where $\set = \{x \in \R^m \mid f(x) \le \rho\}$, we wish to find $x$ achieving both small function value (according to $f$) and distance to $g$. This motivates the consideration of regularized mixed objectives of the form
\[f_\lam(x) \defeq f(x) + \lam \norm{x - g}_2^2.\]
The bulk of Section~\ref{sec:framework} then develops a binary-search procedure which approximately minimizes $f_\lam$ over $[-1, 1]^m$ for different values of $\lam$, and aggregates these solutions to achieve the desired distance to $\xs$. We give a formal statement in Proposition~\ref{prop:binary_search_lam} quantifying the range of $\lam$ under consideration by our search, as well as the accuracy levels required by our subproblem solvers (for optimizing $f_\lam$).

\paragraph{New tools for lower bounding Gaussian measure.} To apply the framework of Section~\ref{sec:framework}, we require methods for proving Gaussian measure lower bounds for various discrepancy bodies. For Spencer's setting, since the discrepancy body is a polyhedron with $O(m)$ facets, a Gaussian measure lower bound follows from a routine application of the Sidak-Khatri correlation inequality (see e.g.\ Lemma 8.9, \cite{RothvossLectureNotes2016}). For non-polyhedral sets such as the operator norm balls arising in spectral sparsification settings, however, na\"ive applications of Sidak-Khatri fall short. This bottleneck was alleviated in part by \cite{RR20}, who proved that a different sufficient condition for the \cite{Rothvoss14} framework holds for operator norm balls $\set$ arising in spectral sparsification settings, i.e.\ $\set = \{x \in \R^m \mid \normsop{\sum_{i \in [m]} x_i \ma_i} \le \rho\}$ for matrices $\{\ma_i\}_{i \in [m]}$. Roughly speaking, the main result of \cite{RR20} shows that for any $\alpha$, with probability $\ge \half$ a random Gaussian vector lies at distance $O(\alpha \sqrt m)$ from $\frac 1 \alpha \set$ (when $\rho = \sqrt{n/m}$). This was proven by conducting a controlled random walk, where the output of the walk is close to a Gaussian and achieves good discrepancy. The same paper asked whether a direct Gaussian measure lower bound holds for the relevant operator norm balls.

In Section~\ref{sec:measure}, we augment \cite{RR20} with two new results of potentially independent interest. The first, Theorem~\ref{thm:expansionimplieslowerbound}, generically reduces Gaussian measure lower bounds for convex sets to statements of the form proven by \cite{RR20}, and thus expands the Gaussian measure toolkit for non-polyhedral sets. We use this result in conjunction with an extension of \cite{Rothvoss14} due to \cite{RR22}, which tolerates intersections with a subspace, in our degree-preserving sparsification algorithms in Section~\ref{sec:degree}.
The second, Theorem~\ref{thm:gen_rr20}, generalizes the main result of \cite{RR20} when the matrix set of interest is parameterized by an additional trace bound, and is crucial to our ultrasparsifier algorithms in Section~\ref{sec:ultra}. 

\paragraph{Spectral sparsification and box-spectraplex games.} In the remainder of the paper, we focus on applications of our framework to spectral sparsification. Our algorithms for linear-sized sparsifiers and ultrasparsifiers are essentially identical in light of Theorem~\ref{thm:gen_rr20}, so we discuss the former. We begin by observing that the subproblems required from our framework, when $\set$ is an operator-norm ball, are structured semidefinite programs. In particular, they are expressible as \emph{box-spectraplex games}, which are bilinear minimax optimization problems between a min player living in the box $[-1, 1]^m$, and a max player living in the spectraplex $\{\my \in \R^{n \times n} \mid \Tr(\my) = 1, \my \succeq \mzero_n\}$. We adapt a recent nearly-linear time approximation algorithm for box-spectraplex games from \cite{JambulapatiT23} to solve our subproblems to sufficiently high accuracy for the linear-sized sparsification recursion in \cite{RR20} to work out, up to a small error term which can be discarded. By using random initializations, we only lose $\textup{polyloglog}$ factors in runtime over the solver of \cite{JambulapatiT23}. The only additional overhead in Theorem~\ref{thm:bss_graph} is from solving a Laplacian linear system to bring the problem into isotropic position.

Executing this strategy for degree-preserving sparsification requires more care: we need to make sure all of our approximate solvers and intermediate rounding steps preserve the degree constraints. We first discuss how to solve the subproblems required by our framework. We show that a Frank-Wolfe method analyzed in the $\ell_\infty$ norm reduces the optimization problem to a small number of approximate linear optimization problems over the intersection of $[-1, 1]^m$ and circulation space (the kernel of a reweighted $\mb^\top$). We solve these linear optimization problems using oblivous routings (Definition~\ref{def:route}), which both parameterize circulation space and precondition the optimization problem. Finally, to handle the rounding error due to the recursion scheme, we develop a key subroutine implemented in nearly-linear time using dynamic data structures \cite{SleatorT83}. Our subroutine lets us round a small flow on a bipartite graph efficiently, zeroing out a constant fraction of its weights while preserving degrees, and without more than doubling any edge weight.

\subsection{Related work}\label{ssec:related}

\paragraph{Algorithmic discrepancy theory.} The theoretical computer science community has developed a number of constructive discrepancy-theoretic proofs over the past decade, in the form of polynomial-time algorithms. It is out of our scope to survey this wide body of work, but beyond those discussed earlier on linear-sized sparsification and Spencer's theorem, we refer the reader to additional examples in \cite{BansalDG19, Cohen16, BansalDGL19, DadushNTT18, BansalJ0S20, AlweissLS21, BansalLV22, PV23}. We further find it interesting to develop an approximate implementation of the framework of \cite{EldanS18} based on convex programming under minimal assumptions, analogous to our implementation of \cite{Rothvoss14}. Finally, of particular relevance to the themes of our paper are two recent algorithms for hereditary discrepancy minimization in set sytems by \cite{Larsen23, DengSW22}. The algorithm of \cite{DengSW22} runs in input-sparsity time (in the indicator matrix) in some regimes. We find it potentially fruitful to explore improving these runtimes, and broadening the range of problems to which their tools are applicable.

\paragraph{Approximate semidefinite programming.} Our sparsification algorithms in Sections~\ref{sec:bss},~\ref{sec:ultra}, and~\ref{sec:degree} all rely on approximate solvers \cite{JambulapatiT23} for structured semidefinite programs expressible as box-spectraplex games. Such problems naturally arise from our framework applied to operator norm balls, as the minimization player lives in $[-1, 1]^m$ (the box), and the dual best response lives in the spectraplex. We hence find investigating these games interesting, as any improvements to \cite{JambulapatiT23} would reflect in our runtimes, and those of any future applications. For example, could we improve runtimes via low-rank sketches which have found success in simplex-spectraplex games \cite{KalaiV05, BaesBN13, GarberHM15, Allen-ZhuL17, CarmonDST19}? Improvements beyond rank reduction are also interesting, as the matrix Spencer setting (which was nearly-resolved constructively in \cite{BansalJM23}) induces an operator norm ball, but the guarantees of current solvers lose $\textup{poly}(n)$ factors in the runtime there.

\section{Preliminaries}\label{sec:prelims}

\paragraph{General notation.} We let $\tO$ suppress polylogarithmic factors in problem parameters, and we let $\bO$ suppress polyloglogarithmic factors in problem parameters. We use $[n]$ to denote $\{i \in \N \mid i \le n\}$. The $\ell_p$ norm of a vector argument is $\norm{\cdot}_p$. The all-zeroes and all-ones vectors of dimension $d$ are $\0_d$ and $\1_d$. For $S \subseteq [d]$ where $d$ is clear from context, $\1_S \in \{0, 1\}^d$ is the vector which is $1$ for coordinates in $S$. We say vector $v$ is $s$-sparse if at most $s$ of its entries are nonzero. For vectors $u$, $v$ of equal dimension, $u \circ v$ is their coordinatewise product. We let $\ball_2^d \defeq \{v \in \R^d \mid \norm{v}_2 \le 1\}$, and for sets $\set, \set' \subset \R^d$, $\set + \set'$ is their Minkowski sum. For convex sets $A, B \subseteq \R^d$, $N(A,B)$ is their covering number, the fewest number of translates of $B$ needed to cover $A$. We let $\Nor(\mu, \msig)$ be the Gaussian distribution of specified mean and covariance, and $\Nor_{\le \tau}(\mu, \msig)$ sets any draw from $\Nor(\mu, \msig)$ with Euclidean norm more than $\tau$ to the zero vector. We use ``with high probability in $n$'' to mean an event succeeds with probability $1 - n^{-O(1)}$, for an arbitrary constant.

\paragraph{Matrices.} Matrices are denoted in boldface. The $n \times n$ identity matrix and all-zeroes matrix are respectively $\id_n$ and $\mzero_n$. The number of nonzero entries of a matrix or a vector argument is $\nnz(\cdot)$. The set of $n \times n$ real symmetric matrices (and respectively, real positive semidefinite and positive definite matrices) is denoted $\Sym^n$ (and respectively, $\PSD^n$ and $\PD^n$). We equip $\Sym^n$ with the trace inner product $\inprod{\ma}{\mb} = \Tr(\ma\mb)$ and Loewner partial ordering $\preceq$. The Frobenius, operator, and trace norms are denoted $\normf{\cdot}$, $\normop{\cdot}$, and $\normtr{\cdot}$, and correspond to the $2$-norm, $\infty$-norm, and $1$-norm of the singular values of a matrix. For matrices $\{\ma_i\}_{i \in [m]}$ of equal dimension clear from context, we associate a pair of operators $\alla$ and $\alla^*$, taking a vector and matrix argument respectively. We let \begin{equation}\label{eq:alla_def}\alla(x) \defeq \sum_{i \in [m]} x_i \ma_i\text{ and } \alla^*(\my) \defeq \{\inprod{\ma_i}{\my}\}_{i \in [m]}.\end{equation}
We let $\Nor(\mu, \msig)$ denote the multivariate Gaussian with mean $\mu$ and covariance $\msig$, and $\gamma_d$ denotes the standard Gaussian density (of $\Nor(\0_d, \id_d)$) in dimension $d$. We let $\tmv(\ma)$ denote the time it takes to compute $\ma v$ for any vector $v$ of appropriate dimension. Note that when $\ma$ is explicit, $\tmv(\ma) = O(\nnz(\ma))$, where we let $\nnz(\ma)$ denote the number of nonzero entries in $\ma$. For any $\rho > 0$ and operator $\alla$ associated with a matrix set, we denote the associated ``discrepancy body'' by
\begin{equation}\label{eq:set_r_def}\set_{\rho, \alla} \defeq \{x \in \R^m \mid \normop{\alla(x)} \le \rho\}.\end{equation} 
We denote the condition number (ratio of largest and smallest eigenvalues) of $\mm \in \PD^n$ by $\kappa(\mm)$.

\paragraph{Graphs.} We use the notation $G = (V, E, w_G)$ to denote an undirected graph $G$ on vertex set $V$, where $w_G \in \R^E_{\ge 0}$ is the weight function on the edges $E$. When $G$ is clear from context we let $\mw \defeq \diag{w_G} \in \R^{E \times E}$ and we let $|\mb| \in \R^{E \times V}$ be the unsigned edge-vertex incidence matrix (i.e.\ with two ones per row corresponding to incident vertices on an edge). We also let $\mb \in \R^{E \times V}$ be the signed edge-vertex incidence matrix (where one of the ones per row is arbitrarily but consistently negated) and $\lap \defeq \mb^\top \mw \mb$ be the graph Laplacian. We refer to the rows of $\mb$ by $\{b_e\}_{e \in E} \in \{-1, 0, 1\}^V$. It is well-known that $x^\top \lap x = \sum_{e = (u, v) \in E} [w_G]_e (x_u - x_v)^2$. The pseudoinverse of $\lap$ (which has a kernel in the $\1_V$ direction, i.e.\ it preserves the kernel of $\lap$) is denoted $\lap^\dagger$. When multiple graphs are in discussion, we will subscript these matrices with $G$ to disambiguate, i.e.\ $\lap_G$ is the Laplacian of $G$. We denote the identity matrix on the subspace of $\R^V$ orthogonal to the $\1_V$ direction by $\idnoone$. %
\section{Approximate partial coloring framework}\label{sec:framework}

In this section, we recall the partial coloring method for discrepancy minimization from \cite{Rothvoss14}, and develop a binary search framework for approximating it efficiently. This framework will be our primary workhorse in all our applications. First, \cite{Rothvoss14} gave a generic result which transforms Gaussian measure lower bounds on a symmetric set $\set \subset \R^m$ to a high-probability lower bound on the number of ``tight constraints'' encountered by the closest point in $\set \cap [-1, 1]^m$ to a random Gaussian vector. We state a strengthening by \cite{RR22}, which works for any exponential measure lower bound, and also tolerates restriction to a linear subspace of sufficiently large dimension.

\begin{proposition}[Theorem 6, \cite{RR22}]\label{prop:many_tight}
Let $\set \subseteq \R^m$ be symmetric and convex, and suppose there is a constant $C$ such that $\gamma_m(\set) \ge \exp(-Cm)$. There are constants $\ctight \in (0, 1)$ and $\Cset > 0$ such that if $S \subseteq \R^m$ is a linear subspace of dimension at least $2\ctight m$, $g \sim \Nor(\0_m, \id_m)$, and $x_g$ is the closest point to $g$ in $\Cset \set \cap [-1, 1]^m \cap S$, then with probability $1 - \exp(-\Omega(m))$, at least $\ctight m$ coordinates of $x_g$ have magnitude $1$.
\end{proposition}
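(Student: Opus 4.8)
Since this is recalled as Theorem~6 of \cite{RR22}, one option is to simply cite it; to sketch the argument, the plan is to follow the partial-coloring proof of \cite{Rothvoss14} (its Theorem~2), modified in two ways: to tolerate an arbitrary exponential measure bound $e^{-Cm}$ (rather than one below a fixed absolute threshold), and to tolerate the extra intersection with the subspace $S$. As a preliminary reduction I would fix the scale mismatch between $\set$ and the box $[-1,1]^m$: choosing the dilation constant $\Cset = \Cset(C)$ large enough, Gaussian concentration for symmetric convex sets lets us assume (replacing $\set$ by $\Cset\set$) that the dilated body has Gaussian measure bounded below by an absolute constant, hence is ``comparable to'' the box; genuinely degenerate bodies for which a constant dilation does not achieve this (e.g.\ extremely thin slabs) impose essentially no constraint on $x_g$ and are handled separately. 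Write $K := \Cset\set \cap S$, which is symmetric and convex with $0 \in K \cap [-1,1]^m$; the target is that $x_g := \Pi_{K \cap [-1,1]^m}(g)$ has $\ge \ctight m$ coordinates of magnitude $1$ with probability $1 - e^{-\Omega(m)}$. Note $\|x_g - g\|_2 \le \|g\|_2 = (1+o(1))\sqrt m$ with probability $1 - e^{-\Omega(m)}$, so we may restrict to $\|x_g\|_2 = O(\sqrt m)$.

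The main structural step is a union bound over faces of the hypercube. If $x_g$ has fewer than $\ctight m$ tight coordinates, it lies in the relative interior of a face $F$ of $[-1,1]^m$ of codimension $c < \ctight m$, fixing a coordinate set $T$ ($|T| = c$) to a sign pattern $\sigma \in \{\pm1\}^T$. On this event the box constraints off $T$ are slack, so the KKT conditions force $x_g = \Pi_{K \cap \mathrm{aff}(F)}(g)$, and this point lies in $[-1,1]^m$. Hence
\[
\Pr\!\left[\#\{i : |[x_g]_i| = 1\} < \ctight m\right] \le \sum_{c < \ctight m}\ \sum_{|T| = c}\ \sum_{\sigma \in \{\pm1\}^T}\ \Pr\!\left[\Pi_{K \cap \mathrm{aff}(F_{T,\sigma})}(g) \in [-1,1]^m\right],
\]
where the number of terms is $\sum_{c < \ctight m}\binom mc 2^c = \exp(O(\ctight\log(1/\ctight))\,m)$. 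This is exactly where the hypothesis $\dim S \ge 2\ctight m$ does its work: $\mathrm{aff}(F_{T,\sigma}) \cap S$ is an affine flat whose direction space has dimension $\ge \dim S - c \ge \ctight m > 0$, so each projection above is onto a genuinely positive-dimensional set, and enough slack is left for the downstream recursions built on this proposition.

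It remains to show the per-face estimate $\Pr[\Pi_{K \cap \mathrm{aff}(F)}(g) \in [-1,1]^m] \le e^{-c'm}$ for a constant $c' = c'(C) > 0$; combined with the counting bound and choosing $\ctight = \ctight(C)$ small enough that $O(\ctight\log(1/\ctight)) < c'$, the sum above is $e^{-\Omega(m)}$. This is the technical heart and the only place the Gaussian measure lower bound is used. Following \cite{Rothvoss14}, write $W := \mathrm{aff}(F) \cap S = w_0 + V$ with $w_0 \perp V$ and $\|w_0\|_2 \ge \sqrt c$, so that $\Pi_{K \cap W}(g) = w_0 + \Pi_{(\Cset\set - w_0)\cap V}(\Pi_V g)$ with $\Pi_V g$ a standard Gaussian on the ($\ge \ctight m$)-dimensional space $V$. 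The point is that such a projection cannot typically lie in the box: a strong-convexity (excess-distance) comparison, of the flavor
\[
\|x_g - \Pi_{\Cset\set}(g)\|_2^2 \le \|x_g - g\|_2^2 - \mathrm{dist}(g, \Cset\set)^2,
\]
together with the fact that $\gamma_m(\Cset\set) \ge e^{-\Omega(m)}$ forces $\Pi_{\Cset\set}(g)$ to be ``spread out'' (by Gaussian isoperimetry $\mathrm{dist}(g,\Cset\set)$ is small with overwhelming probability, so $\Pi_{\Cset\set}(g)$ resembles $g$ and exceeds magnitude $1$ in a constant fraction of coordinates), makes ``all coordinates of magnitude $\le 1$'' an exponentially-rare event. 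I expect the main obstacle to be precisely this per-face estimate: making the ``spread-out'' step quantitative while the flat $W$ is offset from the origin by $w_0$, and balancing the measure parameter (hence $\Cset$), the codimension budget $\ctight$, and the exponent $c'$ against each other — which is exactly the delicate bookkeeping carried out in \cite{Rothvoss14} and extended in \cite{RR22} to accommodate $S$ and a general constant $C$.
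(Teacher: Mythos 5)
The paper does not prove this proposition; it is imported verbatim as Theorem~6 of \cite{RR22}, so the first option you raise (simply cite it) is exactly what the paper does. Your sketch of the underlying argument does capture the high-level skeleton of Rothvoss's proof: a union bound over the $\exp(O(\ctight\log(1/\ctight))\,m)$ low-codimension faces of the hypercube, followed by a per-face probability estimate for the event that $\Pi_{K \cap \mathrm{aff}(F)}(g)$ lands in the open box.

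However, your preliminary reduction has a genuine gap. The claim that dilating by a constant $\Cset = \Cset(C)$ lets us ``assume the dilated body has Gaussian measure bounded below by an absolute constant'' is false in general. Take $\set = \mr([-a,a]^m)$ for a small constant $a$ and a generic rotation $\mr$: then $\gamma_m(\set) = (2\Phi(a)-1)^m = e^{-\Theta(m)}$, while for any fixed constant $\Cset$ one still has $\gamma_m(\Cset\set) = (2\Phi(\Cset a)-1)^m$, which remains exponentially small in $m$. This body is a full-dimensional rotated cube, not a thin slab, and it does not contain $[-1,1]^m$ (a random rotation has $\|\mr^{-1}\|_{\infty\to\infty} = \Theta(\sqrt m)$), so neither branch of your ``constant measure or degenerate'' dichotomy applies. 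The actual argument must work directly with the $e^{-Cm}$ measure bound; the dilation $\Cset$ enters the per-face estimate and the balancing of $\ctight$ against the failure exponent, not as a way to inflate the measure to a constant. The per-face estimate itself --- which you correctly flag as the technical heart, and which is where the interaction between the measure bound, the offset $w_0$ of the face, and the subspace $S$ must all be controlled --- is also left at the level of a gesture, so as written the sketch does not constitute a proof.
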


Proposition~\ref{prop:many_tight} is often referred to as a partial coloring result, as it fixes a number of coordinates of $x$ to $\pm 1$ (``colors''), and the remaining coordinates can be recursively handled. When the set $\set$ is a discrepancy body (as it will be throughout this paper), this recursive application then leads to discrepancy minimization algorithms. However, in this section we state our framework for a more general abstract setting, which may be of further utility. Our framework yields an \emph{approximate} partial coloring result, i.e.\ it will only guarantee that $x$ has many coordinates near $\pm 1$, but has the benefit that in settings of interest it can be implemented efficiently via customized solvers.

Specifically, we consider the setting where $\set$ is a sublevel set of a nonnegative, symmetric convex function $f: \R^m \to \R_{\ge 0}$ (i.e.\ $f(x) = f(-x)$ for all $x$), and we are given a ``discrepancy radius'' $\rho \ge 0$ fixed throughout. We then define our set of interest
\begin{equation}\label{eq:set_f_def}
\set \defeq \Brace{x \in \R^m \mid f(x) \le \rho}.
\end{equation}
We also assume that we are given a vector $g \in \R^m$ fixed throughout, which will ultimately be a random Gaussian vector. We use the following assumption which holds except with exponentially small probability $\exp(-\Omega(m))$ by standard Gaussian concentration.

\begin{assumption}\label{assume:rs_big}
We assume that $\norm{g}_2 \le 2\sqrt{m}$.
\end{assumption}

We are now ready to describe our binary search framework. In light of Proposition~\ref{prop:many_tight}, our goal is to approximately compute the value $\rs$ and point $\xs$ defined as
\begin{equation}\label{eq:rsdef}\rs \defeq \norm{\xs - g}_2, \text{ where } \xs \defeq \arg\min_{x \in [-1, 1]^m \cap\set } \norm{x - g}_2,\end{equation}
and $\set$ is defined as in \eqref{eq:set_f_def}. We remark that our framework extends to handle linear constraints as suggested by Proposition~\ref{prop:many_tight}, e.g.\ $\ma x = b$, by optimizing over $[-1, 1]^m \cap \{x \mid \ma x = b\}$. However, for simplicity of exposition we focus our attention to the case of optimizing over a hypercube in this section, and give a formal application of our framework which handles additional linear constraints in Section~\ref{sec:degree}. Our approximate partial coloring methods are developed as follows.

\begin{enumerate}
    \item We show $x$ approximating $\xs$ up to a distance $\approx \sqrt{m}\beta$ for an approximation tolerance $\beta > 0$, set differently in each specific instantiation, suffices for the partial coloring frameworks of e.g.\ \cite{Rothvoss14, RR20} up to a rounding error component (Lemma~\ref{lem:many_nearly_tight}). We typically handle this error recursively. We further exploit strong convexity of the squared distance function to show that approximately minimizing the objective in \eqref{eq:rsdef} suffices for this goal (Lemma~\ref{lem:sc_distsquare}). \label{item:step1_bs}
    \item We develop an efficient binary-search based reduction from approximating $\rs$ in \eqref{eq:rsdef} to approximating the values of a sequence of regularized convex optimization problems which balance $f(x)$ with a multiple of the squared distance to $g$ (Proposition~\ref{prop:binary_search_lam}).\label{item:step2_bs}
    \item We approximate the solutions of these regularized optimization problems in each of our applications using customized solvers. For sparsification-related settings, these solvers will usually be modifications of the recent box-spectraplex game solver of \cite{JambulapatiT23}.\label{item:step3_bs}
\end{enumerate}

Item~\ref{item:step3_bs} of our framework will be treated differently in each of our applications, so in this section we focus on providing results encapsulating Items~\ref{item:step1_bs} and~\ref{item:step2_bs} for ease of use.

\paragraph{Approximating $\xs$ suffices.}
We first observe that in applications of Proposition~\ref{prop:binary_search_lam}, any feasible point $x \in [-1, 1]^m \cap \set$ which is sufficiently close to $\xs$ must have many nearly-tight coordinates. In our applications we will apply Proposition~\ref{prop:binary_search_lam} to both $g$ and $-g$ for a random Gaussian $g$, and $f$ in \eqref{eq:set_f_def} will always be symmetric. So, without loss of generality we state our results when $\xs$ defined in \eqref{eq:rsdef} has at least a constant fraction of coordinates at $-1$, which will be convenient later.

\begin{lemma}\label{lem:many_nearly_tight}
Suppose $\xs \in [-1, 1]^m$ is $-1$ in at least $\frac {\ctight m} 2$ coordinates, and $x \in [-1, 1]^m$ satisfies
\[\norm{x - \xs}_2^2 \le \frac{\ctight m\beta^2}{4}.\]
Then, at least $\frac{\ctight m}{4}$ coordinates $x_i$ satisfy $x_i \le -1 + \beta$.
\end{lemma}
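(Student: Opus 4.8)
The plan is to argue by contrapositive via a counting/averaging bound on the squared distance. Suppose for contradiction that fewer than $\frac{\ctight m}{4}$ coordinates $i$ satisfy $x_i \le -1 + \beta$. Let $T \subseteq [m]$ be the set of coordinates where $\xs$ equals $-1$; by hypothesis $|T| \ge \frac{\ctight m}{2}$. Let $B \subseteq [m]$ be the set of coordinates where $x_i \le -1+\beta$; by the contradiction hypothesis $|B| < \frac{\ctight m}{4}$. Then $|T \setminus B| > \frac{\ctight m}{2} - \frac{\ctight m}{4} = \frac{\ctight m}{4}$, and for every coordinate $i \in T \setminus B$ we have $[\xs]_i = -1$ while $x_i > -1+\beta$, so $|x_i - [\xs]_i| > \beta$, hence $(x_i - [\xs]_i)^2 > \beta^2$.

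Summing over just the coordinates in $T \setminus B$ gives
\[
\norm{x - \xs}_2^2 \ge \sum_{i \in T \setminus B} (x_i - [\xs]_i)^2 > |T \setminus B| \cdot \beta^2 > \frac{\ctight m}{4}\beta^2,
\]
which contradicts the assumed bound $\norm{x-\xs}_2^2 \le \frac{\ctight m \beta^2}{4}$. Therefore at least $\frac{\ctight m}{4}$ coordinates satisfy $x_i \le -1+\beta$, as claimed.

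This argument is entirely elementary; there is no real obstacle. The only point requiring a moment's care is the inclusion-exclusion step guaranteeing that $T \setminus B$ is large enough — one must make sure the constant $\frac{\ctight m}{4}$ in the distance hypothesis is chosen so that the ``bad'' set $B$ (of size $< \frac{\ctight m}{4}$) cannot cover too much of $T$ (of size $\ge \frac{\ctight m}{2}$), leaving a surplus of at least $\frac{\ctight m}{4}$ coordinates where $\xs$ is pinned at $-1$ but $x$ is bounded away from $-1$ by $\beta$. The factor-of-two slack between $\frac{\ctight m}{2}$ and $\frac{\ctight m}{4}$ is exactly what makes this go through, and no strong convexity or geometry of $\set$ is needed here — those enter only in the companion Lemma~\ref{lem:sc_distsquare}.
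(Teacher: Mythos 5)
Your proof is correct and is essentially identical to the paper's: both argue by contradiction, restrict attention to the set of coordinates where $\xs = -1$, observe that strictly more than $\frac{\ctight m}{4}$ of them must have $x_i > -1+\beta$ if the conclusion fails, and lower-bound the squared distance by summing over exactly those coordinates. Your version is in fact slightly more careful with the strict inequality $x_i > -1+\beta$ (the paper writes $x_i \ge -1+\beta$, a harmless boundary imprecision), but there is no substantive difference.
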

\begin{proof}
Assume otherwise, and let $I \subseteq [m]$ be the set of coordinates where $[\xs]_i = -1$. More than half of the coordinates $i \in I$ have $x_i \ge - 1 + \beta$ (since we assumed less than $\frac{\ctight m} 4$ coordinates of $x$ in total are below this threshold), which gives a contradiction since
\begin{align*}
\norm{x - \xs}_2^2 \ge \sum_{\substack{i \in I \\ x_i \ge -1 + \beta}} \Par{x_i - [\xs]_i}^2 > \frac{\ctight m \beta^2}{4}.
\end{align*}
\end{proof}
We next show that by strong convexity, we can certify a bound on $\norm{\tx - \xs}_2$ by finding any feasible point $\tx \in \set_{\rho} \cap [-1, 1]^m$ with near-minimal distance to $g$.

\begin{lemma}\label{lem:sc_distsquare}
Suppose $\tx \in \set \cap [-1, 1]^m$ and $\norm{\tx - g}_2^2 \le \rs^2 + \frac{\ctight m\beta^2}{4}$. 
Then $\norm{\tx - \xs}_2^2 \le \frac{\ctight m\beta^2}{4}$.
\end{lemma}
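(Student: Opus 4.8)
The plan is to use the standard strong-convexity argument: the squared-distance function $x \mapsto \norm{x-g}_2^2$ is $2$-strongly convex, and $\xs$ is its minimizer over the convex set $\set \cap [-1,1]^m$ (by definition \eqref{eq:rsdef}), which contains $\tx$. Strong convexity of a function minimized at $\xs$ over a convex set gives the quadratic growth lower bound $\norm{\tx-g}_2^2 \ge \norm{\xs-g}_2^2 + \norm{\tx-\xs}_2^2$ for any feasible $\tx$ — the first-order term $2\langle \xs - g, \tx - \xs\rangle$ is nonnegative by the optimality (variational inequality) condition for $\xs$. Combining this with the hypothesis $\norm{\tx-g}_2^2 \le \rs^2 + \tfrac{\ctight m \beta^2}{4} = \norm{\xs - g}_2^2 + \tfrac{\ctight m\beta^2}{4}$ immediately yields $\norm{\tx - \xs}_2^2 \le \tfrac{\ctight m\beta^2}{4}$.

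Concretely, the key steps in order are: (i) note $\set \cap [-1,1]^m$ is convex and both $\xs$ and $\tx$ lie in it; (ii) write the exact identity $\norm{\tx - g}_2^2 = \norm{\xs - g}_2^2 + 2\langle \xs - g, \tx - \xs\rangle + \norm{\tx - \xs}_2^2$ (this is just the parallelogram-type expansion and requires no convexity beyond the Hilbert space structure); (iii) invoke the first-order optimality condition for the constrained minimization in \eqref{eq:rsdef}, namely $\langle \tx - \xs, \xs - g\rangle \ge 0$ since the gradient of $\tfrac12\norm{\cdot - g}_2^2$ at $\xs$ is $\xs - g$ and $\tx - \xs$ points into the feasible set; (iv) drop the nonnegative cross term to get $\norm{\tx - g}_2^2 \ge \norm{\xs - g}_2^2 + \norm{\tx - \xs}_2^2$; (v) rearrange against the hypothesis and use $\norm{\xs - g}_2 = \rs$ to conclude.

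I do not expect a genuine obstacle here — this is a textbook application of strong convexity / the Pythagorean inequality for projections onto convex sets (indeed $\xs$ is exactly the Euclidean projection of $g$ onto $\set \cap [-1,1]^m$, so step (iv) is just the obtuse-angle property of projections). The only point worth stating carefully is that the feasible set $\set \cap [-1,1]^m$ is convex: $[-1,1]^m$ is convex, and $\set$ is convex because it is a sublevel set of the convex function $f$ (per \eqref{eq:set_f_def}), so their intersection is convex and the projection characterization applies. One should also implicitly note the feasible set is nonempty and closed so that $\xs$ is well-defined, but this is already assumed in \eqref{eq:rsdef}.
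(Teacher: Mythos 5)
Your proof is correct and follows essentially the same approach as the paper: both rely on $2$-strong convexity of $\norm{\cdot - g}_2^2$ together with first-order optimality of $\xs$ over the convex feasible set to obtain the Pythagorean inequality $\norm{\tx - \xs}_2^2 + \norm{\xs - g}_2^2 \le \norm{\tx - g}_2^2$, then rearrange against the hypothesis. You merely spell out the optimality (obtuse-angle) condition more explicitly than the paper, which invokes it implicitly.
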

\begin{proof}
Recall that by definition, $\xs = \arg\min_{x \in \set \cap [-1,1]^m} \norm{x - g}_2^2$. 
Since $\norm{x - g}_2^2$ is $2$-strongly convex in the $\ell_2$ norm and $\tx \in \set \cap [-1, 1]^m$, optimality of $\xs$ implies the claim by rearranging
\begin{align*}
\norm{\tx - \xs}_2^2 + \rs^2 = \norm{\tx - \xs}_2^2 + \norm{\xs - g}_2^2 \le \norm{\tx - g}_2^2 \le \rs^2 + \frac{\ctight m\beta^2}{4}.
\end{align*}
\end{proof}

In light of Lemma~\ref{lem:sc_distsquare}, we find a point in $[-1, 1]^m \cap \set$ with near-minimal distance to $g$ by repeatedly solving optimization problems which balance the objectives $f(x)$ and $\norm{x - g}_2^2$, described next.

\paragraph{Testing via regularized optimization.} Our algorithm is based on a binary search subroutine, each iteration of which is based on approximately solving an optimization problem $f_\lam$ parameterized by a regularization parameter $\lam$, belonging to the following family:
\begin{equation}\label{eq:test_opt}
\min_{x \in [-1, 1]^m}f_\lam(x) \defeq f(x) + \lam\norm{x - g}_2^2.
\end{equation}
We will discuss how to efficiently approximate the optimizer of $f_\lam$ in \eqref{eq:test_opt} using tools from \cite{JambulapatiT23} and other optimization subroutines in the other sections of this paper. Here, we show how approximate solutions to \eqref{eq:test_opt} can be used to perform a binary search on $\lam$ and approximate $\xs$. Throughout, we will denote our desired additive error threshold (as specified in Lemma~\ref{lem:sc_distsquare}) by
\[\tau \defeq \frac{\ctight m\beta^2}{4}.\]
We will also denote for convenience,
\begin{equation}\label{eq:xlamdef}x_g \defeq \arg\min_{x \in [-1, 1]^m} \norm{x - g}_2,\; x_\lam \defeq \arg\min_{x \in [-1, 1]^m} f_\lam(x)\text{ for all } \lam \ge 0. \end{equation}
We begin by establishing crude bounds on our $\lam$ binary search, assuming the bounds
\begin{equation}\label{eq:rangedef}
\Theta \ge f(x) \ge 0 \text{ for all } x \in [-1, 1]^m,\text{ and } f(\0_m) = 0.
\end{equation}
We remark that because $f$ is symmetric and convex, it is minimized at $\0_m$, and in all our applications we have $f(\0_m) = 0$; otherwise, one may first perform an additive shift to $f$ to use our framework.

\begin{lemma}\label{lem:lam_upper}
Suppose $f$ satisfies \eqref{eq:rangedef}.
For $\lam \ge \frac {4\Theta} \tau$, $x_g$ minimizes $f_\lam$ to additive error $\frac {\lam\tau}{4}$ over $[-1, 1]^m$.
\end{lemma}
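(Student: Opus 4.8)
The plan is to show that when $\lam$ is sufficiently large, the regularization term $\lam\norm{x-g}_2^2$ dominates $f$ in the objective $f_\lam$, so that $x_g$ (the projection of $g$ onto the box, ignoring $f$ entirely) is a near-optimal solution. Concretely, I want to compare $f_\lam(x_g)$ against $f_\lam(x_\lam)$, where $x_\lam$ is the true minimizer of $f_\lam$ over $[-1,1]^m$. Since $x_g$ minimizes $\norm{x-g}_2^2$ over $[-1,1]^m$, we have $\norm{x_g - g}_2^2 \le \norm{x_\lam - g}_2^2$; and since $f$ is bounded in $[0,\Theta]$ on the box by \eqref{eq:rangedef}, the function-value contribution differs by at most $\Theta$. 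So $f_\lam(x_g) - f_\lam(x_\lam) = \left(f(x_g) - f(x_\lam)\right) + \lam\left(\norm{x_g-g}_2^2 - \norm{x_\lam-g}_2^2\right) \le \Theta - 0 + \lam \cdot 0 = \Theta$.

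Now I just need $\Theta \le \frac{\lam\tau}{4}$, which is exactly the hypothesis $\lam \ge \frac{4\Theta}{\tau}$. First I would write down the two-term comparison above, invoking the optimality of $x_g$ for the squared-distance objective over the box and the optimality of $x_\lam$ for $f_\lam$. Then I would bound $f(x_g) - f(x_\lam) \le \Theta$ using \eqref{eq:rangedef} (the lower bound $f(x_\lam)\ge 0$ and upper bound $f(x_g)\le\Theta$), drop the nonpositive distance term, and substitute the bound on $\lam$. This yields $f_\lam(x_g) \le f_\lam(x_\lam) + \frac{\lam\tau}{4}$, i.e., $x_g$ minimizes $f_\lam$ to additive error $\frac{\lam\tau}{4}$ over $[-1,1]^m$, as claimed.

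There is no real obstacle here — this is a short structural observation rather than a technical argument. The only thing to be careful about is keeping the direction of the inequalities straight (the distance term for $x_g$ is \emph{smaller}, the function-value term is \emph{larger}, and only the function-value gap needs to be absorbed into the error budget) and noting that we never needed $f(\0_m)=0$ beyond ensuring $f$ is genuinely nonnegative on the box, which \eqref{eq:rangedef} already provides. The role of this lemma in the larger binary-search framework is to certify an upper endpoint for the search over $\lam$: once $\lam$ is this large, any solver that approximately minimizes $f_\lam$ is effectively just returning the box-projection $x_g$, so there is no benefit to searching further.
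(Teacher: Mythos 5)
Your proposal is correct and is essentially the same argument as the paper's: both bound $f(x_g)\le\Theta$ and $f(x_\lam)\ge 0$ via \eqref{eq:rangedef}, use optimality of $x_g$ for the squared-distance term to make the $\lam$-weighted difference nonpositive, and then absorb $\Theta$ into $\frac{\lam\tau}{4}$ via the hypothesis $\lam\ge\frac{4\Theta}{\tau}$. You have just rearranged the paper's one-line chain of inequalities into a difference $f_\lam(x_g)-f_\lam(x_\lam)$, which is a cosmetic change.
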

\begin{proof}
Defining $x_\lam$, $x_g$ as in \eqref{eq:xlamdef}, and using optimality of $x_g$ and \eqref{eq:rangedef},
\[
f_\lam(x_g) \le \Theta + \lam\norm{x_g - g}_2^2 \le \frac{\lam\tau}{4} + f(x_\lam) + \lam\norm{x_\lam - g}_2^2.
\]
\end{proof}

\begin{lemma}\label{lem:lam_lower}
Suppose $f$ satisfies \eqref{eq:rangedef}. For $\lam \le \frac \rho {8m}$, and under Assumption~\ref{assume:rs_big}, any $x$ minimizing $f_\lam$ to additive error $\frac \rho 4$ over $[-1, 1]^m$ has $f(x) \le \frac{3\rho}{4}$.
\end{lemma}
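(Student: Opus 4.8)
The plan is a short chain of inequalities anchored at the origin, exploiting the normalization $f(\0_m) = 0$ from \eqref{eq:rangedef} together with Assumption~\ref{assume:rs_big}. First I would note that $\0_m \in [-1,1]^m$ is feasible for the regularized problem \eqref{eq:test_opt}, so the true minimizer $x_\lam$ defined in \eqref{eq:xlamdef} satisfies $f_\lam(x_\lam) \le f_\lam(\0_m) = f(\0_m) + \lam\norm{\0_m - g}_2^2 = \lam\norm{g}_2^2$. Applying the hypothesis $\lam \le \frac{\rho}{8m}$ and the norm bound $\norm{g}_2 \le 2\sqrt m$ from Assumption~\ref{assume:rs_big} gives $f_\lam(x_\lam) \le \frac{\rho}{8m}\cdot 4m = \frac{\rho}{2}$.

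Next, since $x$ is assumed to minimize $f_\lam$ over $[-1,1]^m$ up to additive error $\frac{\rho}{4}$, by definition $f_\lam(x) \le f_\lam(x_\lam) + \frac{\rho}{4} \le \frac{\rho}{2} + \frac{\rho}{4} = \frac{3\rho}{4}$. Finally, because $f$ is nonnegative and $\lam \ge 0$, the regularization term $\lam\norm{x-g}_2^2$ is nonnegative, so $f(x) \le f_\lam(x) \le \frac{3\rho}{4}$, which is the claimed bound.

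There is essentially no obstacle here: the only subtlety is making sure the factors of $4$ and $8$ line up, i.e.\ that the $\frac{\rho}{8m}$ threshold on $\lam$ was chosen precisely so that $\lam\norm{g}_2^2 \le \frac{\rho}{2}$ leaves room for the additive error $\frac{\rho}{4}$ while still keeping the total below $\frac{3\rho}{4}$. I would also remark, for context, that this lemma is the natural counterpart to Lemma~\ref{lem:lam_upper}: together they bracket the range of $\lam$ over which the binary search of Proposition~\ref{prop:binary_search_lam} needs to operate, with small $\lam$ forcing $f(x)$ close to the discrepancy radius $\rho$ (so $x$ is roughly feasible for $\set$) and large $\lam$ forcing $x$ close to $x_g$.
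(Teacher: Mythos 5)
Your proof is correct and matches the paper's argument essentially step for step: compare against the feasible point $\0_m$, use $f(\0_m)=0$ and $\norm{g}_2^2 \le 4m$ to bound $f_\lam$, then drop the nonnegative regularization term to pass from $f_\lam(x)$ to $f(x)$. The paper compresses this into a single chain of inequalities, but the content is identical.
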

\begin{proof}
Since $\0_m \in [-1, 1]^m$ and $f(\0_m) = 0$, we have by the assumption on $\norm{g}_2 \le 2\sqrt{m}$ that
\begin{align*}
f(x) \le \lam\Par{\norm{g}_2^2 - \norm{x - g}_2^2} + \frac \rho 4 \le 4\lam m + \frac \rho 4 \le \frac{3\rho} 4.
\end{align*}
\end{proof}

We also require a helper lemma that rounds a near-feasible point with bounded distance increase.

\begin{lemma}\label{lem:drag_down_opnorm}
Suppose $f$ satisfies \eqref{eq:rangedef}. If $x \in [-1, 1]^m$ satisfies $f(x) \le \rho(1 + c)$, then defining $\tx \gets \frac{1}{1+c}x$, under Assumption~\ref{assume:rs_big} we have $\tx \in \set \cap [-1, 1]^m$ and $\norm{\tx - g}_2^2 \le \norm{x - g}_2^2 + 4cm$.
\end{lemma}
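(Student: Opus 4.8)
The plan is to verify the three conclusions separately: that $\tx \in [-1,1]^m$, that $\tx \in \set$, and the distance estimate $\norm{\tx - g}_2^2 \le \norm{x - g}_2^2 + 4cm$. The first is immediate since $\frac{1}{1+c} \in (0,1]$ whenever $c \ge 0$ (the relevant regime, as $c$ measures how much $x$ overshoots the feasible radius $\rho$), so coordinatewise scaling by $\frac{1}{1+c}$ keeps $x$ inside the box. For $\tx \in \set$, I would write $\tx = \frac{1}{1+c} x + \frac{c}{1+c}\0_m$ as a convex combination of $x$ and the origin, and invoke convexity of $f$ together with $f(\0_m) = 0$ from \eqref{eq:rangedef}, obtaining $f(\tx) \le \frac{1}{1+c} f(x) \le \frac{1}{1+c}\cdot \rho(1+c) = \rho$.

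For the distance bound, I would expand
\[\norm{\tx - g}_2^2 - \norm{x - g}_2^2 = \Par{\norm{\tx}_2^2 - \norm{x}_2^2} - 2\inprod{g}{\tx - x}.\]
The first parenthesized term equals $\Par{\frac{1}{(1+c)^2} - 1}\norm{x}_2^2 \le 0$. For the second, $\tx - x = -\frac{c}{1+c} x$, so $-2\inprod{g}{\tx - x} = \frac{2c}{1+c}\inprod{g}{x}$, and Cauchy--Schwarz together with $\norm{g}_2 \le 2\sqrt m$ (Assumption~\ref{assume:rs_big}) and $\norm{x}_2 \le \sqrt m$ (since $x \in [-1,1]^m$) bounds this by $\frac{2c}{1+c}\cdot 2m \le 4cm$. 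Summing the two pieces yields $\norm{\tx - g}_2^2 - \norm{x - g}_2^2 \le 4cm$, as claimed.

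There is no substantive obstacle here; the lemma is a short computation. The only points requiring care are (i) implicitly using $c \ge 0$ so that both the box containment and the sign of $\norm{\tx}_2^2 - \norm{x}_2^2$ go the right way, and (ii) ensuring the cross term is controlled purely through the Gaussian norm bound of Assumption~\ref{assume:rs_big}, so that the $4cm$ estimate does not secretly depend on $\rho$ or on the particular function $f$.
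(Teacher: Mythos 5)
Your proof is correct and follows essentially the same route as the paper: convexity of $f$ with $f(\0_m) = 0$ gives feasibility, and the distance bound comes from expanding $\norm{\tx - g}_2^2 - \norm{x - g}_2^2$, discarding the nonpositive $\norm{\tx}_2^2 - \norm{x}_2^2$ term, and bounding the cross term via $\norm{g}_2\norm{x}_2 \le 2m$. Your remark that $c \ge 0$ is implicitly needed is a fair observation about the paper's (terser) proof, but in every invocation $c$ is indeed a positive tolerance.
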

\begin{proof}
The first conclusion follows by definition of $\tx$ and convexity of $f$, where we use $f(\0_m) = 0$. The second uses $\norm{g}_2\norm{x}_2 \le 2m$, and so we may bound
\[\norm{\tx - g}_2^2 - \norm{x - g}_2^2 = \frac{2c}{1+c}\inprod{g}{x} - \Par{1 - \Par{\frac 1 {1 + \alpha}}^2}\norm{x}_2^2 \le 4cm.\]
\end{proof}

Finally, we show that for nearby values of $\lam$, we can aggregate approximate minimizers of $f_\lam$ to obtain a point in $\set \cap [-1, 1]^m$ with nearly-optimal distance to $g$. In the statement of the following result, we assume only multiplicative error guarantees on querying the value of $f$. This is because in our applications where $f(x) = \normop{\alla(x)}$, a full eigendecomposition is prohibitively expensive so we will approximate operator norms via the power method (see Proposition~\ref{prop:krylov}).

\begin{lemma}\label{lem:aggregate_approx}
Let $\lam' \in (\lam, (1 + \frac \tau {10m})\lam)$, and suppose $x$, $x'$ respectively minimize $f_{\lam}$ and $f_{\lam'}$ to additive error $\frac{\lam'\tau}{4}$ over $[-1, 1]^m$. Further assume for $c \defeq \min(\frac \tau {12m}, \frac{\tau\lam}{3\rho})$ that we know scalar values $A \le \rho$ and $A' \ge (1+c)\rho$ such that $f(x) \in [(1-c)A, A]$ and $f(x') \in [(1-c)A', A']$. Define $\alpha \defeq 1 - \frac{\rho - A}{A' - A}$ and $\tx \defeq \alpha x + (1-\alpha)x'$. Then under Assumption~\ref{assume:rs_big},
\[\norm{\tx - g}_2^2 \le \rs^2 + \tau\text{ and } \tx \in [-1, 1]^m \cap \set.\]
\end{lemma}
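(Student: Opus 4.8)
The plan is to show that the convex combination $\tx = \alpha x + (1-\alpha)x'$ has function value very close to the target radius $\rho$ (so a tiny rescaling, as in Lemma~\ref{lem:drag_down_opnorm}, lands it in $\set$), and simultaneously has squared distance to $g$ not much larger than $\rs^2$, using the approximate optimality of $x$ and $x'$ for $f_\lam$ and $f_{\lam'}$ together with the fact that $\lam$ and $\lam'$ are multiplicatively close. First I would record the consequences of the two approximate-optimality hypotheses. Writing $x_\lam, x_{\lam'}$ for the exact minimizers as in \eqref{eq:xlamdef}, approximate optimality of $x$ for $f_\lam$ and of $x'$ for $f_{\lam'}$ gives $f(x) + \lam\norm{x-g}_2^2 \le f(x_{\lam'}) + \lam\norm{x_{\lam'}-g}_2^2 + \frac{\lam'\tau}{4}$ and symmetrically; I would also use that $\xs \in [-1,1]^m$ is feasible for every $f_\mu$, so plugging $\xs$ into $f_\lam$ and $f_{\lam'}$ gives bounds like $f(x) + \lam\norm{x-g}_2^2 \le f(\xs) + \lam\rs^2 + \frac{\lam'\tau}{4} \le \rho + \lam\rs^2 + \frac{\lam'\tau}{4}$ (since $f(\xs)\le\rho$), and likewise for $x'$. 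These are the workhorse inequalities.

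Next I would handle the function value of $\tx$. By convexity of $f$, $f(\tx) \le \alpha f(x) + (1-\alpha) f(x')$. Using the sandwiching hypotheses $f(x)\in[(1-c)A, A]$ and $f(x')\in[(1-c)A', A']$, and the definition $\alpha = 1 - \frac{\rho - A}{A'-A}$ (which is chosen precisely so that $\alpha A + (1-\alpha)A' = \rho$; note $\alpha\in[0,1]$ since $A\le\rho\le(1+c)\rho\le A'$), I get $f(\tx) \le \alpha A + (1-\alpha)A' = \rho$ on the upper side — actually $f(\tx)\le \rho$ directly, which is even cleaner. Wait: I should be careful, since I want $\tx\in\set$, i.e. $f(\tx)\le\rho$, and convexity plus the upper sandwich bounds give exactly $f(\tx)\le\alpha A + (1-\alpha)A' = \rho$. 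Good — so no rescaling is needed for feasibility in $f$; $\tx\in\set\cap[-1,1]^m$ follows immediately from convexity of $[-1,1]^m$ and $\set$. (The parameter $c$ and Lemma~\ref{lem:drag_down_opnorm} will instead be needed only if a lower estimate forces a rescale; I'll keep them in reserve in case the distance bound needs the point pushed slightly, but the main role of $c$ is to control how far $A,A'$ can be from the true values and hence how far $\alpha$ is from the ``ideal'' weight, which feeds into the distance estimate below.)

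The remaining and genuinely delicate step is the distance bound $\norm{\tx - g}_2^2 \le \rs^2 + \tau$. Here convexity of $\norm{\cdot-g}_2^2$ only gives $\norm{\tx-g}_2^2 \le \alpha\norm{x-g}_2^2 + (1-\alpha)\norm{x'-g}_2^2$, so I need each of $\norm{x-g}_2^2$ and $\norm{x'-g}_2^2$ to be close to $\rs^2$, or at least their $\alpha$-weighted average to be. From the workhorse inequality for $x$: $\lam\norm{x-g}_2^2 \le \rho - f(x) + \lam\rs^2 + \frac{\lam'\tau}{4}$, so $\norm{x-g}_2^2 \le \rs^2 + \frac{\rho - f(x)}{\lam} + \frac{\lam'\tau}{4\lam}$. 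Since $\lam'/\lam \le 1 + \frac{\tau}{10m} \le 2$, the last term is at most $\frac{\tau}{2}$. The term $\frac{\rho - f(x)}{\lam}$ is the dangerous one: $\rho - f(x)$ could be as large as $\rho - (1-c)A$, and for $x$ this is where $A\le\rho$ helps — $\rho - f(x) \le \rho - (1-c)A \le (\rho - A) + cA \le (\rho-A) + c\rho$. Meanwhile for $x'$, $f(x') \ge (1-c)A' \ge (1-c)(1+c)\rho \ge (1-c^2)\rho$, so $\rho - f(x') \le c^2\rho \le c\rho$, i.e. $\norm{x'-g}_2^2 \le \rs^2 + \frac{c\rho}{\lam} + \frac{\tau}{2}$, and by $c \le \frac{\tau\lam}{3\rho}$ the middle term is $\le \frac{\tau}{3}$. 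So $x'$ is automatically fine. For $x$ the bound $\norm{x-g}_2^2 \le \rs^2 + \frac{(\rho - A) + c\rho}{\lam} + \frac{\tau}{2}$ has a potentially large $\frac{\rho-A}{\lam}$ term, but this is multiplied by $\alpha = 1 - \frac{\rho-A}{A'-A}$ in the convex combination, and $A' - A \ge A' - \rho \ge c\rho$ forces $\alpha \le \frac{A'-\rho}{A'-A}$; more usefully, $\alpha\cdot\frac{\rho-A}{\lam} = \frac{(\rho-A)}{\lam}\cdot\frac{A'-\rho}{A'-A} \le \frac{A'-\rho}{\lam}$... I expect the clean bookkeeping to route through: $\alpha(\rho - A) = \frac{(\rho-A)(A'-\rho)}{A'-A} \le A'-\rho \le cA' \le c(1+2c)\rho \le 2c\rho$ (using $A'\ge(1+c)\rho$ and that $A'$ is not too large — and if $A'$ can be huge I'd instead note $\alpha(\rho-A) = (1-\alpha)(A'-\rho) \cdot \frac{\rho - A}{A'-\rho}$, hmm). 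The main obstacle is precisely controlling $\alpha(\rho - A)$, and I expect the intended argument exploits that the \emph{product} $\alpha(\rho-A) = (1-\alpha)(A'-\rho)$ equals the ``overshoot on the $x'$ side,'' which is $\le A' - \rho$, and then uses $A' - \rho \le A' - (1+c)^{-1}f(x') \le \ldots$ to bound it by $O(c\rho)$; combined with $c\le\frac{\tau\lam}{3\rho}$ this contributes $O(\frac{c\rho}{\lam}) = O(\tau)$ to the distance, and summing all the $O(\tau)$ pieces with the constants chosen in $c$ and the $\lam'/\lam$ gap yields $\le \rs^2 + \tau$. I would finish by collecting these estimates, verifying the absolute constants work out (the $\frac{1}{4}$, $\frac{1}{12m}$, $\frac{1}{3\rho}$, $\frac{1}{10m}$ were presumably tuned so the sum of error terms is exactly $\le\tau$), and invoking Assumption~\ref{assume:rs_big} wherever $\norm{g}_2 \le 2\sqrt m$ (hence $\norm{g}_2\norm{x}_2 \le 2m$) was used to bound cross terms.
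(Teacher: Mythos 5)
Your feasibility argument is exactly the paper's: convexity of $f$ and $[-1,1]^m$ together with $\alpha A + (1-\alpha)A' = \rho$ give $\tx \in \set \cap [-1,1]^m$ directly. For the distance bound, however, there is a genuine gap. You divide the two workhorse inequalities by $\lam$ and $\lam'$ \emph{separately} and then try to combine via convexity of $\norm{\cdot - g}_2^2$; this leaves a term $\frac{\alpha(\rho - A)}{\lam}$ that you never succeed in bounding. Your attempted fix, $\alpha(\rho-A) = (1-\alpha)(A'-\rho) \le A' - \rho \le cA' \le O(c\rho)$, silently assumes $A'$ is within a $(1+O(c))$ factor of $\rho$, but the lemma only assumes $A' \ge (1+c)\rho$ with no upper bound; at initialization in Proposition~\ref{prop:binary_search_lam} one has $\lam' = \frac{4\Theta}{\tau}$ with $x' = x_g$, so $A'$ can be as large as $\Theta$. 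You flag this yourself with ``hmm'' and never close it, so this is not a complete argument.

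The paper's route avoids the problem by never dividing the two inequalities separately. It keeps them in the form $(1-c)A + \lam\norm{x-g}_2^2 \le \rho + \lam\rs^2 + \frac{\lam'\tau}{4}$ and $(1-c)A' + \lam'\norm{x'-g}_2^2 \le \rho + \lam'\rs^2 + \frac{\lam'\tau}{4}$, takes $\alpha$ times the first plus $(1-\alpha)$ times the second, and uses $\alpha A + (1-\alpha)A' = \rho$ \emph{before any division}. The $(1-c)A$, $(1-c)A'$ terms then pool to $(1-c)\rho$ exactly, so the excess is only $c\rho$ (not $\rho - A$), independent of how large $A'$ is. Only at this point does one divide by $\blam \defeq \alpha\lam + (1-\alpha)\lam'$, and the leftover mismatch $\blam\norm{\tx-g}_2^2 - \alpha\lam\norm{x-g}_2^2 - (1-\alpha)\lam'\norm{x'-g}_2^2 \le (\blam - \lam)\alpha\norm{x-g}_2^2$ is controlled by $\lam' - \lam < \frac{\tau}{10m}\lam$ together with $\norm{x-g}_2^2 \le 10m$ from Assumption~\ref{assume:rs_big}. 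This ordering of the algebra — combine first, divide once — is precisely the idea your proposal is missing.
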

\begin{proof}
By convexity of the set $[-1, 1]^m$ and the function $f$, and because $\alpha A + (1 - \alpha)A' = \rho$ by construction, we clearly have $y \in [-1, 1]^m \cap \set$, so it suffices to show the distance bound. Recall from \eqref{eq:rsdef} that there exists $\xs \in [-1, 1]^m \cap \set$ with $\norm{\xs - g}_2^2 = \rs^2$ by definition. Since
\begin{align*}
(1-c)A + \lam\norm{x - g}_2^2 \le f(x) + \lam\norm{x - g}_2^2 &\le f(\xs) + \lam\rs^2 + \frac{\lam'\tau}{4} \le \rho + \lam\rs^2 + \frac{\lam'\tau}{4}, \\
(1-c)A' + \lam\norm{x' - g}_2^2 \le f(x') + \lam'\norm{x' - g}_2^2 &\le f(\xs) + \lam'\rs^2 + \frac{\lam'\tau}{4} \le \rho + \lam'\rs^2 + \frac{\lam'\tau}{4},
\end{align*}
we have by a convex combination of the above display that, for $\blam \defeq \alpha\lam + (1 - \alpha)\lam'$,
\begin{align*}
(1-c)\rho + \alpha\lam\norm{x - g}_2^2 + (1 - \alpha)\lam'\norm{x - g}_2^2 \le \rho + \blam \rs^2 + \frac{\lam'\tau}{4}.
\end{align*}
Rearranging and dividing by $\blam \in (\lam, \lam')$,  we have
\begin{align*}
\norm{\tx - g}_2^2 &\le \rs^2 + \frac{c\rho}{\blam} + \frac{\lam'\tau}{4\blam} + \frac{1}{\blam}\Par{\blam\norm{\tx - g}_2^2 - \alpha\lam\norm{x - g}_2^2 - (1-\alpha)\lam'\norm{x' - g}_2^2} \\
&\le \rs^2 + \frac{\tau}{3} + \frac \tau 3 + \frac{1}{\blam}\Par{(\blam - \lam)\alpha\norm{x - g}_2^2} \le \rs^2 + \frac {2\tau} 3   + \frac {10m(\lam' - \lam)} {\lam} \le \rs^2 + \tau.
\end{align*}
The second inequality used $\norm{\tx - g}_2^2 \le \alpha\norm{x - g}_2^2 + (1 - \alpha)\norm{x' - g}_2^2$ by convexity as well as our bounds on $c$ and $\frac{\lam'}{\lam}$, and the third used $\norm{x - g}_2^2 \le 2\norm{x}_2^2 + 2\norm{g}_2^2 \le 10m$ by Assumption~\ref{assume:rs_big}.
\end{proof}

We combine these developments to give our final procedure for approximate partial coloring.

\begin{proposition}\label{prop:binary_search_lam}
Let $\beta \in (0, 1)$, $\rho > 0$, and let $\tau = \Theta(m\beta^2)$, $T = O(\log \frac 1 \beta + \log\log \frac \Theta {\rho})$ for appropriate constants. Let symmetric and convex $f: \R^m \to \R_{\ge 0}$ satisfy \eqref{eq:rangedef}, and define $\set$ as in \eqref{eq:set_f_def}. Under Assumption~\ref{assume:rs_big} and following notation \eqref{eq:rsdef}, there is an algorithm returning $\tx \in \set \cap [-1, 1]^m$ with
\[\norm{\tx - g}_2^2 \le \rs^2 + \tau.\]
The runtime is the cost of solving $f_\lam$ to additive error $\frac{\lam\tau}{4}$ for $T$ values of $\lam \ge \frac{\rho}{8m}$, and computing $f(x)$ to within $\frac{\tau}{64m}$ multiplicative error for $T$ values of $x \in [-1, 1]^m$.
\end{proposition}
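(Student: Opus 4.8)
The plan is to combine the crude bracketing lemmas (Lemmas~\ref{lem:lam_upper} and~\ref{lem:lam_lower}) to set up an initial interval for $\lam$, then run a binary search that at each step solves $f_\lam$ approximately and uses the value $f(x)$ of the approximate minimizer as the ``signal'' telling us whether to increase or decrease $\lam$, and finally invoke Lemma~\ref{lem:aggregate_approx} once the interval is short enough to aggregate two neighboring approximate solutions into the desired $\tx$.

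\textbf{Step 1: Initialize the search interval.} By Lemma~\ref{lem:lam_upper}, taking $\lam_{\up} \defeq \frac{4\Theta}{\tau}$ makes $x_g$ (the projection of $g$ onto the box, which lies in $\set$ since $\norm{x_g}_\infty \le 1$ and $f(x_g)$ could still be large—so actually we want: at this $\lam$ the minimizer of $f_\lam$ is essentially $x_g$, which need not be in $\set$, but its $f$ value is at most $\Theta$) — more precisely, at $\lam = \lam_{\up}$ any approximate minimizer has large distance-to-$g$ penalty forbidden, so $x_\lam$ is close to $x_g$; and by Lemma~\ref{lem:lam_lower}, taking $\lam_{\lo} \defeq \frac{\rho}{8m}$ forces any approximate minimizer to have $f(x) \le \frac{3\rho}{4} < \rho$, i.e.\ it is (after the harmless rescaling of Lemma~\ref{lem:drag_down_opnorm}, which here is unnecessary) feasible. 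The ratio $\lam_{\up}/\lam_{\lo} = \Theta\Par{\frac{m\Theta}{\rho \tau}} = \Theta\Par{\frac{\Theta}{\rho\beta^2}}$, so $\log\log$ of this ratio is $O(\log\log\frac{\Theta}{\rho} + \log\frac 1\beta)$, which explains the stated bound on $T$: a binary search on $\log\lam$ that shrinks the multiplicative window from $\frac{\lam_{\up}}{\lam_{\lo}}$ down to $1 + \frac{\tau}{10m}$ takes $T = O\Par{\log\log\frac{\lam_{\up}}{\lam_{\lo}} - \log\log(1 + \frac\tau{10m})} = O\Par{\log\frac 1\beta + \log\log\frac\Theta\rho}$ iterations (using $\log(1+\frac\tau{10m}) = \Theta(\beta^2)$).

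\textbf{Step 2: The binary search loop.} At each iteration with current geometric midpoint $\lam$, I solve $f_\lam$ to additive error $\frac{\lam\tau}{4}$ to get $x$, and compute $f(x)$ to multiplicative accuracy $\frac{\tau}{64m}$, obtaining $A$ with $f(x) \in [(1-c)A, A]$ for $c \le \frac{\tau}{32m}$ (absorbing the multiplicative-to-the-$c$-form conversion; note $c \defeq \min(\frac{\tau}{12m}, \frac{\tau\lam}{3\rho})$ in Lemma~\ref{lem:aggregate_approx}, and for $\lam \ge \lam_{\lo} = \frac{\rho}{8m}$ we have $\frac{\tau\lam}{3\rho} \ge \frac{\tau}{24m}$, so $c = \Theta(\frac\tau m)$ and a multiplicative accuracy of $\frac{\tau}{64m}$ suffices). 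If $A \le \rho$, the current $\lam$ is ``too small'' (feasible, want to push toward $x_g$): set $\lam \gets$ the geometric mean with $\lam_{\up}$. If $A \ge (1+c)\rho$, it is ``too large'': recurse with $\lam_{\lo}$. Monotonicity — $f(x_\lam)$ is nonincreasing in $\lam$ up to the approximation slack — guarantees the search is well-defined and that the invariant ``$\lam_{\lo}$-side approximate minimizer has $f \le \rho$, $\lam_{\up}$-side has $f \ge (1+c)\rho$'' is maintained; the endpoints from Step 1 establish the base case.

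\textbf{Step 3: Terminate and aggregate.} After $T$ iterations the multiplicative window $(\lam, \lam')$ satisfies $\lam' < (1 + \frac{\tau}{10m})\lam$ with $\lam$ on the ``$f(x) \le \rho$'' side and $\lam'$ on the ``$f(x') \ge (1+c)\rho$'' side. I feed the two stored approximate minimizers $x, x'$ (each solved to error $\le \frac{\lam'\tau}{4}$, which holds since error $\frac{\lam\tau}{4} \le \frac{\lam'\tau}{4}$), together with the computed values $A \le \rho$ and $A' \ge (1+c)\rho$, into Lemma~\ref{lem:aggregate_approx}. That lemma directly outputs $\tx = \alpha x + (1-\alpha)x' \in [-1,1]^m \cap \set$ with $\norm{\tx - g}_2^2 \le \rs^2 + \tau$, which is exactly the claim. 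Finally I verify the runtime bookkeeping: $T = O(\log\frac 1\beta + \log\log\frac\Theta\rho)$ solves of $f_\lam$ to error $\frac{\lam\tau}{4}$ and $T$ evaluations of $f$ to multiplicative error $\frac{\tau}{64m}$, all at $\lam \ge \lam_{\lo} = \frac{\rho}{8m}$, matching the stated cost.

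\textbf{Main obstacle.} The delicate point is not any individual lemma but stitching them into a correct binary search invariant: I need the ``too small / too large'' dichotomy to be exhaustive and consistent across iterations despite only having $\frac{\lam\tau}{4}$-approximate minimizers and $(1\pm c)$-approximate function values. Concretely, I must check that (i) the gap between the thresholds $\rho$ and $(1+c)\rho$ is wide enough that the approximation errors never make a truly-feasible point look infeasible or vice versa in a way that breaks monotonicity of the search — this is why $c$ is chosen as it is and why the window must only shrink to $1+\frac{\tau}{10m}$, not further; and (ii) the initial bracket genuinely has the feasible side at $\lam_{\lo}$ and the $x_g$-side at $\lam_{\up}$, which is where Assumption~\ref{assume:rs_big} (via Lemma~\ref{lem:lam_lower}) and the range bound \eqref{eq:rangedef} (via Lemma~\ref{lem:lam_upper}) enter. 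Everything downstream of a correct invariant is a mechanical application of Lemma~\ref{lem:aggregate_approx}.
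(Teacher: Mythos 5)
Your proof follows the same high-level architecture as the paper's: initialize a bracket $[\lam_{\lo},\lam_{\up}]=[\frac{\rho}{8m},\frac{4\Theta}{\tau}]$ via Lemmas~\ref{lem:lam_lower} and~\ref{lem:lam_upper}, binary search on multiplicative intervals, then aggregate via Lemma~\ref{lem:aggregate_approx}. However, there is a genuine gap that you yourself flag as the ``main obstacle'' but do not resolve: your case split on the queried value $A$ is not exhaustive. You only prescribe an action when $A\le\rho$ (``too small'') or $A\ge(1+c)\rho$ (``too large''), but the midpoint query can return $\At\in(\rho,(1+c)\rho)$, and then neither branch's invariant can be restored. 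You cannot simply assign the middle case to one side arbitrarily, because at termination Lemma~\ref{lem:aggregate_approx} strictly requires one endpoint to certify $A\le\rho$ and the other $A'\ge(1+c)\rho$; pushing a middle-zone point onto either side breaks that. The paper handles this by treating the middle zone as an early-terminating success: if $\At\in[\rho,(1+c)\rho]$, a short direct calculation (using $f(\xs)\le\rho\le\At\le\frac{1}{1-c}f(\xt)$ and the approximate optimality of $\xt$) shows $\norm{\xt-g}_2^2\le\rs^2+\tfrac{\tau}{2}$, and then Lemma~\ref{lem:drag_down_opnorm} rescales $\xt$ into $\set\cap[-1,1]^m$ at a further cost of $4cm\le\tfrac{\tau}{16}$. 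Your proof never invokes Lemma~\ref{lem:drag_down_opnorm} (you dismiss it as ``unnecessary'' in Step 1), but it is in fact load-bearing.

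A second, smaller omission: at initialization, the paper also checks whether $x_g$ (the plain box projection of $g$, which generally is \emph{not} in $\set$) happens already to have $A'\le(1+c)\rho$, and if so returns early by the same Lemma~\ref{lem:drag_down_opnorm} rescaling. Your Step~1 sentence describing $x_g$ is garbled precisely at this point and never resolves what to do if the upper endpoint fails to satisfy the invariant $A'\ge(1+c)\rho$. Finally, your appeal to ``monotonicity of $f(x_\lam)$ up to approximation slack'' is not actually needed and is not reliable for approximate minimizers; the paper sidesteps it entirely by maintaining the endpoint invariants directly and giving the middle case its own exit. With the two missing cases filled in via Lemma~\ref{lem:drag_down_opnorm} and the accompanying calculation, your argument would match the paper's.
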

\begin{proof}
Let $c \defeq \frac{\tau}{64m} = \Theta(\beta^2)$, and divide the range $[\frac{\rho}{8m},\frac {4\Theta} \tau]$ into multiplicative intervals of the form $[\lam, (1 + c)\lam]$; there are clearly $O(\frac 1 c \log \frac{m\Theta}{\tau\rho}) = O(\frac 1 {\beta^2} \log\frac{\Theta}{\beta\rho})$ such intervals. We will initialize $\lam \gets \frac \rho {8m}$ and $\lam' \gets \frac {4\Theta} \tau$, and throughout our binary search we maintain the following invariants.
\begin{enumerate}
    \item $\lam, \lam'$ are lower and upper endpoints of different multiplicative intervals.
    \item $x$ and $x'$ respectively minimize $f_\lam$ and $f_{\lam'}$ to additive error $\frac{\lam\tau}{4}$ and $\frac{\lam'\tau}{4}$ over $[-1, 1]^m$.
    \item $A$ and $A'$ respectively satisfy $f(x) \in [(1 - c)A, A]$ and $f(x') \in [(1 - c)A', A']$.
    \item $A \le \rho$ and $A' \ge (1 + c)\rho$.
\end{enumerate}
At initialization, since $\frac{\tau\rho}{32m} \le \frac \rho 4$, we can set $x$ to any $\frac{\tau\rho}{32m}$-approximate minimizer of $f_{\lam}$ and $A$ to be any value satisfying $f(x) \in [(1 - c)A, A]$, since $A$ will necessarily be less than $\rho$ by Lemma~\ref{lem:lam_lower}. Further, Lemma~\ref{lem:lam_upper} shows that it suffices to initialize $x' \gets x_g$, and let $A'$ be any value satisfying $\normop{\alla(x')} \in [(1 - c)A', A']$. If $A' \le (1 + c)\rho$, we can apply Lemma~\ref{lem:drag_down_opnorm} to $x_g$ yielding $\tx$ with 
\begin{equation}\label{eq:txbound}\norm{\tx - g}_2^2 \le \norm{x_g - g}_2^2 + 4cm \le \rs^2 + \frac \tau {16},\; \tx \in \set_\rho \cap [-1, 1]^m.\end{equation}
Hence it suffices to return $\tx$. Otherwise, $A' \ge (1 + c)\rho$ so all invariants are met at initialization. Next, let $\lamt$ be the lower endpoint of the middle interval between $\lam$ and $\lam'$, let $\xt$ be a $\frac{\lamt \tau}{4}$-approximate minimizer to $f_{\lamt}$, and let $f(\xt) \in [(1 - c)\At, \At]$. If $\At \in [\rho, (1 + c)\rho]$,
\begin{align*}
f(\xt) + \lamt\norm{\xt - g}_2^2 &\le f(\xs) + \lamt\rs^2 + \frac{\lamt \tau}{4} \\
&\le \At + \lamt \rs^2 + \frac{\lamt \tau}{4} \\
&\le \frac{1}{1-c} f(\xt) + \lamt \rs^2 + \frac{\lamt \tau}{4},
\end{align*}
so rearranging and dividing by $\lamt \ge \frac{\rho}{8m}$ proves that 
\[\norm{\xt - g}_2^2 \le \frac{c}{1-c} \cdot \frac{(1 + c)\rho}{\lamt} + \rs^2 + \frac \tau 4 \le 2c\rho \cdot \frac{8m}{\rho} + \rs^2 + \frac \tau 4 \le \rs^2 + \frac \tau 2, \]
where we used our choice of $c$. Therefore, letting $\tx$ be the result of applying Lemma~\ref{lem:drag_down_opnorm} to $\xt$ with parameter $c$, the same calculation as \eqref{eq:txbound} shows 
\[\norm{\tx - g}_2^2 \le \rs^2 + \tau,\; \tx \in \set \cap [-1, 1]^m.\]
Otherwise, we either have $\At \le \rho$ or $\At \ge (1 + c)\rho$, so we may update $\lam$ or $\lam'$ to $\lamt$ and continue. At termination we must have $\lam$ and $\lam'$ within a multiplicative factor of $1 + c$, and therefore applying Lemma~\ref{lem:aggregate_approx} yields the desired point $\tx$, where we check that indeed $c \le \min(\frac{\tau}{12m}, \frac{\tau}{3\rho} \cdot \frac{\rho}{8m})$.

Finally, we test $T = O(\log(\frac{1}{\beta^2}\log \frac \Theta {\beta\rho}))$ values of $\lam$, and the runtime is from running approximate minimization routines to maintain $x$, $x'$ and approximate function calls to maintain $A$, $A'$.
\end{proof}

In a typical application of the framework of this section, we will first specify a discrepancy radius $\rho$ such that the body $\set$ from \eqref{eq:set_f_def} has an exponential Gaussian measure lower bound. We then can combine Proposition~\ref{prop:many_tight}, Lemma~\ref{lem:many_nearly_tight}, Lemma~\ref{lem:sc_distsquare}, and Proposition~\ref{prop:binary_search_lam} to efficiently obtain an approximate partial coloring, by approximating $\xs$ for both $g$ and $-g$ (in the notation of \eqref{eq:rsdef}). Both of our approximations via Proposition~\ref{prop:binary_search_lam} will lie in $[-1, 1]^m \cap \set$, and then the other three results imply that one of the approximations has a constant fraction of coordinates in $[-1, -1+\beta]$. %
\section{Gaussian measure lower bounds}\label{sec:measure}

In this section, we give two new tools for obtaining Gaussian measure lower bounds, a key ingredient of the \cite{Rothvoss14} framework. In Section~\ref{ssec:reduction}, we first state a general reduction from bounding Gaussian measure for a set $\set$ to proving a closely related notion introduced by \cite{RR20}, which roughly says that a random Gaussian vector is not too far from a multiple of $\set$ with constant probability. This can be seen as a generalization of Conjecture 1 in~\cite{RR20}. We complement this reduction with a strengthening of the main result of \cite{RR20}, which is parameterized by a trace bound, and is needed to prove our ultrasparsifier results in Section~\ref{sec:ultra}. Finally, in Section~\ref{ssec:gaussian_lbs} we prove as corollaries of these tools the main Gaussian lower bounds we will use in the paper.

\subsection{Reduction to Gaussian distance bound}\label{ssec:reduction}

In this section, we prove the following reduction holds.

\begin{theorem}\label{thm:expansionimplieslowerbound}
Suppose $\set \subset \R^m$ is symmetric and convex, and that for a constant $C_0$,
\[\gamma_m\Par{\frac{C_0}{\alpha}\set + \alpha\sqrt{m}\ball_2^m} \ge \half \text{ for all } \alpha \in (0, 1).\]
Then there is a constant $C$ such that $\gamma_m(\set) \ge \exp(-Cm)$.
\end{theorem}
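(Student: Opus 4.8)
The plan is to show that the hypothesis forces $\gamma_m(\set)$ to be not-too-small by a covering/union-bound argument, run in reverse. The key idea is that if $\gamma_m(\set)$ were extremely small, then $\set$ would be ``thin'' in a quantitative sense, and no modest dilation of $\set$ thickened by an $\alpha\sqrt m$-ball could capture half the Gaussian mass. First I would fix a small $\alpha \in (0,1)$ to be chosen at the end (some absolute constant), and estimate the Gaussian measure of $\frac{C_0}{\alpha}\set + \alpha\sqrt m\,\ball_2^m$ from above in terms of $\gamma_m(\set)$. The natural tool is a covering bound: writing $B \defeq \alpha\sqrt m\,\ball_2^m$, one has
\[
\gamma_m\Par{\tfrac{C_0}{\alpha}\set + B} \le N\Par{\tfrac{C_0}{\alpha}\set + B,\; B} \cdot \sup_{v} \gamma_m(v + B),
\]
and $\sup_v \gamma_m(v+B) \le \gamma_m(B) \le \exp(-c_1 m)$ for an absolute constant $c_1 > 0$ when $\alpha$ is a small enough constant (a ball of radius $\alpha\sqrt m$ has exponentially small Gaussian mass in $\R^m$, by standard concentration of $\norm g_2^2$ around $m$). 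So the hypothesis $\gamma_m(\tfrac{C_0}{\alpha}\set + B) \ge \tfrac12$ gives $N(\tfrac{C_0}{\alpha}\set + B, B) \ge \tfrac12 \exp(c_1 m)$, i.e.\ a lower bound on the covering number.

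Next I would convert this covering-number lower bound into a volume, hence a Gaussian-measure, lower bound for $\set$ itself. Since $\tfrac{C_0}{\alpha}\set + B \subseteq \tfrac{2C_0}{\alpha}(\set + B')$ for a comparable ball $B'$ (crudely, both $\set$ and $B$ dilations), a standard volumetric bound gives $N(\tfrac{C_0}{\alpha}\set + B, B) \le \Vol(\tfrac{2C_0}{\alpha}\set + 2B)/\Vol(B)$, and expanding $\Vol(K + 2B)$ via the Steiner-type inequality $\Vol(K + 2B) \le 3^m \max(\Vol(K), \Vol(2B))$ lets me conclude $\Vol(\set) \ge \exp(-c_2 m)\,\Vol(B)$ for an absolute constant $c_2$. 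Finally, a set of volume at least $\exp(-c_2 m)\Vol(\alpha\sqrt m\,\ball_2^m)$ that is symmetric and convex must have Gaussian measure at least $\exp(-Cm)$: the cleanest route is to observe that a symmetric convex body of that volume contains (by John's theorem or a direct argument) a centered box or ball of inradius $\Omega(\alpha)$ in a constant fraction of directions, or more simply to note $\gamma_m(\set) \ge \gamma_m(\set \cap r\ball_2^m) \ge e^{-r^2/2}\,(2\pi)^{-m/2}\Vol(\set \cap r\ball_2^m)$ and optimize $r = \Theta(\sqrt m)$ using that a constant fraction of $\Vol(\set)$ lies within $O(\sqrt m)$ of the origin when $\Vol(\set)^{1/m} = \Omega(\sqrt m)$ (again by standard convex-geometry facts, e.g.\ that the centroid / volume radius of a symmetric convex body controls such truncations).

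I expect the main obstacle to be the last conversion — from a volume lower bound on $\set$ to a Gaussian-measure lower bound — because it requires controlling how the volume of $\set$ is distributed relative to the origin; a symmetric convex body with large volume could in principle be a very long thin ``pancake'' whose volume lies far from where the Gaussian density is large. The resolution is that symmetry plus convexity prevents pathological cases: one can bound the Gaussian measure below by $\gamma_m(c \cdot \text{(volume-normalized version of }\set))$ using that for symmetric convex $\set$, $\gamma_m(t\,\set)$ is a log-concave function of $t$ whose value is pinned down near $t \sim \Vol(\set)^{-1/m}\sqrt m$; alternatively, invoke the known relation that $\gamma_m(\set) \ge \exp(-O(m))$ whenever $\Vol(\set)^{1/m} \ge \Omega(1/\sqrt m)$ for symmetric convex $\set$, which follows by intersecting with a Euclidean ball of radius $\Theta(\sqrt m)$ and comparing densities. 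All the constants $C_0, c_1, c_2$ collapse into the final $C$, and the choice of the fixed $\alpha$ (small absolute constant) is made to ensure $\gamma_m(B) \le e^{-c_1 m}$ with $c_1$ large enough to dominate the covering-number slack.
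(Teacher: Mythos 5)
The central problem is that you fix a single absolute constant $\alpha$, but the hypothesis for one fixed $\alpha$ is simply too weak to imply the conclusion. Concretely, take $\set = \{x \in \R^m : |x_1| \le \epsilon\} \cap \sqrt{2m}\,\ball_2^m$ for some very small $\epsilon > 0$ (say $\epsilon = e^{-m^{100}}$). This is symmetric and convex, and $\gamma_m(\set) \approx \Pr[|g_1| \le \epsilon] = \Theta(\epsilon)$, which is far below $\exp(-Cm)$ for any constant $C$. Yet for any fixed $\alpha \in (0,1)$ and $m$ large, $\frac{C_0}{\alpha}\set + \alpha\sqrt m\,\ball_2^m$ has Gaussian measure at least $\tfrac12$: the radial cut is at scale $\gg\sqrt m$, and in the first coordinate the thickened slab has half-width at least $\alpha\sqrt m \to \infty$, so it captures essentially all Gaussian mass. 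The point of the ``for all $\alpha$'' quantifier is that as $\alpha \downarrow 0$, the ball $\alpha\sqrt m\,\ball_2^m$ shrinks and the hypothesis forces $\set$ to have genuine thickness at finer and finer scales; any proof must exploit many values of $\alpha$ simultaneously. This is exactly what the paper's Lemma~\ref{lem:shortaxes} does — it sets $\alpha = \sqrt{C_0 mr/d}\cdot m^{-1/4}$, a scale that varies with the dyadic parameter $r$ — and the sum over scales in the proof of the theorem is where the hypothesis for all $\alpha$ gets cashed in. The $M$-ellipsoid machinery (Proposition~\ref{prop:regmellipsoid}) is what makes this multi-scale bookkeeping tractable: it gives simultaneous two-sided covering control of $\set$ by a fixed ellipsoid at every scale.

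Two secondary steps in your outline are also false as stated. The ``Steiner-type'' bound $\Vol(K + 2B) \le 3^m \max(\Vol(K), \Vol(2B))$ fails badly — e.g.\ when $K$ is a long segment, $\Vol(K) = 0$ but $\Vol(K+2B)$ is large, so one cannot extract a volume lower bound for $\set$ from a volume lower bound for $\set+B$. And the final passage from volume to Gaussian measure is also incorrect: for $\set = [-e^{m^2}, e^{m^2}] \times [-e^{-m^2}, e^{-m^2}] \times [-1,1]^{m-2}$ one has $\Vol(\set)^{1/m} = 2 = \Omega(1)$, but $\gamma_m(\set) \le \Pr[|g_2| \le e^{-m^2}] \approx e^{-m^2}$, so ``large volume radius'' does not imply ``Gaussian measure $\ge \exp(-O(m))$'' for symmetric convex bodies. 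You anticipate this pancake obstruction yourself, but the resolutions you propose do not actually close the gap; the paper circumvents it precisely by proving the measure lower bound first for the $M$-ellipsoid (where short axes are controlled by Lemma~\ref{lem:shortaxes}) and then transferring to $\set$ via a covering-number bound, rather than by going through volume.
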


The main technical tool we use is the existence of \textit{$1$-regular $M$-ellipsoids}. We state a variant of  \cite{pisier_1989}, which gives a result for $p$-regular $M$-ellipsoids for all $p \in (0, 2)$; we only use $p = 1$.

\begin{proposition}[Corollary 7.16,~\cite{pisier_1989}] \label{prop:regmellipsoid}
    There is a constant $C_1$ such that for any symmetric convex $\set \subseteq \R^m$, there is an ellipsoid $\ME$ with 
    \[ \max\{N(\set,t\ME), N(\ME, t\set)\} \le \exp(C_1 m t^{-1}).\]
\end{proposition}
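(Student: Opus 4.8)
The plan is to reduce the statement to a single piece of hard geometry — the existence of an \emph{$M$-position} of $\set$ — and then obtain the covering estimates at all scales from standard Gaussian comparison inequalities. Recall the Gaussian mean width $w(\set) = \E_{g\sim\Nor(\0_m,\id_m)}\sup_{x\in\set}\inprod{g}{x}$ and the averaged gauge $M(\set) = \int_{\partial\ball_2^m}\norm{\theta}_{\set}\,\dd\sigma(\theta)$, where $\norm{\cdot}_\set$ is the Minkowski functional of $\set$; by rotational invariance of the Gaussian, $w(\set)\asymp\sqrt m\,M(\set^\circ)$ and $w(\set^\circ)\asymp\sqrt m\,M(\set)$. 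I would first invoke Milman's $M$-position theorem: there is an invertible linear map $\mathbf{T}$ such that $\set' \defeq \mathbf{T}\set$ satisfies $M(\set')\,M((\set')^\circ)\le C$ for an absolute constant $C$. Taking $\ME \defeq \mathbf{T}^{-1}\ball_2^m$ and using that covering numbers are invariant under invertible linear maps, we get $N(\set,t\ME) = N(\set',t\ball_2^m)$, $N(\ME,t\set) = N(\ball_2^m,t\set')$, and likewise for the polars, so it suffices to bound these four quantities for $\set'$ against $\ball_2^m$.

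With $M(\set'),M((\set')^\circ)\le C$ in hand the covering bounds are routine. Sudakov's inequality gives $t\sqrt{\log N(\set',t\ball_2^m)}\lesssim w(\set')\asymp\sqrt m\,M((\set')^\circ)\lesssim\sqrt m$, hence $\log N(\set',t\ball_2^m)\lesssim m/t^2$, and the same with $\set'$ replaced by $(\set')^\circ$. The dual Sudakov inequality of Pajor--Tomczak-Jaegermann gives $t\sqrt{\log N(\ball_2^m,t\set')}\lesssim\E\norm{g}_{\set'}\asymp\sqrt m\,M(\set')\lesssim\sqrt m$, hence $\log N(\ball_2^m,t\set')\lesssim m/t^2$, again also for $(\set')^\circ$. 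Thus all four of $N(\set',t\ball_2^m)$, $N(\ball_2^m,t\set')$, $N((\set')^\circ,t\ball_2^m)$, $N(\ball_2^m,t(\set')^\circ)$ are at most $\exp(C'm/t^2)$; for $t\ge 1$ this is at most $\exp(C'm/t)$, the asserted bound (with room to spare, since $p=1$ is weaker than the $p=2$ decay we actually produce). For $t<1$ the right side $\exp(C_1 m/t)$ already exceeds $\exp(C_1 m)$, and the $t=1$ estimate together with submultiplicativity of covering numbers handles that range once $C_1$ is taken large.

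Everything therefore rests on the $M$-position step, and that is where the real difficulty lies. The textbook route is: (i) in the \emph{$\ell$-position} (the Euclidean structure minimizing $\ell(\set)\,\ell(\set^\circ)$, with $\ell(\set)^2=\E\norm{g}_{\set^\circ}^2$), Pisier's bound $O(\log m)$ on the $K$-convexity constant of an $m$-dimensional normed space yields $M(\set)\,M(\set^\circ)\lesssim\log m$; (ii) the logarithm is removed by an iteration built on Milman's quotient-of-subspace theorem — equivalently, the reverse Brunn--Minkowski inequality — passing at each step to a proportional-dimensional quotient of a subspace that is $O(1)$-Euclidean, and driving $M(\set)\,M(\set^\circ)$ down to an absolute constant over $O(\log\log m)$ halving steps. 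This quotient-of-subspace/reverse-Brunn--Minkowski machinery is the heavy analytic input, and I would cite it as a black box (it is developed in detail in~\cite{pisier_1989}). An alternative I would mention is Pisier's own argument, which sidesteps the absolute-constant $M$-position and instead builds the regular ellipsoid directly: tensor $\set$ with $\ell_2^N$, apply the (non-regular) $M$-ellipsoid theorem in dimension $Nm$, and extract an $m$-dimensional ellipsoid whose covering profile against $\set$ decays like $t^{-p}$ for any fixed $p<2$ — this is the source of the restriction $p<2$ and of the blow-up of the constant as $p\to 2$, and it is harmless here since we only need $p=1$.
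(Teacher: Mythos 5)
The paper does not prove this proposition; it is cited verbatim as Corollary~7.16 of~\cite{pisier_1989}, so there is no internal argument to compare against.

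Your Sudakov / dual-Sudakov reduction is clean, but the input you feed it is not a theorem. You invoke an ``$M$-position'' of $\set$ in which $M(\set')\,M((\set')^\circ) \le C$ for an \emph{absolute} constant $C$. This is not known: the best available bound, obtained exactly as in your step~(i) via the $\ell$-position and Pisier's $K$-convexity estimate, is $M(\set')\,M((\set')^\circ) \lesssim \log m$, and removing the logarithm is a long-standing open problem. Your step~(ii) does not repair this: the quotient-of-subspace theorem yields an $O(1)$-Euclidean structure only on a proportional-dimensional quotient of a subspace, and there is no mechanism for propagating that improvement back to the ambient $m$-dimensional body, so the proposed iteration ``driving $M\,M^*$ down over $O(\log\log m)$ halving steps'' has no content. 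A clean sanity check shows the route cannot be repaired as stated: if an absolute-constant $M\,M^*$ position existed, Sudakov and dual Sudakov would immediately produce a $2$-regular $M$-ellipsoid with decay $\exp(Cm/t^2)$, whereas Pisier's theorem holds only for $p<2$ and its constant blows up as $p\to 2$; the failure at $p=2$ is precisely the open $M\,M^*$ problem in disguise. Nor can you simply accept the $\log m$ and enlarge $C_1$: with $M(\set')\asymp M((\set')^\circ)\lesssim\sqrt{\log m}$, Sudakov gives $N(\set',t\ball_2^m)\le\exp(Cm\log m/t^2)$, which only beats $\exp(C_1 m/t)$ when $t\gtrsim\log m$ and in particular fails at $t=1$. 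The ``alternative'' you describe at the end --- tensoring $\set$ with $\ell_2^N$, applying the non-regular $M$-ellipsoid theorem in dimension $Nm$, and extracting an $m$-dimensional ellipsoid with $t^{-p}$ covering decay for a fixed $p<2$ --- is Pisier's actual argument and is the one that works; it should have been the primary route rather than a side remark, and you do correctly identify it as the source of the restriction $p<2$.
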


Before giving our main technical result Lemma~\ref{lem:shortaxes}, we need a few facts about covering numbers. For the following standard facts, see Theorem 4.1.13 and Facts 4.1.7, 4.1.8 and 4.1.9 in~\cite{AsymptoticGeometricAnalysis-Book2015}.

\begin{fact} \label{fact:CoverMyCover}
For any convex sets $A,B,C \subseteq \R^m$, we have $N(A,B) \le N(A,C) N(C, B)$.
\end{fact}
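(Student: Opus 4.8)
\textbf{Proof plan for Fact~\ref{fact:CoverMyCover}.}
The plan is to argue directly from the definition of the covering number as the minimum number of translates of the second set needed to cover the first. First I would fix near-optimal coverings: let $N(A,C) = p$, so there exist centers $x_1,\dots,x_p \in \R^m$ with $A \subseteq \bigcup_{i=1}^p (x_i + C)$, and let $N(C,B) = q$, so there exist centers $y_1,\dots,y_q \in \R^m$ with $C \subseteq \bigcup_{j=1}^q (y_j + B)$. (One should note covering numbers are finite whenever $B$ has nonempty interior, which is the standing assumption for this kind of statement; if some covering number is infinite the inequality is vacuous.)

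The key step is a substitution: for each $i$, since $C \subseteq \bigcup_j (y_j + B)$, translating by $x_i$ gives $x_i + C \subseteq \bigcup_{j=1}^q (x_i + y_j + B)$. Combining with the cover of $A$ by the $x_i + C$ yields
\[
A \;\subseteq\; \bigcup_{i=1}^p (x_i + C) \;\subseteq\; \bigcup_{i=1}^p \bigcup_{j=1}^q \bigl(x_i + y_j + B\bigr),
\]
which exhibits a covering of $A$ by $pq$ translates of $B$, with centers $x_i + y_j$. Hence $N(A,B) \le pq = N(A,C)\,N(C,B)$.

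I do not anticipate a genuine obstacle here — this is the standard ``transitivity/submultiplicativity'' of covering numbers and the argument above is essentially complete. The only minor care needed is bookkeeping about translates versus homothets: the definition in use (``fewest number of translates of $B$ needed to cover $A$'') is purely about translations, so no scaling bookkeeping is required, and the proof goes through verbatim. If instead one wanted the scaled version $N(A, tB) \le N(A, sB)\,N(sB, tB)$ used elsewhere in the section, it follows from this fact by taking $C = sB$.
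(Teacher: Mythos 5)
Your proof is correct and is the standard submultiplicativity argument for covering numbers. The paper does not prove this fact itself but cites it to a textbook reference (\cite{AsymptoticGeometricAnalysis-Book2015}), and the argument you give — composing a cover of $A$ by translates of $C$ with a cover of $C$ by translates of $B$ to produce a cover of $A$ by $N(A,C)N(C,B)$ translates of $B$ — is exactly the proof one finds in that reference.
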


\begin{fact} \label{fact:CoverMySum}
For any convex sets $A,B,C \subseteq \R^m$, we have $N(A+C,B+C) \le N(A,B)$.
\end{fact}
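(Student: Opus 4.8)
The plan is to argue directly from the definition of the covering number, with no appeal to convexity (which plays no role here and is retained only for consistency with the surrounding statements). If $N(A,B) = \infty$ there is nothing to prove, so assume $N(A,B) = k < \infty$ and fix translation vectors $x_1, \ldots, x_k \in \R^m$ realizing the cover, i.e.\ $A \subseteq \bigcup_{i \in [k]} (x_i + B)$.

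The one substantive step is to observe that the \emph{same} $k$ translation vectors witness a cover of $A + C$ by translates of $B + C$. Indeed, take any $z \in A + C$ and write $z = a + c$ with $a \in A$, $c \in C$. By the cover of $A$, there is an index $i \in [k]$ and a point $b \in B$ with $a = x_i + b$, whence $z = x_i + (b + c) \in x_i + (B + C)$. As $z$ was arbitrary, $A + C \subseteq \bigcup_{i \in [k]} \bigl(x_i + (B+C)\bigr)$, so $N(A+C, B+C) \le k = N(A,B)$, as claimed.

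I expect no real obstacle: this is purely a set-theoretic monotonicity of covering numbers under Minkowski addition of a common set $C$. The only bookkeeping point is to dispatch the degenerate case $N(A,B) = \infty$ at the outset, which is immediate, and to note that finiteness of the covering number is exactly what licenses choosing the finite list $x_1,\dots,x_k$ above.
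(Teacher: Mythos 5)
Your proof is correct: the same translation vectors that witness a cover of $A$ by translates of $B$ witness a cover of $A+C$ by translates of $B+C$, and the argument is purely set-theoretic, so convexity is indeed irrelevant. The paper itself does not prove this fact but cites it from the literature (Artstein-Avidan--Giannopoulos--Milman, Fact 4.1.8), and the cited proof is the same one-line argument you give.
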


\begin{fact}\label{fact:coveringVolLB}
 For any convex sets $A,B \subseteq \R^m$, we have $N(A,B) \ge \frac{\Vol(A)}{\Vol(B)}$. Similarly, $N(A,B) \ge \frac{\gamma_m(A)}{\gamma_m(B)}$.
\end{fact}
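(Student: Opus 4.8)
The plan is to argue directly from the definition of the covering number given in the preliminaries, namely that $N(A,B)$ is the fewest translates of $B$ whose union contains $A$. If $N(A,B) = \infty$ both inequalities are trivial, so I would assume $N \defeq N(A,B) < \infty$ and fix points $x_1, \dots, x_N \in \R^m$ with $A \subseteq \bigcup_{i \in [N]} (x_i + B)$. Everything then follows from monotonicity and finite subadditivity of the relevant measure.

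For the volume bound, I would use translation invariance of Lebesgue measure together with subadditivity:
\[
\Vol(A) \;\le\; \Vol\Par{\bigcup_{i \in [N]} (x_i + B)} \;\le\; \sum_{i \in [N]} \Vol(x_i + B) \;=\; N \cdot \Vol(B),
\]
and then rearrange to get $N(A,B) = N \ge \Vol(A)/\Vol(B)$, with the convention that the ratio is $+\infty$ (hence $N$ is necessarily infinite) when $\Vol(B) = 0 < \Vol(A)$. For the Gaussian bound, the same monotonicity and subadditivity give $\gamma_m(A) \le \sum_{i \in [N]} \gamma_m(x_i + B)$; the one additional ingredient is that translating a symmetric convex body away from the origin cannot increase its Gaussian mass, i.e.\ $\gamma_m(x + B) \le \gamma_m(B)$ for all $x \in \R^m$. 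This is exactly Anderson's inequality applied to the log-concave density $\gamma_m$ and the symmetric convex set $B$. Combining, $\gamma_m(A) \le N \gamma_m(B)$, and rearranging finishes. In every application of Fact~\ref{fact:coveringVolLB} in this paper the set $B$ is a symmetric convex body (a multiple of an $M$-ellipsoid, of $\set$, or of $\ball_2^m$), so this symmetry hypothesis is harmless.

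The content here is genuinely minimal, and the only step requiring any care is the Gaussian bound: since $\gamma_m$ is not translation invariant, one cannot simply mimic the volume argument, and in fact the inequality $\gamma_m(A) \ge \gamma_m(B)/N$ can fail if $B$ is allowed to be a non-symmetric convex set placed far from the origin. Thus the main (and essentially only) point is to invoke Anderson's inequality — equivalently, the fact that among all its translates a centered symmetric convex set captures the most Gaussian measure — to control $\gamma_m(x_i + B)$ by $\gamma_m(B)$; the rest is the monotonicity and countable subadditivity of a measure.
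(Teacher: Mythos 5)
Your proof is correct, and since the paper cites this fact from \cite{AsymptoticGeometricAnalysis-Book2015} without supplying an argument, there is no in-paper proof to compare against; what you give is the standard one (monotonicity plus subadditivity, translation invariance of Lebesgue measure for the volume bound, and Anderson's inequality for the Gaussian bound). You are also right to flag that the Gaussian half of Fact~\ref{fact:coveringVolLB} is false as literally stated when $B$ is merely convex (e.g.\ $A = [0,1]$, $B = [100,101]$ in $\R^1$ gives $N(A,B)=1$ but an arbitrarily large ratio $\gamma_1(A)/\gamma_1(B)$), and that the implicit centrally-symmetric hypothesis on $B$ holds in every place the fact is invoked in the paper — in particular in the proof of Theorem~\ref{thm:expansionimplieslowerbound}, where $B = \set$ is symmetric.
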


\begin{fact}\label{fact:coveringVolUB}
 For any convex sets $A,B \subseteq \R^m$ with $B$ symmetric, $N(A,B) \le \frac{\Vol(2A + B)}{\Vol(B)}$.   
\end{fact}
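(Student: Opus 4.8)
The plan is to run the classical ``maximal packing implies covering'' argument, paired with a single volume comparison. The one elementary ingredient I would isolate first is the identity $2B = B + B = B - B$ for symmetric convex $B$: convexity gives $b_1 + b_2 = 2\cdot\frac{b_1+b_2}{2} \in 2B$ while $2b = b+b \in B+B$, so $2B = B+B$; and symmetry $-B = B$ turns $B - B$ into $B + B$. This is exactly what lets a packing by half-translates $\frac12 B$ be upgraded into a covering by full translates $B$.

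Next I would fix a \emph{maximal} collection of points $x_1,\dots,x_N \in A$ such that the translates $\{x_i + \tfrac12 B\}_{i\in[N]}$ are pairwise disjoint. (To see a finite maximal collection exists: if packings of unbounded size existed, then $\Vol(A + \tfrac12 B) \ge N\Vol(\tfrac12 B)$ for all $N$, forcing $\Vol(A+\tfrac12 B) = \infty$ whenever $\Vol(B) > 0$; but $\Vol(2A+B) = 2^m\Vol(A+\tfrac12 B)$, so the claimed bound would be vacuous. Degenerate $B$ with $\Vol(B)=0$ is handled by the usual convention that the right side is $+\infty$. So assume $N < \infty$ and the collection is maximal.) I claim $\{x_i + B\}_{i\in[N]}$ covers $A$, which gives $N(A,B) \le N$: for any $a \in A$, maximality forces $(a+\tfrac12 B)\cap(x_i+\tfrac12 B)\neq\emptyset$ for some $i$, so there exist $b,b'\in B$ with $a - x_i = \tfrac12(b'-b) \in \tfrac12(B-B) = \tfrac12\cdot 2B = B$, i.e.\ $a \in x_i + B$.

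Finally I would compare volumes. The disjoint translates $\{x_i+\tfrac12 B\}_{i\in[N]}$ all lie inside $A + \tfrac12 B$, so $N\cdot\Vol(\tfrac12 B) \le \Vol(A+\tfrac12 B)$. Dividing, and using $\Vol(2S) = 2^m\Vol(S)$ with $S = A + \tfrac12 B$ (the factor $2^m$ cancels between $\Vol(\tfrac12 B) = 2^{-m}\Vol(B)$ and the numerator), I get $N \le \frac{\Vol(A+\tfrac12 B)}{2^{-m}\Vol(B)} = \frac{\Vol(2A+B)}{\Vol(B)}$. Chaining with $N(A,B)\le N$ completes the proof.

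I do not expect a genuine obstacle here: the argument is fully elementary, and the only care needed is in dispatching the degenerate cases ($B$ with empty interior, or $A$ unbounded making both sides infinite) and justifying that a finite maximal packing exists, both of which I would handle in the parenthetical remark above rather than as a separate step.
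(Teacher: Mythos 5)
Your proof is correct, and since the paper does not prove Fact~\ref{fact:coveringVolUB} itself (it simply cites \cite{AsymptoticGeometricAnalysis-Book2015}), your argument is a self-contained replacement for the reference; the maximal-packing-upgrades-to-covering strategy, using $\tfrac12 B - \tfrac12 B = B$ for symmetric convex $B$ plus the volume comparison of disjoint half-translates inside $A + \tfrac12 B$, is exactly the standard textbook proof the paper points to.
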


\begin{fact}\label{fact:coveringIntersection}
   For any convex sets $A,B \subseteq \R^m$ with $A$ symmetric, $N(A,2(A \cap B)) \le N(A,B)$.     
\end{fact}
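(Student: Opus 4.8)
The plan is to take an optimal cover of $A$ by translates of $B$ and to \emph{recenter} each translate at a point of $A$; this recentering is exactly what costs the factor of $2$ and converts $B$ into $2(A\cap B)$. First I would fix points $x_1,\dots,x_N \in \R^m$, where $N \defeq N(A,B)$, with $A \subseteq \bigcup_{i\in[N]}(x_i+B)$, and immediately discard every index $i$ for which $(x_i+B)\cap A = \emptyset$ (such a translate contributes nothing to the part of the cover that meets $A$, and the at most $N$ remaining translates still cover $A$). For each surviving index I pick an arbitrary $a_i \in A\cap(x_i+B)$.

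The crux is then the pointwise containment
\[A\cap(x_i+B)\subseteq a_i + 2(A\cap B)\]
for every surviving $i$. To check it, take $z \in A\cap(x_i+B)$. On one hand $z - a_i$ is a difference of two points of $A$, so by symmetry and convexity of $A$ we get $z - a_i \in A+(-A) = A+A = 2A$. On the other hand $z - a_i = (z-x_i)-(a_i-x_i)$ is a difference of two points of $B$, so the same reasoning applied to $B$ gives $z - a_i \in 2B$. Hence $\thalf(z-a_i) \in A\cap B$, i.e. $z \in a_i + 2(A\cap B)$. Taking the union over the surviving indices, $A = \bigcup_i (A\cap(x_i+B)) \subseteq \bigcup_i (a_i + 2(A\cap B))$, which exhibits a cover of $A$ by at most $N$ translates of $2(A\cap B)$; therefore $N(A,2(A\cap B)) \le N(A,B)$.

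The argument is genuinely short and I do not anticipate a real obstacle — the only point that needs care is the appeal to central symmetry. I used $A = -A$ to land $z-a_i$ in $2A$, and the analogous property of $B$ (equivalently $\thalf(B-B)\subseteq B$) to land $z-a_i$ in $2B$; the latter fails for a general convex $B$, so to make the statement go through as written one should read it with $B$ also symmetric, which is the situation in every application of this fact in the paper. No nontrivial covering-number machinery (Facts~\ref{fact:CoverMyCover}--\ref{fact:coveringVolUB}) is needed beyond the discard step and the elementary computation above.
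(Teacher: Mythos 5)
Your proof is correct, and it is essentially the standard ``recentering'' argument for this kind of covering-number fact: replace each useful translate's center by a point of $A$, and absorb the error into a factor of~$2$. The paper itself does not prove Fact~\ref{fact:coveringIntersection} at all --- it is cited as standard from \cite{AsymptoticGeometricAnalysis-Book2015} along with Facts~\ref{fact:CoverMyCover}--\ref{fact:coveringVolUB} --- so your argument is a self-contained substitute for that citation rather than a competing route.

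Your caveat about requiring $B$ to be symmetric is not just an artifact of your particular proof; the statement as literally written (only $A$ symmetric, $B$ merely convex) is in fact false. For instance, if $A$ and $B$ are nonempty and disjoint, then $A \cap B = \emptyset$, so $2(A \cap B) = \emptyset$ and $N(A, 2(A \cap B)) = \infty$, while $N(A,B)$ is finite for bounded $A,B$. More to the point, your step $z - a_i \in 2B$ uses $\tfrac12(B - B) \subseteq B$, which characterizes symmetric convex $B$ and genuinely fails otherwise (e.g.\ $B = [0,1]$). As you note, this is harmless for the paper: the only place Fact~\ref{fact:coveringIntersection} is invoked is in the proof of Lemma~\ref{lem:subspaceCover}, which already hypothesizes that both $A$ and $B$ are symmetric. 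So the right reading of the Fact is with both sets symmetric, and under that reading your proof is complete.
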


For the following fact, see Lemma 3.3 in~\cite{DJR22}.

\begin{fact}\label{fact:CoverBallLargeSet}
For any symmetric convex set $\set \subseteq \R^m$ with $\gamma_m(\set) \ge \frac{1}{2}$, we have $N(\sqrt{m} \ball^m_2, \set) \le \exp(C_2 m)$ for a universal constant $C_2 > 0$.    
\end{fact}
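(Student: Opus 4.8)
The plan is a classical volumetric argument, with the one genuine subtlety being that $\set$ need not be bounded — a thin slab has Gaussian measure $\tfrac12$ yet infinite volume — so the first move is to truncate it to a Euclidean ball of radius $\Theta(\sqrt m)$. By the standard Gaussian concentration estimate underlying Assumption~\ref{assume:rs_big}, there is a universal constant $R$ with $\gamma_m\Par{R\sqrt m\,\ball_2^m}\ge\frac34$ (in fact $R=2$ works via Chebyshev, since $\Var(\|g\|_2^2)=2m$). I would set $\set_0\defeq\set\cap R\sqrt m\,\ball_2^m$, which is symmetric, convex, and bounded, with $\gamma_m(\set_0)\ge\gamma_m(\set)+\gamma_m\Par{R\sqrt m\,\ball_2^m}-1\ge\frac14$. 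Since $\set_0\subseteq\set$, any covering of $\sqrt m\,\ball_2^m$ by translates of $\set_0$ is also one by translates of $\set$, so $N\Par{\sqrt m\,\ball_2^m,\set}\le N\Par{\sqrt m\,\ball_2^m,\set_0}$ and it suffices to bound the right-hand side.

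Next I would lower bound $\Vol(\set_0)$ using that the Gaussian density is pointwise at most $(2\pi)^{-m/2}$: this gives $\frac14\le\gamma_m(\set_0)\le(2\pi)^{-m/2}\Vol(\set_0)$, hence $\Vol(\set_0)\ge\frac14(2\pi)^{m/2}$. Then Fact~\ref{fact:coveringVolUB}, applied with $A=\sqrt m\,\ball_2^m$ and the symmetric set $B=\set_0$, yields $N\Par{\sqrt m\,\ball_2^m,\set_0}\le\Vol\Par{2\sqrt m\,\ball_2^m+\set_0}/\Vol(\set_0)$, and since $\set_0\subseteq R\sqrt m\,\ball_2^m$ the numerator is at most $\Vol\Par{(2+R)\sqrt m\,\ball_2^m}=\Par{(2+R)\sqrt m}^m\Vol(\ball_2^m)\le\Par{(2+R)\sqrt{2\pi e}}^m$, using the standard estimate $\Vol(\ball_2^m)\le(2\pi e/m)^{m/2}$. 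Combining these with the volume lower bound gives $N\Par{\sqrt m\,\ball_2^m,\set}\le 4\Par{(2+R)\sqrt e}^m\le\exp(C_2 m)$ for $C_2\defeq\ln 4+\ln(2+R)+\frac12$, absorbing the constant factor $4$ into the exponent for $m\ge1$.

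I do not expect a real obstacle here; the only point requiring care is the truncation radius, which must scale like $\sqrt m$ so that simultaneously $\gamma_m(\set_0)=\Omega(1)$ (keeping the volume lower bound alive) and the covered ball $\sqrt m\,\ball_2^m$ has radius comparable to that of the truncated body (keeping the volume upper bound $\exp(O(m))$). Everything else is the routine volume computation above, and all of the tools used are already stated in the excerpt.
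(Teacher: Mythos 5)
Your proof is correct. Note that the paper does not prove this statement itself — it records it as a fact and cites Lemma~3.3 of~\cite{DJR22} — so there is no in-paper proof to compare against; your self-contained argument is the standard volumetric one and is almost certainly the same route as the cited source. The one real subtlety you identify (that $\set$ may be unbounded, e.g.\ a slab, so a raw volume lower bound fails) is indeed the crux, and truncating to $\set_0 = \set \cap R\sqrt{m}\,\ball_2^m$ with $R = 2$ handles it: inclusion--exclusion and Chebyshev give $\gamma_m(\set_0) \ge \tfrac14$, the pointwise density bound $(2\pi)^{-m/2}$ converts this into $\Vol(\set_0) \ge \tfrac14 (2\pi)^{m/2}$, and then Fact~\ref{fact:coveringVolUB} together with $\set_0 \subseteq R\sqrt{m}\,\ball_2^m$, the identity $\Vol((2+R)\sqrt m\,\ball_2^m) = ((2+R)\sqrt m)^m\Vol(\ball_2^m)$, and the estimate $\Vol(\ball_2^m) \le (2\pi e/m)^{m/2}$ close the argument. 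I verified the constant chase, including the final absorption of the factor $4$ into the exponent for $m \ge 1$; there are no gaps.
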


As a corollary of Fact~\ref{fact:coveringVolUB} we also have the following.

\begin{lemma}\label{lem:coveringIntersectVolUB}
    For symmetric convex sets $A, B \subseteq \R^m$, $N(A,B) \le 3^m \cdot \frac{\Vol(A)}{\Vol(A \cap B)}$.
\end{lemma}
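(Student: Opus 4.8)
The plan is to apply Fact~\ref{fact:coveringVolUB} not with $B$ itself but with the smaller symmetric body $A \cap B$, and then bound the resulting volume ratio. First note that since $A$ and $B$ are both symmetric and convex, so is $A \cap B$; in particular Fact~\ref{fact:coveringVolUB} applies with ``$B$'' taken to be $A \cap B$, giving
\[
N(A, A \cap B) \le \frac{\Vol\Par{2A + (A \cap B)}}{\Vol(A \cap B)}.
\]
Since $A \cap B \subseteq A$ and $A$ is convex with $\0_m \in A$ (as it is symmetric), we have $2A + (A \cap B) \subseteq 2A + A = 3A$, so $\Vol(2A + (A \cap B)) \le \Vol(3A) = 3^m \Vol(A)$. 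Combining, $N(A, A \cap B) \le 3^m \cdot \frac{\Vol(A)}{\Vol(A \cap B)}$.

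It remains to pass from covering by translates of $A \cap B$ to covering by translates of $B$. This is immediate from monotonicity of the covering number in the second argument: since $A \cap B \subseteq B$, every translate $x + (A \cap B)$ is contained in $x + B$, so any cover of $A$ by translates of $A \cap B$ is also a cover of $A$ by translates of $B$. Hence $N(A, B) \le N(A, A \cap B) \le 3^m \cdot \frac{\Vol(A)}{\Vol(A \cap B)}$, which is the claim. (Alternatively one could route this through Fact~\ref{fact:CoverMyCover}, but the direct containment argument is cleaner.)

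I do not anticipate any real obstacle here — the only points to verify carefully are that $A \cap B$ is symmetric (so Fact~\ref{fact:coveringVolUB} is applicable) and the Minkowski identity $2A + A = 3A$ for convex $A$ containing the origin, both of which are routine.
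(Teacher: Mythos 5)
Your proposal is correct and matches the paper's own proof almost verbatim: both apply Fact~\ref{fact:coveringVolUB} with the symmetric body $A \cap B$ in the role of ``$B$,'' use the containment $2A + (A\cap B) \subseteq 3A$ with $\Vol(3A) = 3^m\Vol(A)$, and invoke the monotonicity bound $N(A,B) \le N(A, A\cap B)$. The only difference is cosmetic (order of steps and a brief explicit justification of the monotonicity).
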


\begin{proof}
    Since $A \cap B$ is also symmetric, by Fact~\ref{fact:coveringVolUB} we have
    \[ N(A,B) \le N(A, A \cap B) \le \frac{\Vol(2A + (A \cap B)))}{\Vol(A \cap B)} \le \frac{\Vol(3A)}{\Vol(A \cap B)} = 3^m \cdot \frac{\Vol(A)}{\Vol(A \cap B)}. \qedhere \]
\end{proof}

We will also need a more specific result about covering numbers of slices by subspaces.

\begin{lemma}\label{lem:subspaceCover}
    For any symmetric convex $A, B \subseteq \R^m$ and any $d$-dimensional subspace $U \subseteq \R^m$, $N(A \cap U, B \cap U) \le 6^d \cdot N(A,B)$.
\end{lemma}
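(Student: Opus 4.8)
The plan is to reduce the subspace-sliced covering number $N(A \cap U, B \cap U)$, computed intrinsically in the $d$-dimensional space $U$, to the ambient covering number $N(A, B)$, at the cost of a $6^d$ factor. First I would fix an optimal ambient cover: let $x_1, \dots, x_N \in \R^m$ with $N = N(A, B)$ be such that $A \subseteq \bigcup_{j} (x_j + B)$. Only the translates that actually meet $A \cap U$ matter, so discard the rest; for each surviving index $j$ pick a point $p_j \in (x_j + B) \cap A \cap U$. Then for any $z \in A \cap U$ lying in $x_j + B$, we have $z - p_j \in (x_j + B) - (x_j + B) = B - B \subseteq 2B$ by symmetry of $B$, and also $z - p_j \in U$ since both $z$ and $p_j$ lie in $U$. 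Hence $\{p_j\}$ is a cover of $A \cap U$ by translates (centered in $U$) of $2B \cap U$, giving $N(A \cap U, 2B \cap U) \le N(A, B)$.

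The remaining task is to pass from covering by $2B \cap U$ to covering by $B \cap U$, which costs a volume-type factor that is only exponential in $d$ rather than $m$ because everything now lives in the $d$-dimensional space $U$. Here I would invoke Lemma~\ref{lem:coveringIntersectVolUB} (or equivalently Fact~\ref{fact:coveringVolUB}) applied inside $U$: for symmetric convex bodies $K = 2B \cap U$ and $L = B \cap U$ in $U \cong \R^d$, one has $N_U(K, L) \le 3^d \cdot \mathrm{Vol}_d(K) / \mathrm{Vol}_d(K \cap L)$, and since $L \subseteq K$ this is $3^d \cdot \mathrm{Vol}_d(2B \cap U)/\mathrm{Vol}_d(B \cap U) \le 3^d \cdot \mathrm{Vol}_d(2(B\cap U))/\mathrm{Vol}_d(B \cap U) = 3^d \cdot 2^d = 6^d$. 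Combining with Fact~\ref{fact:CoverMyCover} applied in $U$,
\[
N(A \cap U, B \cap U) \le N_U(A \cap U, 2B \cap U) \cdot N_U(2B \cap U, B \cap U) \le N(A, B) \cdot 6^d,
\]
which is the claim. (One should double-check that the facts quoted for $\R^m$ are genuinely dimension-free statements so that instantiating them with the ambient space replaced by $U$ is legitimate; this is the case since they are stated for arbitrary convex sets in a fixed Euclidean space.)

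The main obstacle, such as it is, is purely bookkeeping: making sure the cover of $A \cap U$ uses translates whose \emph{centers} lie in $U$ (so that the covering number is the intrinsic one in $U$, not merely an ambient cover that happens to use the body $B \cap U$), and being careful that the factor-of-two inflation $2B \to B$ is handled by a $d$-dimensional volume ratio rather than an $m$-dimensional one — this is exactly where the $6^d$ comes from and is the only place the dimension of the subspace, rather than of the ambient space, enters. Everything else is a direct chain of the already-established covering facts.
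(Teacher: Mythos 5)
Your proof is correct, and it takes a genuinely different (and arguably cleaner) route than the paper's. The paper first bounds $N(A\cap U, B\cap U) \le N(A\cap U, A\cap B\cap U)$, applies Lemma~\ref{lem:coveringIntersectVolUB} inside $U$ to get $3^d \Vol_d(A\cap U)/\Vol_d(A\cap B\cap U)$, and then bounds $\Vol_d(A\cap U) \le N(A,B)\cdot \Vol_d(2(A\cap B)\cap U) = 2^d N(A,B)\Vol_d(A\cap B\cap U)$ by restricting a cover of $A$ by translates of $2(A\cap B)$ (Fact~\ref{fact:coveringIntersection}) to the subspace $U$ — a step that implicitly relies on the fact that central sections of a symmetric convex body have maximal volume among parallel sections. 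You instead decompose through the body $2B\cap U$ rather than $A\cap B\cap U$: you first show $N(A\cap U, 2B\cap U)\le N(A,B)$ by explicitly re-centering a cover of $A$ at points $p_j\in (x_j+B)\cap A\cap U$ (paying the factor $B-B=2B$), and then remove the inflation with $N(2B\cap U, B\cap U)\le 3^d\cdot 2^d$ via Lemma~\ref{lem:coveringIntersectVolUB} applied in $U$. The net constant $3^d\cdot 2^d=6^d$ is the same, but your re-centering trick makes the subspace-restriction step entirely elementary and self-contained, avoiding any appeal to a section-volume inequality; your version is also more modular (it factors through $N(A\cap U, 2B\cap U)\cdot N(2B\cap U, B\cap U)$ via Fact~\ref{fact:CoverMyCover} rather than through a single volume ratio). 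One very minor aesthetic point: $2B\cap U = 2(B\cap U)$ holds with equality (not just inclusion) since $U$ is a subspace, so your "$\le$" in that step can be an "$=$".
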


\begin{proof}
By Lemma~\ref{lem:coveringIntersectVolUB}, we have
\[ N(A \cap U, B \cap U) \le N(A \cap U, A \cap B \cap U) \le 3^d \cdot \frac{\Vol(A\cap U)}{\Vol(A \cap B \cap U)}.\]
It remains to note that by Fact~\ref{fact:coveringIntersection}, we can cover $A$ with $N(A,B)$ copies of $2(A \cap B)$, so that Fact~\ref{fact:coveringVolLB} yields $\Vol(A \cap U) \le N(A, B) \cdot \Vol(2(A\cap B) \cap U) = 2^d \cdot N(A,B) \cdot \Vol(A \cap B \cap U)$.
\end{proof}

We are now ready to state and prove our main technical lemma.

\begin{lemma} \label{lem:shortaxes}
Let $\set \subseteq \R^m$ be a symmetric convex set such that $\gamma_m (\frac{C_0}{\alpha} \set + \alpha \sqrt{m} \ball^m_2 )\ge \frac{1}{2}$ for all $\alpha > 0$ and let $\ME$ be a 1-regular $M$-ellipsoid for $\set$. Then there is a universal constant $C'$ such that for every $r > 0$, the number of axes of $\ME$ of length at most $r$ is at most $C' \sqrt r \cdot m^{\frac 3 4}$.
\end{lemma}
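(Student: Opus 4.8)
The plan is to pass to the subspace $U \subseteq \R^m$ spanned by the axes of $\ME$ of length at most $r$, and bound its dimension $k$. The two ingredients are: (i) the $1$-regularity of $\ME$ (Proposition~\ref{prop:regmellipsoid}) lets us cover $\set$, hence also its orthogonal projection $\proj_U\set$, by few translates of a small Euclidean ball inside $U$; and (ii) the hypothesis $\gamma_m(\frac{C_0}{\alpha}\set + \alpha\sqrt m\,\ball_2^m) \ge \half$ survives projection, since $g$ lying in a set forces $\proj_U g$ into its projection, and $\proj_U g$ is a standard Gaussian on $U$. The reason the projection of $\ME$ onto $U$ is small is that $U$ is spanned by \emph{axes} of $\ME$: writing things in the principal basis of $\ME$ shows $\proj_U\ME = \ME \cap U \subseteq r(\ball_2^m \cap U)$.

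Concretely, I would fix free parameters $\alpha, t > 0$ (the hypothesis holds for all $\alpha$, and Proposition~\ref{prop:regmellipsoid} for all $t$) and chase covering numbers. By Proposition~\ref{prop:regmellipsoid}, $\set$ is covered by $N \le \exp(C_1 m/t)$ translates of $t\ME$; projecting onto $U$ and using $\proj_U\ME \subseteq r(\ball_2^m\cap U)$, the set $\frac{C_0}{\alpha}\proj_U\set + \alpha\sqrt m\,(\ball_2^m\cap U)$ is covered by $N$ translates of $R(\ball_2^m\cap U)$, where $R \defeq \frac{C_0 t r}{\alpha} + \alpha\sqrt m$. Identifying $U \cong \R^k$ and using that the centered translate of a ball carries the most Gaussian mass, $\half \le N\cdot\gamma_k(R\ball_2^k)$, so $\gamma_k(R\ball_2^k) \ge \half\exp(-C_1 m/t)$. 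Against this I would use the standard small-ball estimate $\gamma_k(R\ball_2^k) \le \Par{\tfrac{eR^2}{k}}^{k/2}$, valid when $R^2 \le k$ (a $\chi^2$ tail bound); taking logarithms yields
\[ \frac k2\log\frac{k}{eR^2} \ \le\ \log 2 + \frac{C_1 m}{t}. \]

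The final step, and the main obstacle, is tuning $\alpha$ and $t$ so this inequality bites with the right exponents. First optimize $R$ over $\alpha$ by AM--GM (setting $\frac{C_0 t r}{\alpha} = \alpha\sqrt m$), which gives $R^2 = 4C_0 t r\sqrt m$ --- crucially of the form (linear in $t$) $\times\,(r\sqrt m)$. Then choose $t$ so that $R^2$ is a fixed small fraction of $k$, e.g.\ $R^2 = k/(4e^3)$, i.e.\ $t = \Theta(k/(r\sqrt m))$: this makes the left side at least $k$ while the right side becomes $\log 2 + \Theta(rm^{3/2}/k)$, so $k^2 = O(rm^{3/2})$, i.e.\ $k \le C'\sqrt r\, m^{3/4}$ for a constant $C'$ depending only on $C_0, C_1$. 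The degenerate cases ($k \le 2$, and the trivial regime $r \ge \sqrt m$ where $C'\sqrt r\, m^{3/4} \ge m \ge k$) are handled separately. The delicate point is that both free parameters are genuinely needed and must be tuned in this order: obtaining $R^2 \propto r\sqrt m$ after the $\alpha$-optimization is exactly what produces the $m^{3/4}$ exponent and the $\sqrt r$ factor, rather than a weaker bound.
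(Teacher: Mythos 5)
Your proof is correct, and it takes a genuinely different route from the paper's. Both arguments pass to the span $U$ of short axes, cover via the $M$-ellipsoid using $\ME\cap U \subseteq r\ball_2^U$, and tune $\alpha$ by AM--GM to produce the $\sqrt{r}\,m^{3/4}$ scale, but the way Gaussian measure enters is different. The paper slices, using $\gamma_U(\set(\alpha)\cap U) \ge \gamma_m(\set(\alpha))$ (which needs symmetry and convexity), then converts the measure lower bound into a covering-number bound via Fact~\ref{fact:CoverBallLargeSet}, pushes this through the subspace-covering machinery (Lemma~\ref{lem:subspaceCover}, Facts~\ref{fact:CoverMyCover}--\ref{fact:coveringIntersection}), and finally contradicts a volume-ratio lower bound. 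You instead project, using the trivial $\gamma_U(\proj_U A) \ge \gamma_m(A)$ (true for any measurable $A$), and directly invoke a $\chi^2$ small-ball estimate on $\gamma_k(R\ball_2^k)$, which replaces the covering-number chain and the volume comparison with a single elementary inequality. This is cleaner: the key step is just $\proj_U(\set(\alpha))$ is covered by $\exp(C_1m/t)$ balls of radius $R$, so $\gamma_k(R\ball_2^k)\geq\tfrac12\exp(-C_1m/t)$, which then bites against $(eR^2/k)^{k/2}$. One small remark: the claim that both free parameters are ``genuinely needed and must be tuned in this order'' is slightly overstated --- the paper's up-front choice $t=m/d$ (making the $M$-ellipsoid covering count $\exp(C_1 d)$) also works with your small-ball estimate, after a WLOG reduction to the regime $R^2\le k$; your tuning is valid, but is not the only one that produces the right exponent. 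Your degenerate-case handling ($k$ bounded by a constant; the trivial regime $r\ge\sqrt m$) is appropriate and, if anything, slightly more careful than the paper's, which leaves the $d=0$ edge case implicit.
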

\begin{proof}
Let $U$ denote the span of directions corresponding to the axes of $\ME$ of length at most $r$ with $d := \dim(U)$. To simplify notation, let $\set(\alpha) := \frac{C_0}{\alpha} \set + \alpha \sqrt{m} \ball^m_2$, so that we still have $\gamma_U (\set(\alpha) \cap U) \ge \gamma_m (\set(\alpha)) \ge \frac{1}{2}$, where $\gamma_U$ is the $d$-dimensional Gaussian measure in $U$. In particular, letting $\ball^U_2$ denote the unit Euclidean ball restricted to $U$, $N(\sqrt{d} \ball^U_2, \set(\alpha) \cap U) \le \exp(C_2 d)$ by Fact~\ref{fact:CoverBallLargeSet}.

Further, note that for any $\alpha > 0$ we have
\begin{align*}
N\Big(\set(\alpha) \cap U, \Big(\frac{C_0}{\alpha} \cdot \frac{m r}{d} + \alpha \sqrt{m}\Big) \ball^U_2\Big) &\le N\Big(\set(\alpha) \cap U, \Big(\frac{C_0}{\alpha} \cdot \frac{m}{d} \ME + \alpha \sqrt{m} \ball^m_2\Big) \cap U\Big) \\ &\le 6^d \cdot N\Big(\set(\alpha), \frac{C_0}{\alpha} \cdot \frac{m}{d} \ME + \alpha \sqrt{m} \ball^m_2\Big) \\ & \le 6^d \cdot N(\set, \frac{m}{d} \ME)  \le 6^d \cdot \exp(C_1 d),
\end{align*}
where in the second inequality we use Lemma~\ref{lem:subspaceCover}, in the third we use Fact~\ref{fact:CoverMySum} and in the fourth we use Proposition~\ref{prop:regmellipsoid}. Setting $\alpha := \sqrt{C_0 \frac{mr}{d} } \cdot m^{-\frac 1 4}$, we have $\frac{C_0}{\alpha} \cdot \frac{m r}{d} + \alpha \sqrt{m} =  \sqrt{C_0 \frac{r}{d} } \cdot m^{\frac 3 4}$, so by Fact~\ref{fact:CoverMyCover},
\begin{align*} N\Par{\sqrt{d} \ball^U_2, 2\sqrt{ \frac{C_0 r}{d}} m^{\frac 3 4} \ball^U_2} & \le N\Par{\sqrt{d} \ball^U_2, \set(\alpha) \cap U} \cdot N\Par{\set(\alpha) \cap U,  2 \sqrt{\frac{C_0 r}{d}} m^{\frac 3 4} \ball^U_2} \\ & \le \exp(C_2 d) \cdot 6^d \cdot \exp(C_1 d). \end{align*}

On the other hand, by Fact~\ref{fact:coveringVolLB},
\[N\Par{\sqrt{d} \ball^U_2, 2\sqrt{\frac{C_0 r}{d}} m^{\frac 3 4} \ball^U_2}\ge \frac{\Vol_d (\sqrt{d} \ball^U_2)}{\Vol_d (2\sqrt{ \frac{C_0 r}{d}} m^{\frac 3 4} \ball^U_2)} = \Par{\frac{d}{2 \sqrt{C_0 r} m^{\frac 3 4} }}^d. \]
Combining the above two displays yields the claim.
\end{proof}

\begin{proof}[Proof of Theorem~\ref{thm:expansionimplieslowerbound}]
First we show a Gaussian measure lower bound for a 1-regular $M$-ellipsoid $\ME$ of $\set$ with axes of lengths $\{\lam_i\}_{i \in [m]}$ sorted in increasing order. Let $k$ denote the maximum index with $\lambda_k \le \sqrt{m}$ and let $\ME'$ denote the ellipsoid with the same eigenvectors as $\ME$ and compressed axes of length $\min\{\lambda_i, \sqrt{m}\}$, so that in particular $\ME' \subseteq \ME$. Note that
\begin{align*} \gamma_m (\ME) \ge \gamma_m(\ME') \ge \int_{\ME'} \frac{1}{(2\pi)^{m/2}} \exp(-\tfrac{1}{2} \underbrace{\|x\|_2^2}_{\le m}) \dd x \ge \exp(-C'm) \cdot \Vol (\ME')  & \ge \exp(-C'' m) \prod_{i\in[k]}\frac{\lambda_i}{\sqrt{m}}, \end{align*}
for some constants $C',C'' > 0$. Denote $I_r := \{i \in [k] : \lambda_i \in [\frac{r}{2}, r]\}$. We apply Lemma~\ref{lem:shortaxes} as follows:
\[  \prod_{i \in [k]} \frac{\lambda_i}{\sqrt{m}} \ge \prod_{r \in \sqrt{m} \cdot 2^{\mathbb{Z}_{\le 0} }  } \Big(\frac{r}{2\sqrt{m}}\Big)^{|I_r|} \ge  \prod_{r \in \sqrt{m} \cdot 2^{\mathbb{Z}_{\le 0} }  } \Big(\frac{r}{2\sqrt{m}}\Big)^{C'\sqrt r m^{\frac 3 4}} .\]
Denoting $\frac r {2\sqrt m} = 2^{-a}$, the product becomes, for a constant $C_3$,
\[ \prod_{i \in [k]} \frac{\lambda_i}{\sqrt{m}} \ge \Par{\prod_{a = 0}^\infty (2^{-a})^{2^{-\frac a 2}} }^{C' m} = \exp(-C_3 m).\]
Finally, we may use Fact~\ref{fact:coveringVolLB} and Proposition~\ref{prop:regmellipsoid} to lower bound the Gaussian measure of $\set$:
\[ \gamma_m (\set) \ge \frac{\gamma_m (\ME)}{N(\ME, \set)} \ge \exp(-(C'' + C_1 + C_3) m) . \qedhere\]
\end{proof}

\subsection{Operator norm discrepancy bodies}\label{ssec:gaussian_lbs}

In this section, we combine Theorem~\ref{thm:expansionimplieslowerbound} with the following generalization of the main result in~\cite{RR20} to obtain new Gaussian measure lower bounds. The argument we use to prove Theorem~\ref{thm:gen_rr20} is almost identical to \cite{RR20}, where we simulate a random walk while blocking ``dangerous'' directions to control a potential function, which results in a measure lower bound. The main difference is in the choice of the potential function we use to capture our trace condition. Due to the significant overlap with \cite{RR20}, we defer a proof of the following result to Appendix~\ref{app:modified_measure} for completeness.

\begin{restatable}{theorem}{restategenrr}\label{thm:gen_rr20}
Suppose $\{\ma_i\}_{i \in [m]} \subset \PSD^n$ satisfy $\alla(\1_m) \preceq \id_n$ and $\Tr(\alla(\1_m)) \le \tau$ for sufficiently large $m$ and $n \le 2^{\frac m 5}$. Following notation \eqref{eq:set_r_def}, there is a constant $C_0$ such that
\[\gamma_m\Par{\frac{C_0}{\alpha} \set_{\sqrt{\frac \tau m}, \alla} + \alpha \sqrt m \ball_2^m} \ge \half, \text{ for all } \alpha \in (0, 1).\]
\end{restatable}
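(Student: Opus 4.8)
The plan is to follow the random-walk strategy of \cite{RR20} closely, replacing their potential function with one adapted to the trace bound $\Tr(\alla(\1_m)) \le \tau$. Concretely, I would construct a discrete-time martingale $x^{(0)} = \0_m, x^{(1)}, x^{(2)}, \dots$ where each increment is a Gaussian step restricted to a carefully chosen subspace that blocks ``dangerous'' directions; the endpoint of the walk, after rescaling, should be close in distribution to a Gaussian while landing in a small multiple of $\set_{\sqrt{\tau/m}, \alla}$. The key quantities to track are (i) the operator $\alla(x^{(t)}) = \sum_i x^{(t)}_i \ma_i$, which must stay bounded in operator norm so that the terminal point lies in (a dilate of) the discrepancy body, and (ii) the distance of the walk from the origin, which should grow like $\sqrt{m}$ per unit time so that the endpoint resembles a true Gaussian vector.

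First I would set up the potential function. Where \cite{RR20} uses a trace-exponential potential tuned to the $\id_n$ normalization, here I would use a potential of the form $\Phi_t = \Tr\exp(\eta \alla(x^{(t)}))$ (or a symmetrized variant handling both $\pm$ deviations), but initialized so that $\Phi_0 = \Tr(\exp(0)) = n$ and, crucially, using the hypothesis $\Tr(\alla(\1_m)) \le \tau$ together with $\alla(\1_m) \preceq \id_n$ to control the second-order (Itô) term in the potential's evolution: the sum $\sum_i (\ma_i)^2 \preceq \sum_i \ma_i \preceq \id_n$ and $\Tr(\sum_i \ma_i) \le \tau$ give exactly the bound needed to show $\E[\Phi_{t+1} \mid \mathcal{F}_t] \le (1 + \text{small})\Phi_t$ along unblocked steps. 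The blocking step removes a small-dimensional subspace at each iteration (the span of the top eigenvectors of $\alla(x^{(t)})$, or directions where a step would increase $\Phi$ too much), and one must verify that the total dimension blocked over the whole walk is at most, say, $\frac{m}{2}$, so the walk retains enough freedom to mimic a Gaussian. This dimension accounting is where the threshold $n \le 2^{m/5}$ enters, since $\log n$ controls how long the walk must run.

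The hard part will be making the trace-parameterized potential argument give the right \emph{radius} $\sqrt{\tau/m}$ rather than the radius $\sqrt{n/m}$ one gets from the $\id_n$-normalized version of \cite{RR20}. The point is that with only an operator-norm normalization $\alla(\1_m) \preceq \id_n$, the naive bound on $\normop{\alla(x)}$ after a walk of length $\Theta(1)$ is $\Theta(\sqrt{n/m} \cdot \sqrt{\log n})$-ish, whereas exploiting $\Tr(\alla(\1_m)) \le \tau$ in the potential lets the matrix-MGF/Golden-Thompson step pay only $\log \Tr(\cdot) \approx \log \tau$ (plus $\log n$ from the dimension of the trace-exponential), and balancing the step size $\eta$ against these yields an operator norm bound scaling with $\sqrt{\tau/m}$. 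I would carry this out by: (1) formally defining the walk and the blocking rule; (2) proving the supermartingale property of $\Phi_t$ using the PSD and trace hypotheses to bound the quadratic variation; (3) applying a maximal inequality / optional stopping to conclude $\normop{\alla(x^{(T)})} = O(\sqrt{\tau/m})$ with probability $\ge \tfrac34$ at the terminal time; (4) separately showing that, conditioned on not too much mass being blocked, $\frac{C_0}{\alpha} x^{(T)}$ plus a leftover Gaussian of scale $\alpha\sqrt{m}$ stochastically covers a Gaussian — this is the coupling argument giving the $\gamma_m$-measure lower bound in the stated Minkowski-sum form; and (5) union-bounding the failure probabilities so the total is $\le \tfrac12$. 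Since the structure is essentially identical to \cite{RR20}, I would present steps (1), (3), (4), (5) by citation/adaptation and devote the actual work to step (2), the modified potential computation, which is the only genuinely new ingredient; this matches the paper's stated intent to defer the bulk to Appendix~\ref{app:modified_measure}.
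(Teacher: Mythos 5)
There is a genuine gap, centered on the potential function. Your proposal replaces the RR20 potential with a trace-exponential $\Phi_t = \Tr\exp(\eta\,\alla(x^{(t)}))$, but RR20 does not use a trace-exponential — it uses a Batson--Spielman--Srivastava-style inverse barrier $\Tr\bigl(\ma_{C,D}(x)^{-1}\bigr)$ with $\ma_{C,D}(x) = (C + D\norm{x}_2^2)\id_n - \alla(x)$, and the correctness of the random walk hinges on the specific way the blocked subspace kills the first-order term of the Taylor expansion of this inverse. The paper's actual modification for the trace-parameterized bound is minimal and structural: it keeps the same barrier $\ma_{C,D}(x)$, but reweights the trace by $\tms \defeq \tfrac12\bigl(\alla(\1_m) + \tfrac{\tau}{n}\id_n\bigr)$, i.e.\ $\Phi_{C,D}(x) = \Tr\bigl(\tms\,\ma_{C,D}(x)^{-1}\bigr)$, and then rechecks the one-step supermartingale estimate (Lemma~\ref{lem:PotentialFunctionUpdateInOneIteration}) and the final Azuma aggregation. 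The factor $\tms$ is exactly how the trace hypothesis enters, and the discrepancy radius $\sqrt{\tau/m}$ falls out by taking $C, D$ proportional to $\eps = \sqrt{\tau/m}$.

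Your trace-exponential proposal, as stated, also contains an internal inconsistency that would block the intended $\sqrt{\tau/m}$ radius. You initialize $\Phi_0 = \Tr(\exp(0)) = n$, yet claim the Golden--Thompson step ``pays only $\log\tau$.'' With $\Phi_0 = n$, the eigenvalue bound one extracts is $\lambda_{\max}(\alla(x^{(T)})) \le \eta^{-1}\log\Phi_T = \Theta(\eta^{-1}\log n)$, not $\eta^{-1}\log\tau$; to get a $\tau$-scaling you would need a $\tau$-scaled initial potential (e.g.\ weighting by something like $\tms$), but then you are back to needing a weighted potential of the kind the paper actually uses. Moreover, you never verify the key supermartingale estimate — that after blocking an $O(\alpha^2 m)$-dimensional subspace, the second-order term $\frac{\eta^2}{2}\sum_i\Tr(\exp(\eta\alla(x))\ma_i^2)$ stays controlled at the step size $\eta = \Theta(\sqrt{m/\tau}\,\log n)$ required for the claimed radius — and it is not clear this can be made to work while simultaneously maintaining $\Tr(\mx^{(t)}) \ge (1-\alpha^2)m$ per step so the walk endpoint couples to a Gaussian. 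The rest of your outline (the blocking rule, Azuma aggregation, and the coupling giving the Minkowski-sum measure bound) matches the RR20 scaffolding and would carry over unchanged if you had a working potential; the missing ingredient is step (2), which is precisely the step you flagged as the genuinely new part.
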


As a first application of our techniques, we prove the following new Gaussian measure lower bound on the operator norm ball, which shows Conjecture 1 of \cite{RR20} is true.

\begin{restatable}{corollary}{restategaussianmeasure}\label{cor:gaussian_measure}
Suppose $\{\ma_i\}_{i \in [m]} \subset \PSD^n$ satisfy $\alla(\1_m) \preceq \id_n$, and $\frac m n$ is at least a sufficiently large constant. Following notation \eqref{eq:set_r_def}, there is a constant $C > 0$ such that
\[\gamma_m\Par{\set_{\sqrt{\frac n m}, \alla}} \ge \exp\Par{-Cm}.\]
\end{restatable}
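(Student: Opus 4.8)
The plan is to derive Corollary~\ref{cor:gaussian_measure} by feeding Theorem~\ref{thm:gen_rr20}, instantiated with trace parameter $\tau = n$, directly into the reduction of Theorem~\ref{thm:expansionimplieslowerbound}.

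First I would check the hypotheses of Theorem~\ref{thm:gen_rr20}. Since $\{\ma_i\}_{i \in [m]} \subset \PSD^n$ and $\alla(\1_m) \preceq \id_n$, taking traces gives $\Tr(\alla(\1_m)) \le \Tr(\id_n) = n =: \tau$, so the required trace bound holds with this choice. The assumption that $\frac m n$ is a sufficiently large constant guarantees both that $m$ is sufficiently large and that $n \le 2^{\frac m 5}$ (for instance $m \ge 5n$ already yields $2^{\frac m 5} \ge 2^n \ge n$). With $\tau = n$ we have $\sqrt{\tfrac \tau m} = \sqrt{\tfrac n m}$, so $\set_{\sqrt{\tau/m}, \alla} = \set_{\sqrt{n/m}, \alla}$, and Theorem~\ref{thm:gen_rr20} produces a constant $C_0$ with
\[\gamma_m\Par{\frac{C_0}{\alpha}\set_{\sqrt{n/m}, \alla} + \alpha\sqrt m\ball_2^m} \ge \half \quad \text{for all } \alpha \in (0,1).\]

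Next, observe that $\set_{\sqrt{n/m}, \alla} = \{x \in \R^m \mid \normop{\alla(x)} \le \sqrt{n/m}\}$ is symmetric and convex, being the sublevel set of $x \mapsto \normop{\alla(x)}$, which is a seminorm (a norm composed with a linear map). Hence the displayed inequality is precisely the hypothesis of Theorem~\ref{thm:expansionimplieslowerbound} for $\set = \set_{\sqrt{n/m}, \alla}$, and invoking that theorem yields a constant $C > 0$ with $\gamma_m(\set_{\sqrt{n/m}, \alla}) \ge \exp(-Cm)$, which is the claim; this also confirms Conjecture~1 of \cite{RR20}.

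There is essentially no real obstacle in this argument — all the substance lies in Theorem~\ref{thm:gen_rr20} (whose proof is deferred to Appendix~\ref{app:modified_measure}) and in Theorem~\ref{thm:expansionimplieslowerbound}. The only steps needing a line of care are verifying $n \le 2^{m/5}$ under the hypothesis on $\frac m n$ and the parameter identification $\tau = n$ via the Loewner bound.
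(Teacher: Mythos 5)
Your proof is correct and essentially matches the paper's: the paper also proves this corollary by combining Theorem~\ref{thm:expansionimplieslowerbound} with a measure-expansion bound for the operator-norm ball. The one cosmetic difference is that the paper cites Theorem~1 of \cite{RR20} directly, whereas you instantiate the paper's own generalization Theorem~\ref{thm:gen_rr20} with $\tau = n$; these coincide since $\alla(\1_m) \preceq \id_n$ forces $\Tr(\alla(\1_m)) \le n$, and your checks of the hypotheses ($n \le 2^{m/5}$ under $m/n$ large, symmetry and convexity of the sublevel set) are the right ones and are left implicit in the paper.
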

\begin{proof}
It suffices to directly combine Theorem~\ref{thm:expansionimplieslowerbound} and Theorem 1 from \cite{RR20}.
\end{proof}

We also prove the following variant of Corollary~\ref{cor:gaussian_measure}, which will be used to design ultrasparsifiers.

\begin{restatable}{corollary}{restateultrameasure}\label{cor:ultra_measure}
Suppose $\{\ma_i\}_{i \in [m]} \subset \PSD^n$ satisfy $\alla(\1_m) \preceq \id_n$ and $\Tr(\alla(\1_m)) \le \tau$ for sufficiently large $m$ and $n \le 2^{\frac m 5}$. Following notation \eqref{eq:set_r_def}, there is a constant $C > 0$ such that
\[\gamma_m\Par{\set_{\sqrt{\frac \tau m}, \alla}} \ge \exp\Par{-Cm}.\]
\end{restatable}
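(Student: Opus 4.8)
The proof of Corollary~\ref{cor:ultra_measure} should follow immediately by chaining the two main results established in this section. The plan is to apply Theorem~\ref{thm:gen_rr20} to the matrix collection $\{\ma_i\}_{i \in [m]}$ under its stated hypotheses ($\alla(\1_m) \preceq \id_n$, $\Tr(\alla(\1_m)) \le \tau$, $m$ sufficiently large, $n \le 2^{m/5}$), which yields the Gaussian distance bound
\[\gamma_m\Par{\frac{C_0}{\alpha} \set_{\sqrt{\tau/m}, \alla} + \alpha \sqrt m \ball_2^m} \ge \half \text{ for all } \alpha \in (0,1),\]
for the constant $C_0$ from that theorem. This is precisely the hypothesis required to invoke Theorem~\ref{thm:expansionimplieslowerbound} with $\set = \set_{\sqrt{\tau/m}, \alla}$, since $\set_{\sqrt{\tau/m}, \alla}$ is symmetric and convex (it is a sublevel set of the seminorm $x \mapsto \normop{\alla(x)}$). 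Theorem~\ref{thm:expansionimplieslowerbound} then produces a constant $C$ with $\gamma_m(\set_{\sqrt{\tau/m}, \alla}) \ge \exp(-Cm)$, which is exactly the claim.

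The only point requiring a small amount of care is matching the quantifier on $\alpha$: Theorem~\ref{thm:expansionimplieslowerbound} asks for the distance bound to hold for all $\alpha \in (0,1)$, and Theorem~\ref{thm:gen_rr20} delivers exactly this range, so there is no mismatch. (Had Theorem~\ref{thm:gen_rr20} only given $\alpha \in (0,1)$ while Lemma~\ref{lem:shortaxes} wanted all $\alpha > 0$, one would note that the $M$-ellipsoid axis-counting argument only needs $\alpha$ in a bounded range up to adjusting constants, but here this is moot.) This mirrors exactly the proof of Corollary~\ref{cor:gaussian_measure}, the only difference being that we substitute Theorem~\ref{thm:gen_rr20} (the trace-parameterized strengthening, with discrepancy radius $\sqrt{\tau/m}$) in place of Theorem~1 of \cite{RR20} (which is the special case $\tau = n$, giving radius $\sqrt{n/m}$).

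There is essentially no obstacle here: all the work has already been done, either in Theorem~\ref{thm:gen_rr20} (deferred to the appendix, and itself a close adaptation of \cite{RR20} with a modified potential function capturing the trace condition) or in Theorem~\ref{thm:expansionimplieslowerbound} (whose proof via $1$-regular $M$-ellipsoids and the axis-length counting of Lemma~\ref{lem:shortaxes} is the genuinely new technical contribution of Section~\ref{sec:measure}). Thus the proof is a one-line composition.

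\begin{proof}
Since $\set_{\sqrt{\tau/m}, \alla}$ is symmetric and convex, it suffices to combine Theorem~\ref{thm:expansionimplieslowerbound} and Theorem~\ref{thm:gen_rr20}.
\end{proof}
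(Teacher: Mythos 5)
Your proof is correct and is identical to the paper's: Corollary~\ref{cor:ultra_measure} is proven there by the same one-line composition of Theorem~\ref{thm:expansionimplieslowerbound} and Theorem~\ref{thm:gen_rr20}, exactly paralleling Corollary~\ref{cor:gaussian_measure} but with the trace-parameterized distance bound in place of the original \cite{RR20} result.
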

\begin{proof}
It suffices to directly combine Theorem~\ref{thm:expansionimplieslowerbound} and Theorem~\ref{thm:gen_rr20}.
\end{proof} %
\section{Linear-sized sparsifiers}\label{sec:bss}

In this section we will be concerned with the following problem: we are given a set of matrices $\{\mm_i\}_{i \in [m]} \subset \PSD^n$ satisfying $\allm(\1_m) = \id_n$ where $\allm$ is the operator associated with the set in the sense of \eqref{eq:alla_def}, and a parameter $\eps \in (0, 1)$ (both fixed throughout). Our goal is to construct a vector $w \in \R^m_{\ge 0}$ such that $\nnz(w) = O(\frac n {\eps^2})$, and $\normop{\allm(w) - \id_n} \le \eps$. We will achieve this goal by an instantiation of the framework in Section~\ref{sec:framework}, combined with recursive rounding. We state the main regularized optimization and approximate function query subroutines we use (as required by Proposition~\ref{prop:binary_search_lam}) in Section~\ref{ssec:fast_gaussian_rounding}, and put the pieces together to prove our main result in Section~\ref{ssec:main-bss}. Finally, we discuss implications and runtime considerations for sparsifying graphs in Section~\ref{ssec:graph_bss}.

\subsection{Approximate Gaussian rounding in nearly-linear time}\label{ssec:fast_gaussian_rounding}

In the applications of the framework in Section~\ref{sec:framework} to the sparsification setting, we will define a discrepancy body with respect to $f(x) \defeq \normop{\alla(x)}$, where $\alla$ is the operator associated to a family of matrices $\{\ma_i\}_{i \in [m]}$. The matrix family will always be a reweighting of a subset of our original matrices $\{\mm_i\}_{i \in [m]}$. To leverage Proposition~\ref{prop:binary_search_lam} to approximate Gaussian roundings, we require two helper tools.
We first recall a guarantee on approximate top singular value computation.

\begin{proposition}[Theorem 1, \cite{MuscoM15}]\label{prop:krylov}
For any $c \in (0, 1)$ and $x \in \R^m$, we can return a scalar value $A$ such that $\normop{\alla(x)} \in [(1-c)A, A]$ with probability $\ge 1 - \delta$ in time
\[O\Par{\Par{\sum_{i \in [m]} \tmv(\ma_i)} \cdot \frac{\log \frac m \delta}{\sqrt c}}.\]
\end{proposition}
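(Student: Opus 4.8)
The plan is to derive this guarantee from a randomized block Krylov (Lanczos-type) method, whose Chebyshev acceleration is what yields the $\frac{1}{\sqrt c}$ rather than $\frac 1 c$ iteration count. First I would reduce to estimating a largest eigenvalue of a PSD matrix: since each $\ma_i \succeq \mzero_n$ but entries of $x$ may be negative, $\alla(x)$ is symmetric but possibly indefinite, so I would set $\mm \defeq \alla(x)^2 \succeq \mzero_n$, whose top eigenvalue is $\normop{\alla(x)}^2$. A matrix-vector product $v \mapsto \mm v = \alla(x)\Par{\alla(x)v}$ costs $O\Par{\sum_{i \in [m]}\tmv(\ma_i)}$ via two passes over the family (each forming $\sum_i x_i(\ma_i v)$), so it suffices to produce $\widehat\lambda$ with $\Par{1-\Theta(c)}\lambda_{\max}(\mm) \le \widehat\lambda \le \lambda_{\max}(\mm)$ using $O\Par{\frac{1}{\sqrt c}\log\frac m\delta}$ such products and output a suitable rescaling $A \defeq \sqrt{\widehat\lambda/(1-c)}$; one checks this gives $\normop{\alla(x)} \le A$ and $(1-c)A \le \normop{\alla(x)}$, as desired.

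Next I would run the Krylov method: draw a Gaussian start $g \sim \Nor(\0_n, \id_n)$, form the basis $g, \mm g, \dots, \mm^q g$ for $q = O\Par{\frac{1}{\sqrt c}\log\frac m\delta}$, orthonormalize it, and let $\widehat\lambda$ be the largest eigenvalue of the $(q+1)\times(q+1)$ compression of $\mm$ to this subspace, equivalently $\widehat\lambda = \max_{\deg p \le q}\frac{(p(\mm)g)^\top\mm\, p(\mm)g}{\norm{p(\mm)g}_2^2}$. Since $\mm \succeq \mzero_n$, every Rayleigh quotient is at most $\lambda_{\max}(\mm)$, so $\widehat\lambda \le \lambda_{\max}(\mm)$ deterministically, and it remains to lower bound $\widehat\lambda$. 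For this I would combine two ingredients: (i) an anticoncentration-plus-concentration bound showing that with probability $\ge 1-\delta$ the random start has overlap $\inprod{g}{u_1}^2 \ge \Par{\delta/\textup{poly}(n)}^2\norm g_2^2$ with a top eigenvector $u_1$ of $\mm$; and (ii) the Chebyshev polynomial bound — there is a degree-$q$ polynomial $p$ with $p(\lambda_{\max}(\mm)) = 1$ and $|p(t)| \le \eps_0$ on $[0, (1-\Theta(c))\lambda_{\max}(\mm)]$ for $q = O\Par{\frac{1}{\sqrt c}\log\frac 1{\eps_0}}$ — taking $\eps_0 = \delta/\textup{poly}(n)$, which costs only a $\log\frac n\delta$ factor in $q$. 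Plugging this $p$ into the variational formula for $\widehat\lambda$ and splitting the spectral expansion of $g$ into eigenvalues above versus below $(1-\Theta(c))\lambda_{\max}(\mm)$: the ``above'' block contributes a ratio at least $(1-\Theta(c))\lambda_{\max}(\mm)$, while the ``below'' block carries total $p$-weight at most $\eps_0^2\norm g_2^2$, which is negligible compared to the ``above'' block's weight $\ge p(\lambda_{\max}(\mm))^2\inprod{g}{u_1}^2 = \inprod{g}{u_1}^2$ by (i); hence $\widehat\lambda \ge (1-\Theta(c))\lambda_{\max}(\mm)$.

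For the runtime, the $q+1$ matrix-vector products against $\mm$ dominate at total cost $O\Par{q\sum_{i \in [m]}\tmv(\ma_i)} = O\Par{\frac{\log(m/\delta)}{\sqrt c}\sum_{i \in [m]}\tmv(\ma_i)}$, while orthonormalizing $q+1$ vectors in $\R^n$ and solving the $(q+1)\times(q+1)$ eigenproblem cost $O(q^2 n + q^3)$, which is lower order since $\sum_i \tmv(\ma_i)$ already dominates $m$ and $n$ and $q$ is polylogarithmic. The hard part will be the gap-free analysis in the second step: a plain power-method argument degrades badly when many eigenvalues of $\mm$ cluster just below $\lambda_{\max}(\mm)$, and simultaneously obtaining the accelerated $\frac{1}{\sqrt c}$ dependence (via Chebyshev polynomials rather than monomials) and controlling the failure probability through the random initialization is exactly what the two-part argument above is designed to handle. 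In exact arithmetic this is routine; for the finite-precision subtleties of Lanczos one may run block Krylov with explicit reorthogonalization, or simply invoke \cite{MuscoM15} directly.
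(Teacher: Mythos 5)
The paper does not prove this proposition: it is a direct citation of Theorem~1 of \cite{MuscoM15}, with no accompanying argument in the text, so there is no in-paper proof to compare your sketch against. That said, your reconstruction of the gap-free Krylov analysis is essentially sound. Squaring to $\mm \defeq \alla(x)^2 \succeq \mzero_n$ (two passes per product, cost $O(\sum_{i\in[m]}\tmv(\ma_i))$), taking the Rayleigh--Ritz value $\hat\lambda$ over a degree-$q$ Krylov subspace from a Gaussian start, the deterministic upper bound $\hat\lambda \le \lambda_{\max}(\mm)$, and the lower bound via a shifted Chebyshev polynomial equal to $1$ at $\lambda_{\max}(\mm)$ and at most $\eps_0 = \delta/\textup{poly}(n)$ on $[0,(1-\Theta(c))\lambda_{\max}(\mm)]$ — with Gaussian anticoncentration on $\inprod{g}{u_1}$ absorbing $\eps_0$ into a $\log\frac{n}{\delta}$ factor in $q = O(\frac{1}{\sqrt c}\log\frac{n}{\delta})$ — is exactly the mechanism behind the cited theorem. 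Your rounding also checks out: from $(1-c)\lambda_{\max}(\mm)\le\hat\lambda\le\lambda_{\max}(\mm)$ and $A \defeq \sqrt{\hat\lambda/(1-c)}$ one gets $\normop{\alla(x)}\le A \le \normop{\alla(x)}/\sqrt{1-c}$, so $(1-c)A \le \sqrt{1-c}\,A \le \normop{\alla(x)}\le A$.

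Two minor observations. First, squaring is one workaround for indefiniteness, but \cite{MuscoM15} is phrased in terms of top singular values and hence applies to $\alla(x)$ directly; the paper itself elsewhere uses the block lift $\tma_i$ to $\R^{2n\times 2n}$ (with $\ma_i$ and $-\ma_i$ on the diagonal) — all three routes give the same $\frac{1}{\sqrt c}$ iteration count since $(1-c)^2 \ge 1-2c$. Second, the $O(q^2 n + q^3)$ orthogonalization overhead you correctly flag is dropped from the proposition's stated runtime; that mild imprecision is inherited from treating $\sum_{i\in[m]}\tmv(\ma_i)$ as dominant, as it is in the paper's graph applications, and is not a defect of your argument. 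Since the paper offers no proof of its own, the right move — which you suggest at the close — is to invoke \cite{MuscoM15} directly, which is precisely what the paper does.
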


Proposition~\ref{prop:binary_search_lam} also requires approximate solvers for regularized optimization problems of the form $f_\lam$ defined in \eqref{eq:test_opt}. When $f(x) = \normop{\alla(x)}$, we have the characterization
\begin{equation}\label{eq:primaldual_opnorm}
\begin{gathered}
f_\lam(x) = \normop{\alla(x)} + \lam\norm{x - g}_2^2 = \max_{\my \in \Delta^{2n \times 2n}} \inprod{\my}{\sum_{i \in [m]} x_i \tma_i} + \lam \norm{x - g}_2^2, \\
\text{where } \tma_i \defeq \begin{pmatrix} \ma_i & \mzero_n \\ \mzero_n & -\ma_i \end{pmatrix},\text{ and } \Delta^{2n \times 2n} \defeq \{\my \in \PSD^{2n} \mid \Tr \my = 1\}.
\end{gathered}
\end{equation}

The recent work \cite{JambulapatiT23} developed a generic framework for solving box-spectraplex games, which \eqref{eq:primaldual_opnorm} is almost a case of; the main difference is that \eqref{eq:primaldual_opnorm} is regularized. We prove the following modification of Theorem 3 of \cite{JambulapatiT23} in Appendix~\ref{app:modified_boxspec}, which handles the extra part of the objective.

\begin{proposition}\label{prop:modified_boxspec}
Let $\delta, \Delta \in (0, 1)$, $\lam \ge 0$, and $g \in \R^m$. There is an algorithm which returns $\tx \in [-1, 1]^m$ minimizing $f_\lam$ (defined in \eqref{eq:primaldual_opnorm}) to additive error $\Delta$, with probability $\ge 1 - \delta$, in time
\[O\Par{\Par{\sum_{i \in [m]}\tmv(\ma_i)} \cdot \frac{\normop{\alla(\1_m)}^{3.5}\log^2\Par{\frac{mn}{\delta\Delta}}\log(n)}{\Delta^{3.5}}}.\]
\end{proposition}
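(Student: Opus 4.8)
The plan is to follow the proof of Theorem~3 of \cite{JambulapatiT23} essentially verbatim, the only change being that the min player's proximal subproblems must absorb the extra term $r(x) \defeq \lam\norm{x - g}_2^2$ appearing in \eqref{eq:primaldual_opnorm}. Recall that \eqref{eq:primaldual_opnorm} exhibits $f_\lam$ as (the objective of) a box-spectraplex game between a min player over $[-1, 1]^m$ and a max player over the spectraplex $\Delta^{2n \times 2n}$, \emph{plus} the convex term $r$; the bilinear coupling is through $\{\tma_i\}_{i \in [m]}$, and since each $\ma_i \succeq \mzero_n$ we have $\sum_{i \in [m]} \Abs{\tma_i} = \mathrm{blkdiag}(\alla(\1_m), \alla(\1_m))$, whose operator norm equals $\normop{\alla(\1_m)}$. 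This is precisely the width parameter governing the runtime of the solver of \cite{JambulapatiT23}, which is why it appears as $\normop{\alla(\1_m)}^{3.5}/\Delta^{3.5}$ in the bound.

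First I would recall that the \cite{JambulapatiT23} algorithm is a primal--dual (mirror-prox--type) iteration whose steps solve proximal problems for the two players against their mirror maps: a separable regularizer over the box for the min player, and matrix entropy over the spectraplex for the max player. The key point is that a convex term on the min player can be handled \emph{exactly} rather than through a subgradient, by folding it into the min player's prox step, which becomes $x_{t+1} \gets \argmin_{x \in [-1, 1]^m} \inprods{v_t}{x} + r(x) + \frac 1 {\eta} V(x, x_t)$, where $v_t$ is the (possibly stochastic) estimate of the gradient of the bilinear part in $x$ and $V$ is the Bregman divergence of the min player's regularizer. Because $r(x) = \lam \sum_{i \in [m]} (x_i - g_i)^2$ and the min player's regularizer is separable, this subproblem splits into $m$ one-dimensional convex minimizations over $[-1, 1]$, each solvable to machine precision in $O(1)$ time, so the per-iteration cost is unchanged up to constants.

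Next I would check that folding $r$ into the prox step does not hurt convergence. The regret/duality-gap analyses of these methods control sums of the form $\sum_t \left(\inprods{v_t}{x_t - u} + r(x_t) - r(u)\right)$ by telescoping Bregman divergences, and the term $r(x_t) - r(u)$ is absorbed by convexity of $r$ together with the first-order optimality condition of the modified prox step --- this is the standard composite/prox-friendly extension of mirror (prox) descent, and if the \cite{JambulapatiT23} iteration is stochastic its variance bounds are untouched since $r$ is deterministic and never differentiated. Hence the same iteration count $T = \tO((\normop{\alla(\1_m)}/\Delta)^{3.5})$ suffices and the method outputs an approximate saddle point $(\tx, \my)$ with $f_\lam(\tx) \le \min_{x \in [-1, 1]^m} f_\lam(x) + \Delta$; one must track that the gap being controlled is that of the \emph{full} objective $f_\lam$ (bilinear part plus $r$), which holds because $r$ enters the achieved primal value and the prox certificate symmetrically. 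A union bound over the $T$ iterations and the randomness in the matrix--vector estimates yields the failure probability $\delta$ after adjusting logarithmic factors.

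The main obstacle is the bookkeeping inside the \cite{JambulapatiT23} analysis: one must re-derive its iteration complexity and high-probability guarantee with the composite term present, verifying in particular that (i) the width and variance parameters depend only on $\{\tma_i\}$ --- hence on $\normop{\alla(\1_m)}$ --- and not on $\lam$ or $\norm{g}_2$, which would otherwise blow up the runtime, and (ii) the modified prox subproblem stays implementable within the same $O(\sum_{i \in [m]} \tmv(\ma_i))$ per-iteration budget. Given these checks the runtime is the iteration count times the per-iteration cost, with the $\log n$ and $\log^2(\frac{mn}{\delta \Delta})$ factors coming respectively from the max player's matrix-exponential update and from accuracy/probability amplification, exactly as in \cite{JambulapatiT23}; I defer the detailed verification to Appendix~\ref{app:modified_boxspec}.
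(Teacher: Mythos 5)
Your proposal is correct and follows essentially the same approach as the paper's: identify $L_{\textup{tot}} = \normop{\alla(\1_m)}$ as the width parameter (which governs the $\normop{\alla(\1_m)}^{3.5}/\Delta^{3.5}$ iteration count) and handle $q(x) = \lam\norm{x - g}_2^2$ as a composite/prox-friendly term, noting that the width, variance, and runtime parameters depend only on $\{\tma_i\}$ and not on $\lam$ or $\norm{g}_2$. The one bookkeeping detail your high-level description glosses over is that the \cite{JambulapatiT23} iteration has \emph{two} substeps (a gradient-step oracle and an extragradient-step oracle), and the paper handles them slightly differently: the gradient-step subproblems absorb $\eta q$ exactly (as in your description), while the extragradient-step oracle is given the linearized term $\nabla q(x')$ instead; one then checks via first-order optimality of $x'$ that the two modifications combine to give the needed two-sided regret inequality. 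This distinction would surface the moment you actually re-derived the guarantee, but the plan is sound.
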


\subsection{Linear-sized sparsifiers in nearly-linear time}\label{ssec:main-bss}

Finally, we put the pieces together to prove our main result. To warm start our algorithm, we use the following standard random sampler which loses a logarithmic factor from the desired sparsity.

\begin{lemma}\label{lem:lose_log}
Let $\{\mm_i\}_{i \in [m]} \subset \PSD^n$ satisfy $\allm(\1_m) \preceq \id_n$ and $\Tr(\allm(\1_m)) \le \tau$, and let $K = \Theta( \frac{\tau}{\eps^2} \log \frac n \delta)$ for an appropriate constant. Define independently distributed random matrices $\{\mx_k\}_{k \in [K]}$ such that
\[\mx_k = \frac 1 {Kp_i} \mm_i \text{ with probability } p_i \defeq \frac{\Tr(\mm_i)}{\sum_{i \in [m]} \Tr(\mm_i)} \text{ for all }k \in [K].\]
Then with probability $\ge 1 - \delta$, $\normsop{\sum_{k \in [K]} \mx_k - \allm(\1_m)} \le \eps$.
\end{lemma}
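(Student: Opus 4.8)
The plan is to prove Lemma~\ref{lem:lose_log} as a direct application of the matrix Bernstein inequality (see e.g.\ \cite{Tropp-book}) to the centered summands $\mz_k \defeq \mx_k - \E \mx_k$. We may assume $\sum_{i \in [m]} \Tr(\mm_i) > 0$, as otherwise every $\mm_i$ is $\mzero_n$ and the claim is trivial, and we may discard any index with $\mm_i = \mzero_n$ so that each $p_i > 0$. First I would check unbiasedness: $\E \mx_k = \sum_{i \in [m]} p_i \cdot \frac 1 {Kp_i} \mm_i = \frac 1 K \allm(\1_m)$, so $\E \sum_{k \in [K]} \mx_k = \allm(\1_m)$ and the quantity to be bounded is exactly $\normsop{\sum_{k \in [K]} \mz_k}$.

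Next I would bound the two parameters governing matrix Bernstein. For the uniform bound, since $\mm_i \succeq \mzero_n$ gives $\normsop{\mm_i} \le \Tr(\mm_i)$, and since $\frac{\Tr(\mm_i)}{p_i} = \sum_{j \in [m]} \Tr(\mm_j) = \Tr(\allm(\1_m)) \le \tau$, we get $\normsop{\mx_k} \le \frac{\Tr(\mm_i)}{K p_i} \le \frac \tau K$; combined with $\normsop{\E \mx_k} = \frac 1 K \normsop{\allm(\1_m)} \le \frac \tau K$ (again using that $\allm(\1_m)$ is PSD with trace at most $\tau$), this yields $\normsop{\mz_k} \le \frac{2\tau}{K}$ almost surely. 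For the variance, $\E \mz_k^2 \preceq \E \mx_k^2 = \frac 1 {K^2} \sum_{i \in [m]} \frac 1 {p_i} \mm_i^2$, and using $\mm_i^2 \preceq \Tr(\mm_i) \mm_i$ together with $\frac{\Tr(\mm_i)}{p_i} \le \tau$ we obtain $\sum_{i \in [m]} \frac 1 {p_i} \mm_i^2 \preceq \tau \sum_{i \in [m]} \mm_i = \tau \allm(\1_m) \preceq \tau \id_n$, hence $\normsop{\sum_{k \in [K]} \E \mz_k^2} \le \frac \tau K$.

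Finally, applying matrix Bernstein with uniform bound $R = \frac{2\tau}{K}$, variance proxy $v = \frac \tau K$, and deviation $t = \eps$ gives $\Pr[\normsop{\sum_{k \in [K]} \mz_k} \ge \eps] \le 2n \exp(-\frac{\eps^2/2}{v + R\eps/3}) \le 2n\exp(-\frac{\eps^2 K}{4\tau})$, where the last step uses $\eps < 1$ so that $v + R\eps/3 \le \frac{2\tau}{K}$. Choosing $K = \frac{4\tau}{\eps^2}\log\frac{2n}{\delta} = \Theta(\frac{\tau}{\eps^2}\log\frac n \delta)$ for the appropriate constant makes this at most $\delta$. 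The argument is otherwise routine; the only mildly delicate point is the variance bound, where the inequality $\mm_i^2 \preceq \Tr(\mm_i)\mm_i$ interacts with the choice $p_i \propto \Tr(\mm_i)$ so that the $\frac 1 {p_i}$ factors cancel against the trace weights and collapse to $\tau\allm(\1_m)$ (sampling instead by the sharper effective-resistance-type scores $\Tr(\mm_i \allm(\1_m)^\dagger)$ would replace $\tau$ by $n$, recovering \cite{SpielmanS11}).
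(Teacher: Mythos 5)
Your proof is correct and follows essentially the same route as the paper: center the summands, bound the uniform norm and the variance proxy using $\frac{\Tr(\mm_i)}{p_i} = \Tr(\allm(\1_m)) \le \tau$ together with $\mm_i^2 \preceq \Tr(\mm_i)\mm_i$, and apply matrix Bernstein. Your variance computation is actually slightly cleaner — you use $\E\mz_k^2 \preceq \E\mx_k^2$ directly to get $\frac{\tau}{K}$, whereas the paper applies $(\ma - \mb)^2 \preceq 2\ma^2 + 2\mb^2$ and lands on $\frac{4\tau}{K}$, a constant factor looser — but this is a cosmetic difference, not a different argument.
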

\begin{proof}
First, it is clear that $\{p_i\}_{i \in [m]}$ is a valid sampling distribution. Next we define random matrices $\my_k \defeq \frac 1 {K p_i} \mm_i - \frac 1 K \allm(\1_m)$, which are mean-zero and satisfy $\normop{\my_k} \le \frac \tau K$. Further, recall
\begin{align*}
(\ma - \mb)^2 \preceq 2\ma^2 + 2\mb^2,\; \frac 1 {\Tr \ma} \ma^2 \preceq \ma,
\end{align*}
for all $\ma, \mb \in \PSD^n$. The first inequality follows from $(\ma + \mb)^2 \in \PSD^n$ and the second follows because the two sides commute and we compare eigenvalues. Using these claims and $\allm(\1_m) \preceq \id_n$, we have
\begin{align*}
\sum_{k \in [K]} \E \my_k^2 &= \frac 1 K \sum_{i \in [m]} p_i \Par{\frac 1 {p_i} \mm_i - \allm(\1_m)}^2 \\
&\preceq \frac 2 K \sum_{i \in [m]} \frac 1 {p_i} \mm_i^2 + \frac 2 K \allm(\1_m)^2 \\
&\preceq \frac {2\tau} K \sum_{i \in [m]} \frac{1}{\Tr(\mm_i)} \mm_i^2 + \frac 2 K \id_n \preceq \frac{4\tau}{K}\id_n.
\end{align*}
The desired conclusion then follows by the matrix Bernstein bound (Theorem 1.6.2, \cite{Tropp-book}).
\end{proof}

Combining Propositions~\ref{prop:binary_search_lam},~\ref{prop:krylov}, and~\ref{prop:modified_boxspec} with Lemma~\ref{lem:lose_log} gives our main subroutine, which reweights a sum of matrices into a ``sparse'' component and a ``small'' component which we can recurse on.

\begin{proposition}\label{prop:sparse_plus_small}
Let $\eps \in (0, 1)$, $\delta \in (\exp(-\Omega(n)), 1)$ for an appropriate constant, and let $\{\mm_i\}_{i \in [m]} \subset \PSD^n$ satisfy $\allm(\1_m) \preceq \id_n$. There is an algorithm returning $u = v + w$ for $v, w \in \R^m_{\ge 0}$ satisfying 
\begin{equation}\label{eq:uvw}\normop{\allm(u - \1_m)} \le \eps,\;\nnz(w) \le \frac{\csparse n}{\eps^2},\; \normop{\allm(v)} \le \frac 1 {10},\end{equation}
with probability $\ge 1- \delta$, for a universal constant $\csparse$, in time
\[O\Par{\Par{\sum_{i \in [m]} \tmv(\mm_i)} \cdot \frac{\log(m) \log^2(\frac m \delta) \log\log^{11 + o(1)}(\frac m \delta)}{\eps^{3.5}}}.\]
\end{proposition}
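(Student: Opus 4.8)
The plan is to instantiate the framework of Section~\ref{sec:framework} recursively, peeling off a constant fraction of coordinates at each level of the recursion, and to track how the sparsity, operator norm, and running time accumulate. The setup at a generic level of the recursion maintains a nonnegative weight vector $w$ supported on a set $S_w \subseteq [m]$ of ``active'' coordinates, satisfying the invariants $\allm(\half w) \preceq \id_n$ (so that the rescaled matrices $\ma_i \defeq \half w_i \mm_i$ obey $\alla(\1_{S_w}) \preceq \id_n$) and $\nnz(w) = |S_w|$ shrinking geometrically. Concretely, I would define $\rho \defeq \Cset\sqrt{n/|S_w|}$ and apply Corollary~\ref{cor:gaussian_measure} (valid once $|S_w|/n$ is a sufficiently large constant, which is exactly the regime where the recursion is still active) to conclude $\gamma_{|S_w|}(\set_{\rho/\Cset, \alla}) \ge \exp(-C|S_w|)$. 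This licenses Proposition~\ref{prop:many_tight}, so that the exact closest point $\xs$ to a random Gaussian $g \sim \Nor(\0, \id)$ in $\set_{\rho, \alla} \cap [-1,1]^{|S_w|}$ has at least $\ctight |S_w|$ tight coordinates, and by symmetry (replacing $g$ by $-g$ if needed) at least $\frac{\ctight}{2}|S_w|$ of them equal to $-1$.

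Next I would run Proposition~\ref{prop:binary_search_lam} with $f(x) \defeq \normop{\alla(x)}$, $\Theta = \normop{\alla(\1_{S_w})} \le 1$, and $\beta$ chosen inverse-polylogarithmic (so $\tau = \Theta(|S_w|\beta^2)$ is roughly $|S_w|/\log^2(\cdot)$), for both $g$ and $-g$. This requires (i) $T = \O(1)$ calls to the regularized solver of Proposition~\ref{prop:modified_boxspec} at accuracy $\frac{\lam\tau}{4}$ for values $\lam \ge \frac{\rho}{8m}$, and (ii) $T$ approximate operator-norm queries via Proposition~\ref{prop:krylov} at multiplicative accuracy $\frac{\tau}{64m}$. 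The output is $\tx \in \set_{\rho,\alla} \cap [-1,1]^{|S_w|}$ with $\norm{\tx - g}_2^2 \le \rs^2 + \tau$; combining Lemma~\ref{lem:sc_distsquare} and Lemma~\ref{lem:many_nearly_tight} (applied to whichever of $g, -g$ has $\xs$ with many $-1$ coordinates) gives at least $\frac{\ctight}{4}|S_w|$ coordinates $i$ with $\tx_i \le -1 + \beta$. I would then form a new weight vector by the update $w \gets w \circ (\1 + \tx)$ on the active set: coordinates with $\tx_i$ near $-1$ get zeroed out (or near-zeroed, handled by a rounding step), and the discrepancy body membership $\normop{\alla(\tx)} \le \rho = \Cset\sqrt{n/|S_w|}$ controls $\normop{\allm(w \circ \tx)} = O(\sqrt{n/|S_w|})$, which is a geometrically increasing quantity summing to $O(\eps)$ over all levels, preserving $\normop{\allm(w)} = 1 + O(\eps) \le 2$ and hence the inductive hypothesis $\allm(\half w) \preceq \id_n$.

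For the accounting of the final output: the recursion terminates once $|S_w| = O(n/\eps^2)$, at which point the accumulated ``colored'' weight forms $v$ with $\normop{\allm(v)} = O(\sqrt{\eps})$ (summing the geometric series of per-level discrepancies with a constant that I can force below $\frac{1}{10}$ by starting the recursion with a warm start via Lemma~\ref{lem:lose_log} at sparsity $\Theta(\frac{n}{\eps^2}\log\frac n\delta)$ and by choosing the discrepancy schedule carefully), while the residual active coordinates form $w$ with $\nnz(w) = |S_w| \le \frac{\csparse n}{\eps^2}$. The total $\normop{\allm(u - \1_m)}$ where $u = v + w$ is then $O(\eps)$ by triangle inequality, rescaled to exactly $\eps$ by adjusting constants. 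For the runtime: at each level the dominant cost is Proposition~\ref{prop:modified_boxspec}, which runs in $O((\sum_i \tmv(\mm_i)) \cdot \normop{\alla(\1)}^{3.5} \Delta^{-3.5}\,\polylog)$ time; since $\normop{\alla(\1)} \le 1$ and $\Delta = \frac{\lam\tau}{4} \gtrsim \frac{\rho}{8m}\cdot |S_w|\beta^2 = \Theta(\sqrt{n/|S_w|}\cdot\frac{|S_w|}{m}\beta^2)$, the factor $\Delta^{-3.5}$ unfortunately grows as $|S_w|$ shrinks — but only polynomially in $\eps^{-1}$ at the final level, and the $\sum_i \tmv(\mm_i)$ factor is dominated by the first (full) level, so summing the geometric-in-$|S_w|$ series of runtimes still gives $O((\sum_i \tmv(\mm_i)) \cdot \eps^{-3.5}\,\polylog(\frac m\delta))$ with the stated $\log$ and $\log\log$ powers. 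The main obstacle is the bookkeeping of the discrepancy schedule: I must verify that the per-level accumulations of operator norm error and of sparsity-overhead (including the $O(\beta)$-fraction of imperfectly-colored coordinates that must themselves be recursed on or randomly rounded) telescope correctly so that both the $\frac{1}{10}$ bound on $\normop{\allm(v)}$ and the $\frac{\csparse n}{\eps^2}$ bound on $\nnz(w)$ hold simultaneously, and that the accuracy $\Delta$ fed to the solver is never so small as to break the claimed $\eps^{-3.5}$ runtime; this is where the choice of $\beta$, the geometric decay rate $(1-\frac{\ctight}{2})$, and the warm-start sparsity all have to be balanced against one another.
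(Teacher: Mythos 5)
Your overall architecture matches the paper's: warm start via leverage-score sampling (Lemma~\ref{lem:lose_log}), repeated partial coloring via Corollary~\ref{cor:gaussian_measure}, Proposition~\ref{prop:many_tight}, and Proposition~\ref{prop:binary_search_lam} applied to both $g$ and $-g$, each phase contributing a ``near-tight'' increment to $v$ and zeroing a constant fraction of $w$'s support. However, two parameter choices you make would not yield the claimed runtime.

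First, you choose $\beta$ inverse-polylogarithmic, but the paper sets $\beta = \Theta(1/K)$ where $K = O(\log\log\frac{n}{\delta})$ is the number of phases, making $\beta$ \emph{inverse-polyloglogarithmic}. This is forced: the warm start already reduces the sparsity overhead to a single $\log(n/\delta)$ factor, so only $O(\log\log\frac{n}{\delta})$ constant-factor reductions are needed, and the bound $\normop{\allm(v_K)} \le \sum_k \beta\normop{\allm(u_{k-1})} \le 2K\beta \le \frac{1}{10}$ then dictates $\beta \le \frac{1}{20K}$. Your $\normop{\allm(v)} = O(\sqrt{\eps})$ claim does not arise from the analysis and is not needed. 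Since the solver cost scales as $\Delta^{-3.5}$ with $\Delta = \Omega(\rho_k\beta^2)$, an inverse-polylog $\beta$ would inject a $\log^{O(1)}$ (not $\log\log^{O(1)}$) factor into the runtime, contradicting the stated $\log\log^{11+o(1)}$ dependence.

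Second, your discrepancy schedule $\rho = \Cset\sqrt{n/|S_w|}$ is also too aggressive. Since $|S_w|$ starts at $\Theta(\frac{n}{\eps^2}\log\frac{n}{\delta})$ after the warm start, the early phases would have $\rho = \Theta(\eps/\sqrt{\log(n/\delta)})$, hence $\Delta = \Omega(\rho\beta^2) = \Omega(\eps/(\sqrt{\log}\cdot\log\log^2))$ and $\Delta^{-3.5} = O(\eps^{-3.5}\log^{1.75+o(1)})$, blowing the polylog budget. The paper's choice $\rho_k = \max\bigl(2\Cset\sqrt{n/m_{k-1}},\, \frac{\eps}{(K-k+1)\log^2(K-k+1)}\bigr)$ floors $\rho_k$ at $\Omega(\eps/\log\log^{1+o(1)})$ while still satisfying $\sum_k\rho_k = O(\eps)$; both branches of the max are needed. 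Finally, a minor slip: $\lam \ge \frac{\rho}{8m}$ and $\tau = \Theta(m\beta^2)$ in Proposition~\ref{prop:binary_search_lam} both refer to the \emph{current} subproblem dimension $m_{k-1}$, not the original $m$, so $\lam\tau = \Omega(\rho_k\beta^2)$ with no $\frac{|S_w|}{m}$ factor — your stated $\Delta$ bound is off.
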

\begin{proof}
Without loss of generality, assume $\eps$ is sufficiently small, and let $v_0 \defeq \0_m$. Lemma~\ref{lem:lose_log} with $\tau \gets n$ produces an initial reweighting $w_0$ of the matrices such that $\nnz(w_0) = O(\frac n {\eps^2} \log \frac n \delta)$, and $\normop{\allm(w_0 - \1_m)} \le \eps$. Our algorithm proceeds in $K = O(\log \log \frac n \delta)$ phases indexed by $k \in [K]$. We define $\beta \defeq \frac 1 {20K}$, and for all $k \in [K]$, we let $m_{k - 1} \defeq \nnz(w_{k - 1})$, and set the discrepancy radius 
\[\rho_k \defeq \max\Par{2\Cset \sqrt{\frac n {m_{k - 1}}}, \frac{\eps}{(K - k + 1)\log^2(K - k + 1) }}.\]
We terminate if some phase has $m_k \le \frac{\csparse n}{\eps^2}$. Otherwise, each phase $k$ computes $x_k$ such that for at least $ \frac{\ctight}{4} m_{k - 1}$ coordinates $i$, $[x_k]_i \le -1 + \beta$ and 
\begin{equation}\label{eq:xk_guarantee}\normop{\allm(w_{k - 1} \circ x_k)} = O(\rho_k).\end{equation}
We then let $S_k$ be the set of coordinates where $[x_k]_i \le -1 + \beta$, and update
\begin{gather*}
[w_k]_i \gets 0,\; [v_k]_i \gets [v_{k - 1}]_i + [w_{k - 1}]_i (1 + [x_k]_i) \text{ for all } i \in S_k, \\
[w_k]_i \gets (1 + [x_k]_i) [w_{k - 1}]_i \text{ for all } i \not\in S_k \text{ with } [w_{k - 1}]_i \neq 0.
\end{gather*}
Evidently the update rules imply that in every iteration, letting $u_k \defeq v_k + w_k$,
\begin{align*}
\normop{\allm(u_k - u_{k - 1})} &= \normop{\allm(w_{k - 1} \circ x_k)} = O(\rho_k), \\
\nnz(w_k) &\le \Par{1 - \frac{\ctight}{4}}\nnz(w_{k - 1}).
\end{align*}
Inducting on the second line above, $m_k$ is geometrically decreasing and at least $\Omega(\frac n {\eps^2})$ by the termination condition, so $\sum_{k \in [K]}\rho_k = O(\eps)$. Hence, telescoping the first line above using the triangle inequality shows that for each $k \in [K]$, we have $\normop{\allm(u_k)} \le \normop{\allm(u_0)} + O(\eps) \le 2$ for sufficiently small $\eps$. This in turn implies by the update rule on $v_k$ that
\[\normop{\allm(v_K)} \le \sum_{k \in [K]} \beta \normop{\allm(w_{k - 1})} \le \sum_{k \in [K]} \beta \normop{\allm(u_{k - 1})} \le \frac 1 {10}.\]
Hence, at termination we have all three guarantees in \eqref{eq:uvw}, after adjusting $\eps$ by a constant. 

To produce $x_k$ in each phase satisfying \eqref{eq:xk_guarantee}, we first draw $g \sim \Nor(\0_{m_{k - 1}}, \id_{m_{k - 1}})$, and define the matrices $\ma_i \gets [w_{k - 1}]_i \mm_i$ for all $i$ where $[w_{k - 1}]_i \neq 0$; note that under this definition, $\alla(\1_{m_{k - 1}}) \preceq 2\id_n$ as proven earlier. The precondition of Corollary~\ref{cor:gaussian_measure} is satisfied, so the conclusion of Proposition~\ref{prop:many_tight} holds except with exponentially small probability. We then apply Proposition~\ref{prop:binary_search_lam} (with Propositions~\ref{prop:krylov} and~\ref{prop:modified_boxspec} as subroutines) to solve both of the following problems (following notation \eqref{eq:set_r_def}) 
\[\min_{x \in \set_{\rho_k, \alla} \cap [-1, 1]^{m_{k - 1}}} \norm{x - g}_2^2,\; \min_{x \in \set_{\rho_k, \alla} \cap [-1, 1]^{m_{k - 1}}} \norm{x + g}_2^2,\]
to additive error $\tau = \frac{\ctight m_{k - 1}\beta^2}{4}$. In the context of Proposition~\ref{prop:binary_search_lam}, $\set = \set_{\rho_k, \alla}$ is defined with respect to the discrepancy radius $\rho_k$, $f(x) = \normop{\alla(x)}$, and $\alla$ is the operator associated with the $\{\ma\}_{i \in [m_{k - 1}]}$. Note that $f(x)$ satisfies \eqref{eq:rangedef} with $\Theta = 2$ by the assumption $\alla(\1_{m_{k - 1}}) \preceq 2\id_n$. Hence, both Assumption~\ref{assume:rs_big} and the conclusion of Proposition~\ref{prop:many_tight} both hold over all phases with probability at least $1 - \frac \delta 2$ by a union bound, so Lemmas~\ref{lem:many_nearly_tight} and~\ref{lem:sc_distsquare} show at least one of the approximate solutions (to $g$ or $-g$) has at least $\frac{\ctight m_{k - 1}}{4}$ coordinates in $[-1, -1 + \beta]$. The runtime follows from applying Propositions~\ref{prop:krylov},~\ref{prop:modified_boxspec} with failure probability $\frac \delta {4K}$, since each accuracy parameter in Proposition~\ref{prop:binary_search_lam} satisifes 
\[\frac {\lam \tau} 4 = \Omega(\rho_k \beta^2) = \Omega\Par{\frac{\eps}{\log\log^{3 + o(1)} \frac m \delta}},\]
by the second term in the definition of $\rho_k$ and our upper bound on $K$. We lose two additional $\log\log \frac m \delta$ factors: one from running $K$ phases, and one from the $T$ overhead in Proposition~\ref{prop:binary_search_lam}.
\end{proof}

By recursively applying Proposition~\ref{prop:sparse_plus_small}, we finally obtain the main result of this section.

\begin{theorem}\label{thm:fast_bss}
Let $\eps \in (0, 1)$, $\delta \in (\exp(-\Omega(n)), 1)$ for an appropriate constant, and let $\{\mm_i\}_{i \in [m]} \subset \PSD^n$ satisfy $\allm(\1_m) = \id_n$. There is an algorithm which returns $w \in \R^m_{\ge 0}$ satisfying $\nnz(w) = O(\frac{n}{\eps^2})$ and $\normop{\allm(w) - \id_n} \le \eps$, with probability $\ge 1 - \delta$, in time
\[O\Par{\Par{\sum_{i \in [m]} \tmv(\mm_i)} \cdot \frac{\log(m) \log^2(\frac m \delta) \log\log^{11 + o(1)}(\frac m \delta) }{\eps^{3.5}}}.\]
\end{theorem}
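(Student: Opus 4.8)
The plan is to derive Theorem~\ref{thm:fast_bss} by applying Proposition~\ref{prop:sparse_plus_small} recursively $T = O(\log\frac1\eps)$ times, peeling off a sparse reweighting at each level while the leftover ``residual matrix'' contracts in operator norm by the constant factor $10$ coming from the ``small'' guarantee $\normop{\allm(v)} \le \tfrac1{10}$. We may assume $m > \frac{\csparse n}{\eps^2}$, since otherwise $w = \1_m$ already satisfies both conclusions. Initialize $s^{(0)} \defeq \1_m$ and $w^{(0)} \defeq \0_m$, and maintain after level $t$ a nonnegative $s^{(t)} \in \R^m_{\ge 0}$ with $\mathbf{S}^{(t)} \defeq \allm(s^{(t)}) \succeq \mzero_n$ and $\normop{\mathbf{S}^{(t)}} \le 10^{-t}$, together with an accumulated sparse weight vector $w^{(t)}$, such that $\normop{\allm(w^{(t)}) + \mathbf{S}^{(t)} - \id_n} \le \sum_{s \le t} 10^{-(s-1)}\eps_s$, where $\eps_s \in (0,1)$ is the accuracy used at level $s$. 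At level $t+1$, invoke Proposition~\ref{prop:sparse_plus_small} on the matrix family $\{10^t s^{(t)}_i \mm_i\}_{i \in [m]}$ — which sums to $10^t \mathbf{S}^{(t)} \preceq \id_n$ by the invariant, so its hypotheses hold (in particular the measure bound of Corollary~\ref{cor:gaussian_measure} that it relies on) — with accuracy $\eps_{t+1}$ and failure probability $\frac{\delta}{2T}$, obtaining $u = v + w$ with $\nnz(w) \le \frac{\csparse n}{\eps_{t+1}^2}$. Translating its three guarantees back through the scaling by $10^t$ and the coordinatewise product with $s^{(t)}$: setting $s^{(t+1)} \defeq v \circ s^{(t)}$ gives $\mathbf{S}^{(t+1)} \succeq \mzero_n$ and $\normop{\mathbf{S}^{(t+1)}} \le 10^{-(t+1)}$, and appending $w \circ s^{(t)}$ (which has at most $\nnz(w)$ nonzeros) to the accumulated sparse weights increases the operator-norm error by at most $10^{-t}\eps_{t+1}$, preserving the invariant.

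After $T \defeq \lceil \log_{10}\frac1\eps \rceil$ levels, output $w \defeq w^{(T)}$ and discard $\mathbf{S}^{(T)}$; since $\normop{\mathbf{S}^{(T)}} \le 10^{-T} \le \eps$, the invariant yields $\normop{\allm(w) - \id_n} \le \eps + \sum_{t=1}^T 10^{-(t-1)}\eps_t$. The crux is the choice of per-level accuracies, which must simultaneously keep the accumulated error, the accumulated sparsity, and the total runtime under control even though $T$ is logarithmic. I would take $\eps_t \defeq \eps \cdot \lambda^{t-1}$ for a fixed $\lambda \in (1,10)$, say $\lambda = 3$. Then $\sum_t 10^{-(t-1)}\eps_t = \eps \sum_t (\tfrac{\lambda}{10})^{t-1} = O(\eps)$ bounds the error; $\sum_t \frac{\csparse n}{\eps_t^2} = \frac{\csparse n}{\eps^2}\sum_t \lambda^{-2(t-1)} = O(\frac{n}{\eps^2})$ bounds the sparsity; and the runtime $\sum_{t=1}^T O\big((\sum_i \tmv(\mm_i)) \cdot \frac{\log(m)\log^2(\frac m\delta)\log\log^{11+o(1)}(\frac m\delta)}{\eps_t^{3.5}}\big)$ is a convergent geometric series dominated by its $t=1$ term, where we use $\sum_i \tmv(10^t s^{(t)}_i \mm_i) \le \sum_i \tmv(\mm_i)$ and $\frac\delta{2T}=\frac{\delta}{\textup{polyloglog}}$ so that the logarithmic factors match those of Proposition~\ref{prop:sparse_plus_small} up to constants. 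One checks $\eps_t = \eps^{1 - \log_{10}\lambda} < 1$ so each invocation is legal and $\frac{\csparse n}{\eps_t^2} < m$ so each increment is genuinely sparse, and a union bound over the $T$ calls (with $\frac\delta{2T}$ still above $\exp(-\Omega(n))$) gives overall success probability $\ge 1-\delta$; rescaling $\eps$ by a constant at the outset absorbs the $O(\eps)$ slack.

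The main obstacle I anticipate is exactly this three-way balance in the recursion: the residual must contract by a factor strictly larger than the factor by which we are forced to inflate the accuracies $\eps_t$ to make both the sparsity sum and the runtime sum geometrically decreasing, and one must carefully verify that each rescaled residual family stays in the regime $\allm(\1_m) \preceq \id_n$ so that the underlying Gaussian measure lower bound continues to apply at every level. Once this bookkeeping is in place, the remainder is a routine unwinding of the guarantees of Proposition~\ref{prop:sparse_plus_small}.
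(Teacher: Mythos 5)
Your proof is correct and is essentially the same as the paper's: a recursion over $O(\log\frac1\eps)$ levels of Proposition~\ref{prop:sparse_plus_small} with a geometrically increasing accuracy schedule, where the residual contracts by a factor $10$ per level via the ``small'' guarantee, and the error, sparsity, and runtime are all controlled by the same three geometric sums. The only superficial differences are your choice of un-normalized residual $s^{(t)}$ with $\normop{\allm(s^{(t)})} \le 10^{-t}$ (versus the paper's renormalized $\bv_k \gets 10 v_k$ together with a running scale factor $10^{-(k-1)}$) and your fixed growth rate $\lambda = 3$ (versus the paper's $(1+c)$ for a small constant $c$), neither of which alters the argument.
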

\begin{proof}
We proceed in $K = O(\log \frac 1 \eps)$ phases indexed by $k \in [K]$, and in each iteration we set
\[\eps_k = \eps \cdot (1 + c)^k,\]
where $K$ satisfies $10^K \ge \frac 1 \eps$ and $c$ is a small enough constant such that $\eps_K < 1$. We initialize $\bv_0 \gets \1_m$ and $\bw_0 \gets \0_m$. In each phase $k \in [K]$, we then run the algorithm in Proposition~\ref{prop:sparse_plus_small} on $\{ [\bv_{k - 1}]_i\mm_i\}_{i \in S_{k -1}}$ where $S_{k -1}$ is the set of $i$ with $[\bv_{k -1}]_i \neq 0$, with parameter $\eps_k$. Proposition~\ref{prop:sparse_plus_small} then produces $u_k = w_k + v_k$ satisfying the conclusions of Proposition~\ref{prop:sparse_plus_small}, i.e.\
\[\normop{\allm(\bv_{k - 1} - u_k)} \le \eps_k,\; \nnz(w_k) \le \frac{\csparse n}{\eps_k^2},\; \normop{\allm(v_k)} \le \frac 1 {10}.\]
Finally, we update $\bv_k \gets 10 v_k$ and $\bw_k \gets \bw_{k - 1} + \frac 1 {10^{k -1}} w_k$. We observe that for all $k \in [K]$, the above guarantees imply that $\normop{\allm(\bv_k)} \le 1$ and hence it was valid to run Proposition~\ref{prop:sparse_plus_small}. Moreover,
\begin{align*}
\bw_K &= \sum_{k \in [K]} (\bw_k - \bw_{k -1}) = \sum_{k \in [K]} \frac 1 {10^{k - 1}} (u_k - v_k) \\
&= \sum_{k \in [K]} \frac 1 {10^{k - 1}} \Par{ u_k - \bv_{k - 1}} +  \sum_{k \in [K]} \frac 1 {10^{k - 1}} \Par{\bv_{k - 1} - v_k} \\
&= \sum_{k \in [K]} \frac 1 {10^{k - 1}} \Par{ u_k - \bv_{k - 1}} + \bv_0 - \frac 1 {10^{K - 1}} v_K.
\end{align*}
Therefore, since $\bv_0 = \1_m$, we can rearrange the above and apply the triangle inequality to obtain
\begin{align*}
\normop{\allm(\bw_K) - \id_n} &\le \sum_{k \in [K]} \frac 1 {10^{k -1}} \normop{\allm(\bv_{k - 1} - u_k)} + \frac 1 {10^{K -1}} \normop{\allm(v_K)} \\
&\le \sum_{k \in [K]} \frac {\eps_k} {10^{k -1}} + \frac 1 {10^K} = O(\eps).
\end{align*}
Finally, the sparsity of $\bw_K$ is clearly $O(\frac n {\eps^2})$ since the sparsities of each $w_k$ are geometrically decreasing, and the conclusion follows by shifting $\eps$ by a constant and $\delta$ by a factor of $K$. We note that we do not pay a $K$-factor overhead in the runtime of Proposition~\ref{prop:sparse_plus_small} as $\eps_k$ is geometrically increasing.
\end{proof}

We remark that Theorem~\ref{thm:fast_bss} does not strongly rely on the assumption that $\allm(\1_m) = \id_n$: indeed, any initially $O(1)$-bounded operator norm matrix sum enjoys the same guarantees up to constant factors. However, to obtain a $\eps$-multiplicative spectral sparsification guarantee via Theorem~\ref{thm:fast_bss}, we require a constant spectral lower bound as well. Finally, the restriction $\delta \ge \exp(-\Omega(n))$ can be removed at a logarithmic overhead by running independent copies with constant failure probability.

\subsection{Graph sparsification}\label{ssec:graph_bss}

We now specialize the machinery in this section to the setting of graphs. Throughout this section let $G = (V, E, w_G)$ be a weighted graph, and let $m \defeq |E|$ and $n \defeq |G|$; we define the matrices $\mb$, $\mw$, and $\lap$ as in Section~\ref{sec:prelims}. We recall the following result on constructing an approximation to $\lap^\dagger$.

\begin{proposition}[\cite{KoutisMP11}]\label{prop:kmp}
Given a graph $G = (V, E, w_G)$ with $m = |E|$, $n = |V|$, and $\eps \in (0, 1)$, there is a randomized algorithm which returns a linear operator $\tlap: \R^V \to \R^V$ such that we can run the algorithm and apply $\tlap$ in time $O(m\log^{1 + o(1)}(n)\log \frac 1 \eps)$, and with high probability in $n$,
\[(1 - \eps)\lap^\dagger \preceq \tlap \preceq (1 + \eps)\lap^\dagger. \]
\end{proposition}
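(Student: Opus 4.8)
The plan is to recover the standard recursive-preconditioning construction of a nearly-linear time Laplacian solver and then observe that a fixed number of its iterations defines a linear operator with the stated two-sided spectral guarantee. First I would build a preconditioner chain: using a nearly-linear time low-stretch spanning tree algorithm, compute a spanning tree $T$ of $G$ with total stretch $O(m\log^{o(1)}(n))$, and from $T$ construct an ultrasparsifier $H$ of $G$ (in the sense of Definition~\ref{def:ultra}) by combining an upscaled copy of $T$ with stretch-proportional samples of the off-tree edges, so that $H$ is a tree plus $O(m\log^{o(1)}(n)/\ell)$ extra edges and $\lap_H \preceq \lap_G \preceq \kappa\lap_H$ for some $\kappa = \textup{polylog}(n)$ determined by $\ell$. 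I would then partially Cholesky-eliminate the degree-one and degree-two vertices of $H$ (there are $n - O(m\log^{o(1)}(n)/\ell)$ of them, since $H$ is a tree plus few edges), obtaining a graph $G_1$ on $O(m\log^{o(1)}(n)/\ell)$ vertices for which applying $\lap_H^\dagger$ reduces---via a constant number of matrix-vector products with sparse elimination maps---to applying $\lap_{G_1}^\dagger$. Recursing yields a chain $G = G_0, G_1, \dots, G_d$ with $d = O(\log n)$, geometrically shrinking $\nnz(G_i)$, and relative condition number $\kappa$ between consecutive levels.

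Next I would define $\tlap$ recursively. Set $\textup{Solve}_d \defeq \lap_{G_d}^\dagger$ exactly (the last graph has $O(1)$ vertices), and for $i < d$ let $\textup{Solve}_i$ be $N$ steps of preconditioned Chebyshev iteration for $\lap_{G_i}$, using as preconditioner the sparse elimination reduction composed with $\textup{Solve}_{i+1}$; since the Chebyshev coefficients and elimination maps are all fixed and each $\textup{Solve}_{i+1}$ is linear, $\textup{Solve}_i$ is itself a fixed linear operator. Using the standard fact that for $N = O(\sqrt{\kappa}\log\frac{d}{\eps})$, Chebyshev iteration on a system of relative condition number $\kappa$ preconditioned by an operator that is a $(1 \pm \frac{\eps}{4d})$-spectral approximation of the next level's inverse returns an operator within a $(1 \pm \frac{\eps}{2d})$ factor of $\lap_{G_i}^\dagger$, I would telescope these errors over the $d$ levels to conclude $(1-\eps)\lap^\dagger \preceq \textup{Solve}_0 \preceq (1+\eps)\lap^\dagger$ after adjusting constants. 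To guarantee that $\tlap$ is symmetric positive semidefinite with the same kernel as $\lap$, I would use the symmetric form of preconditioned Chebyshev iteration, or simply replace $\textup{Solve}_0$ by $\half(\textup{Solve}_0 + \textup{Solve}_0^\top)$ (which only tightens the two-sided bound); then I would set $\tlap \defeq \textup{Solve}_0$.

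Finally, for the runtime: balancing the ultrasparsifier quality against the recursion forces $\sqrt{\kappa}$ (together with the low-stretch-tree overhead) to be $\log^{1+o(1)}(n)$, so each level performs $O(\log^{1+o(1)}(n)\log\frac1\eps)$ Chebyshev iterations, an iteration at level $i$ costing $O(\nnz(G_{i-1}))$ for the sparse applications of $\lap_{G_i}$ and the degree-one/two elimination maps; since $\nnz(G_i)$ shrinks geometrically, one application of $\tlap$ costs $O(m\log^{1+o(1)}(n)\log\frac1\eps)$, which also dominates the construction cost (building $T$, the ultrasparsifier chain, and the elimination maps). The high-probability guarantee follows from a union bound over the randomized tree and sampling steps at the $O(\log n)$ levels. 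I expect the main obstacle to be the error analysis of the second paragraph: carefully tracking how the recursive $(1 \pm \frac{\eps}{4d})$ preconditioner errors propagate through the Chebyshev polynomial at each level so that they accumulate only additively across the $d$ levels, while keeping the composed operator symmetric positive semidefinite---this is exactly the delicate bookkeeping carried out in \cite{KoutisMP11} (building on \cite{SpielmanT04}).
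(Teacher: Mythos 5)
The paper does not prove Proposition~\ref{prop:kmp}; it quotes the result directly from \cite{KoutisMP11} as a black box, so there is no in-paper argument to compare against. Your reconstruction of the \cite{KoutisMP11} proof --- low-stretch spanning tree, stretch-proportional sampling to obtain an ultrasparsifier chain, partial Cholesky elimination of degree-one/two vertices, and preconditioned Chebyshev iteration with errors telescoped across $O(\log n)$ recursion levels --- is the standard architecture and correct at the level of detail you give, including the observation that a fixed number of Chebyshev steps with fixed elimination maps yields a fixed linear operator, and that symmetrizing only tightens the two-sided bound.

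One factual slip: no spanning tree with total stretch $O(m\log^{o(1)} n)$ is known, and indeed $\Omega(m\log n)$ is a lower bound; the best constructions (Abraham--Neiman, quoted as Proposition~\ref{prop:low-stretch-tree} in this paper) achieve $O(m\log n\log\log n)$. So the $\log^{1+o(1)}(n)$ factor in the final runtime is not inherited from the tree itself; it emerges from balancing the $\log n\log\log n$ stretch, the ultrasparsifier's $\log$ overhead, and the $\sqrt{\kappa}$ Chebyshev iteration count across the geometrically-decaying recursion. This does not break your outline but should be stated correctly. The error-propagation bookkeeping you identify as the main obstacle --- ensuring the per-level $(1\pm\eps/d)$ preconditioner errors accumulate only additively while the composed operator remains symmetric PSD --- is indeed the technical core of \cite{KoutisMP11,SpielmanT04}, and it is reasonable to defer to those references for it rather than reproduce it, just as the paper does.
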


We now consider applying Theorem~\ref{thm:fast_bss} to the matrices
\[\mm_e \defeq \mw^{\half}\mb \tlap (w_e b_eb_e^\top) \tlap \mb^\top \mw^{\half},\text{ for all } e \in E,\]
where $\tlap$ is the matrix from Proposition~\ref{prop:kmp} with parameter $\eps$, and $w \defeq w_G$. Since
\[(1 - \eps)^2 \lap^\dagger \preceq \tlap \lap \tlap \preceq (1 + \eps)^2 \lap^\dagger, \]
and $\allm(\1_E) = \mw^{\half}\mb\tlap\lap\tlap\mb^\top\mw^{\half}$, defining $\proj_{\lap} \defeq \mw^{\half}\mb\lap^\dagger\mb^\top\mw^{\half}$, we have
\[(1 - \eps)^2 \proj_{\lap} \preceq \allm(\1_E) \preceq (1 + \eps)^2 \proj_{\lap}.\]
Notice that $\proj_{\lap}$ is the projection matrix onto the column span of $\mw^{\half}\mb$. It is straightforward to check that the argument in Section~\ref{ssec:main-bss} then returns a reweighting $x$ such that $\allm(x - \1_E) \preceq \eps\proj_{\lap}$, with $\nnz(x) = O(\frac n {\eps^2})$; this is made explicit by Theorem~\ref{thm:gen_rr20}, which only cares about the trace bound on $\allm(\1_E)$ (which is $n$), rather than the ambient dimension $m$. Finally, assuming $\eps$ is sufficiently small, we may rewrite this condition as
\begin{gather*} (1 - 4\eps)\proj_{\lap} \preceq \allm(w) \preceq (1 + 4\eps) \proj_{\lap} \\
\implies (1 - 4\eps)\lap \preceq\lap \tlap \Par{\sum_{e \in E} x_e w_e b_eb_e^\top} \tlap \lap \preceq (1 + 4\eps)\lap \\
\implies (1 - 7\eps)\lap \preceq \sum_{e \in E} x_e w_e b_eb_e^\top \preceq (1 + 7\eps)\lap.\end{gather*}
In the second line, we left-multiplied all matrices by $\mb^\top \mw^{\half}$ and right-multiplied by $\mw^{\half}\mb$, and in the third line, we first multiplied on both sides by $\lap^\dagger$, and then multiplied on both sides by $\tlap^\dagger$ (using that $\tlap$ and $\lap$ share a kernel) and used Proposition~\ref{prop:kmp}. Finally, the matrix-vector multiplication time to all $\{\mm_e\}_{e \in E}$ is dominated by applying $\tlap$. Adjusting $\eps$ by a constant, we obtain our claimed result on linear-sized sparsification, reproduced here for convenience.

\restatebssgraph* %
\section{Ultrasparsifiers}\label{sec:ultra}

In this section, we give a short application of the techniques in Section~\ref{sec:bss} to the setting of ultrasparsifiers. We will be concerned with the following problem: we are given a set of matrices $\{\mm_i\}_{i \in [m]} \subset \PSD^n$ and an additional matrix $\mn \in \PSD^n$, satisfying $\allm(\1_m) + \mn = \id_n$ where $\allm$ is the operator associated with the set in the sense of \eqref{eq:alla_def}. We also are promised
\[\Tr(\allm(\1_m)) \le \tau,\]
for a parameter $\tau$ such that $\frac n \tau$ is sufficiently large. We also require the relatively mild assumption that $n$ is subexponential in $m$; indeed, in our applications $m \ge n$. Our goal is to construct $w \in \R^m_{\ge 0}$ such that $\nnz(w) = O(\tau)$, and $\normop{\allm(w) + \mn - \id_n} \le \half$. We explain the connections of this problem to primitives in spectral graph theory at the end of the section, after giving our sparsification algorithm in Theorem~\ref{thm:ultra}. We begin by stating the following variant of Proposition~\ref{prop:sparse_plus_small}.

\begin{corollary}\label{cor:sparse_plus_small_ultra}
Let $\eps \in (0, 1)$, $\delta \in (\exp(-\Omega(\tau)), 1)$ for an appropriate constant, and let $\{\mm_i\}_{i \in [m]} \subset \PSD^n$ satisfy $\allm(\1_m) \preceq \id_n$ and $\Tr(\allm(\1_m)) \le \tau$, where $\frac n \tau$ is sufficiently large and $n = \exp(O(m))$ for an appropriate constant. There is an algorithm returning $u = v + w$ for $v, w \in \R^m_{\ge 0}$ satisfying
\[\normop{\allm(u - \1_m)} \le \eps,\; \nnz(w) \le \frac{\csparse\tau}{\eps^2},\; \normop{\allm(v)} \le \frac 1 {10}, \]
with probability $\ge 1 - \delta$, for a universal constant $\csparse$, in time
\[O\Par{\Par{\sum_{i \in [m]} \tmv(\mm_i)} \cdot \frac{\log(n)\log^2(\frac{mn}{\delta}) \log\log^{11 + o(1)}(\frac n \delta)}{\eps^{3.5}}}.\]
\end{corollary}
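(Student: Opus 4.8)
The plan is to mirror the proof of Proposition~\ref{prop:sparse_plus_small} almost verbatim, making the substitutions that reflect replacing the ambient dimension $n$ by the trace parameter $\tau$: the warm start is produced by Lemma~\ref{lem:lose_log} invoked with trace bound $\tau$ instead of $n$; the exponential Gaussian measure lower bound needed in each phase is supplied by the trace-sensitive Corollary~\ref{cor:ultra_measure} in place of Corollary~\ref{cor:gaussian_measure}; and the discrepancy radius in phase $k$ is scaled by $\sqrt{\tau_{k-1}/m_{k-1}}$ rather than $\sqrt{n/m_{k-1}}$, where $\tau_{k-1}$ is a running upper bound on $\Tr(\allm(w_{k-1}))$ maintained by the algorithm.

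In detail: Lemma~\ref{lem:lose_log} with trace bound $\tau$ returns $w_0$ with $\nnz(w_0) = O(\tau\eps^{-2}\log\frac n\delta)$ and $\normop{\allm(w_0 - \1_m)} \le \eps$, and since each sampled matrix $\frac{1}{Kp_i}\mm_i$ contributes exactly $\frac1K\Tr(\allm(\1_m))$ to the trace, we also get $\Tr(\allm(w_0)) \le \tau$, so set $\tau_0 \defeq \tau$. We then run $K = O(\log\log\frac n\delta)$ phases as in Proposition~\ref{prop:sparse_plus_small} with $\beta \defeq \frac1{20K}$ and
\[\rho_k \defeq \max\Par{2\Cset\sqrt{\frac{\tau_{k-1}}{m_{k-1}}},\ \frac{\eps}{(K-k+1)\log^2(K-k+1)}},\]
terminating once $m_k \defeq \nnz(w_k) \le \frac{\csparse\tau}{\eps^2}$. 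In phase $k$ we draw $g \sim \Nor(\0_{m_{k-1}},\id_{m_{k-1}})$, form $\ma_i \defeq [w_{k-1}]_i\mm_i$ (so $\alla(\1_{m_{k-1}}) \preceq 2\id_n$ and $\Tr(\alla(\1_{m_{k-1}})) \le \tau_{k-1}$), and apply Proposition~\ref{prop:binary_search_lam} to both $g$ and $-g$ against $\set_{\rho_k,\alla}$ with $f(x) = \normop{\alla(x)}$ (satisfying \eqref{eq:rangedef} with $\Theta = 2$) and Propositions~\ref{prop:krylov},~\ref{prop:modified_boxspec} as subroutines; the needed measure lower bound is Corollary~\ref{cor:ultra_measure} applied to $\{\thalf\ma_i\}_i$, provided $n \le 2^{m_{k-1}/5}$. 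Lemmas~\ref{lem:many_nearly_tight},~\ref{lem:sc_distsquare} then give, for one of $\pm g$, at least $\frac{\ctight}{4}m_{k-1}$ returned coordinates in $[-1,-1+\beta]$; freezing these into $v_k$ and reweighting the rest by $\1 + x_k$ as in Proposition~\ref{prop:sparse_plus_small} gives $\normop{\allm(u_k - u_{k-1})} = O(\rho_k)$ and $\nnz(w_k) \le (1 - \frac{\ctight}{4})\nnz(w_{k-1})$, so $\sum_k\rho_k = O(\eps)$ telescopes to the three claimed bounds. The runtime comes from the same accounting as Proposition~\ref{prop:sparse_plus_small} ($K$ phases, each a binary search of $T = O(\log\log\frac n\delta)$ steps calling Propositions~\ref{prop:krylov},~\ref{prop:modified_boxspec} at accuracy $\Omega(\rho_k\beta^2) = \Omega(\eps/\log\log^{3+o(1)}\frac n\delta)$ with $\normop{\alla(\1_{m_{k-1}})} = O(1)$), giving the factor $\log n \cdot \log^2(\frac{mn}\delta) \cdot \log\log^{11+o(1)}(\frac n\delta) \cdot \eps^{-3.5}$.

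The main obstacle is keeping $\tau_{k-1} = O(\tau)$ over all $K$ phases. Reweighting surviving coordinates by $\1 + x_k$ with $x_k \in [-1,1]^{m_{k-1}}$ only yields, crudely, $\Tr(\allm(w_k)) \le \Tr(\allm(w_{k-1})) + \Tr(\allm(w_{k-1}\circ x_k))$, which a priori permits the trace to grow per phase, so that $\rho_k$ might not sum to $O(\eps)$ with a universal $\csparse$. The intended resolution uses that $x_k$ lies within $\ell_2$ distance $O(\sqrt{m_{k-1}}\beta)$ (Lemma~\ref{lem:sc_distsquare}) of the projection of a mean-zero Gaussian onto the symmetric set $\set_{\rho_k,\alla}\cap[-1,1]^{m_{k-1}}$, which is a $1$-Lipschitz odd function of $g$; hence $\sum_i[x_k]_i[w_{k-1}]_i\Tr(\mm_i)$ has mean zero and concentrates, and together with a bound on $\max_i[w_{k-1}]_i\Tr(\mm_i)$ --- equal to $2$ for rank-one $\mm_i$, the case governing the ultrasparsifier application --- this keeps $\Tr(\allm(w_{k-1}\circ x_k)) = O(\sqrt{\tau_{k-1}\log\frac n\delta})$, so $\tau_{k-1}$ stays $O(\tau)$ whenever $\tau$ exceeds a $\textup{polylog}(\frac n\delta)$ factor, as it does in our applications; the general case requires a similar but more careful accounting of per-coordinate trace contributions. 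The only remaining check is $n \le 2^{m_{k-1}/5}$ in every phase, which follows from $m_{k-1} > \frac{\csparse\tau}{\eps^2}$ and the standing hypotheses $n = \exp(O(m))$ and, in the relevant regime, $\tau = \Omega(\log n)$.
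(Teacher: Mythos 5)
Your construction mirrors the paper's short proof almost exactly (warm start from Lemma~\ref{lem:lose_log} with trace parameter $\tau$, $K = O(\log\log\frac{n}{\delta})$ phases shrinking $\nnz(w_k)$ geometrically, $\rho_k$ schedule with $\sqrt{\tau/m_{k-1}}$ in place of $\sqrt{n/m_{k-1}}$, and Corollary~\ref{cor:ultra_measure} as the measure lower bound), and you are right to flag that an invariant of the form $\Tr(\allm(w_{k-1})) = O(\tau)$ is needed in order to invoke Corollary~\ref{cor:ultra_measure} with the paper's $\rho_k$ at every phase; the paper's proof disposes of this in a single clause ("for sufficiently large $\csparse$ ... to preserve the necessary invariants") without an explicit argument, whereas the analogous invariant $\normop{\allm(u_k)} \le 2$ in Proposition~\ref{prop:sparse_plus_small} is telescoped in detail. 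So identifying the trace control as the nontrivial point is a fair observation.

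However, the concentration argument you sketch to establish that invariant is quantitatively wrong. Your Lipschitz/oddness reasoning correctly bounds $|c^\top \xs|$ for the \emph{exact} projection $\xs$ (with $c_i = [w_{k-1}]_i\Tr(\mm_i)$), and $\norm{c}_\infty \le 2$ in the rank-one case is correct since $[w_{k-1}]_i\mm_i \preceq \alla(\1_{m_{k-1}}) \preceq 2\id_n$. But the algorithm only returns $x_k$ with $\norm{x_k - \xs}_2 = O(\sqrt{m_{k-1}}\beta)$ by Lemma~\ref{lem:sc_distsquare}, contributing an extra error $\norm{c}_2\cdot O(\sqrt{m_{k-1}}\beta)$ to $|c^\top x_k|$. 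With $m_{k-1} \le m_0 = O(\tau\eps^{-2}\log\frac{n}{\delta})$, $\beta = \Theta(1/\log\log\frac{n}{\delta})$, and $\norm{c}_2 = O(\sqrt{\tau_{k-1}})$, this term is $\Theta(\tau\eps^{-1}\sqrt{\log\frac{n}{\delta}}/\log\log\frac{n}{\delta})$, which is $\omega(\tau)$ even for constant $\eps$ — much larger than the $O(\sqrt{\tau_{k-1}\log\frac{n}{\delta}})$ you claim. So your argument does not establish $\tau_{k-1} = O(\tau)$. Separately, replacing the paper's fixed $\tau$ with a running $\tau_{k-1}$ inside $\rho_k$ is problematic: if $\tau_{k-1}$ did grow (say by a constant factor per phase), the first terms of $\rho_k$ would no longer be a geometric sequence capped at $O(\eps)$, and the telescoping bound $\sum_k \rho_k = O(\eps)$ — which is exactly what yields the $\csparse\tau/\eps^2$ sparsity — would fail. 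As written, neither your concentration step nor the running-$\tau_{k-1}$ modification closes the gap.
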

\begin{proof}
The proof is essentially identical to the proof of Proposition~\ref{prop:sparse_plus_small}, so we summarize the differences here, following the notation of our earlier proof. First, we initialize with the reweighting from Lemma~\ref{lem:lose_log} with a sparsity parameterized by $\tau$; there are still $K = \log \log \frac n \delta$ phases, each of which shrinks the sparsity of a maintained $w_k$ by a constant factor, until it reaches $\frac{\csparse\tau}{\eps^2}$. We use
\[\rho_k = \max\Par{2\Cset\sqrt{\frac{\tau}{m_{k - 1}}},\; \frac{\eps}{(K - k + 1)\log^2(K - k + 1)}},\]
and so for sufficiently large $\csparse$ we may continue using the measure lower bound from Corollary~\ref{cor:ultra_measure} at each phase with the framework from Section~\ref{sec:framework} to preserve the necessary invariants. The remaining parameters and runtime analysis of this algorithm are all the same as in Proposition~\ref{prop:sparse_plus_small}.
\end{proof}

\begin{theorem}\label{thm:ultra}
Let $\delta \in (\exp(-\Omega(\tau)), 1)$ for an appropriate constant, and let $\{\mm_i\}_{i \in [m]} \subset \PSD^n$ and $\mn \in \PSD^n$ satisfy $\allm(\1_m) + \mn = \id_n$ and $\Tr(\allm(\1_m)) \le \tau$, where $\frac n \tau$ is sufficiently large and $n = \exp(O(m))$ for an appropriate constant. There is an algorithm which returns $w \in \R^m_{\ge 0}$ satisfying $\nnz(w) = O(\tau)$ and $\normop{\allm(w) + \mn - \id_n} \le \half$, with probability $\ge 1 - \delta$, in time
\[O\Par{\Par{\sum_{i \in [m]}\tmv(\mm_i)} \cdot \log(n)\log^2\Par{\frac n \delta} \log\log^{11 + o(1)}\Par{\frac n \delta}}.\]
\end{theorem}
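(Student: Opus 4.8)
The plan is to invoke Corollary~\ref{cor:sparse_plus_small_ultra} a single time, with the accuracy parameter set to a small absolute constant. Unlike in Theorem~\ref{thm:fast_bss}, where a recursion over $O(\log\frac 1 \eps)$ phases is required to drive the residual ``small'' component down to operator norm $\eps$, here we only ask for a constant-factor spectral approximation $\frac 1 2$, so the residual component can simply be absorbed into the error budget rather than refined away.

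First I would verify the hypotheses of Corollary~\ref{cor:sparse_plus_small_ultra}. Since $\mn \succeq \mzero_n$ and $\allm(\1_m) + \mn = \id_n$, we get $\allm(\1_m) = \id_n - \mn \preceq \id_n$; the remaining requirements (the trace bound $\Tr(\allm(\1_m)) \le \tau$, that $\frac n \tau$ is sufficiently large, that $n = \exp(O(m))$, and that $\delta \in (\exp(-\Omega(\tau)), 1)$) are inherited verbatim from the hypotheses of Theorem~\ref{thm:ultra}.

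Next, fix $\eps \defeq \frac 1 4$ (any constant with $\eps + \frac 1 {10} \le \frac 1 2$ works) and run the algorithm of Corollary~\ref{cor:sparse_plus_small_ultra} with failure probability $\delta$, obtaining $u = v + w$ with $v, w \in \R^m_{\ge 0}$ satisfying $\normop{\allm(u - \1_m)} \le \eps$, $\nnz(w) \le \csparse\tau/\eps^2$, and $\normop{\allm(v)} \le \frac 1 {10}$; I would return $w$. The sparsity claim is immediate, since $\nnz(w) \le 16\csparse\tau = O(\tau)$. For the spectral claim, using $w = u - v$ and $\id_n - \mn = \allm(\1_m)$, linearity of $\allm$ gives
\[\allm(w) + \mn - \id_n = \allm(w) - \allm(\1_m) = \allm(u - \1_m) - \allm(v),\]
so the triangle inequality yields $\normop{\allm(w) + \mn - \id_n} \le \eps + \frac 1 {10} \le \frac 1 2$. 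Finally, since $\eps$ is an absolute constant, the factor $\eps^{-3.5}$ in the runtime of Corollary~\ref{cor:sparse_plus_small_ultra} is $O(1)$, which gives the claimed runtime; there is only a single call, so no recursion overhead is incurred.

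I do not anticipate a genuine obstacle here: essentially all of the work is packaged inside Corollary~\ref{cor:sparse_plus_small_ultra} (and hence inside the framework of Section~\ref{sec:framework} together with the new measure lower bound Corollary~\ref{cor:ultra_measure}), and the additive matrix $\mn$ only ever enters through the bookkeeping identity $\id_n - \mn = \allm(\1_m)$ used above. The one point worth checking carefully is purely mechanical: the nonnegativity $w \ge 0$ and the count $\nnz(w)$ are exactly what Corollary~\ref{cor:sparse_plus_small_ultra} produces, so no post-processing of the output is needed.
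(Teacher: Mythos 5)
Your proposal is correct and matches the paper's proof in substance: the paper invokes the recursion of Theorem~\ref{thm:fast_bss} with Corollary~\ref{cor:sparse_plus_small_ultra} in place of Proposition~\ref{prop:sparse_plus_small}, but since it takes $\eps \le \frac{1}{2}$ to be a constant, that recursion collapses to a constant number of calls, which is exactly what you do directly with a single call at $\eps = \frac{1}{4}$ and the triangle-inequality bound $\eps + \frac{1}{10} \le \frac{1}{2}$. The bookkeeping identity $\allm(w) + \mn - \id_n = \allm(w - \1_m)$ and the observation that $\log(\frac{mn}{\delta}) = O(\log(\frac{n}{\delta}))$ because $n = \exp(O(m))$ are exactly as the paper intends.
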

\begin{proof}
We recursively apply Corollary~\ref{cor:sparse_plus_small_ultra} in the same way Proposition~\ref{prop:sparse_plus_small} was used in Theorem~\ref{thm:fast_bss}. This yields a $O(\tau)$-sparse reweighting $w$ satisfying $\normop{\allm(w - \1_m)} = \normop{\allm(w) + \mn - \id_n } \le \half$, where it suffices to take $\eps \le \half$ to be a sufficiently small constant in the proof of Theorem~\ref{thm:fast_bss}. 
\end{proof}

We now describe consequences of Theorem~\ref{thm:ultra}. As discussed in Section~\ref{sec:intro}, the state-of-the-art ultrasparsifiers constructible in near-linear time have parameters $(\bO(\ell\log^2(n)), \ell)$, but there are higher-quality $(\min(\bO(\ell \log(n)), \ell^{1 + o(1)}\log^{o(1)}(n)), \ell)$-ultrasparsifiers constructible in polynomial time \cite{KollaMST10, JambulapatiS21}. We show how to match \cite{KollaMST10, JambulapatiS21} in near-linear time. Our starting point (as is the starting point of prior constructions \cite{SpielmanT04, KollaMST10, KoutisMP11, JambulapatiS21}) is the following notion of a low-distortion subgraph (generalizing the combinatorial notion of a low-stretch spanning tree, as is demonstrated formally by Theorem 2.1 of \cite{SpielmanW09}).

\begin{definition}\label{def:low-stretch}
We say subgraph $H$ is a $\sigma$-distortion subgraph of $G = (V, E, w_G)$ if $\Tr(\lap_H^{\dagger}\lap_G) \le \sigma$.
\end{definition}

We further recall the following known constructions of low-distortion subgraphs. The first result was used in \cite{KollaMST10} and the latter was both developed and used in \cite{JambulapatiS21}.

\begin{proposition}[\cite{AbrahamN19}]\label{prop:low-stretch-tree}
Let $G = (V, E, w_G)$ have $m = |E|$ and $n = |V|$. There is an algorithm running in time $O(m\log n \log\log n)$ which produces a $O(m\log n \log\log n)$-distortion subgraph of $G$. Further, this subgraph is a spanning tree of $G$.
\end{proposition}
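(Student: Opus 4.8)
The statement is precisely the Abraham--Neiman low-stretch spanning tree theorem, recast in the ``distortion'' language of Definition~\ref{def:low-stretch}. The plan is, first, to verify that when the subgraph $H = T$ is a spanning tree, the trace condition $\Tr(\lap_T^\dagger \lap_G) \le \sigma$ coincides with the classical notion of total stretch, and then to invoke the known near-linear time construction as a black box.

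For the translation: since $T$ is a connected spanning subgraph, $\lap_T$ and $\lap_G$ share the kernel $\1_V$, so $\lap_T^\dagger \lap_G$ is well-defined on the orthogonal complement of $\1_V$. Writing $\lap_G = \sum_{e = (u,v) \in E} [w_G]_e\, b_e b_e^\top = \mb^\top \mw \mb$ and using linearity and cyclicity of the trace,
\[
\Tr\!\left(\lap_T^\dagger \lap_G\right) = \sum_{e = (u,v) \in E} [w_G]_e\, b_e^\top \lap_T^\dagger b_e = \sum_{e = (u,v) \in E} [w_G]_e\, R^T_{\mathrm{eff}}(u,v),
\]
where $R^T_{\mathrm{eff}}(u,v)$ is the effective resistance between $u$ and $v$ in $T$ with edge conductances $w_G$. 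Because $T$ is a tree there is a unique $u$--$v$ path $P_T(u,v)$, and $R^T_{\mathrm{eff}}(u,v) = \sum_{f \in P_T(u,v)} [w_G]_f^{-1}$, so the summand $[w_G]_e \sum_{f \in P_T(u,v)} [w_G]_f^{-1}$ is exactly the stretch of edge $e$ in $T$. Hence a $\sigma$-distortion spanning tree of $G$ is the same object as a spanning tree of total stretch at most $\sigma$, and a spanning tree is in particular a subgraph, so any such tree is a valid $\sigma$-distortion subgraph.

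It therefore suffices to quote a near-linear time algorithm producing a spanning tree of total stretch $O(m \log n \log\log n)$, which is exactly the petal-decomposition construction of \cite{AbrahamN19}: the recursive low-radius decomposition there can be carried out in $O(m \log n \log\log n)$ time while controlling the total stretch at the stated bound. The only ``obstacle'' is that reproving this stretch bound from scratch is the entire content of that paper, so we do not attempt it; the sole work on our end is the elementary trace identity above together with the observation that the trace of $\lap_T^\dagger\lap_G$ is insensitive to which signed incidence convention is used for $\mb$, since it depends only on $\lap_G$.
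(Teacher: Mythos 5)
Your proof is correct and follows essentially the same route the paper intends: the paper states the proposition as a citation of \cite{AbrahamN19} and, in the parenthetical after Definition~\ref{def:low-stretch}, delegates the translation between the trace condition $\Tr(\lap_T^\dagger\lap_G)\le\sigma$ and the combinatorial notion of total stretch to Theorem 2.1 of \cite{SpielmanW09}, whereas you unroll that translation into an explicit trace/effective-resistance computation. Your computation is accurate (including the subtle point that stretch uses edge lengths $1/[w_G]_e$, so the series formula for effective resistance on the unique tree path gives exactly $[w_G]_e\sum_{f\in P_T(u,v)}[w_G]_f^{-1}$), and treating the \cite{AbrahamN19} construction and its runtime/stretch bound as a black box is exactly what the paper does.
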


\begin{restatable}[\cite{JambulapatiS21}]{proposition}{restatejssubgraph}\label{prop:js21_subgraph}
Let $G = (V, E, w_G)$ have $m = |E|$ and $n = |V|$. There is an algorithm which takes a parameter $\gamma > 1$ and runs in time $O(m)$, which produces a $O(m\gamma^{o(1)}(\log n)^{o(1)})$-distortion subgraph of $G$ with $n - 1 + \frac m \gamma$ edges.
\end{restatable}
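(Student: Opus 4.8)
The plan is simply to invoke this result of \cite{JambulapatiS21} as a black box, exactly as we do with Propositions~\ref{prop:low-stretch-tree} and~\ref{prop:kmp}; for completeness we recall the shape of the argument and why the stated parameters arise. The natural baseline is a low-stretch spanning tree: by Proposition~\ref{prop:low-stretch-tree} we can compute in near-linear time a spanning tree $T$ of $G$ with
\[\Tr(\lap_T^\dagger \lap_G) = \sum_{e = (u,v) \in E} w_e \cdot b_e^\top \lap_T^\dagger b_e = \sum_{e = (u,v) \in E} \textup{stretch}_T(e) = O(m\log n \log\log n),\]
using that in a tree the effective resistance between the endpoints of $e$ coincides with the resistance of the unique $T$-path. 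Thus a spanning tree is already an $O(m\log n\log\log n)$-distortion subgraph with $n-1$ edges --- i.e.\ the proposition with the weaker overhead $\log n\log\log n$ in place of $\gamma^{o(1)}(\log n)^{o(1)}$ and no extra edges. The content of Proposition~\ref{prop:js21_subgraph} is that an additional edge budget of $m/\gamma$ collapses the per-edge overhead all the way down to $\gamma^{o(1)}(\log n)^{o(1)}$.

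This is obtained by a hierarchical decomposition in the spirit of AKPW low-stretch trees and the refinements of Abraham--Neiman (Proposition~\ref{prop:low-stretch-tree}), but with a crucial change: clusters are handled \emph{recursively} instead of being contracted to single vertices, and a small number of ``bridge'' edges per cluster are retained in the output subgraph. Concretely, one partitions $E$ into $O(\log R)$ geometric weight classes, where $R$ is the ratio of the largest to smallest edge weight (reducible to $\mathrm{poly}(n)$ by a standard bucketing/contraction preprocessing), builds bounded-radius clusters within each weight class, and spans each cluster's boundary not by a shortest-path tree --- which would pay the cluster radius as stretch --- but by a recursively constructed low-distortion subgraph of the cluster, plus a few bridge edges that shorten the effective radius. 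Running the recursion for a slowly growing number of levels $L = \omega(1)$ (e.g.\ $L$ on the order of $\sqrt{\log\log n}$) replaces the single $\log n\log\log n$ overhead of the spanning-tree case by a product of $L$ sub-polylogarithmic per-level overheads, which multiplies out to $\gamma^{o(1)}(\log n)^{o(1)}$, while the bridge edges charged across all levels sum to $m/\gamma$. Since every primitive invoked (weight-class partitioning, low-diameter clustering, and the recursive calls on vertex-disjoint clusters) can be executed in time linear in the edges it touches, the total running time telescopes to $O(m)$.

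The main obstacle --- and the technical heart of \cite{JambulapatiS21} --- is the joint accounting showing that the scheme \emph{simultaneously} holds the number of retained bridge edges to $m/\gamma$ and bounds the accumulated distortion as a product over levels without the $2^{\Theta(\sqrt{\log n\log\log n})}$ blowup that afflicts naive AKPW-style analyses. This requires coupling the cluster-radius parameter at each level carefully with the number of bridges charged there, together with the fact that effective resistance (unlike shortest-path distance) degrades gracefully under adding parallel connections, so that a handful of bridges genuinely shrinks the resistance seen by the edges routed through a cluster. As these calculations are carried out in full in \cite{JambulapatiS21}, we do not reproduce them and use Proposition~\ref{prop:js21_subgraph} verbatim.
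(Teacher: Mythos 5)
Your proposal never actually derives the proposition; it declares the derivation out of scope and falls back on a black-box citation together with a heuristic description of what \cite{JambulapatiS21} does internally. That is precisely the move the paper cannot make here: as the text preceding the proposition states, the claim is not explicit in \cite{JambulapatiS21}, which is why Appendix~\ref{app:subgraph_construct} exists. The paper's proof is a concrete composition of two specific building blocks from that work: their Theorem~1.9, a near-linear-time construction of a $(q, O(\log^5 n'))$-path sparsifier with $O(n'q\log^3 n')$ edges, and their Theorem~2.5, a meta-theorem that turns any $(10\beta,\beta)$-path-sparsification subroutine into a subgraph with $n-1 + O(\frac{m}{\gamma} + \frac{m\log\gamma}{k^2}) + S(O(m), O(\frac{m}{k}))$ edges and distortion $m\hat\alpha$ for an explicit $\hat\alpha(\gamma,k,n)$. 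The actual content is then the parameter choices $q = O(\log^5 n')$ and $k = \gamma\log^{18} m$, which make the auxiliary edge terms collapse into $O(m/\gamma)$ and the time into $O(m)$, followed by a careful arithmetic simplification showing $\hat\alpha = O(\gamma^{o(1)}(\log n)^{o(1)})$ via terms like $\exp(O(\sqrt{\log\gamma\cdot\log\log\gamma}))$.

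Your write-up supplies none of this: it does not identify which theorems of \cite{JambulapatiS21} to compose, does not choose parameters, and does not carry out the $\hat\alpha$ estimate---the one place where something could actually go wrong and where the $o(1)$ exponents come from. The hierarchical-cluster-with-bridges narrative you sketch is a plausible folk summary of how path sparsifiers work internally, but it is not a proof of the stated proposition and it does not explain the specific subpolynomial shape of the bound. To close the gap, either explicitly cite and instantiate the relevant theorems of \cite{JambulapatiS21} and verify the parameter arithmetic as the paper does, or reproduce the potential/charging argument in full; a paragraph of intuition is not a substitute for either.
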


Because the statement of Proposition~\ref{prop:js21_subgraph} is not explicit in \cite{JambulapatiS21}, we discuss how to obtain it in Appendix~\ref{app:subgraph_construct}. We now combine Propositions~\ref{prop:low-stretch-tree} and~\ref{prop:js21_subgraph} with Theorem~\ref{thm:ultra}. First, by applying Theorem~\ref{thm:bss_graph} with constant $\eps$, we may assume that $m = O(n)$ at the cost of constant spectral distortion and a near-linear runtime overhead. Overloading $G$ to mean the graph after applying this preprocessing, we let $H$ be a $\sigma$-distortion subgraph of $G$, and for a parameter $\kappa \ge 1$, we let
\begin{align*}
\mn &\defeq \Par{\kappa\lap_H + \lap_{G}}^{\frac \dagger 2} \Par{\kappa\lap_H}\Par{\kappa\lap_H + \lap_{G}}^{\frac \dagger 2}, \\
\mm_e &\defeq \Par{\kappa\lap_H + \lap_{G}}^{\frac \dagger 2} (w_e b_e b_e^\top)\Par{\kappa\lap_H + \lap_G}^{\frac \dagger 2} \text{ for all } e \in E.
\end{align*}
It is straightforward to see that $\mn + \sum_{e \in E} \mm_e = \idnoone$. Further, the distortion definition guarantees
\[\Tr\Par{\sum_{e \in E} \mm_e} = \Tr\Par{\Par{\kappa\lap_H + \lap_G}^\dagger \lap_G} \le \Tr\Par{\Par{\kappa\lap_H}^\dagger \lap_G} \le \tau \defeq \frac{\sigma}{\kappa}.\]
Finally, we note that (spectral approximations of) the square roots of Laplacian pseudoinverses may be applied in nearly-linear time, using the same strategy as in Section~\ref{ssec:graph_bss}. 
Therefore, Theorem~\ref{thm:ultra} (with our other tools) shows that we can efficiently obtain a graph on $n - 1 + O(\frac \sigma \kappa)$ edges whose Laplacian is an $O(1)$-spectral approximation to $\kappa\lap_H + \lap_G$, which itself is a $O(\kappa)$-spectral approximation to $\lap_G$. By taking $H$ as in Proposition~\ref{prop:low-stretch-tree} or~\ref{prop:js21_subgraph} and $\kappa$ large enough, and reparameterizing the problem in terms of $\ell$ (the edge reduction factor), we achieve a $\min(\bO(\ell\log(n)), \ell^{1 + o(1)}\log^{o(1)}(n))$-spectral approximation to $\lap_G$ on $n - 1 + \frac n \ell$ edges in nearly-linear time. We record this construction formally in Corollary~\ref{cor:ultra_fast}, restated for convenience: the runtime comes from combining Theorem~\ref{thm:ultra} with Section~\ref{ssec:graph_bss}, as the runtimes of Propositions~\ref{prop:low-stretch-tree},~\ref{prop:js21_subgraph} do not dominate.

\restateultrafast*

In particular, Corollary~\ref{cor:ultra_fast} yields the main ultrasparsifier results of \cite{KollaMST10, JambulapatiS21} with an alternative proof (without going through the \cite{BatsonSS14} analysis), and much faster algorithms.

 %
\section{Degree-preserving sparsifiers}\label{sec:degree}

In this section, we consider a degree-preserving variant of the specialization of Section~\ref{sec:bss} to graphs. For an undirected graph $G = (V, E, w_G)$ and a parameter $\eps \in (0, 1)$ (both fixed throughout), we define associated matrices and vectors $\lap$, $|\mb|$, $\mb$, $\{b_e\}_{e \in E}$ and $\mw$ as in Section~\ref{sec:prelims}. Our goal is to obtain a sparse
$w \in \R^E_{\ge 0}$ with 
\[(1 - \eps)\lap \preceq \sum_{e \in E} [w \circ w_G]_e b_e b_e^\top \preceq (1 + \eps)\lap \text{ and } |\mb| \mw (w - \1_E) = 0,\]
i.e.\ the degrees of the reweighted graph are unchanged from $G$. In Section~\ref{ssec:framework-degree} and Section~\ref{ssec:linear-oracle} we develop our main optimization subroutines for implementing the sparsification framework of \cite{RR20} with linear constraints. We then give a degree-preserving rounding procedure in Section~\ref{ssec:spectral_round}, and prove our main result Theorem~\ref{thm:main_degree} in Section~\ref{ssec:main-degree}. Finally, in Section~\ref{ssec:tradeoffs} we show how to slightly improve the runtime of Theorem~\ref{thm:main_degree} at the cost of a logarithmic overhead in the sparsity.

\subsection{Approximate Gaussian rounding with linear constraints}\label{ssec:framework-degree}

In this section, we provide a subroutine for optimizing the subproblems required by Proposition~\ref{prop:binary_search_lam}, under additional linear constraints denoted $\mc \in \R^{d \times m}$. We assume $d \le (1 - 2\ctight) m$ (where $\ctight$ is the constant from Proposition~\ref{prop:many_tight}). Consideration of this setting is motivated by the tolerance of Proposition~\ref{prop:many_tight} to constraining a sufficiently small number of linear directions. In applications, $\mc$ is a reweighted signed edge-vertex incidence matrix of a subgraph of $G$, with $d \le n \ll m$. 

As in Section~\ref{ssec:fast_gaussian_rounding}, we assume we are given $\{\ma_i\}_{i \in [m]} \subseteq \PSD^n$ with associated operator $\alla$, a vector $g \in \R^m$ satisfying Assumption~\ref{assume:rs_big}, and a parameter $\rho$ fixed throughout. We also explicitly assume $\normop{\alla(\1_m)} \le 2$, and define the discrepancy body $\set_{\rho, \alla}$ as in \eqref{eq:set_r_def} with respect to the given matrices. Our goal in this section will be to approximate the value $\rsc$ and point $\xsc$ defined as
\begin{equation}\label{eq:rscdef}\rsc \defeq \norm{\xs - g}_2, \text{ where } \xsc \defeq \arg\min_{\substack{x \in [-1, 1]^m \cap\set_{\rho, \alla} \\ \mc x = \0_d} }\norm{x - g}_2.\end{equation}
We recall that $f_\lam$ is defined in \eqref{eq:test_opt} (where $f(x) = \normop{\alla(x)}$ in this section), and we define
\begin{equation}\label{eq:xsetcdef}
\xsetc \defeq \{x \in [-1, 1]^m \mid \mc x = \0_d\}.
\end{equation}
In the remainder of this section we will discuss approximately solving $f_\lam$ over $\xsetc$. Our algorithm will use a variant of a Frank-Wolfe method, prompting the following oracle definition.

\begin{definition}[Linear optimization oracle]\label{def:linopt}
Let $\xset \subseteq \R^m$ be a symmetric convex set, and let $\Delta \in (0, 1)$. We say $\olinopt$ is a $\Delta$-approximate linear optimization oracle over $\xset$ if on input $g \in \R^m$ it returns $z \in \xset$ satisfying $c^\top z \le (1 - \Delta) \min_{x \in \xset} c^\top x$. 
\end{definition}
We note that because $\xset$ is symmetric, Definition~\ref{def:linopt} is sensible as the minimum is always nonpositive. Before giving our optimization procedure for $f_\lam$, we give a helper fact, where we collect several properties of a well-studied smooth approximation to the operator norm. 

\begin{fact}[\cite{Nesterov07}]\label{fact:logtrexp}
Let $\mu > 0$, let $\talla$ to be the operator associated with $\{\tma_i\}_{i \in [m]} \in \R^{2n \times 2n}$ as defined in \eqref{eq:test_opt}, and let $A_\mu(x) \defeq \mu\log \Tr \exp(\frac 1 \mu \talla(x))$. Then $\normop{\alla(x)} \le A_\mu(x) \le \normop{\alla(x)} + \mu \log(2n)$ for all $x \in \R^m$, and $A_\mu$ is $\frac 1 \mu$-smooth in the $\ell_\infty$ norm, i.e.\ for all $x, x' \in \R^m$,
\[A_\mu(x') \le A_\mu(x) + \inprod{\nabla A_\mu(x)}{x' - x} + \frac{1}{2\mu}\norm{x' - x}_\infty^2.\]
Further, $\nabla A_\mu(x) = \talla^*(\my(x))$, where $\my(x) = \exp(\frac 1 \mu \talla(x)) \cdot (\Tr\exp(\frac 1 \mu \talla(x)))^{-1}$.
\end{fact}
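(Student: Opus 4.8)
The plan is to treat Fact~\ref{fact:logtrexp} as a bundle of standard properties of the matrix soft-max (log-trace-exp) smoothing of $\normop{\cdot}$, and verify each in turn; the only substantive ingredient is the Hessian bound underlying $\ell_\infty$-smoothness, which is exactly the content of Nesterov's smoothing analysis and may simply be cited.

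First, for the sandwich bound, I would observe that $\talla(x) = \sum_i x_i \tma_i$ is block-diagonal with blocks $\alla(x)$ and $-\alla(x)$, so its eigenvalues are $\{\pm\lambda_j(\alla(x))\}_j$ and $\lambda_{\max}(\talla(x)) = \normop{\alla(x)}$. Writing $N = 2n$ and $\mathbf{M} = \tfrac1\mu\talla(x)$, the elementary inequalities $e^{\lambda_{\max}(\mathbf{M})} \le \Tr\exp(\mathbf{M}) = \sum_j e^{\lambda_j(\mathbf{M})} \le N e^{\lambda_{\max}(\mathbf{M})}$ give, after applying $\mu\log(\cdot)$, that $\lambda_{\max}(\talla(x)) \le A_\mu(x) \le \lambda_{\max}(\talla(x)) + \mu\log N$, which is the claimed bound since $N = 2n$.

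Next, for the gradient formula, I would use the standard matrix-calculus identity $D[\mathbf{M}\mapsto\Tr\exp(\mathbf{M})](\mathbf{H}) = \Tr[\exp(\mathbf{M})\mathbf{H}]$, obtained term by term from the power series of $\exp$ together with cyclicity of the trace. Composing with the linear map $x\mapsto\tfrac1\mu\talla(x)$ and the outer $\mu\log(\cdot)$, the chain rule yields $\tfrac{\partial}{\partial x_i}A_\mu(x) = \tfrac{\Tr[\exp(\frac1\mu\talla(x))\tma_i]}{\Tr\exp(\frac1\mu\talla(x))} = \inprod{\tma_i}{\my(x)}$, i.e.\ $\nabla A_\mu(x) = \talla^*(\my(x))$; note that $\my(x)\succeq\mzero_{2n}$ and $\Tr\my(x) = 1$, so it indeed lies in $\Delta^{2n\times2n}$.

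The main obstacle is the $\ell_\infty$-smoothness, and here the plan is to bound the second derivative of $t\mapsto A_\mu(x+th)$. Writing $\mathbf{M} = \tfrac1\mu\talla(x)$ and $\mathbf{H} = \tfrac1\mu\talla(h)$, the first derivative is $\inprod{\mathbf{H}}{\my}$ with $\my$ the Gibbs state of $\mathbf{M}$, and the second derivative is a \emph{matrix variance} of the form $\mu^{-1}\big(\text{(correlation term quadratic in }\mathbf{H}\text{)} - \inprod{\mathbf{H}}{\my}^2\big)$, expressible via the Fr\'echet derivative of $\exp$ (divided differences in the eigenbasis of $\mathbf{M}$). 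The content of Nesterov's argument is that this quantity is at most $\mu^{-1}\normop{\talla(h)}^2$ — the matrix analog of $\Var_p(b)\le\max_j b_j^2$ for the scalar soft-max. Finally I would convert this operator-norm bound into an $\ell_\infty$ bound: since each $\ma_i\succeq\mzero_n$, we have $\lvert\tma_i\rvert = \mathrm{diag}(\ma_i,\ma_i)$, hence $\sum_i\lvert\tma_i\rvert = \mathrm{diag}(\alla(\1_m),\alla(\1_m))$, so for any unit $v$, $\lvert v^\top\talla(h)v\rvert \le \norm{h}_\infty\, v^\top(\sum_i\lvert\tma_i\rvert)v \le \normop{\alla(\1_m)}\norm{h}_\infty$, which is $O(1)\cdot\norm{h}_\infty$ in our setting and exactly $\norm{h}_\infty$ under the normalization $\sum_i\lvert\tma_i\rvert\preceq\id_{2n}$. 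Integrating the second-order bound twice in $t$ then yields the quadratic upper model $A_\mu(x')\le A_\mu(x)+\inprod{\nabla A_\mu(x)}{x'-x}+\tfrac1{2\mu}\norm{x'-x}_\infty^2$.
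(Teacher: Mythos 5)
This is stated as a Fact cited to \cite{Nesterov07}, so the paper supplies no proof of its own; your argument---the block-diagonal eigenvalue sandwich, the trace-of-exponential chain rule for the gradient, and citing Nesterov's Hessian (matrix-variance) bound for the second derivative---is the standard route and is correct. Your last paragraph also correctly flags something the Fact leaves implicit: the $\tfrac{1}{\mu}$-smoothness constant in $\ell_\infty$ is obtained by composing the operator-norm $\tfrac{1}{\mu}$-smoothness of $\mm \mapsto \mu\log\Tr\exp(\mm/\mu)$ with $\normop{\talla(h)} \le \norm{h}_\infty \normop{\alla(\1_m)}$, so in general the constant is $\normop{\alla(\1_m)}^2/\mu$; the paper invokes the Fact in Lemma~\ref{lem:minflam_constraint} under $\normop{\alla(\1_{m_{k-1}})}\le 2$, so strictly the constant there is $4/\mu$---harmless for the asymptotics, but a small imprecision in the Fact as written.
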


We also define the notion of an approximate gradient oracle, and recall a (standard) efficient construction of such an oracle from prior work based on approximations to the exponential.

\begin{definition}[Approximate gradient oracle]\label{def:approx_grad}
Following the notation of Fact~\ref{fact:logtrexp}, we say $g: [-1, 1]^m \to \R^m$ is a $\Delta$-approximate gradient oracle for $A_\mu$ if $\norm{g(x) - \nabla A_\mu(x)}_1 \le \Delta$ for all $x \in [-1, 1]^m$.
\end{definition}

The following claim applies Proposition 2 of \cite{JambulapatiT23} with $\eps \gets O(\Delta)$ and $\gamma \gets O(\frac \Delta n)$.

\begin{lemma}[Proposition 2, \cite{JambulapatiT23}]\label{lem:approx_grad}
We can implement a $\Delta$-approximate gradient oracle for $A_\mu$ which succeeds with probability $\ge 1 - \delta$ in time
\[O\Par{\Par{\sum_{i \in [m]} \tmv(\ma_i)} \cdot \frac{\log^2\Par{\frac{mn}{\mu\Delta\delta}}}{\sqrt{\mu}\Delta^2}}\]
\end{lemma}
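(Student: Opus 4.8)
The statement is precisely Proposition~2 of \cite{JambulapatiT23} instantiated with the accuracy parameters $\eps = \Theta(\Delta)$ and $\gamma = \Theta(\Delta/n)$, so the plan is to recall the construction behind that result and check that its hypotheses hold in our setting. By Fact~\ref{fact:logtrexp}, $\nabla A_\mu(x) = \talla^*(\my(x))$ with $\my(x) = \exp(\tfrac1\mu\talla(x))/\Tr\exp(\tfrac1\mu\talla(x))$, so the $i$-th coordinate of the gradient is the bilinear form $\inprod{\tma_i}{\my(x)}$. Since an exact evaluation requires a matrix exponential, which is prohibitively expensive, the construction (i) replaces $\exp$ by a low-degree polynomial, and (ii) estimates the resulting bilinear forms through a Johnson--Lindenstrauss / Hutchinson-type trace sketch using only matrix--vector products with the $\ma_i$.

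For step (i): since $x \in [-1,1]^m$ and the $\ma_i$ are positive semidefinite with $\alla(\1_m) \preceq 2\id_n$ (the standing assumption of this subsection), we have $\normop{\talla(x)} = \normop{\alla(x)} \le 2$, so $\normop{\tfrac1\mu\talla(x)} \le \tfrac2\mu$. A Chebyshev/Krylov-type polynomial approximation of $e^t$ on $[-\tfrac2\mu,\tfrac2\mu]$ attains relative error $\eps'$ with degree $k = O(\tfrac1{\sqrt\mu}\log\tfrac1{\eps'} + \log\tfrac1{\eps'})$, and choosing $\eps'$ polynomially small in $\Delta$ and $1/n$ (this is the role of $\gamma = \Theta(\Delta/n)$) perturbs the normalized density matrix $\my(x)$ by $O(\Delta)$ in trace norm, hence each gradient coordinate by a relative $O(\Delta)$; summing over $i$ and using $\normstr{\my(x)} = 1$, $\normop{\tma_i} \le 2$, and $\sum_{i \in [m]} |\inprod{\tma_i}{\my(x)}| = O(1)$ (by positivity, since $\sum_i \ma_i \preceq 2\id_n$), this contributes an $O(\Delta)$ error in $\ell_1$. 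Each application of $\talla(x)$ to a vector costs $O(\sum_{i \in [m]}\tmv(\ma_i))$ (the block-diagonal form of $\tma_i$ costs nothing extra), so evaluating the degree-$k$ polynomial at a vector costs $O(k\sum_{i \in [m]}\tmv(\ma_i))$.

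For step (ii): write $\exp(\tfrac1\mu\talla(x)) = \exp(\tfrac1{2\mu}\talla(x))^2$, draw $q_1,\dots,q_L \sim \Nor(\0_{2n},\id_{2n})$, let $v_j$ be the degree-$O(k)$ polynomial approximation of $\exp(\tfrac1{2\mu}\talla(x))q_j$, and return $g(x)_i \defeq \big(\sum_{j \in [L]} v_j^\top \tma_i v_j\big) / \big(\sum_{j \in [L]}\norm{v_j}_2^2\big)$. Since $\E[q q^\top] = \id_{2n}$ this estimates $\inprod{\tma_i}{\my(x)}$, and the variance analysis of \cite{JambulapatiT23} --- which controls the fluctuation of the $i$-th coordinate by a bounded multiple of $\inprod{\tma_i}{\my(x)}$ and $\inprod{\tma_i^2}{\my(x)}$, then sums these bounds over $i$ using $\sum_i \tma_i^2 = O(\id_{2n})$ --- shows $L = O(\Delta^{-2}\log(mn/\delta))$ samples suffice for $\norm{g(x) - \nabla A_\mu(x)}_1 \le \Delta$ with probability $\ge 1 - \delta$, union bounding over the concentration failure and the polynomial-approximation error. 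Forming all $v_j$ costs $O(Lk\sum_{i \in [m]}\tmv(\ma_i))$, evaluating $v_j^\top\tma_i v_j$ (a difference of quadratic forms in the two $n$-blocks of $v_j$, so $O(\tmv(\ma_i))$ each) over all $i \in [m]$, $j \in [L]$ costs $O(L\sum_{i \in [m]}\tmv(\ma_i))$, and $\sum_j\norm{v_j}_2^2$ adds $O(Ln)$; the total is $O\big((\sum_{i \in [m]}\tmv(\ma_i)) \cdot \tfrac{\log^2(mn/(\mu\Delta\delta))}{\sqrt\mu\,\Delta^2}\big)$ after substituting the bounds on $k$ and $L$.

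The one genuinely delicate ingredient, which we take from \cite{JambulapatiT23} rather than reprove, is the variance bound for the ratio estimator: one must argue that the sampling error in each $\inprod{\tma_i}{\my(x)}$ scales with $\inprod{\tma_i}{\my(x)}$, so that the $\ell_1$ error of $g(x)$ concentrates with only polylogarithmically --- not polynomially in $m$ --- many samples, while simultaneously controlling its interaction with the deterministic polynomial-approximation bias. Everything we need to verify on our end is immediate: Fact~\ref{fact:logtrexp} gives the gradient the $\talla^*(\my(x))$ form their result handles, and the bound $\normop{\talla(x)} \le 2$ ensures the degree estimate applies with spectral radius $\tfrac2\mu$, which is exactly what produces the $\mu^{-1/2}$ factor in the runtime.
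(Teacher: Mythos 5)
Correct and essentially the same as the paper: the paper's entire proof is the preceding sentence, ``applies Proposition 2 of \cite{JambulapatiT23} with $\eps \gets O(\Delta)$ and $\gamma \gets O(\frac{\Delta}{n})$,'' which is exactly the instantiation you identify. Your additional exposition of the polynomial-approximation/Hutchinson construction and the degree/sample bookkeeping ($k = O(\mu^{-1/2}\log(1/\eps'))$, $L = O(\Delta^{-2}\log(mn/\delta))$, yielding the $\log^2$ factor) is consistent with the cited result and with the standing assumption $\normop{\alla(\1_m)} \le 2$ used to bound the spectral radius.
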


We are now ready to give our main Frank-Wolfe subroutine.

\begin{lemma}\label{lem:minflam_constraint}
Let $\beta, \rho, \delta \in (0, 1)$, $\lam \ge \frac{\rho}{8m}$, and let $\olinopt$ be a $\Delta$-approximate linear optimization oracle over $\xsetc$ for $\Delta \defeq \frac{\ctight \beta^2}{240}$. There is an algorithm which solves $f_\lam$ to additive error $\frac{\lam \ctight m \beta^2}{16}$ over $\xsetc$ using $N = O(\frac{\log n}{\rho^2\beta^4})$ calls to $\olinopt$ and 
\[O\Par{\Par{\sum_{i \in [m]} \tmv(\ma_i)} \cdot \Par{\frac{\log^{3.5}\Par{\frac{mn}{\rho\beta\delta}}}{\rho^{4.5}\beta^9}}}\]
additional time, with probability $\ge 1 - \delta$.
\end{lemma}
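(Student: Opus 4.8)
The plan is to replace the operator-norm term $\normop{\alla(x)}$ in $f_\lam$ with the smooth surrogate $A_\mu$ of Fact~\ref{fact:logtrexp}, and to minimize the resulting smooth convex objective over $\xsetc$ by an approximate Frank--Wolfe (conditional gradient) method run in the $\ell_\infty$ geometry, using $\olinopt$ for the linear-minimization step and Lemma~\ref{lem:approx_grad} for gradient evaluations. First I would fix $\mu \defeq \Theta(\rho\beta^2/\log n)$ with a small enough constant, so that $A_\mu(x) \le \normop{\alla(x)} + \mu\log(2n) \le \normop{\alla(x)} + \frac{\lam\ctight m\beta^2}{64}$ for every $\lam \ge \frac{\rho}{8m}$ by Fact~\ref{fact:logtrexp}; hence it suffices to minimize $h(x) \defeq A_\mu(x) + \lam\norm{x - g}_2^2$ over $\xsetc$ to additive error $\frac{\lam\ctight m\beta^2}{32}$. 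By Fact~\ref{fact:logtrexp}, $A_\mu$ is $\frac1\mu$-smooth in $\norm{\cdot}_\infty$, and the regularizer $\lam\norm{x-g}_2^2$ has constant Hessian $2\lam\id_m$ with $v^\top(2\lam\id_m)v = 2\lam\norm{v}_2^2 \le 2\lam m\norm{v}_\infty^2$, so $h$ is convex and $L$-smooth in $\norm{\cdot}_\infty$ for $L \defeq \frac1\mu + 2\lam m = O(\log n/(\rho\beta^2))$ (using $\lam m = O(\beta^{-2})$ over the range of $\lam$ queried in Proposition~\ref{prop:binary_search_lam}); the feasible set $\xsetc = [-1,1]^m \cap \ker\mc$ is convex, symmetric, contains $\0_m$, and has $\ell_\infty$-diameter $D \le 2$.

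I would then run Frank--Wolfe on $h$ over $\xsetc$ for $N = O(LD^2/(\lam m\beta^2)) = O(\log n/(\rho^2\beta^4))$ iterations with step sizes $\eta_t \defeq \frac{2}{t+2}$. In iteration $t$, set $\tg_t \defeq \hat g(x_t) + 2\lam(x_t - g)$, where $\hat g$ is a $\Delta'$-approximate gradient oracle for $A_\mu$ from Lemma~\ref{lem:approx_grad} with $\Delta' \defeq \Theta(\rho\beta^2)$ and failure probability $\delta/N$ and the second term is computed exactly in $O(m)$ time; call $z_t \gets \olinopt(\tg_t)$ and set $x_{t+1} \gets x_t + \eta_t(z_t - x_t)$. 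The convergence proof combines the $\ell_\infty$-smoothness descent bound $h(x_{t+1}) \le h(x_t) + \eta_t\inprod{\nabla h(x_t)}{z_t - x_t} + \frac{L\eta_t^2}{2}\norm{z_t - x_t}_\infty^2$ with convexity of $h$, controlling two per-step error terms. The gradient error contributes $\inprod{\nabla h(x_t) - \tg_t}{z_t - x_t} \le \norm{\nabla h(x_t) - \tg_t}_1\norm{z_t - x_t}_\infty \le 2\Delta'$; and the multiplicative guarantee $\tg_t^\top z_t \le (1 - \Delta)\min_{x\in\xsetc}\tg_t^\top x$ of $\olinopt$ contributes, using $\xsc \in \xsetc$ and symmetry of $\xsetc$, at most $\Delta \cdot |\min_{x\in\xsetc}\tg_t^\top x| \le \Delta\norm{\tg_t}_1$. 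Here $\norm{\tg_t}_1 \le \norm{\nabla A_\mu(x_t)}_1 + 2\lam\norm{x_t - g}_1 + \Delta'$, where $\norm{\nabla A_\mu(x_t)}_1 = \sum_{i\in[m]}|\inprod{\tma_i}{\my(x_t)}| \le \normop{\alla(\1_m)} \le 2$ since $\my(x_t)$ is a density matrix and $\tma_i$ is block-diagonal with blocks $\pm\ma_i \succeq \mzero_n$, while $2\lam\norm{x_t - g}_1 \le 2\lam\sqrt{m}\,\norm{x_t - g}_2 \le 6\lam m$ by Assumption~\ref{assume:rs_big}.

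Unrolling the recursion $h(x_{t+1}) - \min_{x\in\xsetc}h(x) \le (1 - \eta_t)\big(h(x_t) - \min_{x\in\xsetc}h(x)\big) + \frac{L\eta_t^2 D^2}{2} + \eta_t\big(2\Delta' + \Delta\norm{\tg_t}_1\big)$ with $\eta_t = \frac{2}{t+2}$ gives a final suboptimality of $O(LD^2/N) + O(\Delta' + \Delta(1 + \lam m))$, which one checks is at most $\frac{\lam\ctight m\beta^2}{32}$ for the stated $N$, $\Delta'$, and the hypothesized $\Delta = \frac{\ctight\beta^2}{240}$; adding back the $\mu\log(2n)$ smoothing gap yields the claimed $\frac{\lam\ctight m\beta^2}{16}$ accuracy for $f_\lam$ over $\xsetc$. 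For the runtime there are $N = O(\log n/(\rho^2\beta^4))$ calls to $\olinopt$, and each of the $N$ gradient evaluations costs $O\big((\sum_{i\in[m]}\tmv(\ma_i)) \cdot \log^2(\tfrac{mn}{\rho\beta\delta})/(\sqrt{\mu}\,\Delta'^2)\big)$ by Lemma~\ref{lem:approx_grad}; substituting $\mu = \Theta(\rho\beta^2/\log n)$ and $\Delta' = \Theta(\rho\beta^2)$ multiplies out to the claimed additional time, and a union bound over the $N$ gradient-oracle calls (each failing with probability $\delta/N$) gives the high-probability guarantee. The main obstacle is precisely this final error accounting: ensuring that the accumulated multiplicative linear-optimization error and approximate-gradient error stay below the rather stringent target $\frac{\lam\ctight m\beta^2}{16}$, which pins down the $\rho$- and $\beta$-scalings of $\mu$ and $\Delta'$ (and forces $\Delta$ to be this small), and which makes it essential to measure smoothness in the $\ell_\infty$ rather than $\ell_2$ norm so that the iteration count stays at $\widetilde{O}(\rho^{-2}\beta^{-4})$ rather than something larger.
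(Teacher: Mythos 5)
Your overall plan matches the paper's: run Frank--Wolfe on the smooth surrogate $F(x) = A_\mu(x) + \lam\norm{x-g}_2^2$ in the $\ell_\infty$ geometry, using Lemma~\ref{lem:approx_grad} for approximate gradients and $\olinopt$ for the linear-minimization step. Your choices $\mu = \Theta(\rho\beta^2/\log n)$, $L = O(\log n/(\rho\beta^2))$, $N = O(\log n/(\rho^2\beta^4))$, $\Delta' = \Theta(\rho\beta^2)$, and the resulting runtime bookkeeping all agree with the paper's. However, your accounting for the $\Delta$-multiplicative error of $\olinopt$ has a genuine gap in exactly the regime the lemma has to cover.

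You bound the per-step loss from the linear oracle by $\Delta\,\bigl|\min_{x\in\xsetc}\tg_t^\top x\bigr| \le \Delta\norm{\tg_t}_1 \le \Delta(2 + 6\lam m + \Delta')$, which after unrolling contributes $O(\Delta(1+\lam m))$ to the final suboptimality. The trouble is the $\lam$-independent piece: $2\Delta = \tfrac{\ctight\beta^2}{120}$ must be compared against the target $\tfrac{\lam\ctight m\beta^2}{16}$, and Proposition~\ref{prop:binary_search_lam} calls this lemma down to $\lam = \tfrac{\rho}{8m}$, where the target is only $\tfrac{\rho\ctight\beta^2}{128}$. Since $\rho < 1$, the constant piece $\tfrac{\ctight\beta^2}{120}$ strictly exceeds this target, so the claimed verification ``which one checks is at most $\tfrac{\lam\ctight m\beta^2}{32}$'' does not go through for small $\rho$. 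And the bound $\norm{\nabla A_\mu(x_t)}_1 \le \normop{\alla(\1_m)} \le 2$ is not needlessly pessimistic: when $\my(x_t)$ is concentrated on the top eigenspace of $\alla(\1_m)$, this quantity is genuinely $\Theta(1)$, independent of $\lam$ and $\rho$.

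The paper sidesteps this by never converting the multiplicative guarantee into a per-step additive error. Because $\xsetc$ is symmetric and convex, $(1-\Delta)\xs \in \xsetc$, and the oracle gives $\tg_t^\top z_t \le (1-\Delta)\min_{x\in\xsetc}\tg_t^\top x \le \tg_t^\top\bigl((1-\Delta)\xs\bigr)$. So the Frank--Wolfe recursion is run against the \emph{scaled benchmark} $(1-\Delta)\xs$ throughout, paying only the $O(\Delta')$ approximate-gradient error per step. The $\Delta$-dependent loss then appears once, at the very end: by convexity and nonnegativity of $F$, $F((1-\Delta)\xs) \le (1-\Delta)F(\xs) + \Delta F(\0_m)$, so the loss is at most $\Delta F(\0_m)$, and $F(\0_m) = \mu\log(2n) + \lam\norm{g}_2^2 \le 5\lam m$ under Assumption~\ref{assume:rs_big}. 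This is $O(\Delta\lam m) = O(\lam m\beta^2)$, which does scale with $\lam$ and fits within $\tfrac{\lam\ctight m\beta^2}{16}$ for $\Delta = \tfrac{\ctight\beta^2}{240}$. Replacing your $\Delta\norm{\tg_t}_1$ bookkeeping with this scaled-benchmark comparison is the missing idea; the rest of your argument then goes through.
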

\begin{proof}
Our algorithm will be a Frank-Wolfe method applied to the smooth approximation from Fact~\ref{fact:logtrexp}. Throughout the proof, define for convenience $x^\star$ to be the minimizer of $f_\lam$ over $\xsetc$, and
\[\mu \defeq \frac{\lam \ctight m\beta^2}{48\log(2n)},\; F(x) \defeq A_\mu(x) + \lam \norm{x - g}_2^2.\]
By the fact that $\norm{\cdot}_2^2 \le m \norm{\cdot}_\infty^2$, we have that $F$ is $L$-smooth in the $\ell_\infty$ norm for
\[L \defeq \frac{1}{\mu} + 2\lam m.\]
Finally, we let $g$ be a $\Delta'$-approximate gradient oracle for $A_\mu$, for
\[\Delta' \defeq \frac{L}{N + 1}.\]
Our algorithm defines $\eta_t \defeq \frac 2 {t + 1}$ for all $t \in [N]$ and initializes at $x_1 = \0_m$. It then repeatedly updates $x_{t + 1} \gets (1 - \eta_t) x_t + \eta_t z_t$ where $z_t$ is the result of calling $\olinopt$ on $g(x_t) + 2\lam(x_t - g)$. In each iteration, since $g(x_t) + 2\lam(x_t - g)$ approximates $\nabla F(x_t)$ up to $\Delta'$ in the $\ell_1$ norm,
\begin{align*}
F(x_{t + 1}) &\le F(x_t) + \eta_t\inprod{\nabla F(x_t)}{z_t - x_t} + \frac{L\eta_t^2}{2}\norm{z_t - x_t}_\infty^2 + 2\eta_t \Delta' \\
&\le F(x_t) + \eta_t\inprod{\nabla F(x_t)}{(1 - \Delta)x^\star - x_t} + 3L\eta_t^2,
\end{align*}
since $\norm{z_t - x_t}_\infty \le 2$ for $z_t, x_t \in \xsetc$, and we applied Definition~\ref{def:linopt} with $\xs \in \xsetc$. Now defining $E_t \defeq F(x_t) - F((1 - \Delta)\xs)$, the above display and convexity of $F$ implies
\begin{equation}\label{eq:error_recursion}
E_{t + 1} \le E_t + \eta_t\inprod{\nabla F(x_t)}{(1 - \Delta)x^\star - x_t} + 2L\eta_t^2 \le (1 - \eta_t)E_t + 3L\eta_t^2.
\end{equation}
We next claim that inductively, $E_t \le \frac{12L}{t + 1}$ for all $t \in [N]$. The base case of the induction follows from nonnegativity of $F$ everywhere, so Fact~\ref{fact:logtrexp} and Assumption~\ref{assume:rs_big} imply
\[E_1 \le F(\0_m) \le \mu \log(2n) + 4\lam m \le 4L.\]
Further, inductively applying \eqref{eq:error_recursion} preserves this invariant:
\begin{align*}
E_{t + 1} \le \frac{t - 1}{t + 1} \cdot \frac{12L}{t + 1} + \frac{12L}{(t + 1)^2} < \frac{12L}{t + 2}.
\end{align*}
By taking the constant in front of $N$ large enough, and using $\lam \ge \frac{\rho}{8m}$, this implies
\begin{align*}
f_\lam(x_N) - f_\lam(\xs) &\le F(x_N) - F(\xs) + \mu\log(2n) \\
&\le F(x_N) - F((1 - \Delta)\xs) + \Delta F(\0_m) + \mu\log(2n) \\ 
&\le \frac{\lam \ctight m\beta^2}{48} + \Delta F(\0_m) + \frac{\lam \ctight m\beta^2}{48} \le \frac{\lam \ctight m\beta^2}{16}.
\end{align*}
where the first inequality used Fact~\ref{fact:logtrexp}, the second used nonnegativity and convexity of $F$, the third used our bound on $E_N$, and the last used $F(\0_m) \le \mu\log(2n) + 4\lam m \le 5\lam m$. The runtime comes from applying Lemma~\ref{lem:approx_grad} $N$ times, with $\delta \gets \frac \delta N$, and union bounding for the failure probability. Here, we note that it suffices to take $\Delta' = \Theta(\rho\beta^2)$ and $\frac 1 \mu = O(\frac{\log n}{\rho\beta^2})$.
\end{proof}

To complete our use of Lemma~\ref{lem:minflam_constraint}, we will give an efficient linear optimization oracle in Section~\ref{ssec:linear-oracle} when $\xset = \xsetc$ and the constraint matrix $\mc$ is graph-structured.

\subsection{Degree-preserving linear optimization oracle}\label{ssec:linear-oracle}

We recall the following standard facts from prior work (e.g.\ \cite{BallCL94}) used in our analysis.

\begin{fact}\label{fact:pnorm}
For $p \ge 2$, $\half \norm{\cdot}_p^2$ is $(p-1)$-smooth in the $\ell_p$ norm where the domain is $\R^d$ for any $d \in \N$. Additionally, for any $x \in \R^d$ and $p \ge q \ge 1$, we have $\norm{x}_p \le \norm{x}_q \le d^{\frac 1 q - \frac 1 p} \norm{x}_p$.
\end{fact}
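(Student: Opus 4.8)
The plan is to prove the two claims separately, starting with the (routine) norm–interpolation inequality and then the smoothness bound.

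For the interpolation inequality $\norm{x}_p \le \norm{x}_q \le d^{\frac 1 q - \frac 1 p}\norm{x}_p$ with $p \ge q \ge 1$, the case $x = \0_d$ is trivial, so assume $x \ne \0_d$. For the lower bound, by homogeneity I would rescale so that $\norm{x}_q = 1$; then $|x_i| \le 1$ for every $i$, hence $|x_i|^p \le |x_i|^q$ since $p \ge q$, and summing over $i$ followed by taking $p$-th roots gives $\norm{x}_p \le 1 = \norm{x}_q$. For the upper bound, I would apply H\"older's inequality to $\sum_i |x_i|^q = \sum_i |x_i|^q \cdot 1$ with conjugate exponents $\frac p q$ and $\frac{p}{p-q}$, which yields $\norm{x}_q^q \le \norms{x}_p^q \cdot d^{1 - q/p}$; taking $q$-th roots gives $\norm{x}_q \le d^{\frac 1 q - \frac 1 p}\norm{x}_p$.

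For the smoothness claim, I would pass through Fenchel duality: a closed convex function $f$ is $\beta$-smooth with respect to a norm $\norm{\cdot}$ if and only if its conjugate $f^*$ is $\frac 1 \beta$-strongly convex with respect to the dual norm. Since the conjugate of $\half\norm{\cdot}_p^2$ is $\half\norm{\cdot}_q^2$ with $q \defeq \frac{p}{p-1} \in (1, 2]$, and the dual of $\ell_p$ is $\ell_q$, it suffices to show $\half\norm{\cdot}_q^2$ is $(q-1)$-strongly convex with respect to $\ell_q$, using that $q - 1 = \frac 1 {p-1}$. This is the classical strong convexity of the squared $q$-norm for $q \in (1,2]$; I would prove it via a short second-order argument, writing $\nabla^2\Par{\half\norm{y}_q^2}$ explicitly and lower bounding the quadratic form $h^\top\nabla^2\Par{\half\norm{y}_q^2}h$ by $(q-1)\norm{h}_q^2$ through a power-mean / Cauchy--Schwarz estimate on coordinates, or alternatively cite it directly from \cite{BallCL94} (or the mirror-descent literature) together with the standard smoothness/strong-convexity duality.

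The main obstacle is obtaining the \emph{sharp} constant $p-1$ (equivalently $q-1$) in the strong-convexity/smoothness estimate; the interpolation inequality is elementary. The delicate point is that a naive Hessian bound for $\half\norm{\cdot}_q^2$ produces off-diagonal cross terms between coordinates, and controlling them is precisely where the restriction $q \le 2$ (i.e.\ $p \ge 2$) and the tight constant enter. Since the paper only uses this statement as a black box inside the Frank--Wolfe analysis of Lemma~\ref{lem:minflam_constraint}, I would keep the proof short by isolating the strong-convexity estimate as a self-contained sub-claim or by invoking \cite{BallCL94} directly.
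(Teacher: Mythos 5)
The paper does not prove Fact~\ref{fact:pnorm}; it records it as a ``standard fact'' and cites~\cite{BallCL94} directly (the Ball--Carlen--Lieb sharp uniform convexity/smoothness inequalities). Your proposal is correct and, for the difficult half, effectively takes the same route: your elementary argument for the interpolation chain $\norm{x}_p \le \norm{x}_q \le d^{1/q-1/p}\norm{x}_p$ (rescale to unit $\ell_q$ norm for the left inequality, H\"older with exponents $p/q$ and $p/(p-q)$ for the right) is the standard one and is fine, and your Fenchel-duality reduction of $(p-1)$-smoothness of $\thalf\norm{\cdot}_p^2$ in $\ell_p$ to $(q-1)$-strong convexity of $\thalf\norm{\cdot}_q^2$ in $\ell_q$ with $q = p/(p-1)$ is correct (noting $q-1 = 1/(p-1)$, and the conjugate of $\thalf\norm{\cdot}_p^2$ is indeed $\thalf\norm{\cdot}_q^2$ with $\ell_q$ the dual norm). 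Since you ultimately propose to invoke~\cite{BallCL94} for the strong-convexity estimate rather than carry out the delicate Hessian bound, your argument and the paper's are essentially the same citation with an extra duality step; the one thing worth flagging is that~\cite{BallCL94} actually states the $p\ge 2$ smoothness two-point inequality $\thalf(\norm{x+y}_p^2 + \norm{x-y}_p^2) \le \norm{x}_p^2 + (p-1)\norm{y}_p^2$ directly, so the duality detour, while correct, is unnecessary if one simply cites that inequality and uses the standard equivalence between the two-point condition and gradient-Lipschitz smoothness for convex functions.
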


Further, in this section we will require the notion of an oblivious routing.

\begin{definition}\label{def:route}
We say $\mr \in \R^{E \times V}$ is an $\alpha$-oblivious routing for a graph $G = (V, E, w_G)$ if $\mb^\top \mr x = x$ for all $x \in \R^V$, and $\norm{\mw^{-1}\mr \mb^\top \mw y}_\infty \le \alpha \norm{y}_\infty$ for all $y \in \R^E$. We let $\troute(\alpha)$ be the time required to construct and apply an $\alpha$-oblivious routing when $G$ is clear from context.\end{definition}

We refer the reader to \cite{KelnerLOS14} for more exposition on the history and uses of oblivious routings. We also recall the following construction of an oblivious routing from that paper.

\begin{proposition}[Theorem 19, \cite{KelnerLOS14}]\label{prop:klos}
Let $G = (V, E, w_G)$ have $m = |E|$ and $n = |V|$. Then $\troute(\alpha) = mn^{o(1)}$ for some $\alpha = n^{o(1)}$, with high probability in $n$.
\end{proposition}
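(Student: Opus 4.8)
This is Theorem 19 of \cite{KelnerLOS14}, so it suffices to cite that work; for completeness we sketch the approach we would take to reprove it. The plan is to build $\mr$ by a recursion that reduces oblivious routing on an arbitrary $m$-edge graph to oblivious routing on a graph that is structurally close to a forest (a recursive $j$-tree reduction, in the style of prior cut- and flow-based graph algorithms).

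First I would invoke the near-linear time vertex-sparsification primitive: any $m$-edge graph $G$ can be approximated in time $\O(m)$ by a graph $H$ that is the union of a forest $F$ and a core graph on $O(m/k)$ vertices, such that any demand can be routed on $H$ with congestion at most $k^{o(1)}$ times its optimal congestion on $G$, and conversely. Routing a demand through the forest part $F$ is forced, since tree paths are unique, and hence contributes no congestion beyond this distortion; what remains is to obliviously route the demand that $F$ deposits on the $O(m/k)$ core vertices, and for that we recurse.

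Second, I would assemble the pieces. Iterating the reduction for $O(\log m)$ levels drives the core to constant size, where routing is trivial; the oblivious routing $\mr$ for $G$ is the composition of the per-level forest routings with the recursive core routing. Since each level routes its demand exactly, the composite satisfies $\mb^\top \mr x = x$ for all $x \in \R^V$, the first requirement of Definition~\ref{def:route}. The $\ell_\infty \to \ell_\infty$ competitive ratio $\alpha$, which is exactly $\sup_{y \neq \0_E}\norm{\mw^{-1}\mr\mb^\top\mw y}_\infty / \norm{y}_\infty$, multiplies along the recursion; choosing $k$ to balance the per-level loss against the depth makes it $2^{O(\sqrt{\log m \log\log m})}$, which is $n^{o(1)}$ once we assume $m = n^{O(1)}$ (without loss of generality, e.g.\ after first merging parallel edges to define the routing). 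The whole construction runs in time $m\cdot 2^{O(\sqrt{\log m \log\log m})} = mn^{o(1)}$, and succeeds with high probability in $n$ since the decomposition subroutines are randomized.

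The main obstacle is designing the single-level vertex-sparsification step so that it simultaneously (i) runs in near-linear time, (ii) loses only a sub-polynomial factor in congestion, and (iii) shrinks the core by a polynomial factor; balancing (ii) against (iii) over the recursion is precisely what prevents the final bound from being polylogarithmic and forces it up to $n^{o(1)}$. The remaining ingredients are bookkeeping: that composing exact routings stays exact, that the congestion ratio is sub-multiplicative along the recursion, and that the forest layers contribute no overhead.
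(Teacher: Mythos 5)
The paper does not prove Proposition~\ref{prop:klos}: it is an as-is import of Theorem~19 of \cite{KelnerLOS14}, which is precisely how you treat it. Your optional sketch of the underlying construction (recursive forest-plus-small-core vertex sparsification, composing per-level routings and balancing the per-level congestion loss against the recursion depth to land at an $n^{o(1)}$ competitive ratio) is a fair account of that proof; the one imprecision is that the depth of the recursion is sub-logarithmic rather than $\Theta(\log m)$, since a $\Theta(\log m)$-deep recursion with polylogarithmic loss per level would overshoot $n^{o(1)}$.
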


Finally, we give two simple helper claims used in our proof.

\begin{lemma}\label{lem:relate_opt}
For some $p \ge 1$, $v \in \R^{m}$, and $\ma \in \R^{m \times m}$, let $\opt_{v, p} \defeq \min_{\norm{\ma z}_p \le 1} \inprod{v}{\ma z}$. Then
\[\min_{z \in \R^m} \inprod{v}{\ma z}+ \half \norm{\ma z}_p^2 = -\half \opt_{v, p}^2.\]
Further, for any $\Delta \in (0, 1)$ and $w \in \R^m$ with
\[\min_{w \in \R^m} \inprod{v}{\ma w} + \half\norm{\ma w}_p^2 \le -\frac{(1 - \Delta)^2}{2}\opt_{v, p}^2,\]
we can obtain $w'$, a rescaling of $w$, satisfying $\norm{\ma w'}_p \le 1$ and $\inprod{v}{\ma w'} \le (1 - \Delta)\opt_{v, p}$.
\end{lemma}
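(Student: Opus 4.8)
The statement to prove is Lemma~\ref{lem:relate_opt}, which is a classical Fenchel-duality-type identity for the quadratic-plus-linear minimization together with a rounding claim. Here is how I would approach it.

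\textbf{Plan.} The first identity is a one-line computation via the substitution $t = \norm{\ma z}_p$ and homogeneity. Specifically, I would write
\[
\min_{z \in \R^m} \Brace{\inprod{v}{\ma z} + \thalf \norm{\ma z}_p^2} = \min_{t \ge 0} \Brace{\min_{\norm{\ma z}_p = t} \inprod{v}{\ma z} + \thalf t^2} = \min_{t \ge 0} \Brace{t \cdot \opt_{v,p} + \thalf t^2},
\]
where in the last step I use that for fixed norm $t$, minimizing the linear functional $\inprod{v}{\ma z}$ over $\norm{\ma z}_p = t$ gives $t \cdot \opt_{v,p}$ by positive homogeneity of the constraint set (here I should note $\opt_{v,p} \le 0$ since the feasible set $\{z : \norm{\ma z}_p \le 1\}$ is symmetric, so $z=0$ is feasible and negation of a feasible point is feasible). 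The scalar minimization $\min_{t \ge 0} \{t \cdot \opt_{v,p} + \thalf t^2\}$ is solved at $t^\star = -\opt_{v,p} \ge 0$, giving value $-\thalf \opt_{v,p}^2$, which is the claimed identity. One subtlety: this requires $\opt_{v,p}$ to be finite, i.e.\ $\min_{\norm{\ma z}_p \le 1}\inprod{v}{\ma z} > -\infty$; this holds because the feasible set is compact when restricted modulo the kernel of $\ma$, or more simply because the left-hand minimization is bounded below (the quadratic term dominates), which forces finiteness of $\opt_{v,p}$ via the chain above.

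\textbf{Rounding claim.} For the second part, suppose $w$ satisfies $\inprod{v}{\ma w} + \thalf\norm{\ma w}_p^2 \le -\tfrac{(1-\Delta)^2}{2}\opt_{v,p}^2$. I would rescale $w' \defeq s w$ for the scalar $s \defeq 1/\norm{\ma w}_p$ (assuming $\ma w \ne 0$; if $\ma w = 0$ then the hypothesis forces $\opt_{v,p}=0$ and the claim is trivial with $w'=0$). Then $\norm{\ma w'}_p = 1$ as required. To bound $\inprod{v}{\ma w'}$, let $t \defeq \norm{\ma w}_p$ and $L \defeq \inprod{v}{\ma w}$, so the hypothesis reads $L + \thalf t^2 \le -\tfrac{(1-\Delta)^2}{2}\opt_{v,p}^2$, hence $L \le -\thalf t^2 - \tfrac{(1-\Delta)^2}{2}\opt_{v,p}^2$. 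Then
\[
\inprod{v}{\ma w'} = \frac{L}{t} \le -\frac{t}{2} - \frac{(1-\Delta)^2 \opt_{v,p}^2}{2t} \le -\sqrt{(1-\Delta)^2 \opt_{v,p}^2} = (1-\Delta)\opt_{v,p},
\]
where the middle inequality is AM-GM ($\tfrac{t}{2} + \tfrac{c^2}{2t} \ge c$ for $c \ge 0$, applied with $c = (1-\Delta)|\opt_{v,p}|$), and the final equality uses $\opt_{v,p} \le 0$ so that $-|\opt_{v,p}| = \opt_{v,p}$.

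\textbf{Main obstacle.} There is no real difficulty here; the only things to be careful about are the degenerate cases ($\ma w = 0$ or $\opt_{v,p} = 0$, where the feasible set's symmetry is what makes everything consistent) and the sign bookkeeping of $\opt_{v,p} \le 0$ — getting this wrong would flip the direction of the AM-GM step. I would make sure to state explicitly at the outset that symmetry of $\{z : \norm{\ma z}_p \le 1\}$ (which holds since $\norm{\cdot}_p$ is a norm) implies $\opt_{v,p} \le 0$, mirroring the remark already made after Definition~\ref{def:linopt} in the paper, and then the two computations above go through cleanly.
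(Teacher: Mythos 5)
Your proof is correct and uses essentially the same underlying idea as the paper's — both exploit the fact that the objective restricted to rays is a quadratic in the scale parameter. The small differences are in execution: for the first identity, the paper exhibits an explicit near-minimizer $\lambda z^\star$ and then rules out a smaller value by contradiction, reading off $\inprod{v}{\ma z} = -\norm{\ma z}_p^2$ from stationarity along the ray; you instead decompose the minimization by the norm level $t = \norm{\ma z}_p$ and reduce directly to $\min_{t\ge 0}\{t\,\opt_{v,p} + \tfrac12 t^2\}$, which is a cleaner one-pass derivation. For the rounding step, the paper first rescales $w$ to enforce the stationarity condition $\inprod{v}{\ma w}=-\norm{\ma w}_p^2$ (so the hypothesis immediately bounds $\norm{\ma w}_p$ from below) and then normalizes, while you normalize $w$ directly and close with AM-GM, skipping the intermediate rescaling; both yield the same conclusion, and your handling of the degenerate case $\ma w = 0$ (forcing $\opt_{v,p}=0$ and taking $w'=0$) is slightly more careful than the paper's, which leaves that implicit. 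Your remark about finiteness of $\opt_{v,p}$ is not strictly necessary — it follows immediately because $\{\ma z : \norm{\ma z}_p \le 1\}$ is compact — but the argument you give via boundedness of the regularized objective is also sound.
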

\begin{proof}
Letting $z^\star$ achieve value $\opt_{v, p}\le 0$, by setting $z \gets \lam z^\star$ for $\lam = -\opt_{v, p}$, we see that $z$ attains value $-\half \opt_{v, p}^2$ in the above display; here we use $\norm{\ma z^\star}_p = 1$, as otherwise we could scale up $z^\star$. To see that no $z$ attains smaller value, suppose otherwise, and let $z$ minimize the above display achieving value $-\half C^2 < -\half\opt_{v, p}^2$. Since rescaling $z$ by any $\lam$ does not improve the value,
\[\frac \dd {\dd\lam}\Par{\lam\inprod{v}{\ma z} + \frac{\lam^2}{2}\norm{\ma z}_p^2} = 0\text{ at } \lam = 1 \iff \inprod{v}{\ma z} = -\norm{\ma z}_p^2 = -C^2.\]
Therefore, $\frac 1 C\norm{\ma z}_\infty \le 1$, and $\frac z C$ achieves value $-C < \opt_{v, p}$ for the original problem, a contradiction. To see the second claim, suppose without loss of generality we have first rescaled so that $\inprod{v}{\ma w} = -\norm{\ma w}_p^2 \le -(1 - \Delta)^2\opt_{v, p}^2$, as otherwise rescaling by a constant $\lam$ improves the function value. We then rescale $w' \gets \frac w {\norms{\ma w}_p}$, and the conclusion follows immediately.
\end{proof}

\begin{lemma}\label{lem:bounded_radius}
Let $p \ge 1$, and suppose
\[\inprod{v}{\ma z} + \half\norm{\ma z}_p^2 \le 0,\]
where $\opt_{v, p}$ is defined as in Lemma~\ref{lem:relate_opt}. Then $\norm{\ma z}_\infty \le -2\opt_{v, p}$.
\end{lemma}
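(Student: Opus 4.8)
The plan is to reduce the statement to a one-dimensional scaling argument, in the same spirit as the proof of Lemma~\ref{lem:relate_opt}. Write $u \defeq \ma z$, so that the hypothesis reads $\inprod{v}{u} + \half \norm{u}_p^2 \le 0$, i.e.\ $\half \norm{u}_p^2 \le -\inprod{v}{u}$. First I would record two trivial observations: $\0_m$ is feasible for the program defining $\opt_{v, p}$ (since $\norm{\ma \0_m}_p = 0 \le 1$), so $\opt_{v, p} \le 0$ and the right-hand side of the desired bound is nonnegative; in particular the claim holds trivially when $u = \0_m$, so we may assume $u \ne \0_m$.

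The key step is normalization. The vector $\hat u \defeq u / \norm{u}_p$ has unit $\ell_p$ norm and still lies in the image of $\ma$, since $\hat u = \ma\Par{z / \norm{u}_p}$. Hence $\hat u$ is feasible for the minimization defining $\opt_{v, p}$, which gives $\inprod{v}{\hat u} \ge \opt_{v, p}$, equivalently $\inprod{v}{u} \ge \norm{u}_p \opt_{v, p}$ after rescaling. Substituting into the hypothesis yields $\half \norm{u}_p^2 \le -\norm{u}_p \opt_{v, p}$, and dividing through by $\norm{u}_p > 0$ gives $\norm{u}_p \le -2\opt_{v, p}$. The lemma then follows from the elementary inequality $\norm{u}_\infty \le \norm{u}_p$, valid for every $p \ge 1$.

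There is essentially no serious obstacle in this argument. The only points warranting (minimal) care are checking that the normalized vector $\hat u$ is admissible for the program defining $\opt_{v, p}$ — which is immediate since it is simply a positive rescaling of $u = \ma z$ — and keeping track of the sign $\opt_{v, p} \le 0$, so that both the degenerate case $u = \0_m$ and the final bound are meaningful. In short, the whole proof is: bound $\inprod{v}{u}$ from below using homogeneity of the optimization problem, then cancel one power of $\norm{u}_p$ against the quadratic term.
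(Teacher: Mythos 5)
Your proof is correct and is essentially the paper's argument, just phrased directly rather than by contraposition: you normalize $u = \ma z$ to $\hat u = u/\norm{u}_p$, use feasibility of $\hat u$ to get $\inprod{v}{u} \ge \norm{u}_p\,\opt_{v,p}$, cancel one power of $\norm{u}_p$, and finish with $\norm{\cdot}_\infty \le \norm{\cdot}_p$. The explicit handling of $u = \0_m$ and the sign check $\opt_{v,p} \le 0$ are small but welcome additions that the paper leaves implicit.
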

\begin{proof}
First, we claim any $z$ with $\norm{\ma z}_p > -2\opt_{v, p}$ has $\inprod{v}{\ma z} + \half\norm{\ma z}_p^2 > 0$. To see this,
\begin{align*}
\inprod{v}{\ma z} + \half\norm{\ma z}_p^2 &\ge \norm{\ma z}_p \Par{\inprod{v}{\ma \frac{z}{\norm{\ma z}_p}}} + \half\norm{\ma z}_p^2 \\
&\ge \norm{\ma z}_p\Par{\opt_{v, p} + \half\norm{\ma z}_p} > 0,
\end{align*}
where we used the definition of $\opt_{v, p}$. The conclusion follows from $\norm{\ma z}_\infty \le \norm{\ma z}_p$.
\end{proof}

We are now ready to give our linear optimization oracle.

\begin{lemma}\label{lem:linear_opt_oracle}
We can implement a $\Delta$-approximate linear optimization oracle over $\xset^{\mb^\top\mw}$, where $\mb$, $\mw$ are associated with $G = (V, E, w_G)$, in time $O(\troute(\alpha) \cdot \frac{\alpha^2\log |E|}{\Delta^2})$ for any $\alpha \ge 1$.
\end{lemma}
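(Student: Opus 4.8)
The plan is to reduce approximate linear optimization over $\xset^{\mb^\top\mw}=\{x\in\R^E\mid \norm{\mb^\top\mw x}_\infty\le 1\}$ (wait — I should double-check the definition; $\xset^{\mc}$ is $\{x\mid \mc x = \0\}$... no, the lemma is stated over $\xset^{\mb^\top\mw}$, which following the convention of Definition~\ref{def:linopt} and the usage must be the unit-$\ell_\infty$-ball-type body induced by $\mb^\top\mw$, i.e.\ the image under the routing; in any case the oracle is asked to minimize a linear functional $c^\top x$ over this body). The key point is that an $\alpha$-oblivious routing $\mr$ lets us \emph{parameterize} the relevant body: writing $x = \mr y$ for $y\in\R^V$ (or, more precisely, routing circulations via $\mr\mb^\top\mw$), the operator $\mw^{-1}\mr\mb^\top\mw$ has $\ell_\infty\to\ell_\infty$ operator norm at most $\alpha$, so it both parameterizes feasible points and preconditions the problem, converting an $\ell_\infty$-constrained linear program into one we can solve by smoothing. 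Concretely, I would set $\ma := \mw^{-1}\mr\mb^\top\mw$ (or the appropriate reweighted routing matrix) and reduce to solving $\min_{\norm{\ma z}_p\le 1}\inprod{v}{\ma z}$ for a suitable linear functional $v$ built from $c$, where $p = \Theta(\log|E|)$ so that $\norm{\cdot}_p$ approximates $\norm{\cdot}_\infty$ up to a constant factor by Fact~\ref{fact:pnorm}.

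Next I would invoke Lemma~\ref{lem:relate_opt} to pass from the constrained problem $\min_{\norm{\ma z}_p\le1}\inprod{v}{\ma z}$ to the \emph{unconstrained, strongly-smooth} problem $\min_z \inprod{v}{\ma z} + \half\norm{\ma z}_p^2$, whose optimal value is $-\half\opt_{v,p}^2$; a $(1-\Delta)^2$-multiplicative approximation to this unconstrained value rescales (via the second half of Lemma~\ref{lem:relate_opt}) into a feasible $z'$ with $\inprod{v}{\ma z'}\le(1-\Delta)\opt_{v,p}$, which is exactly the Definition~\ref{def:linopt} guarantee. To solve the unconstrained problem I would run accelerated/standard gradient descent (or mirror descent) on $h(z) := \inprod{v}{\ma z}+\half\norm{\ma z}_p^2$: by Fact~\ref{fact:pnorm}, $\half\norm{\cdot}_p^2$ is $(p-1)$-smooth in $\ell_p$, and composing with $\ma$ (whose $\ell_\infty\to\ell_\infty$, hence $\ell_\infty\to\ell_p$ up to lower-order factors, norm is $O(\alpha)$) makes $h$ smooth in $\ell_\infty$ with parameter $O(\alpha^2 p) = O(\alpha^2\log|E|)$; Lemma~\ref{lem:bounded_radius} controls the radius of the iterates ($\norm{\ma z}_\infty = O(\opt_{v,p})$ stays bounded relative to the scale of the objective), so the iteration count to reach relative accuracy $\Delta$ is $O(\alpha^2\log|E|/\Delta)$ — or $O(\alpha^2\log|E|/\Delta^2)$ if I use a non-accelerated method or need the squared dependence to cover the $(1-\Delta)^2\to(1-\Delta)$ loss and the $\ell_p$-vs-$\ell_\infty$ slack. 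Each iteration costs one application of $\ma$ and $\ma^\top$, i.e.\ $O(\troute(\alpha))$ time (applying $\mr$, $\mb^\top$, and diagonal reweightings), giving the claimed total $O(\troute(\alpha)\cdot\alpha^2\log|E|/\Delta^2)$.

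The main obstacle I expect is bookkeeping the interplay between the $\ell_\infty$ geometry in which the oblivious routing is controlled and the $\ell_p$ surrogate in which the objective is smooth: one must verify that the $p$-th-power smoothness constant, after the change of variables through $\ma$, is genuinely $O(\alpha^2 p)$ and not larger (this uses $\norm{\ma z}_p\le\norm{\ma z}_\infty\cdot|E|^{1/p} = O(\norm{\ma z}_\infty)$ for $p = \Theta(\log|E|)$), and that the feasibility guarantee $\mb^\top\mw\cdot(\text{output}) = $ (correct value) is preserved — the routing identity $\mb^\top\mr x = x$ from Definition~\ref{def:route} is what ensures the parameterization is onto the right set and that feasible $z$ map to feasible $x$. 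The rescaling steps of Lemmas~\ref{lem:relate_opt} and~\ref{lem:bounded_radius} are the clean conceptual core; the rest is choosing $p$, the step size, and the accuracy $\Delta$ fed to the inner solver to absorb all the constant-factor norm conversions so that the final $\Delta$-approximate-linear-optimization guarantee of Definition~\ref{def:linopt} comes out with the stated runtime.
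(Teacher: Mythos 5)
Your high-level strategy matches the paper's: parameterize circulations via the oblivious routing, smooth $\norm{\cdot}_\infty$ by $\norm{\cdot}_p$, pass to the unconstrained regularized problem via Lemma~\ref{lem:relate_opt}, control radii via Lemma~\ref{lem:bounded_radius}, and run $\ell_\infty$ gradient descent from \cite{KelnerLOS14}. But two details you get wrong are load-bearing. First, your parameterizing operator is the wrong one: you set $\ma := \mw^{-1}\mr\mb^\top\mw$, but $\mb^\top\mw\cdot\mw^{-1}\mr\mb^\top\mw z = \mb^\top\mr\mb^\top\mw z = \mb^\top\mw z \neq \0$ in general (using $\mb^\top\mr = \id$), so $\ma z$ is not a circulation and the output would be infeasible for Definition~\ref{def:linopt}. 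The paper uses the \emph{complementary} operator $\ma := \mw^{-1}(\id_m - \mr\mb^\top)\mw$, for which $\mb^\top\mw\ma z = \0$ identically. You conflate the operator whose $\ell_\infty\to\ell_\infty$ norm the routing bounds with the projection onto circulation space — they differ by $\id_m$.

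Second, and more seriously, you omit the step that makes the iteration-count bound actually go through: after every $\ell_\infty$ gradient step, the iterate must be multiplied by $\ma$. Lemma~\ref{lem:bounded_radius} bounds $\norm{\ma z}_\infty$, not $\norm{z}_\infty$, whereas the convergence guarantee of Theorem 1 in \cite{KelnerLOS14} needs a bound on the $\ell_\infty$ radius of the iterates $z$ themselves. The fix is to re-project: because $\ma$ is idempotent ($\ma^2 z = \ma z$ for all $z$, again a consequence of $\mb^\top\mr = \id$), replacing $z$ by $\ma z$ leaves the objective $\inprod{c}{\ma z} + \tfrac12\norm{\ma z}_p^2$ unchanged, and gradient descent is monotone, so the projection is free; after it, $\norm{z}_\infty = \norm{\ma z}_\infty$ and Lemma~\ref{lem:bounded_radius} applies. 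Your sketch skips this entirely, leaving the radius argument open. A smaller issue: $p = \Theta(\log|E|)$ only gives a constant-factor $\ell_p$-to-$\ell_\infty$ surrogate, which is not enough for a $(1-\Delta)$-multiplicative guarantee; the paper takes $p = \frac{5\log m}{\Delta}$, and this is exactly where the $1/\Delta^2$ in the iteration count comes from ($k = O(\alpha^2 p/\Delta)$), rather than from choosing a non-accelerated method as you speculate.
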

\begin{proof}
Throughout we let $\mr$ be an $\alpha$-oblivious routing, identify $V$ and $E$ with $[n]$ and $[m]$ in the standard way for $n = |V|$ and $m = |E|$, and define $\mc \defeq \id_m - \mr \mb^\top$ and $\ma \defeq \mw^{-1}\mc \mw$. By the definition of an oblivious routing, $\mb^\top \mc \mw z = \0_n$ for all $z \in \R^m$. Hence, we may solve
\[\min_{\norm{\ma z}_\infty \le 1} \inprod{c}{\ma z}\]
to $\Delta$-multiplicative accuracy, and output $x = \ma z$, which satisfies 
\[\mb^\top \mw x = \mb^\top \mc \mw z = \0_n.\]
Next, note that for $p \defeq \frac{5\log m}{\Delta}$, Fact~\ref{fact:pnorm} implies
\[\Par{1 - \frac \Delta 4}\opt_{c, \infty} > \opt_{c, p} > \opt_{c, \infty},\]
so Lemma~\ref{lem:relate_opt} implies it suffices to return a solution to
\begin{equation}\label{eq:pnorm_obj}\min \inprod{c}{\ma z} + \half\norm{\ma z}_p^2\end{equation}
up to $\frac \Delta 4 \opt_{c, p}^2$ additive error. Next, the oblivious routing definition and triangle inequality imply
\[\max_{\norm{v}_\infty = 1}\norm{\ma v}_\infty \le 1 + \alpha \le 2\alpha,\]
so Fact~\ref{fact:pnorm} shows that the objective in \eqref{eq:pnorm_obj} is $4\alpha^2 p$-smooth in the $\ell_\infty$ norm. Our linear optimization oracle iteratively takes a step of the $\ell_\infty$ gradient descent algorithm in Theorem 1 of \cite{KelnerLOS14} (on the objective \eqref{eq:pnorm_obj}), and multiplies the iterate by $\ma$. The objective value for iterates of $\ell_\infty$ gradient descent is monotone non-increasing, and multiplication by $\ma$ preserves the objective value since it is simple to verify $\ma^2 z = \ma z$ for all $z$. Hence, Lemma~\ref{lem:bounded_radius} implies that the $\ell_\infty$ radius of all iterates is bounded by $O(\opt_{v, p})$. A direct application of Theorem 1 in \cite{KelnerLOS14} finally implies the algorithm returns a point with function error $\frac \Delta 4 \opt_{c, p}^2$ to \eqref{eq:pnorm_obj} in
\[k = O\Par{\frac{\alpha^2 p \opt_{c, p}^2}{\Delta \opt_{c, p}^2}} = O\Par{\frac{\alpha^2 \log m}{\Delta^2}}\]
iterations, each of which is dominated by multiplication through $\ma$, giving the runtime.
\end{proof}

\subsection{Degree-preserving rounding via link/cut trees}\label{ssec:spectral_round}

In this section, we give a self-contained subroutine solving the following degree-preserving rounding problem via an application of the link/cut tree dynamic data structure of \cite{SleatorT83}.

\begin{lemma}\label{lem:degree-preserve-round}
Let $\mb$, $\mw$ be associated with $G = (V, E, w_G)$, define $u_e \defeq 2[w_G]_e$ for all $e \in E$, and let $n = |V|$, $m = |E|$. There is an algorithm which runs in time $O(m \log n)$ and outputs $w \in \prod_{e \in E} [0, u_e]$ such that $|\mb|^\top w = |\mb|^\top w_G$, and $w_e = 0$ for at least $\frac{m - (n - 1)}{2}$ of the $e \in E$.
\end{lemma}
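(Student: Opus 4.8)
The plan is to reduce the statement to a fact about tight points of a symmetric ``transportation-type'' polytope, and then to realize the search for such a point as a sequence of cycle cancellations executed in $O(m\log n)$ time with link/cut trees. Write $w = w_G + \delta$: the degree constraint $|\mb|^\top w = |\mb|^\top w_G$ is exactly $\delta \in \ker(|\mb|^\top)$, and the box constraint $w \in \prod_{e\in E}[0,u_e] = \prod_{e\in E}[0,2[w_G]_e]$ is exactly $\delta \in B \defeq \prod_{e\in E}[-[w_G]_e,[w_G]_e]$. So it suffices to produce a point $\delta^\star$ in the symmetric polytope $Q \defeq B \cap \ker(|\mb|^\top)$ that is \emph{tight} (i.e.\ $\delta^\star_e \in \{-[w_G]_e, [w_G]_e\}$) on at least $m-(n-1)$ coordinates. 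Given such a $\delta^\star$, let $a$ be the number of tight coordinates with $\delta^\star_e = -[w_G]_e$ and $b$ the number with $\delta^\star_e = +[w_G]_e$, so $a+b \ge m-(n-1)$. Then $w_G+\delta^\star$ is feasible and has $w_e = 0$ on exactly the first set of size $a$, while $w_G - \delta^\star$ is also feasible (by symmetry of $Q$) and has $w_e = 0$ on exactly the second set of size $b$; outputting whichever of the two has more zeros yields $\max(a,b) \ge \tfrac{m-(n-1)}{2}$ zero entries, as required. (Edges with $[w_G]_e = 0$ are forced to $w_e = 0$ and only help the count.)

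To produce $\delta^\star$, run the classical cycle-cancellation procedure inside the subspace $\ker(|\mb|^\top)$. Call an edge \emph{loose} if $\delta_e$ lies strictly inside $[-[w_G]_e,[w_G]_e]$ and \emph{tight} otherwise, and start from $\delta = \0$ (every positive-weight edge loose). While the loose edges contain a cycle $C$, move $\delta$ along the alternating $\pm 1$ indicator $z_C$ of $C$ — which lies in $\ker(|\mb|^\top)$ when $C$ is an even cycle — in whichever direction, until the first loose edge of $C$ reaches the endpoint of its interval. This step keeps $\delta \in Q$ and makes at least one further edge tight, and tight edges never become loose again, so the process terminates with the loose edges forming a forest: at most $n-1$ loose edges, hence at least $m-(n-1)$ tight edges. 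This is precisely the ``tight on many coordinates'' guarantee used above (and it is the combinatorial description of a vertex of $Q$).

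For the $O(m\log n)$ runtime, maintain a spanning forest $F$ of the current loose subgraph in a link/cut tree~\cite{SleatorT83}, rooted so that every vertex carries a fixed parity equal to its distance to the root modulo $2$; since every cycle we cancel is the fundamental cycle of a non-forest loose edge $e=(a,b)$, the alternating coefficients of $z_C$ along the tree path between $a$ and $b$ are determined by these vertex parities. One cancellation step is then a single signed path update of $\delta$ together with a single path-aggregate query (the minimum feasible step size), after which we either discard $e$ (if $e$ itself became tight) or cut the newly-tight tree edge and link $e$ into $F$. Each of the at most $m$ non-forest loose edges is processed exactly once at $O(\log n)$ amortized cost, for $O(m\log n)$ total, plus an $O(m)$ post-pass to compare the zero-counts of $w_G \pm \delta^\star$.

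The main obstacle I anticipate is the link/cut-tree bookkeeping: carrying the alternating $\pm 1$ signs of the cancellation vectors through repeated signed path updates and cut/link operations while keeping them globally consistent (which is where one uses that the loose subgraph is bipartite, so that every cycle is even and each $z_C$ exists; the rounding is applied to a bipartite graph in our setting, and in full generality one would additionally have to process odd cycles in pairs joined by a doubled path). The polytope/vertex half of the argument is routine.
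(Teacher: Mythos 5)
Your proposal is correct and takes essentially the same route as the paper's proof: cycle cancellation starting from $w = w_G$ (i.e., $\delta = 0$) until the loose edges form a forest, with one edge saturated per cancellation, then outputting whichever of $w$ or $2w_G - w$ has more zero entries, all implemented via link/cut trees in $O(m\log n)$ — the paper uses four link/cut forests split by edge depth-parity where you use a single forest with vertex parities, which is equivalent bookkeeping. Your closing observation is worth flagging: the alternating $\pm\Delta$ cycle update preserves $|\mb|^\top w$ only on even cycles (so the vector $z_C$ you describe lies in $\ker(|\mb|^\top)$ only then), hence the lemma implicitly requires the input graph to be bipartite, a hypothesis omitted from the statement but satisfied wherever it is invoked, since the graph $(V, S_k, \cdot)$ with $S_k \subseteq E_k$ is bipartite in the proof of Theorem~\ref{thm:main_degree}; also note that your ``distance to root modulo $2$'' should really be the fixed bipartition color, since tree roots change under link/cut operations.
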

\begin{proof}
The algorithm is very similar to the cycle-cancelling data structure via link/cut trees in Appendix C of \cite{AssadiJJST22}, so we first give a high-level description of the algorithm, and then discuss how to modify Appendix C of \cite{AssadiJJST22} to implement the algorithm in $O(m \log n)$ time. At the beginning of the algorithm, all edges in $E$ are marked ``alive,'' and we set $w \gets w_G$. We then iteratively find a cycle $C$ in the graph restricted to alive edges, and let 
\[\Delta \defeq \min_{e \in C} \min\Par{u_e - w_e, w_e}.\]
In other words, $\Delta$ is the closest distance of any current weight to saturating either the lower bound $0$ or the upper bound $u_e$. We then alternately add and subtract $\Delta$ from $w$ along the cycle so that one edge is saturated (i.e.\ either $0$ or $u_e$), and mark the saturated edge as ``dead.'' Inductively, $w \in \prod_{e \in E}[0, u_e]$ at all times, and $|\mb|^\top w = |\mb|^\top w_G$ is preserved. Further, at the end of the algorithm at most $n - 1$ edges are not saturated. If more saturated edges are set to $0$, then we output $w$. Otherwise, we output $2w_G - w \in \prod_{e \in E}[0, u_e]$, which preserves the degrees as
\[|\mb|^\top(2w_G - w) = 2|\mb|^\top w_G - |\mb|^\top w = |\mb|^\top w_G.\]
We now discuss how to implement this algorithm efficiently, following the notation of Appendix C of \cite{AssadiJJST22}. We will maintain four link/cut forests with the same topology, denoted $T^+_{\up}$, $T^+_{\lo}$, $T^-_{\up}$, and $T^-_{\lo}$, and process the edges of $G$ sequentially. When we process an edge $e$, we add it to all four forests. The role of the forests $T^-_{\up}$ and $T^-_{\lo}$ is to maintain the distances of $w_e$ from $u_e$ and $0$ respectively, for all $e$ with odd depth; similarly, $T^+_{\up}$ and $T^+_{\lo}$ maintain distances of the weights of all even-depth edges from saturation. When we process an edge $e = (u, v)$, if its depth in the tree containing $u$ is odd, we set its weight in $T^-_{\up}$ and $T^-_{\lo}$ to $[w_G]_e$, and set its weight in $T^+_{\up}$ and $T^+_{\lo}$ to $2m\norm{w_G}_\infty$; we symmetrically handle even-depth edges. Next, if the endpoints of $e$ belong to different trees in the forest, we link these trees. Otherwise, we have found a cycle, and we can find the minimum distance to saturation $\Delta$ along the path between $u$ and $v$ in $O(\log n)$ time. By the weight setting of $2m\norm{w_G}_\infty$ on odd-depth edges in $T^+_{\up}$ and $T^+_{\lo}$ at initialization, these edges will never minimize the weight on any path throughout the algorithm, since they stay at least $2\norm{w_G}_\infty$; even-depth edges are handled identically. We then add or subtract $\Delta$ from the path in all four trees appropriately in $O(\log n)$ time, and delete the saturated edge; hence, the runtime is $O(m \log n)$. 
\end{proof}

\subsection{Degree-preserving sparsifiers in almost-linear time}\label{ssec:main-degree}

We now prove Theorem~\ref{thm:main_degree} after stating one helper lemma.

\begin{lemma}\label{lem:bipartite_b}
Let $G = (V, E, w_G)$ be bipartite with bipartition $V = L \cup R$, and let $\mb \in \R^{E \times V}$ be its edge-incidence matrix signed such that for each $e = (u, v) \in E$ for $u \in L$ and $v \in R$, $\mb_{e:}$ has a $1$ in the $u$ coordinate and a $-1$ in the $v$ coordinate. Then  $\mb^\top x = 0$ implies $|\mb|^\top x = 0$.
\end{lemma}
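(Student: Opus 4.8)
The plan is to exhibit a single diagonal sign-change matrix that turns $\mb$ into $|\mb|$, which reduces the claim to the trivial fact that a linear map sends $\0$ to $\0$.

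Concretely, I would first introduce the diagonal matrix $\md \in \R^{V \times V}$ with $\md_{uu} = 1$ for $u \in L$ and $\md_{vv} = -1$ for $v \in R$. The key step is then to verify the identity $|\mb| = \mb\md$. Fix an edge $e = (u, v)$ with $u \in L$ and $v \in R$. By the sign convention in the statement, the row $\mb_{e:}$ has a $+1$ in coordinate $u$, a $-1$ in coordinate $v$, and zeros elsewhere. Right-multiplying by $\md$ negates exactly the coordinates indexed by $R$, so the $u$-entry of $\mb_{e:}\md$ stays $+1$, the $v$-entry becomes $+1$, and all other entries remain $0$; this is precisely $|\mb|_{e:}$, which has ones in the two coordinates incident to $e$ and zeros elsewhere. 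Since this holds row by row, $|\mb| = \mb\md$.

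Finally, since $\md$ is diagonal it is symmetric, so $|\mb|^\top = (\mb\md)^\top = \md^\top \mb^\top = \md \mb^\top$. Therefore, if $\mb^\top x = \0_V$ then $|\mb|^\top x = \md(\mb^\top x) = \md \0_V = \0_V$, which is the claim. The only place needing any care is the sign bookkeeping in the identity $|\mb| = \mb\md$, and it is very minor; once that is in place the conclusion is immediate. Equivalently, one may argue coordinatewise without matrices: for $u \in L$ the entries $(|\mb|^\top x)_u$ and $(\mb^\top x)_u$ are equal, while for $v \in R$ they differ by a global sign, so if one of $\mb^\top x$, $|\mb|^\top x$ vanishes then so does the other.
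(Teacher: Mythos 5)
Your proof is correct and takes essentially the same approach as the paper's: both introduce the diagonal sign matrix $\md$ that flips the $R$ coordinates and then observe $|\mb|^\top = \md \mb^\top$, from which the conclusion is immediate. You simply spell out the row-by-row verification of $|\mb| = \mb\md$ and the transpose step that the paper states directly.
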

\begin{proof}
It suffices to note that $|\mb|^\top = \md \mb^\top$, where $\md$ is a diagonal matrix with 
\begin{align*}
\md_{vv} = \begin{cases}
1 & v \in L \\
-1 & v \in R
\end{cases}.
\end{align*}
\end{proof}

\restatemaindegree*

\begin{proof}
Our algorithm follows analogously to the proof of Proposition~\ref{prop:sparse_plus_small} and its specialization in Section~\ref{ssec:graph_bss}, so we follow their notation, and identify $V$ and $E$ with $[n]$ and $[m]$ in the standard way. We initialize $w_0 \gets \1_m$, and define
\[\mm_e \defeq \mw^{\half}\mb\tlap ([w_G]_e b_eb_e^\top)\tlap\mb^\top\mw^{\half}\text{ for all } e \in E,\]
where we follow the notation of Section~\ref{ssec:graph_bss} in defining $\tlap$. Our algorithm proceeds in $K = O(\log m)$ phases indexed by $k \in [K]$. For all $k \in [K]$, we let $m_{k - 1} \defeq \nnz(w_{k - 1})$, and
\[\rho_k \defeq \max\Par{2\Cset\sqrt{\frac n {m_{k - 1}}}, \frac{\eps}{(K - k + 1)\log^2(K - k + 1)}}.\]
We again terminate if $m_k \le \frac{\csparse n}{\eps^2}$ for a universal $\csparse$. Otherwise, each phase $k$ computes $x_k$ such that $|\mb|^\top (w_{k - 1} \circ x_k) = \0_n$, and for at least $\frac{\ctight}{30} m_{k - 1}$ coordinates $i$, $[x_k]_i = -1$, and 
\[\normop{\allm(w_{k - 1} \circ x_k)} = O(\rho_k).\]
We then update $w_k \gets w_{k - 1} \circ (\1_m + x_k)$, and the remainder of the proof follows analogously to Proposition~\ref{prop:sparse_plus_small} and Section~\ref{ssec:graph_bss}. It remains to argue how to compute such a vector $x_k$. 

First, we randomly partition the vertices in the support of $w_{k - 1}$ into two sets, and accept if the number of edges crossing the bipartition is at least $\frac {m_{k - 1}} 3$; this occurs with high probability in $n$ after $O(\log n)$ random bipartitions, since by linearity of expectation and Markov's inequality, the number of edges not in the bipartition is at most $\frac {2m_{k - 1}} 3$ with constant probability. Let $G_{k} = (V, E_{k}, w_{k - 1} \circ w_G)$ where $E_k$ are the edges crossing the bipartition, let $\mb_{k}$ be its signed edge-vertex incidence matrix, and let $\ma_e = [w_{k - 1}]_e \mm_e$ for all $e \in E_k$. By Lemma~\ref{lem:bipartite_b}, the condition $|\mb_k|^\top([w_{k - 1} \circ w_G \circ x_k]_{E_{k}}) = \0_n$ is implied by $\mb_k^\top([w_{k - 1} \circ w_G \circ x_k]_{E_{k}}) = \0_n$. Hence, letting $\mw_k = \diag{[w_{k - 1} \circ w_G]_{E_k}}$, we draw $g \sim \Nor(\0_{m_{k - 1}}, \id_{m_{k - 1}})$, and solve both of the problems
\begin{align*}
\min_{x \in \set_{\rho_k, \alla} \cap \xset^{\mb_k^\top \mw_k}} \norm{x - g}_2^2,\; \min_{x \in \set_{\rho_k, \alla} \cap \xset^{\mb_k^\top \mw_k}} \norm{x + g}_2^2,
\end{align*}
to additive error $\tau = \frac{\ctight m_{k - 1}\rho_k^2}{4}$, where we follow the notation \eqref{eq:xsetcdef}. By the same induction argument as in Proposition~\ref{prop:sparse_plus_small}, we maintain $\normop{\alla(\1_{m_{k - 1}})} \le 2$ at all times, so combining Lemmas~\ref{lem:minflam_constraint} and~\ref{lem:linear_opt_oracle} (with $\beta \gets \rho_k$) shows that we can achieve additive error $\tau$ with high probability in $n$ in time
\[O\Par{m\log^{1 + o(1)}(m)\log\Par{ \frac 1 \eps} \cdot \Par{\frac{\log^{3.5}(m)}{\rho_k^{13.5}} + \frac{\log^2(m)}{\rho_k^{10}} \cdot \alpha^2\troute(\alpha) }},\]
for any $\alpha \ge 1$, where we bounded $\sum_{e \in E_k} \tmv(\ma_e) = O(m\log^{1 + o(1)}(m)\log\frac 1 \eps)$ as in Section~\ref{ssec:graph_bss}. 

Let $\tx_{k}$ be the resulting approximate minimizer with more coordinates $\le - 1 + \rho_k$, and note that there is a set $S_k$ with size $|S_k| \ge \frac{\ctight m_{k - 1}}{12}$ such that $[\tx_k]_e \le -1 + \rho_k$ for all $e \in S_k$. By applying Lemma~\ref{lem:degree-preserve-round} to the graph $(V, S_k, [(\1_{m_{k - 1}} + \tx_k) \circ w_{k - 1} \circ w_G]_{S_k})$, we can produce $x_k$ which equals $\tx_k$ outside $S_k$, has at least $\frac{\ctight m_{k - 1}}{30}$ coordinates equal to $-1$, maintains $\mb_k^\top([w_{k - 1} \circ x_k]_{E_k}) = \0_n$, and
\[\normop{\alla([x_k - \tx_k]_{S_k})} \le \rho_k\normop{\alla(\1_{m_k})} = O(\rho_k).\]
The above inequality used that $x_k - \tx_k$ is bounded by $\pm \rho_k$ on $S_k$ by the guarantees of Lemma~\ref{lem:degree-preserve-round}. Combining this with $\normop{\alla(\tx_k)} = O(\rho_k)$ completes the proof.
\end{proof}

\begin{remark}\label{rem:bipartite_route}
We remark that in the proof of Theorem~\ref{thm:main_degree}, if the input graph $G$ was bipartite, then there is no need to perform the step of randomly sampling a bipartition and restricting our algorithm to the edges crossing the bipartition (as Lemma~\ref{lem:bipartite_b} applies generically in this case, and bipartiteness is preserved by our reweightings). As a result, all of the graphs for which we construct oblivious routings for via the machinery in Section~\ref{ssec:linear-oracle} are $1.1$-multiplicative spectral approximations to $G$ (taking $\eps$ sufficiently small), by the same induction argument as in Proposition~\ref{prop:sparse_plus_small}.
\end{remark}

\subsection{Runtime-sparsity tradeoffs}\label{ssec:tradeoffs}

In this section, we describe how to achieve a slightly worse sparsity guarantee than Theorem~\ref{thm:main_degree} in near-linear time. Our technique in this section leverages an improved oblivious routing procedure in the case that our input graph is an expander. For completeness, we briefly define expander graphs.
\newcommand{\vol}{\mathrm{Vol}}
\begin{definition}[Expander graphs]
Let $G = (V,E,w_G)$ be a graph. For a set $S \subseteq V$, we let $\vol(S) = \sum_{ v \in S} \delta_v$ be the sum of weighted degrees in $S$ and $\partial S = \{ e \in E : e \notin S \text{ and } e \notin E / S \}$ to be the edge boundary of $S$. The cut value of $S$ is $w(\partial S) = \sum_{e \in \partial S} [w_G]_e$. The \emph{conductance} of $S$ is 
\[
\Phi(S) = \frac{ w(\partial S) }{ \min \{ \vol(S), \vol(V/S)\} }.
\]
We say $G$ is a $\phi$-expander if $\Phi(S) \geq \phi$ for all $S \subseteq V$. 
\end{definition}
The main fact we use is that electric flows induce good oblivious routings on expanders. This is summarized in the following lemma from prior work, where $\mr$ is an electric routing matrix.
\begin{lemma}[Lemma~28, \cite{KelnerLOS14}]
\label{lem:expander_route}
Let $G = (V,E,w_G)$ be a $\phi$-expander. Define 
\[
\mr = \mw \mb \lap^\dagger.
\]
Then $\mr$ is an $\alpha$-oblivious routing with $\troute(\alpha) = \O(m)$ for $\alpha = O( \frac{\log m}{\phi^2})$. 
\end{lemma}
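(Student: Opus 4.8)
The plan is to verify the two requirements of Definition~\ref{def:route} for $\mr = \mw\mb\lap^\dagger$; this is exactly Lemma~28 of \cite{KelnerLOS14}, and I would reconstruct the argument as follows. The routing identity is the easy part: since $\lap = \mb^\top\mw\mb$ and $\lap\lap^\dagger = \idnoone$ is the orthogonal projection onto $\1_V^\perp$, which contains every demand $x$, we get $\mb^\top\mr = \mb^\top\mw\mb\lap^\dagger = \lap\lap^\dagger = \idnoone$.

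The substance is the congestion bound. First I would rewrite $\mw^{-1}\mr\mb^\top\mw = \mb\lap^\dagger\mb^\top\mw$; since the $\ell_\infty \to \ell_\infty$ operator norm of a matrix is its largest absolute row sum, it suffices to bound, for each edge $e$ with endpoints $u,v$, the quantity $\sum_{e' \in E} [w_G]_{e'}\Abs{b_{e'}^\top\lap^\dagger b_e}$. By symmetry of $\lap^\dagger$, this equals $\norm{\mw\mb\lap^\dagger b_e}_1$, the $\ell_1$ norm of the electrical flow sending one unit of current from $u$ to $v$ (its demand is $\mb^\top\mw\mb\lap^\dagger b_e = \lap\lap^\dagger b_e = b_e$, as $b_e \in \1_V^\perp$). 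So the whole lemma reduces to showing that in a $\phi$-expander the unit electrical flow across any edge has $\ell_1$ norm $O(\log m/\phi^2)$. I would prove this in two steps. The first is an energy bound: Cheeger's inequality gives that the normalized Laplacian of $G$ has second eigenvalue at least $\tfrac{\phi^2}{2}$, hence $\lap^\dagger \preceq \tfrac{2}{\phi^2}\diag{(\delta_v^{-1})_{v \in V}}$, so the flow's energy is $R_{\textup{eff}}(u,v) = b_e^\top\lap^\dagger b_e \le \tfrac{2}{\phi^2}\Par{\tfrac{1}{\delta_u} + \tfrac{1}{\delta_v}} \le \tfrac{4}{\phi^2[w_G]_e}$, using $\delta_u, \delta_v \ge [w_G]_e$. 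The second is a localization argument: writing $p = \lap^\dagger b_e$ for the induced potentials, the co-area formula gives $\norm{\mw\mb\lap^\dagger b_e}_1 = \int w(\partial\{p > t\})\,\dd t$ over the (length-$R_{\textup{eff}}(u,v)$) range of $t$, and I would split that range into $O(\log m)$ dyadic scales according to $\min(\vol(\{p > t\}), \vol(\{p \le t\}))$, using that every level set has conductance at least $\phi$ together with the energy bound to charge $O(1/\phi^2)$ to each scale, for a total of $O(\log m/\phi^2)$. The hard part will be this second step — making precise how rapidly the electrical flow spreads out, i.e.\ controlling the boundary volumes of the potential level sets of a single-edge demand — which is the technical core of \cite{KelnerLOS14}.

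Finally I would address the runtime. One never forms $\mr$ explicitly: applying $\mr$, $\mb$, $\mb^\top$, and $\mw^{\pm 1}$ to a vector costs $O(m)$ time plus a single Laplacian solve in $\lap$, which is $\O(m)$ time by Proposition~\ref{prop:kmp} (or \cite{SpielmanS11, KoutisMP11}). An approximate solve only makes $\mb^\top\mr$ approximately equal to $\idnoone$; solving to $\textup{poly}(m^{-1})$ accuracy and a constant rescaling makes this error negligible while inflating $\alpha$ by at most a constant factor, so $\troute(\alpha) = \O(m)$ as claimed.
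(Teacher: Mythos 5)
The paper does not prove this statement; it is imported verbatim as Lemma~28 of \cite{KelnerLOS14}, so there is no in-paper argument to compare yours against. Your reconstruction is a reasonable account of what that cited proof does: the routing identity is immediate from $\mb^\top\mw\mb\lap^\dagger = \lap\lap^\dagger$, the $\ell_\infty\to\ell_\infty$ operator norm is a maximum absolute row sum which you correctly rewrite as the $\ell_1$ mass of the unit single-edge electrical flow $\mw\mb\lap^\dagger b_e$, the Cheeger-type bound $R_{\textup{eff}}(u,v) \le \tfrac{2}{\phi^2}(\delta_u^{-1}+\delta_v^{-1}) \le \tfrac{4}{\phi^2 [w_G]_e}$ is the right energy input, and the runtime claim is correct because one only needs vector products against $\mb$, $\mw^{\pm 1}$ and an $\tO(m)$-time approximate Laplacian solve.

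Two remarks. First, you rightly flag that $\mb^\top\mr = \lap\lap^\dagger = \idnoone$, not the identity on all of $\R^V$; Definition~\ref{def:route} as written is slightly too strong, but this is a harmless imprecision in the paper since the only demands fed to the routing, namely $\mb^\top\mw y$, already lie in $\1_V^\perp$. Second, the step you label as the technical core really is where all the work is. The expander hypothesis only gives a \emph{lower} bound $w(\partial S_t) \ge \phi\min(\vol(S_t),\vol(V\setminus S_t))$, and the co-area integral $\int w(\partial\{p>t\})\,\dd t$ needs to be bounded from \emph{above}; the maximum principle (which you use implicitly to restrict the range of $t$ to $[p_v,p_u]$) and the interplay between the level-set volumes, the per-cut flow conservation (the signed current across each level cut is exactly $1$), and the total energy must be combined more carefully than a one-line charge of $O(1/\phi^2)$ per dyadic volume scale. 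Since you explicitly defer this to \cite{KelnerLOS14} rather than claiming a complete argument, I would not call this a gap in your proposal, but a reader should not come away thinking the dyadic step is routine — it is precisely the content of the cited lemma.
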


Leveraging Lemma~\ref{lem:expander_route}, we are now ready to give a $\tO(m)$ time algorithm for constructing linear-sized degree-preserving sparsifiers of bipartite expanders.

\begin{lemma}
\label{lem:expander_sparsify}
Let $G=(V,E,w)$ be a bipartite $\phi$-expander with $m = |E|$ and $n = |V|$. Given a parameter $\eps \in (0,1)$, there is a randomized algorithm which returns $w \in \R_{\geq 0}^E$ satisfying $\nnz(w) = O(\frac{n}{\eps^2})$, $|\mb|^\top w = |\mb|^\top w_G$ in time
\[
O\Par{\frac{m}{\phi^2} \cdot \textup{poly}\Par{\frac{\log m}{\eps}}}
\]
such that with high probability in $n$,
\[(1 - \eps)\lap_G \preceq \sum_{e \in E} w_e b_e b_e^\top \preceq (1 + \eps)\lap_G.\]
\end{lemma}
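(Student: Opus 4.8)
The plan is to run the algorithm of Theorem~\ref{thm:main_degree} on $G$, but to replace the generic oblivious routing it uses internally (through Lemmas~\ref{lem:minflam_constraint} and~\ref{lem:linear_opt_oracle}) by the electric routing of Lemma~\ref{lem:expander_route}. Since $G$ is bipartite, Remark~\ref{rem:bipartite_route} applies: there is no need for the random bipartition step, and the graphs $G_k \defeq (V, E_k, w_{k-1}\circ w_G)$ for which we construct oblivious routings are $(1\pm O(\eps))$-spectral approximations of $G$ throughout the recursion. This is precisely the telescoping invariant $\normop{\allm(w_{k-1} - \1_m)} = O(\eps)$ established in Proposition~\ref{prop:sparse_plus_small} and used in the proof of Theorem~\ref{thm:main_degree}, unwrapped via the identities of Section~\ref{ssec:graph_bss}; crucially it holds at every phase, not only at termination.

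The key observation is that every such $G_k$ is again an $\Omega(\phi)$-expander. For any cut $S\subseteq V$ and any graph $H$ we have $w_H(\partial S) = \1_S^\top\lap_H\1_S$; and since the recursion preserves weighted degrees (every update $x_k$ satisfies $|\mb|^\top(w_{k-1}\circ x_k) = \0$), we also have $\vol_{G_k}(S) = \vol_G(S)$ for all $S$. Combining these two facts with $\lap_{G_k}\succeq(1-O(\eps))\lap_G$ gives $\Phi_{G_k}(S)\ge(1-O(\eps))\Phi_G(S)\ge(1-O(\eps))\phi$ for every $S$, so $G_k$ is a $(1-O(\eps))\phi$-expander (for $\eps$ below a small constant, a $\tfrac{\phi}{2}$-expander). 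Hence by Lemma~\ref{lem:expander_route}, $\mr_k\defeq\mw_k\mb_k\lap_{G_k}^\dagger$ — implemented with an approximate Laplacian solver for $\lap_{G_k}$ as in Section~\ref{ssec:graph_bss}, which changes only constants — is an $\alpha$-oblivious routing for $G_k$ with $\alpha = O(\tfrac{\log m}{\phi^2})$ and $\troute(\alpha) = \O(m)$. Note the bipartiteness of $G_k$ is harmless here: expander conductance only constrains the small-eigenvalue side of the normalized Laplacian.

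It then remains to feed $\troute(\alpha) = \O(m)$ and $\alpha = O(\tfrac{\log m}{\phi^2})$ into the runtime bound of Theorem~\ref{thm:main_degree}, whose $\alpha$-dependence enters only through the linear optimization oracle of Lemma~\ref{lem:linear_opt_oracle} inside the Frank--Wolfe routine of Lemma~\ref{lem:minflam_constraint}. After the usual bookkeeping — the smallest discrepancy radius over all $O(\log m)$ phases is $\rho_k = \Omega(\eps/\textup{poly}(\log m))$, and the phase count is absorbed since the $\{\rho_k\}$ form a geometric-plus-harmonic schedule summing to $O(\eps)$ — the total running time becomes $O\Par{\tfrac{m}{\phi^2}\cdot\textup{poly}\Par{\tfrac{\log m}{\eps}}}$, while the spectral guarantee, the degree constraint $|\mb|^\top w = |\mb|^\top w_G$, and the sparsity $O(n/\eps^2)$ on the output are inherited verbatim from Theorem~\ref{thm:main_degree}. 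I expect the only step requiring genuine care to be the expander-preservation claim of the second paragraph and, in particular, verifying it is maintained at \emph{every} level of the recursion; this follows because the telescoped operator-norm bound (hence the spectral approximation) and the exact degree constraint are algorithmic invariants of Theorem~\ref{thm:main_degree}, rather than merely properties of its final output.
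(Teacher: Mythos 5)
Your proposal reproduces the paper's proof of this lemma essentially verbatim: apply Theorem~\ref{thm:main_degree}, invoke Remark~\ref{rem:bipartite_route} to skip the random bipartition and conclude that every intermediate routing graph $G_k$ is an $O(1)$-spectral approximation of $G$ and hence an $\Omega(\phi)$-expander, and substitute the bounds $\alpha = O(\tfrac{\log m}{\phi^2})$, $\troute(\alpha) = \O(m)$ from Lemma~\ref{lem:expander_route} into Theorem~\ref{thm:main_degree}. The only difference is that you spell out the conductance-preservation step (the spectral inequality controls each cut numerator $\1_S^\top\lap_{G_k}\1_S$, while the algorithm's exact degree invariant fixes the volume denominators), which the paper asserts in one line without justification.
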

\begin{proof}
We employ \Cref{thm:main_degree} on $G$, and use the routings from Lemma~\ref{lem:expander_route}. In light of \Cref{rem:bipartite_route}, all graphs we compute oblivious routings on are $1.1$-sparsifiers of $G$ and are hence $\Omega(\phi)$-expanders. As such, \Cref{lem:expander_route} yields $\troute(\alpha) = \O(m)$ with $\alpha = O( \frac{\log m}{\phi^2})$ for all graphs we construct oblivious routings for: substituting this into \Cref{thm:main_degree} yields the result. 
\end{proof}

We now upgrade this sparsification result on expander graphs to general graphs. Our main tool for this is a expander decomposition result from prior work \cite{SaranurakW19}. 

\begin{lemma}[Theorem 4.1, \cite{SaranurakW19}]
\label{lem:expander_decomp}
There is a randomized algorithm which takes as input $\phi \in (0, 1)$ and a graph $G = (V,E,w_G)$ with $m = |E|$, where $w_G$ has multiplicative range $\textup{poly}(m)$, and total weight $W \defeq \sum_{e \in E} [w_G]_e$. With high probability in $n$, it partitions $V$ into $\{V_i\}_{i \in [k]}$ in time $O(\frac m \phi \log^5 m)$ such that the following properties hold.
\begin{itemize}
    \item The induced subgraph of $G$  on each $V_i$ is a $\phi$-expander.
    \item The total edge weight cut by the $V_i$ is $O(\phi W \log^3 m)$. 
\end{itemize}
\end{lemma}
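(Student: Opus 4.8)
The statement is, up to cosmetic rephrasing, Theorem 4.1 of \cite{SaranurakW19}, so the plan is to invoke that result essentially as a black box after reconciling two presentational differences: their theorem is typically stated for unit-weight (multi)graphs, whereas we allow edge weights $w_G$ with multiplicative range $\textup{poly}(m)$; and their cut-size guarantee is phrased in terms of the number of edges, whereas we phrase it in terms of the total weight $W$. To bridge the first gap I would rescale $w_G$ so that the minimum weight is $1$ and then round each weight to the nearest integer, which perturbs every edge weight by at most a factor of $2$ and hence every set conductance $\Phi(S)$ by at most a constant factor; this is harmless, since both the output guarantee ``$\phi$-expander'' and the cut bound $O(\phi W \log^3 m)$ are stable under $O(1)$ rescalings once constants are adjusted. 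Since the conductance notion in the definition above is already written in terms of the weighted volume $\vol(\cdot)$ and weighted cut value $w(\partial\cdot)$, it matches the multigraph conductance used in \cite{SaranurakW19}, and the total weight $W$ plays exactly the role of ``number of edges'' in their cut bound.

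At a conceptual level the algorithm behind the lemma is the familiar recursive scheme: run a sparse-cut / cut-matching subroutine on $G$; if it certifies $G$ is a $\phi$-expander, output $G$ as a single cluster, and otherwise it returns a cut of conductance $O(\phi \cdot \textup{poly}(\log m))$, whereupon one recurses --- on both sides if the cut is roughly balanced, and otherwise ``trims'' the small side so that the large side becomes a certified expander that need not be recursed on, recursing only on the small side. The recursion has depth $O(\log m)$, and at each level the cuts that are removed have conductance $O(\phi\cdot\textup{poly}(\log m))$ against their smaller sides, whose volumes sum to $O(W)$ per level; summing over all clusters and all levels telescopes to the claimed $O(\phi W \log^3 m)$ total cut weight, and charging the work at each level to the geometrically shrinking subproblems yields the $O(\frac m \phi \log^5 m)$ runtime.

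The one point that requires care, and the place where the proof is not a literal quotation, is the weighted-graph bookkeeping for the running time: one must \emph{not} implement the weighted case by splitting a weight-$c$ edge into $c$ parallel unit edges, as this would inflate the instance to $\textup{poly}(m)$ edges and destroy the near-linear runtime. Instead I would invoke the weighted form of the cut-matching / flow subroutine used in \cite{SaranurakW19}, whose per-iteration cost scales with $m$ and $\phi^{-1}$ rather than with $W$, so that the rounding reduction above is only needed to make ``integer weights / multigraph'' language applicable and not to actually materialize parallel edges. Once it is checked that rounding perturbs conductances only by $O(1)$ and that the near-linear-time primitives of \cite{SaranurakW19} are quoted in a form whose cost depends on $m$ rather than $W$, the lemma follows directly from their Theorem 4.1. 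The main obstacle, then, is purely this weighted bookkeeping; no new idea is required.
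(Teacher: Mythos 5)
The paper gives no proof of this lemma at all: it is stated as a direct citation to Theorem~4.1 of \cite{SaranurakW19}, so your approach of invoking that result as a black box is exactly what the paper does. Your extra discussion of the weighted-to-unweighted reconciliation (rounding weights, stability of conductance under $O(1)$ rescaling, and avoiding the edge-splitting blowup in the runtime) is reasonable care that the paper leaves implicit, though you, like the paper, assert rather than verify that the \cite{SaranurakW19} subroutines admit a weighted implementation with cost scaling in $m$ rather than $W$.
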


We combine this primitive with \Cref{lem:expander_sparsify} to obtain the main result of this section.

\begin{theorem}\label{thm:degree_polylog}
Given a graph $G = (V, E, w_G)$ with $m = |E|$, $n = |V|$, and $\eps \in (0, 1)$, assume $w_G$ has multiplicative range $\textup{poly}(m)$. There is a randomized algorithm which returns $w \in \R_{\geq 0}^E$ satisfying $\nnz(w) = O(\frac{n \log(m)}{\eps^2})$, $|\mb|^\top w = |\mb|^\top w_G$ in time $\O( m \cdot \eps^{-O(1)})$ such that with high probability in $n$, 
\[(1 - \eps)\lap_G \preceq \sum_{e \in E} w_e b_e b_e^\top \preceq (1 + \eps)\lap_G.\]
\end{theorem}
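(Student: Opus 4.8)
\emph{Overview.} The plan is to drive a recursion with expander decompositions (Lemma~\ref{lem:expander_decomp}), sparsifying each resulting expander piece in near-linear time via Lemma~\ref{lem:expander_sparsify}, and to observe that the spectral errors across the levels telescope, so that only a single logarithmic factor is lost in the sparsity. There are two points requiring care: Lemma~\ref{lem:expander_sparsify} needs bipartite input, and the recursion must be shown to have depth only $O(\log m)$ despite the weights spanning a $\textup{poly}(m)$ range.

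\emph{Reduction to the bipartite case.} First I would pass to the bipartite double cover $\hat G$ of $G$: it has $2n$ vertices, $2m$ edges, multiplicative range $\textup{poly}(m)$, and every induced subgraph of it (in particular, the expander pieces produced below) is bipartite. A degree-preserving $\eps$-sparsifier $\hat H$ of $\hat G$ yields one of $G$ by averaging the weights of the two copies of each edge: this preserves degrees (the copies touching $v_0$ and $v_1$ together carry $2\deg_G(v)$, and $\hat H$ preserves each of those), and the two-sided spectral bound transfers because the restriction of $\lap_{\hat G}$ to the symmetric subspace $\{x\mid x_{v_0}=x_{v_1}\}$ is exactly $\lap_G$. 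This costs $O(m)$ time plus $O(n\log m/\eps^2)$ for the final symmetrization, which is $\tO(m\eps^{-O(1)})$ since we may assume $n=O(m)$ (isolated vertices are irrelevant).

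\emph{The recursion.} On the current bipartite instance $G'$ (initially $\hat G$), if $|E(G')|\le n/\eps^2$ I return $G'$ unchanged (within budget, exact degrees). Otherwise run Lemma~\ref{lem:expander_decomp} with $\phi=\Theta(1/\log^3 m)$, the constant chosen small enough that the cut weight $O(\phi W\log^3 m)$ is at most $W/4$; each piece $G'[V_i]$ is then a bipartite $\phi$-expander, so apply Lemma~\ref{lem:expander_sparsify} with parameter $\eps$ to emit a degree-preserving $\eps$-sparsifier $H_i$ with $O(|V_i|/\eps^2)$ edges, and recurse on the cut-edge graph, whose edges are a subset of $E(G')$ with their original weights and total weight $\le W/4$.

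\emph{Analysis.} Because each call shrinks the total weight by a factor $4$ while every cut edge retains its original weight $\ge w_{\min}$, after $t$ levels the edge count is at most $W_0/(4^t w_{\min})\le 2m\cdot\textup{poly}(m)/4^t$, so the recursion has depth $T=O(\log m)$. Each level emits $\sum_i O(|V_i|/\eps^2)=O(n/\eps^2)$ edges (the $V_i$ partition the vertices) and the base case emits $\le n/\eps^2$, giving total sparsity $O(n\log m/\eps^2)$, hence likewise for $G$ after symmetrization. For the spectral guarantee, the key observation is that the pieces sparsified at distinct levels are \emph{vertex-disjoint} induced subgraphs of $\hat G$: writing $\lap_{\textup{pieces}_t}$ for the sum of the piece Laplacians at level $t$, one has $\sum_t \lap_{\textup{pieces}_t}=\lap_{\hat G}-\lap_{\textup{base}}$, while each sparsification error is bounded in Loewner order by $\eps\,\lap_{\textup{pieces}_t}$; summing yields total error $\preceq \eps(\lap_{\hat G}-\lap_{\textup{base}})\preceq \eps\lap_{\hat G}$, so $\eps$ need not be divided among the levels — this is exactly what keeps the sparsity at $O(n\log m/\eps^2)$ rather than blowing up by another $\log^2 m$. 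Degrees are preserved because each piece preserves its within-piece degrees, cut edges pass through untouched, and the base case is exact. Finally, level $t$ costs $O(|E_t|\log^5 m/\phi + |E_t|\phi^{-2}\textup{poly}(\log m/\eps))$; since $\phi^{-1}=O(\log^3 m)$ and $|E_t|\le 2m$ is non-increasing over the $O(\log m)$ levels, the total is $\tO(m\eps^{-O(1)})$, and a union bound over the $O(\log m)$ levels and $O(m)$ pieces gives the high-probability conclusion. The main obstacle is getting this error telescoping right — recognizing that Loewner-disjointness of the pieces across levels lets the errors add with no multiplicative blow-up — together with checking that the double-cover reduction faithfully transfers both degree-preservation and the two-sided spectral bound.
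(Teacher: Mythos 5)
Your proof is correct, and it takes a genuinely different route at one key step. Where the paper extracts a bipartite subgraph covering a constant fraction of the weight via a \emph{random bipartition} (each vertex in $S$ with probability $\tfrac12$, boosted to success with high probability by $O(\log n)$ trials and Markov's inequality), you instead pass to the \emph{bipartite double cover} $\hat G$, after which every induced subgraph is automatically bipartite, so Lemma~\ref{lem:expander_sparsify} applies to every expander piece with no randomness or retries. The fold-back step you need — averaging the weights of the two copies of each edge — is new to your argument; you correctly check that it preserves weighted degrees, and that the two-sided Loewner bound transfers because restricting $\lap_{\hat G}$ and $\lap_{\hat H}$ to the symmetric subspace $\{\hat x : \hat x_{v_0} = \hat x_{v_1}\}$ recovers $2\lap_G$ and $2\lap_H$ respectively. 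Both proofs share the rest of the structure: expander-decompose at scale $\phi = \Theta(\log^{-3}m)$ so the cut weight drops by a constant factor each level, use the poly$(m)$ multiplicative range to bound the recursion depth by $O(\log m)$, sparsify each piece with Lemma~\ref{lem:expander_sparsify}, and sum $O(n/\eps^2)$ edges per level over $O(\log m)$ levels. The error telescoping you describe as ``the main obstacle'' is in fact the standard ``union of sparsifiers is a sparsifier of the union'' fact that the paper invokes in a single sentence; it requires only that the pieces across all levels are edge-disjoint and cover the edge set. On that note, one small inaccuracy: you write that ``the pieces sparsified at distinct levels are \emph{vertex-disjoint} induced subgraphs of $\hat G$,'' but pieces at different levels certainly can share vertices — they are only \emph{edge-disjoint}, which is what the argument actually uses (and is what makes $\sum_t \lap_{\textup{pieces}_t} = \lap_{\hat G} - \lap_{\textup{base}}$ true). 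The double-cover route is arguably cleaner (fully deterministic reduction to bipartiteness), at the cost of working over $2n$ vertices and $2m$ edges and the extra folding argument.
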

\begin{proof}
We begin applying \Cref{lem:expander_decomp} to find bipartite expanders covering a constant fraction of $G$'s edge weight. Let $S$ be a set which includes each vertex in $G$ with probability $\half$: note that since each edge is cut by $S$ with probability $\half$ $\E [ w(\partial S) ] = \frac{1}{2} W$ by linearity of expectation. By repeating this $O(\log n)$ times, we find a set $S$ with  $w(\partial S) \geq \frac{1}{3} W$ with high probability in $n$. Let $H$ be the graph containing all edges cut by $S$, and note that it is bipartite. Applying \Cref{lem:expander_decomp} to $H$ with $\phi = O(\log^{-3}(m))$, we obtain induced subgraphs $\{H[V_i]\}_{i \in [k]}$ of $H$ satisfying the following.
\begin{itemize}
    \item Each $H[V_i]$ is a $\phi$-expander.
    \item The $H[V_i]$ collectively contain $\frac 3 4$ of the edge-weight of $H$. 
\end{itemize}
As subgraphs of bipartite graphs are bipartite, these $H[V_i]$ are bipartite $\phi$-expanders which collectively cover $\frac{3}{4} \cdot \frac{1}{3} = \frac{1}{4}$ of the edge weight of $G$. By iterating this procedure on the remaining edges, we thus obtain disjoint subgraphs $\{G_i\}_{i \in [K]}$ of $G$ for $K = O(\log m)$ satisfying the following.
\begin{itemize}
    \item $G = \bigcup_{i \in [K]} G_i$.
    \item Each $G_i$ is a disjoint union of bipartite $O(\log^{-3}(m))$-expanders. 
\end{itemize}
We apply \Cref{lem:expander_sparsify} to each expander in each $G_i$ and combine the outputs. As the union of sparsifiers is a sparsifier of the union, the resulting graph is an $\eps$-spectral sparsifier of our original $G$. Additionally, as the total number of vertices in all of the $G_i$ is at most $nK = O(n \log m)$, the number of edges in the resulting sparsifier is bounded as $O(\frac{n \log m}{\eps^2})$. Finally, as the $G_i$ are disjoint they collectively contain $m$ edges: the total runtime of the calls to \Cref{lem:expander_sparsify} is thus the desired
\[
O\Par{m \cdot \textup{poly}\Par{\frac{\log m}{\eps}}}.
\]
\end{proof} %
\section{Spencer's theorem}\label{sec:spencer}

In this section, we describe how we recover the main result of \cite{JainSS23} within our framework. Specifically, we prove the following result which is identical to Theorem 1.1 of \cite{JainSS23} after explicitly dropping any empty rows and columns of $\ma$. We remark that for this section only, we will reverse the roles of ``$m$'' and ``$n$'' for consistency with \cite{JainSS23} (so there are $n$ colors to choose).

\begin{theorem} \label{thm:SpencerNearlyLinearTime}
Let $\ma \in \R^{m \times n}$ satisfy $\max_{i \in [m] ,j \in [n]} |\ma_{ij}| \leq 1$. There is an algorithm running in $\bO(\nnz(\ma) \cdot \log^5 (n))$ time which, with probability at least $\half$, returns $x \in \{\pm 1 \}^n$ such that, for an absolute constant $C$,
\[
\norm{\ma x}_\infty \leq C \sqrt{n \log\Par{\frac m n + 2}}.
\]
\end{theorem}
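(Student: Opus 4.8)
The plan is to instantiate the approximate partial coloring framework of Section~\ref{sec:framework} recursively, in the manner of \cite{Rothvoss14}; the only problem-specific ingredients are a Gaussian measure lower bound for the relevant polytope (from the \v{S}id\'ak--Khatri correlation inequality, exactly as in Spencer's classical setting), a near-input-sparsity solver for the regularized subproblems required by Proposition~\ref{prop:binary_search_lam} (obtained from a regularized variant of the box-simplex game solver of \cite{CarmonJST20}, analogous to how Proposition~\ref{prop:modified_boxspec} adapts \cite{JambulapatiT23}), and a scalar randomized rounding step to turn nearly-tight coordinates into exact $\pm 1$ colors. As a trivial preprocessing step we drop any empty row or column of $\ma$, so that $\nnz(\ma) \ge \max(m,n)$; empty columns are free colors and are set arbitrarily.

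\textbf{One level of the recursion.} Suppose we have colored some coordinates to $\pm 1$, let $S \subseteq [n]$ be the remaining ``live'' set, and write $n' := |S|$. Set the approximation tolerance $\beta := c/\sqrt{\log m}$ and the discrepancy radius $\rho := C_1\sqrt{n'\log(\tfrac{m}{n'}+2)}$ for a small absolute constant $c$ and a large absolute constant $C_1$. Since each coordinate of $\ma_S g$ for $g \sim \Nor(\0_{n'},\id_{n'})$ is a mean-zero Gaussian of variance at most $n'$, the \v{S}id\'ak--Khatri inequality (e.g.\ Lemma 8.9 of \cite{RothvossLectureNotes2016}) yields $\gamma_{n'}(\set) \ge \exp(-C_2 n')$ for an absolute constant $C_2$, where $\set := \{x \in \R^{n'} : \norm{\ma_S x}_\infty \le \rho\}$; we use $\log(\tfrac{m}{n'}+2) \ge \log 2$ to make this hold uniformly in $m$ and $n'$ (including $m \le n'$). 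We then draw $g \sim \Nor(\0_{n'},\id_{n'})$, redrawing if Assumption~\ref{assume:rs_big} fails, and run the recipe at the end of Section~\ref{sec:framework}: Proposition~\ref{prop:many_tight} applied to $\Cset\set$ (with the ambient space as the subspace), together with Lemmas~\ref{lem:many_nearly_tight} and~\ref{lem:sc_distsquare} and Proposition~\ref{prop:binary_search_lam} run on both $g$ and $-g$ with $f(x) := \norm{\ma_S x}_\infty$, which is symmetric, convex, nonnegative, and satisfies \eqref{eq:rangedef} with $\Theta = n'$ and $f(\0_{n'}) = 0$. Except with probability $\exp(-\Omega(n')) + \delta'$, this produces $\tx \in [-1,1]^{n'}$ with $\norm{\ma_S\tx}_\infty = O(\rho)$ and at least $\tfrac{\ctight n'}{4}$ coordinates $\tx_i \le -1+\beta$ (after negating if needed). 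We verify both properties exactly --- $\norm{\ma_S\tx}_\infty$ in $O(\nnz(\ma_S))$ time, and the count of near-tight coordinates directly --- and retry the entire level on failure.

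\textbf{Subproblems, rounding, and telescoping.} Proposition~\ref{prop:binary_search_lam} requires, for $T = O(\log\tfrac1\beta + \log\log\tfrac{n'}{\rho}) = O(\log\log m)$ values of $\lam \ge \tfrac{\rho}{8n'}$: evaluations of $f$ (which we compute exactly, trivially meeting the requested $\Theta(\beta^2)$ multiplicative accuracy) and approximate minimizations of $f_\lam(x) = \norm{\ma_S x}_\infty + \lam\norm{x-g}_2^2$ over $[-1,1]^{n'}$ to additive error $\tfrac{\lam\tau}{4} = \Theta(\lam n'\beta^2) \ge \Theta(\rho\beta^2)$. Writing $\norm{\ma_S x}_\infty = \max_y \inprod{y}{\tma_S x}$ over the probability simplex in $\R^{2m}$ with $\tma_S = \tvect{\ma_S}{-\ma_S}$, each $f_\lam$ is a box-simplex bilinear game with an added $\ell_2^2$ regularizer on the box player; a regularized adaptation of the coordinate-method box-simplex game solver of \cite{CarmonJST20} (mirroring Proposition~\ref{prop:modified_boxspec}) solves it to the required additive accuracy --- which amounts to $\Theta(1/\textup{poly}(\log n))$ relative accuracy, since $f_\lam$ has scale $\Theta(\rho)$ near its optimum for the smallest $\lam$ and grows proportionally in $\lam$ --- in $\bO(\nnz(\ma_S)\cdot\textup{poly}(\log n))$ time with failure probability $\delta'$. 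Given $\tx$ with near-tight set $\mathcal T$, we form a random $x$ by independently setting, for $j \in \mathcal T$, $x_j = -1$ with probability $\half$ and $x_j = 2\tx_j+1$ with probability $\half$, and $x_j = \tx_j$ otherwise; then $\E x = \tx$ and $|x_j - \tx_j| \le \beta$ everywhere. A Chernoff bound gives at least $\tfrac{|\mathcal T|}{3} \ge \tfrac{\ctight n'}{12}$ coordinates of $x$ equal to $-1$, and Hoeffding's inequality over the $2m$ rows gives $\norm{\ma_S(x-\tx)}_\infty = O(\beta\sqrt{n'\log m}) = O(\sqrt{n'})$ with probability at least $\half$ (again verified exactly and retried on failure). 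Freezing the coordinates of $x$ equal to $-1$ shrinks the live set to size $\le (1-\tfrac{\ctight}{12})n'$, so after $K = O(\log n)$ levels it has size $O(1)$, which we color arbitrarily in $\{\pm1\}$. The discrepancy spent at level $k$ is $\norm{\ma_S x}_\infty \le \norm{\ma_S\tx}_\infty + \norm{\ma_S(x-\tx)}_\infty = O(\sqrt{n_k\log(\tfrac{m}{n_k}+2)})$ with $n_k \le q^k n$ for $q := 1-\tfrac{\ctight}{12}$; since $\log(\tfrac{m}{q^k n}+2) \le \log(\tfrac{m}{n}+2) + O(k)$, summing this geometric-type series (using $\sum_{k\ge 0} q^{k/2}\sqrt k = O(1)$ and $\log(\tfrac{m}{n}+2) = \Omega(1)$) gives the claimed $\norm{\ma x}_\infty = O(\sqrt{n\log(\tfrac{m}{n}+2)})$.

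\textbf{Accounting and the main obstacle.} The per-level cost is dominated by the $O(\log\log m)$ regularized box-simplex game solves plus a constant number of exact evaluations and verifications, with an $O(\log\log n)$ overhead from retries; tallying over $O(\log n)$ levels and using $\nnz(\ma_S) \le \nnz(\ma)$ gives the total $\bO(\nnz(\ma)\log^5 n)$. For the success probability, in each of the $O(\log n)$ levels every verified-and-retried subroutine succeeds after $O(\log\log n)$ attempts except with probability $\le \tfrac{1}{C'\log n}$ (choosing $\delta'$ appropriately), so a union bound over the levels yields overall success probability at least $\half$. The step I expect to require the most care is the regularized box-simplex game solver: one must extend the coordinate/stochastic method of \cite{CarmonJST20} to handle the strongly convex $\lam\norm{x-g}_2^2$ term on the box side and certify that it attains additive error $\Theta(\rho\beta^2)$ within near-input-sparsity time and the stated failure probability. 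Everything else is a direct instantiation of Section~\ref{sec:framework} together with standard scalar concentration (\v{S}id\'ak--Khatri, Chernoff, Hoeffding) and the elementary telescoping above; in particular, no structural property of Spencer's problem beyond the Gaussian measure lower bound is used.
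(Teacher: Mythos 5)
Your correctness argument is essentially the same as the paper's — \v{S}id\'ak--Khatri for $\gamma_n(\set)$, Proposition~\ref{prop:binary_search_lam} with Lemmas~\ref{lem:many_nearly_tight} and~\ref{lem:sc_distsquare}, a randomized rounding step (Lemma~\ref{lem:RoundSpencer} in the paper), and telescoping — but there is a genuine gap in the runtime accounting: you assert without justification that the regularized box-simplex solver runs in $\bO(\nnz(\ma_S)\cdot\textup{poly}(\log n))$ time. It does not. The solver the paper actually uses (Lemma~\ref{lem:l2-l1_games}, a regularized variant of \cite{CarmonJST20}) runs in time
\[
O\Par{m + \frac{(k + n)\, n\, R^2 \log^2(m)\log(\tfrac 1 \delta)}{\eps^2}},
\]
where $k$ is the maximum column sparsity and $R$ the maximum row $\ell_2$ norm. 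Plugging in $R = \sqrt{n}$ and $\eps = \Theta(\rho\beta^2)$ leaves a term of order $(k+n)n \cdot \textup{polylog}$, which is \emph{not} input-sparsity in general: for a dense column $k = \Omega(m)$ and the bound degenerates to $\Omega(mn)$, and even with $k$ small, the $n^2$ term can dwarf $\nnz(\ma)$. You flag the solver as ``the step requiring the most care'' but treat its near-input-sparsity cost as something to be verified rather than something that simply fails without additional structure.

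What closes this gap in the paper is a sequence of problem-specific reductions from~\cite{JainSS23} that you omit. First, the proof reduces to the range $\tfrac{n}{\log^2 n} \le m \le n^2$ (small $m$ handled by \cite{AlweissLS21}, large $m$ by a random coloring). Second, columns with fewer than $n\log^{-2}(n)$ nonzeros are split off and colored by \cite{AlweissLS21}, which gives $n^2 = O(\nnz(\ma)\log^2 n)$ on the remainder. Third, columns with many nonzeros are colored separately, giving a per-column sparsity bound $k = O(\nnz(\ma)\cdot\tfrac{\log n}{n})$. With these, $(k+n)n = O(\nnz(\ma)\log^2 n)$, and the solver's runtime telescopes to $\bO(\nnz(\ma)\log^5 n)$. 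The same reduction also reduces to $O(\log\log n)$ partial coloring rounds rather than your $O(\log n)$, which avoids an extra $\log n$ factor in the total. Merely dropping empty rows and columns (giving $\nnz(\ma) \ge \max(m,n)$) is far weaker than these bounds and insufficient. In short, the approach is right, but the runtime claim is unsupported without the \cite{JainSS23} preprocessing reductions, which are load-bearing rather than optional.
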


As with the \cite{JainSS23} result, the failure probability may be boosted using independent repetitions. Our strategy to prove Theorem~\ref{thm:SpencerNearlyLinearTime} is the framework in Section~\ref{sec:framework}. We need the following standard fact about the Gaussian measure of the set of Spencer partial colorings, see e.g.\ Lemma 8.9,~\cite{RothvossLectureNotes2016}.

\begin{lemma}\label{lem:spencer_measure}
Let $\ma \in \R^{m \times n}$ have $\max_{i \in [m],j \in [n]} |\ma_{ij}| \leq 1$. Then $\gamma_n(\set) \ge \exp(-\frac n {40})$ where 
\begin{equation}\label{eq:spencer_body}
\set \defeq \Brace{x \in \R^n \mid \norm{\ma x}_{\infty} \leq \sqrt{8 n \log\Par{\frac m n + 2}} }.
\end{equation}
\end{lemma}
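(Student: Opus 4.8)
The goal is a lower bound $\gamma_n(\set) \ge \exp(-n/40)$ where $\set$ is the polytope of "partial colorings" cut out by the $m$ linear constraints $|\ai x| \le \sqrt{8n\log(m/n+2)}$ for $i \in [m]$. The natural tool is the \v{S}id\'ak--Khatri correlation inequality, which states that for a centered Gaussian vector $g \sim \Nor(\0_n,\id_n)$ and any collection of symmetric slabs $\{x : |\inprods{v_i}{x}| \le t_i\}$, the Gaussian measure of the intersection is at least the product $\prod_i \gamma_n(|\inprods{v_i}{x}| \le t_i)$ of the individual slab measures. So the plan is: (1) invoke \v{S}id\'ak--Khatri to reduce to a product of one-dimensional Gaussian tail estimates; (2) for each row $\ai$, lower bound $\Pr_{g}[|\inprods{\ai}{g}| \le \sqrt{8n\log(m/n+2)}]$; (3) multiply these $m$ bounds together and check the arithmetic gives $\exp(-n/40)$.

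For step (2): since $g \sim \Nor(\0_n,\id_n)$, the random variable $\inprods{\ai}{g}$ is a centered Gaussian with variance $\norms{\ai}_2^2 \le n$ (using $|\ma_{ij}| \le 1$ for all $i,j$, so each row has $\ell_2^2$ norm at most $n$). Hence $\inprods{\ai}{g}/\norms{\ai}_2$ is a standard Gaussian $Z$, and we need a lower bound on $\Pr[|Z| \le \sqrt{8n\log(m/n+2)}/\norms{\ai}_2] \ge \Pr[|Z| \le \sqrt{8\log(m/n+2)}]$. Equivalently, the failure probability of a single slab is $\Pr[|Z| > s]$ with $s = \sqrt{8\log(m/n+2)}$, which by the standard Gaussian tail bound $\Pr[|Z|>s] \le 2\exp(-s^2/2) = 2(m/n+2)^{-4} \le (m/n)^{-4} \cdot \text{const}$ — comfortably small. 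So each slab has measure at least $1 - 2(m/n+2)^{-4}$.

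For step (3): using $\log(1-t) \ge -2t$ for $t \le \half$ (valid here since $2(m/n+2)^{-4} \le \half$), we get
\begin{align*}
\log \gamma_n(\set) \ge \sum_{i=1}^m \log\Par{1 - 2\Par{\tfrac m n + 2}^{-4}} \ge -4m\Par{\tfrac m n + 2}^{-4} = -4n \cdot \frac{m/n}{(m/n+2)^4}.
\end{align*}
Writing $r = m/n \ge 0$, the function $r \mapsto r/(r+2)^4$ attains its maximum over $r \ge 0$ at $r = 2/3$, with value $(2/3)/(8/3)^4 = (2/3) \cdot (81/4096) = 54/4096 < 1/75$; hence $\log\gamma_n(\set) \ge -4n/75 > -n/18$, which is stronger than the claimed $-n/40$ — wait, I should double-check: $4n/75 \approx 0.053n$ and $n/40 = 0.025n$, so $0.053 > 0.025$, meaning $-4n/75 < -n/40$, i.e.\ this crude bound is \emph{not quite} strong enough. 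The fix is simply to use the sharper Gaussian tail bound $\Pr[|Z|>s] \le \exp(-s^2/2)$ (dropping the factor $2$, valid for $s \ge \sqrt{2\pi/4}$ or by a slightly more careful estimate, e.g.\ $\Pr[|Z| > s] \le \sqrt{2/\pi}\, s^{-1}e^{-s^2/2}$), giving per-slab failure $\le (m/n+2)^{-4}$ and hence $\log\gamma_n(\set) \ge -2m(m/n+2)^{-4} = -2n r/(r+2)^4 \ge -2n/75 \cdot \tfrac{3}{2} $... more simply $-2n \cdot \max_r r/(r+2)^4 \le -2n/37.9 < -n/18$; and being even more careful with constants (the constant $8$ in the log and the $s^{-1}$ decay factor) easily pushes this below $n/40$.

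The main obstacle here is purely \emph{bookkeeping of constants}: \v{S}id\'ak--Khatri reduces everything to a one-line calculation, but the specific numerical constant $1/40$ requires using a slightly sharp form of the Gaussian tail bound (including the polynomial prefactor, or the optimized cutoff) rather than the lazy $2e^{-s^2/2}$, and verifying that $\sup_{r\ge 0} r/(r+2)^4$ times the relevant constant lands under $1/40$. I would carry it out by fixing the tail bound $\Pr[|Z|>s] \le \frac{1}{s\sqrt{2\pi}}e^{-s^2/2}$, noting $s = \sqrt{8\log(r+2)} \ge \sqrt{8\log 2} > 2$, so the prefactor is at most $\frac{1}{2\sqrt{2\pi}} < \frac{1}{5}$, giving per-slab failure $\le \frac{1}{5}(r+2)^{-4}$, hence $\log\gamma_n(\set) \ge -\frac{2}{5}m(r+2)^{-4} = -\frac{2n}{5}\cdot\frac{r}{(r+2)^4} \ge -\frac{2n}{5}\cdot\frac{1}{75} = -\frac{2n}{375} > -\frac{n}{40}$, completing the proof.
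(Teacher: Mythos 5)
Your approach is the same one the paper relies on: the paper does not prove this lemma itself but cites it as a standard consequence of the \v{S}id\'ak--Khatri correlation inequality (Lemma~8.9 of Rothvoss's lecture notes), which is exactly what you do. The structure and constants are fine.

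One slip worth flagging: your stated tail bound $\Pr[|Z|>s]\le\frac{1}{s\sqrt{2\pi}}e^{-s^2/2}$ is the \emph{one-sided} estimate; the two-sided bound carries an extra factor of $2$, i.e.\ $\Pr[|Z|>s]\le\frac{2}{s\sqrt{2\pi}}e^{-s^2/2}$. With $s\ge\sqrt{8\log 2}>2$ this gives per-slab failure at most $\frac{1}{\sqrt{2\pi}}(r+2)^{-4}<\frac{2}{5}(r+2)^{-4}$ rather than your $\frac15(r+2)^{-4}$, and the final bound becomes $\log\gamma_n(\set)\ge -\frac{4n}{375}$, which is still comfortably larger than $-\frac{n}{40}$. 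So the conclusion survives, but you should correct the tail bound. You should also note explicitly that rows with $\ai=\0_n$ contribute trivially (probability $1$), so normalizing by $\norms{\ai}_2$ is only applied to nonzero rows.
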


To handle error induced by near-tight constraints, we use a randomized rounding procedure.

\begin{lemma}\label{lem:RoundSpencer}For $\set$ in \eqref{eq:spencer_body} and universal constants $c, C$, suppose we have $x \in C\set \cap [-1,1]^n$ with 
\[\min(|1-x_i|, |1+x_i|) \le \frac{1}{\sqrt{\log(m)}} \textup{ for all } i \in S,\]
where $S \subseteq [n]$ has $|S| \ge c n$. We can compute $x' \in C'\set \cap [-1,1]^n$ for a universal constant $C'$ with at least $\frac{cn}{3}$ coordinates in $\{-1,1\}$ in $O(n)$ time, with high probability in $n$.
\end{lemma}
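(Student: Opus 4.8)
The plan is to mimic the randomized rounding used in the (excised) Lemma on sparsification, but adapted to the $\ell_\infty$ discrepancy body of Spencer's setting. First I would restrict attention to the set $S$ of coordinates that are already near-tight: for each $i\in S$, write $x_i = s_i(1-\delta_i)$ where $s_i\in\{-1,+1\}$ is the nearby sign and $\delta_i \in [0,1/\sqrt{\log m}]$. For $i\notin S$ we leave the coordinate untouched, so $x'_i = x_i$ there. For $i\in S$ I would independently round $x'_i$ to $s_i$ with probability $\thalf$ and to $s_i(1 - 2\delta_i) = 2x_i - s_i$ with probability $\thalf$; this is mean-preserving ($\E x'_i = x_i$) and keeps $x'_i\in[-1,1]$. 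The number of coordinates rounded to $\pm 1$ among $S$ is a sum of $|S|\ge cn$ independent Bernoulli$(\thalf)$ variables, so by a Chernoff bound at least $\frac{cn}{3}$ of them land in $\{-1,1\}$ with probability $1-\exp(-\Omega(n))$, which is high probability in $n$.

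Next I would control the discrepancy increase $\norm{\ma(x'-x)}_\infty$. The vector $x'-x$ is supported on $S$, and on each such coordinate it equals $\pm\delta_i$ (a random sign times $\delta_i$) independently across $i$. Hence for each row $j\in[m]$, $[\ma(x'-x)]_j = \sum_{i\in S}\ma_{ji}\,\xi_i$ where the $\xi_i$ are independent, mean-zero, bounded by $\delta_i \le 1/\sqrt{\log m}$ in magnitude, so each summand has $|\ma_{ji}\xi_i|\le 1/\sqrt{\log m}$ (using $|\ma_{ji}|\le 1$) and the sum has variance at most $\sum_{i\in S}\ma_{ji}^2\delta_i^2 \le n/\log m$ (using $|S|\le n$, $\ma_{ji}^2\le 1$, $\delta_i^2 \le 1/\log m$). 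By a scalar Bernstein/Hoeffding bound, $|[\ma(x'-x)]_j| = O(\sqrt{(n/\log m)\log m}) = O(\sqrt n)$ with probability $1 - m^{-\Omega(1)}$, and a union bound over the $m$ rows gives $\norm{\ma(x'-x)}_\infty = O(\sqrt n)$ with high probability in $n$ (since $m = \mathrm{poly}(n)$ or, more carefully, since the bound only improves when $m$ is larger, using the $\log m$ in the exponent of the tail). Combining with $x\in C\set$, i.e.\ $\norm{\ma x}_\infty \le C\sqrt{8n\log(\frac mn+2)}$, the triangle inequality yields $\norm{\ma x'}_\infty \le (C + O(1))\sqrt{8n\log(\frac mn + 2)}$, so $x'\in C'\set$ for a suitable universal $C'$; and clearly $x'\in[-1,1]^n$ by construction.

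The runtime is $O(n)$ to sample the rounding (one coin per coordinate in $S$); note we do \emph{not} need to evaluate $\ma x'$ to certify the bound — the bound holds with high probability purely by the concentration argument — so no $\nnz(\ma)$ cost is incurred here. The main obstacle I anticipate is getting the row-wise tail bound to hold uniformly over all $m$ rows when $m$ is large relative to $n$: the resolution is exactly that the variance proxy $n/\log m$ shrinks as $m$ grows, so the sub-Gaussian tail $\exp(-t^2/(2\sigma^2))$ at $t = \Theta(\sqrt n)$ beats the union bound over $m$ rows — more precisely one wants $\sigma^2 \lesssim n/\log m$ so that a deviation of $O(\sqrt n)$ has probability $\le m^{-\Omega(1)}$, which is precisely why the hypothesis imposes the threshold $1/\sqrt{\log m}$ on the near-tight coordinates rather than a constant. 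A secondary minor point is handling the case $|S|$ only slightly above $cn$: the Chernoff bound on a Binomial$(|S|,\thalf)$ comfortably gives $\ge \frac{|S|}{2}(1-o(1)) \ge \frac{cn}{3}$ coordinates rounded to $\pm1$, so there is ample slack.
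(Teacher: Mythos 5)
Your proposal is correct and essentially identical to the paper's proof: the same mean-preserving rounding $x_i' \sim \{s_i, 2x_i - s_i\}$ uniformly on $S$, the same Chernoff bound for the count of $\pm 1$ coordinates, and the same row-wise Hoeffding argument exploiting that $|x_i' - x_i| \le 1/\sqrt{\log m}$ makes the per-row tail beat the union bound over $m$ rows. You spell out the variance calculation in more detail than the paper (which just says ``Hoeffding's inequality yields the claim''), but the idea is the same.
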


\begin{proof}

Set $x'_i = x_i$ for $i \not \in S$ and, for $i \in S$, sample uniformly at random $x'_i \sim \{1, 2x_i-1\}$ if $x_i > 0$ and $x'_i \sim \{-1,2x_i+1\}$ if $x_i < 0$. Then $\E[x'_i] = x_i$ and by the triangle inequality, \[\norm{\sum_{i \in [n]} x'_i A_i}_{\infty} \le  \norm{\sum_{i \in S} (x'_i - x_i) A_i}_{\infty} + \norm{\sum_{i \in [n]} x_i A_i}_{\infty}.\]  
By a Chernoff bound, at least $\frac 1 3$ of the coordinates in $S$ will be rounded to $\pm 1$ with probability $1 - \exp(-\Omega(n))$, and clearly $x' \in [-1, 1]^n$. Moreover, the $x'_i - x_i$ are Rademacher random variables scaled by a factor of at most $\frac{1}{\sqrt{\log(m)}}$, so Hoeffding's inequality yields the claim $x' \in C' \set$.
\end{proof}

Finally, we give an efficient algorithm based on stochastic mirror descent for solving subproblems required by our framework. The next result follows straightforwardly from arguments in \cite{CarmonJST20} (building on \cite{ClarksonHW12}), but we summarize how to obtain it for completeness in Appendix~\ref{app:modified_boxspec}.

\begin{restatable}{lemma}{restategames}\label{lem:l2-l1_games}
Let $\ma \in \R^{m \times n}$ have $\le k$ nonzero entries in every column and $\max_{i \in [m]} \|\ma_{i:}\|_2 \le R$, and let $v \in \R^n$, $\lam, \eps \ge 0$, and $\delta \in (0, 1)$. There is an algorithm which runs in time 
\[O\Par{m + \frac{(k + n) n R^2\log^2(m)\log (\frac 1 \delta)}{\eps^2}},\]
and returns $x \in [-1, 1]^n$ such that with probability $\ge 1 - \delta$,
\[\norm{\ma x}_\infty + \lam\norm{x - v}_2^2 \le \min_{x' \in [-1, 1]^n} \norm{\ma x'}_\infty + \lam\norm{x' - v}_2^2 + \eps.\]
\end{restatable}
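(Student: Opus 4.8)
The plan is to reduce the regularized $\ell_\infty$-minimization problem over the box to a bilinear minimax game in the $\ell_1$--$\ell_2$ (probability simplex cross box) geometry, and then apply the stochastic variance-reduced mirror-prox framework of \cite{CarmonJST20}. First I would rewrite the objective using the standard dual representation of the $\ell_\infty$ norm: since $\max_{i \in [m]} |\ma_{i:} x| = \max_{p \in \Delta^{2m}} \sum_i p_i (\sigma_i \ma_{i:} x)$ over the simplex on $2m$ outcomes indexed by $(i, \text{sign})$, we have
\[
\min_{x \in [-1,1]^n} \norm{\ma x}_\infty + \lam\norm{x - v}_2^2 = \min_{x \in [-1,1]^n} \max_{p \in \Delta^{2m}} \inprod{p}{\tma x} + \lam \norm{x - v}_2^2,
\]
where $\tma \in \R^{2m \times n}$ stacks $\ma$ and $-\ma$. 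This is a convex-concave saddle-point problem: the $x$-block lives in the box $[-1,1]^n$ (an $\ell_2$/$\ell_\infty$ domain, on which $\half\norm{\cdot}_2^2$ is a suitable $1$-strongly convex regularizer), the $p$-block lives in the simplex (an $\ell_1$ domain, on which negative entropy is the regularizer), and the coupling operator is linear with the extra term $\lam\norm{x-v}_2^2$ being a strongly convex, smooth addition to the $x$-block that only helps.

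Next I would verify that this game fits the template analyzed in \cite{CarmonJST20} (building on \cite{ClarksonHW12}): the gradient operator consists of the block $(\tma^\top p + 2\lam(x - v), -\tma x)$, and one needs an area-convexity / relative-Lipschitz estimate controlling the operator in the chosen norm pair, namely the $\ell_2$ norm on $x$ and $\ell_1$ norm on $p$. The relevant ``Lipschitz constant'' is governed by $\max_i \norm{\ma_{i:}}_2 \le R$ (this bounds $\norm{\tma x}_\infty$ in terms of $\norm{x}_2$ appropriately and $\norm{\tma^\top p}_2$ in terms of $\norm p_1$), and the domain sizes are $O(n)$ (squared $\ell_2$-radius of the box) and $O(\log m)$ (entropy range over $\Delta^{2m}$). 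Plugging these into the iteration complexity of the variance-reduced method gives $\tO(R^2 n / \eps^2)$ gradient-estimator calls after an initial $O(\ma\text{-pass})$ setup; the $\log^2(m)$ comes from the entropy regularizer's range and a standard high-probability boosting over $\log(1/\delta)$ repetitions of the stochastic estimator. Each stochastic step samples a column of $\ma$ (or a simplex coordinate) and costs $O(k + n)$ — $k$ for touching the nonzeros in a sampled column and $n$ for the dense $x$-update and the $\lam\norm{x-v}_2^2$ gradient — yielding the stated runtime $O\!\left(m + \frac{(k+n)nR^2 \log^2(m)\log(1/\delta)}{\eps^2}\right)$. The additive $m$ absorbs reading $\ma$ and forming sampling distributions.

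Finally I would extract the primal point: the mirror-descent/mirror-prox iterate average $\bar x$ is feasible ($\bar x \in [-1,1]^n$ by convexity of the box), and the standard duality-gap guarantee of the game solver certifies that $\max_{p} \inprod{p}{\tma \bar x} + \lam\norm{\bar x - v}_2^2 \le \opt + \eps$, which is exactly $\norm{\ma \bar x}_\infty + \lam\norm{\bar x - v}_2^2 \le \opt + \eps$ since the inner max over the simplex recovers the $\ell_\infty$ norm; here one must check that the gap bound on the bilinear coupling term transfers cleanly in the presence of the extra strongly convex $x$-term, which it does since that term is handled exactly (not stochastically) inside the $x$-proximal step. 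The main obstacle I anticipate is bookkeeping rather than conceptual: carefully matching the parameters $(R, n, \log m)$ to the precise strong-convexity/smoothness constants required by the \cite{CarmonJST20} analysis, and confirming that the per-iteration cost is genuinely $O(k+n)$ — in particular that the $\lam(x-v)$ term does not force a dense operation more expensive than the $x$-update already incurs, and that sampling can be done in $O(1)$ amortized time after $O(m)$ preprocessing. Since the excerpt defers this to Appendix~\ref{app:modified_boxspec}, I would simply cite the relevant lemmas of \cite{CarmonJST20} for the iteration count and estimator variance, and restrict the new content to setting up the game and verifying the norm/domain parameters.
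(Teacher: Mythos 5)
Your proposal matches the paper's approach: both recast the objective as a box-simplex saddle-point problem with $\lam\norm{x-v}_2^2$ treated as a composite term, apply the stochastic coordinate-sampling primal-dual method of \cite{CarmonJST20} in the $\ell_2$-$\ell_1$ local-norm geometry, and arrive at an iteration count $O(nR^2\log m/\eps^2)$ with per-iteration cost $O(n + k\log m)$, boosted to high probability via Markov and independent repetitions. The paper additionally rescales the box to $[-1/\sqrt n, 1/\sqrt n]^n$, writes out the explicit row/column sampling estimator with $L = O(R\sqrt n)$ via Lemma 21 of \cite{CarmonJST20}, and uses the sublinear (plain stochastic) rather than variance-reduced variant; also note that the second $\log m$ factor comes from maintaining the $y$-sampling data structure per iteration, not from the $\log(1/\delta)$ boosting as your accounting suggested.
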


Now we have all the tools to prove Theorem~\ref{thm:SpencerNearlyLinearTime}.

\begin{proof}[Proof of Theorem~\ref{thm:SpencerNearlyLinearTime}]
We first state several simplifying reductions from \cite{JainSS23}. First, the proof of Theorem 1.1 shows it suffices to prove Theorem~\ref{thm:SpencerNearlyLinearTime} when $\frac n {\log^2 n} \le m \le n^2$; for small $m$, \cite{AlweissLS21} proves the result, and for large $m$, a random coloring suffices. Next, the reduction from Theorem 1.1 to Theorem 2.1 in \cite{JainSS23} uses \cite{AlweissLS21} to handle columns with less than $n \cdot \log^{-2}(n)$ nonzero entries, so after applying this reduction we may assume $n^2 = O(\nnz(\ma) \cdot \log^2(n))$. The reduction from Theorem 2.1 to Theorem 2.2 in~\cite{JainSS23} shows that it suffices to solve the partial coloring variant of the problem $O(\log\log n)$ many times, i.e.\ to produce a point in $O(1)\set \cap [-1, 1]^n$ with $\Omega(n)$ tight constraints, where $\set$ is defined as in \eqref{eq:spencer_body}. Finally, the proof of Theorem 2.2 in \cite{AlweissLS21} shows that columns with a large number of nonzero entries may be colored separately, so we can reduce to no column having more than $k = O(\nnz(\ma) \cdot \frac{\log(n)}{n})$ nonzero entries. 

We now solve the partial coloring problem under the restrictions that the maximum column sparsity is $k = O(\nnz(\ma) \cdot \frac{\log(n)}{n})$, and $n^2 = O(\nnz(\ma) \cdot \log^2(n))$.
We apply Proposition~\ref{prop:binary_search_lam} with $\Theta = n$, $\beta = \frac {1} {\sqrt{\log m}}$, and $\rho = \Cset\sqrt{8n\log(\frac m n + 2)}$, where $\Cset$ is the constant from Proposition~\ref{prop:many_tight}.
Proposition~\ref{prop:binary_search_lam} then requires solving $O(\log \log n)$ subproblems of the form 
\[\arg \min_{x \in [-1, 1]^m}  \|\ma x \|_\infty + \lam\norm{x - g}_2^2  \]
to error $\eps \defeq \Omega(\frac {\sqrt n} {\log n})$. We apply Lemma~\ref{lem:l2-l1_games} to do so with error probability $O(\log^{-1}(n))$ in time 
\[O\Par{n^2 + (k + n)n \log^3(n)\log\log(n)} = \bO\Par{\nnz(\ma) \cdot \log^5(n)}\]
where we use $R = \sqrt n$ and our assumed bounds on $k$ and $n^2$. A union bound over $\bO(1)$ subproblems (with the overhead of $\bO(1)$ partial coloring phases) gives the failure probability. The partial coloring follows by applying Lemma~\ref{lem:spencer_measure}, Proposition~\ref{prop:many_tight}, Lemma~\ref{lem:many_nearly_tight}, and Lemma~\ref{lem:sc_distsquare} to show the output of Proposition~\ref{prop:binary_search_lam} has $\Omega(n)$ nearly-tight constraints, and then using Lemma~\ref{lem:RoundSpencer} to round these nearly-tight constraints to tight ones. Finally, the runtime follows since there are $\bO(1)$ calls to Lemma~\ref{lem:l2-l1_games}.

\end{proof} 
\section*{Acknowledgements}
AJ and KT thank Aaron Sidford for several helpful conversations, and Yang P.\ Liu for the suggestion to use dynamic data structures for our rounding procedure in Section~\ref{sec:degree}. VR thanks Thomas Rothvoss for helpful discussions about the Gaussian measure lower bound in Section~\ref{ssec:reduction}.

\newpage
\bibliographystyle{alpha}

\newcommand{\etalchar}[1]{$^{#1}$}
\begin{thebibliography}{AAGM15}

\bibitem[AAGM15]{AsymptoticGeometricAnalysis-Book2015}
Shiri Artstein-Avidan, Apostolos Giannopoulos, and Vitali~D. Milman.
\newblock {\em Asymptotic geometric analysis. {P}art {I}}, volume 202 of {\em
  Mathematical Surveys and Monographs}.
\newblock American Mathematical Society, Providence, RI, 2015.

\bibitem[AJJ{\etalchar{+}}22]{AssadiJJST22}
Sepehr Assadi, Arun Jambulapati, Yujia Jin, Aaron Sidford, and Kevin Tian.
\newblock Semi-streaming bipartite matching in fewer passes and optimal space.
\newblock In {\em Proceedings of the 2022 {ACM-SIAM} Symposium on Discrete
  Algorithms, {SODA} 2022}, pages 627--669. {SIAM}, 2022.

\bibitem[AL17]{Allen-ZhuL17}
Zeyuan Allen{-}Zhu and Yuanzhi Li.
\newblock Follow the compressed leader: Faster online learning of eigenvectors
  and faster {MMWU}.
\newblock In {\em Proceedings of the 34th International Conference on Machine
  Learning, {ICML} 2017}, pages 116--125, 2017.

\bibitem[ALS21]{AlweissLS21}
Ryan Alweiss, Yang~P. Liu, and Mehtaab Sawhney.
\newblock Discrepancy minimization via a self-balancing walk.
\newblock In {\em {STOC} '21: 53rd Annual {ACM} {SIGACT} Symposium on Theory of
  Computing, 2021}, pages 14--20. {ACM}, 2021.

\bibitem[AN19]{AbrahamN19}
Ittai Abraham and Ofer Neiman.
\newblock Using petal-decompositions to build a low stretch spanning tree.
\newblock {\em {SIAM} J. Comput.}, 48(2):227--248, 2019.

\bibitem[Ban10]{Bansal10}
Nikhil Bansal.
\newblock Constructive algorithms for discrepancy minimization.
\newblock In {\em 51th Annual {IEEE} Symposium on Foundations of Computer
  Science, {FOCS} 2010, October 23-26, 2010}, pages 3--10. {IEEE} Computer
  Society, 2010.

\bibitem[BBN13]{BaesBN13}
Michel Baes, Michael B{\"{u}}rgisser, and Arkadi Nemirovski.
\newblock A randomized mirror-prox method for solving structured large-scale
  matrix saddle-point problems.
\newblock {\em {SIAM} Journal on Optimization}, 23(2):934--962, 2013.

\bibitem[BCL94]{BallCL94}
Keith Ball, Eric~A. Carlen, and Elliott~H. Lieb.
\newblock Sharp uniform convexity and smoothness estimates for trace norms.
\newblock {\em Inventiones mathematicae}, 115(1):463--482, 1994.

\bibitem[BDG19]{BansalDG19}
Nikhil Bansal, Daniel Dadush, and Shashwat Garg.
\newblock An algorithm for {K}oml{\'{o}}s conjecture matching {B}anaszczyk's
  bound.
\newblock {\em {SIAM} J. Comput.}, 48(2):534--553, 2019.

\bibitem[BDGL19]{BansalDGL19}
Nikhil Bansal, Daniel Dadush, Shashwat Garg, and Shachar Lovett.
\newblock The {G}ram-{S}chmidt walk: {A} cure for the {B}anaszczyk blues.
\newblock {\em Theory Comput.}, 15:1--27, 2019.

\bibitem[BJM23]{BansalJM23}
Nikhil Bansal, Haotian Jiang, and Raghu Meka.
\newblock Resolving matrix {S}pencer conjecture up to poly-logarithmic rank.
\newblock In {\em {STOC} '23: 55th Annual {ACM} {SIGACT} Symposium on Theory of
  Computing, 2023}. {ACM}, 2023.

\bibitem[BJSS20]{BansalJ0S20}
Nikhil Bansal, Haotian Jiang, Sahil Singla, and Makrand Sinha.
\newblock Online vector balancing and geometric discrepancy.
\newblock In {\em Proccedings of the 52nd Annual {ACM} {SIGACT} Symposium on
  Theory of Computing, {STOC} 2020}, pages 1139--1152. {ACM}, 2020.

\bibitem[BLV22]{BansalLV22}
Nikhil Bansal, Aditi Laddha, and Santosh~S. Vempala.
\newblock A unified approach to discrepancy minimization.
\newblock In {\em Approximation, Randomization, and Combinatorial Optimization.
  Algorithms and Techniques, {APPROX/RANDOM} 2022}, volume 245 of {\em LIPIcs},
  pages 1:1--1:22. Schloss Dagstuhl - Leibniz-Zentrum f{\"{u}}r Informatik,
  2022.

\bibitem[BSS14]{BatsonSS14}
Joshua~D. Batson, Daniel~A. Spielman, and Nikhil Srivastava.
\newblock Twice-{R}amanujan sparsifiers.
\newblock {\em {SIAM} Rev.}, 56(2):315--334, 2014.

\bibitem[CDST19]{CarmonDST19}
Yair Carmon, John~C. Duchi, Aaron Sidford, and Kevin Tian.
\newblock A rank-1 sketch for matrix multiplicative weights.
\newblock In {\em Conference on Learning Theory, {COLT} 2019}, pages 589--623,
  2019.

\bibitem[CGP{\etalchar{+}}18]{ChuGPSSW18}
Timothy Chu, Yu~Gao, Richard Peng, Sushant Sachdeva, Saurabh Sawlani, and
  Junxing Wang.
\newblock Graph sparsification, spectral sketches, and faster resistance
  computation, via short cycle decompositions.
\newblock In {\em 59th {IEEE} Annual Symposium on Foundations of Computer
  Science, {FOCS} 2018}, pages 361--372. {IEEE} Computer Society, 2018.

\bibitem[Cha01]{Chazelle01}
Bernard Chazelle.
\newblock {\em The Discrepancy Method: Randomness and Complexity}.
\newblock Cambridge University Press, 2001.

\bibitem[CHW12]{ClarksonHW12}
Kenneth~L. Clarkson, Elad Hazan, and David~P. Woodruff.
\newblock Sublinear optimization for machine learning.
\newblock {\em J. {ACM}}, 59(5):23:1--23:49, 2012.

\bibitem[CJST19]{CarmonJST19}
Yair Carmon, Yujia Jin, Aaron Sidford, and Kevin Tian.
\newblock Variance reduction for matrix games.
\newblock In {\em Advances in Neural Information Processing Systems 32: Annual
  Conference on Neural Information Processing Systems 2019}, pages
  11377--11388, 2019.

\bibitem[CJST20]{CarmonJST20}
Yair Carmon, Yujia Jin, Aaron Sidford, and Kevin Tian.
\newblock Coordinate methods for matrix games.
\newblock In {\em 61st {IEEE} Annual Symposium on Foundations of Computer
  Science, {FOCS} 2020}, pages 283--293. {IEEE}, 2020.

\bibitem[CKP{\etalchar{+}}16]{CohenKPPSV16}
Michael~B. Cohen, Jonathan~A. Kelner, John Peebles, Richard Peng, Aaron
  Sidford, and Adrian Vladu.
\newblock Faster algorithms for computing the stationary distribution,
  simulating random walks, and more.
\newblock In {\em {IEEE} 57th Annual Symposium on Foundations of Computer
  Science, {FOCS} 2016}, pages 583--592. {IEEE} Computer Society, 2016.

\bibitem[CKST19]{CarlsonKST19}
Charles Carlson, Alexandra Kolla, Nikhil Srivastava, and Luca Trevisan.
\newblock Optimal lower bounds for sketching graph cuts.
\newblock In {\em Proceedings of the Thirtieth Annual {ACM-SIAM} Symposium on
  Discrete Algorithms, {SODA} 2019}, pages 2565--2569. {SIAM}, 2019.

\bibitem[Coh16]{Cohen16}
Michael~B. Cohen.
\newblock Ramanujan graphs in polynomial time.
\newblock In Irit Dinur, editor, {\em {IEEE} 57th Annual Symposium on
  Foundations of Computer Science, {FOCS} 2016}, pages 276--281. {IEEE}
  Computer Society, 2016.

\bibitem[DJR22]{DJR22}
Daniel Dadush, Haotian Jiang, and Victor Reis.
\newblock A new framework for matrix discrepancy: Partial coloring bounds via
  mirror descent.
\newblock In {\em Proceedings of the 54th Annual ACM SIGACT Symposium on Theory
  of Computing}, STOC 2022, page 649–658. Association for Computing
  Machinery, 2022.

\bibitem[DNTT18]{DadushNTT18}
Daniel Dadush, Aleksandar Nikolov, Kunal Talwar, and Nicole
  Tomczak{-}Jaegermann.
\newblock Balancing vectors in any norm.
\newblock In {\em 59th {IEEE} Annual Symposium on Foundations of Computer
  Science, {FOCS} 2018}, pages 1--10. {IEEE} Computer Society, 2018.

\bibitem[DSW22]{DengSW22}
Yichuan Deng, Zhao Song, and Omri Weinstein.
\newblock Discrepancy minimization in input-sparsity time.
\newblock {\em CoRR}, abs/2210.12468, 2022.

\bibitem[ES18]{EldanS18}
Ronen Eldan and Mohit Singh.
\newblock Efficient algorithms for discrepancy minimization in convex sets.
\newblock {\em Random Struct. Algorithms}, 53(2):289--307, 2018.

\bibitem[GHM15]{GarberHM15}
Dan Garber, Elad Hazan, and Tengyu Ma.
\newblock Online learning of eigenvectors.
\newblock In {\em Proceedings of the 32nd International Conference on Machine
  Learning, {ICML} 2015}, pages 560--568, 2015.

\bibitem[HSS14]{HarveySS14}
Nicholas J.~A. Harvey, Roy Schwartz, and Mohit Singh.
\newblock Discrepancy without partial colorings.
\newblock In {\em Approximation, Randomization, and Combinatorial Optimization.
  Algorithms and Techniques, {APPROX/RANDOM} 2014}, volume~28 of {\em LIPIcs},
  pages 258--273. Schloss Dagstuhl - Leibniz-Zentrum f{\"{u}}r Informatik,
  2014.

\bibitem[HSSZ19]{HarshawSSZ19}
Christopher Harshaw, Fredrik S{\"{a}}vje, Daniel~A. Spielman, and Peng Zhang.
\newblock Balancing covariates in randomized experiments using the
  {G}ram-{S}chmidt walk.
\newblock {\em CoRR}, abs/1911.03071, 2019.

\bibitem[JS21]{JambulapatiS21}
Arun Jambulapati and Aaron Sidford.
\newblock Ultrasparse ultrasparsifiers and faster laplacian system solvers.
\newblock In {\em Proceedings of the 2021 {ACM-SIAM} Symposium on Discrete
  Algorithms, {SODA} 2021}, pages 540--559. {SIAM}, 2021.

\bibitem[JSS23]{JainSS23}
Vishesh Jain, Ashwin Sah, and Mehtaab Sawhney.
\newblock Spencer's theorem in nearly input-sparsity time.
\newblock In {\em Proceedings of the 2023 {ACM-SIAM} Symposium on Discrete
  Algorithms, {SODA} 2023}, pages 3946--3958. {SIAM}, 2023.

\bibitem[JT23]{JambulapatiT23}
Arun Jambulapati and Kevin Tian.
\newblock Revisiting area convexity: Faster box-simplex games and spectrahedral
  generalizations.
\newblock {\em CoRR}, abs/2303.15627, 2023.

\bibitem[KLOS14]{KelnerLOS14}
Jonathan~A. Kelner, Yin~Tat Lee, Lorenzo Orecchia, and Aaron Sidford.
\newblock An almost-linear-time algorithm for approximate max flow in
  undirected graphs, and its multicommodity generalizations.
\newblock In {\em Proceedings of the Twenty-Fifth Annual {ACM-SIAM} Symposium
  on Discrete Algorithms, {SODA} 2014}, pages 217--226. {SIAM}, 2014.

\bibitem[KMP11]{KoutisMP11}
Ioannis Koutis, Gary~L. Miller, and Richard Peng.
\newblock A nearly-$m \log n$ time solver for {SDD} linear systems.
\newblock In {\em {IEEE} 52nd Annual Symposium on Foundations of Computer
  Science, {FOCS} 2011}, pages 590--598, 2011.

\bibitem[KMST10]{KollaMST10}
Alexandra Kolla, Yury Makarychev, Amin Saberi, and Shang{-}Hua Teng.
\newblock Subgraph sparsification and nearly optimal ultrasparsifiers.
\newblock In {\em Proceedings of the 42nd {ACM} Symposium on Theory of
  Computing, {STOC} 2010}, pages 57--66. {ACM}, 2010.

\bibitem[KV05]{KalaiV05}
Adam~Tauman Kalai and Santosh~S. Vempala.
\newblock Efficient algorithms for online decision problems.
\newblock {\em J. Comput. Syst. Sci.}, 71(3):291--307, 2005.

\bibitem[Lar23]{Larsen23}
Kasper~Green Larsen.
\newblock Fast discrepancy minimization with hereditary guarantees.
\newblock In Nikhil Bansal and Viswanath Nagarajan, editors, {\em Proceedings
  of the 2023 {ACM-SIAM} Symposium on Discrete Algorithms, {SODA} 2023,
  Florence, Italy, January 22-25, 2023}, pages 276--289. {SIAM}, 2023.

\bibitem[LM15]{LovettM15}
Shachar Lovett and Raghu Meka.
\newblock Constructive discrepancy minimization by walking on the edges.
\newblock {\em {SIAM} J. Comput.}, 44(5):1573--1582, 2015.

\bibitem[LS17]{LeeS17}
Yin~Tat Lee and He~Sun.
\newblock An sdp-based algorithm for linear-sized spectral sparsification.
\newblock In Hamed Hatami, Pierre McKenzie, and Valerie King, editors, {\em
  Proceedings of the 49th Annual {ACM} {SIGACT} Symposium on Theory of
  Computing, {STOC} 2017, Montreal, QC, Canada, June 19-23, 2017}, pages
  678--687. {ACM}, 2017.

\bibitem[LS18]{LeeS18}
Yin~Tat Lee and He~Sun.
\newblock Constructing linear-sized spectral sparsification in almost-linear
  time.
\newblock {\em {SIAM} J. Comput.}, 47(6):2315--2336, 2018.

\bibitem[LSY19]{LiuSY19}
Yang~P. Liu, Sushant Sachdeva, and Zejun Yu.
\newblock Short cycles via low-diameter decompositions.
\newblock In {\em Proceedings of the Thirtieth Annual {ACM-SIAM} Symposium on
  Discrete Algorithms, {SODA} 2019}, pages 2602--2615. {SIAM}, 2019.

\bibitem[Mat99]{Matousek99}
Jiri Matousek.
\newblock {\em Geometric Discrepancy: An Illustrated Guide}.
\newblock Springer-Verlag, 1999.

\bibitem[MM15]{MuscoM15}
Cameron Musco and Christopher Musco.
\newblock Randomized block {K}rylov methods for stronger and faster approximate
  singular value decomposition.
\newblock In {\em Advances in Neural Information Processing Systems 28: Annual
  Conference on Neural Information Processing Systems 2015}, pages 1396--1404,
  2015.

\bibitem[Nes07]{Nesterov07}
Yurii Nesterov.
\newblock Smoothing technique and its applications in semidefinite
  optimization.
\newblock {\em Mathematical Programming, Series A}, 110:245--259, 2007.

\bibitem[Pen16]{Peng16}
Richard Peng.
\newblock Approximate undirected maximum flows in
  \emph{O}(\emph{m}polylog(\emph{n})) time.
\newblock In {\em Proceedings of the Twenty-Seventh Annual {ACM-SIAM} Symposium
  on Discrete Algorithms, {SODA} 2016}, pages 1862--1867. {SIAM}, 2016.

\bibitem[Pis89]{pisier_1989}
Gilles Pisier.
\newblock {\em The Volume of Convex Bodies and Banach Space Geometry}.
\newblock Cambridge Tracts in Mathematics. Cambridge University Press, 1989.

\bibitem[PS22]{PengS22}
Richard Peng and Zhuoqing Song.
\newblock Sparsified block elimination for directed laplacians.
\newblock In {\em {STOC} '22: 54th Annual {ACM} {SIGACT} Symposium on Theory of
  Computing, 2022}, pages 557--567. {ACM}, 2022.

\bibitem[PV23]{PV23}
Lucas Pesenti and Adrian Vladu.
\newblock Discrepancy minimization via regularization.
\newblock In {\em Proceedings of the 2023 {ACM-SIAM} Symposium on Discrete
  Algorithms, {SODA} 2023}, pages 1734--1758. {SIAM}, 2023.

\bibitem[PY19]{ParterY19}
Merav Parter and Eylon Yogev.
\newblock Optimal short cycle decomposition in almost linear time.
\newblock In {\em 46th International Colloquium on Automata, Languages, and
  Programming, {ICALP} 2019}, volume 132 of {\em LIPIcs}, pages 89:1--89:14,
  2019.

\bibitem[Rot16]{RothvossLectureNotes2016}
Thomas Rothvoss.
\newblock Lecture notes on integer optimization and lattices, 2016.

\bibitem[Rot17]{Rothvoss14}
Thomas Rothvoss.
\newblock Constructive discrepancy minimization for convex sets.
\newblock {\em {SIAM} J. Comput.}, 46(1):224--234, 2017.

\bibitem[RR20]{RR20}
Victor Reis and Thomas Rothvoss.
\newblock Linear size sparsifier and the geometry of the operator norm ball.
\newblock In {\em Proceedings of the 2020 {ACM-SIAM} Symposium on Discrete
  Algorithms, {SODA} 2020}, pages 2337--2348. {SIAM}, 2020.

\bibitem[RR22]{RR22}
Victor Reis and Thomas Rothvoss.
\newblock Vector balancing in {L}ebesgue spaces.
\newblock {\em Random Structures and Algorithms}, 2022.

\bibitem[Spe85]{Spencer85}
Joel Spencer.
\newblock Six standard deviations suffice.
\newblock {\em Transactions of the American Mathematical Society},
  289(2):679--706, 1985.

\bibitem[SS11]{SpielmanS11}
Daniel~A. Spielman and Nikhil Srivastava.
\newblock Graph sparsification by effective resistances.
\newblock {\em {SIAM} J. Comput.}, 40(6):1913--1926, 2011.

\bibitem[ST83]{SleatorT83}
Daniel~Dominic Sleator and Robert~Endre Tarjan.
\newblock A data structure for dynamic trees.
\newblock {\em J. Comput. Syst. Sci.}, 26(3):362--391, 1983.

\bibitem[ST04]{SpielmanT04}
Daniel~A. Spielman and Shang{-}Hua Teng.
\newblock Nearly-linear time algorithms for graph partitioning, graph
  sparsification, and solving linear systems.
\newblock In {\em Proceedings of the 36th Annual {ACM} Symposium on Theory of
  Computing, 2004}, pages 81--90. {ACM}, 2004.

\bibitem[SW09]{SpielmanW09}
Daniel~A. Spielman and Jaeoh Woo.
\newblock A note on preconditioning by low-stretch spanning trees.
\newblock {\em CoRR}, abs/0903.2816, 2009.

\bibitem[SW19]{SaranurakW19}
Thatchaphol Saranurak and Di~Wang.
\newblock Expander decomposition and pruning: Faster, stronger, and simpler.
\newblock In Timothy~M. Chan, editor, {\em Proceedings of the Thirtieth Annual
  {ACM-SIAM} Symposium on Discrete Algorithms, {SODA} 2019, San Diego,
  California, USA, January 6-9, 2019}, pages 2616--2635. {SIAM}, 2019.

\bibitem[Tro15]{Tropp-book}
Joel~A. Tropp.
\newblock An introduction to matrix concentration inequalities.
\newblock {\em Found. Trends Mach. Learn.}, 8(1–2):1–230, 2015.

\bibitem[ZBG{\etalchar{+}}22]{ZadehBGNSS22}
Sepehr~Abbasi Zadeh, Nikhil Bansal, Guru Guruganesh, Aleksandar Nikolov, Roy
  Schwartz, and Mohit Singh.
\newblock Sticky {B}rownian rounding and its applications to constraint
  satisfaction problems.
\newblock {\em {ACM} Trans. Algorithms}, 18(4):33:1--33:50, 2022.

\bibitem[ZLO15]{ZhuLO15}
Zeyuan~Allen Zhu, Zhenyu Liao, and Lorenzo Orecchia.
\newblock Spectral sparsification and regret minimization beyond matrix
  multiplicative updates.
\newblock In {\em Proceedings of the Forty-Seventh Annual {ACM} on Symposium on
  Theory of Computing, {STOC} 2015}, pages 237--245. {ACM}, 2015.

\end{thebibliography}
\newcommand{\etalchar}[1]{$^{#1}$}

\newpage
\begin{appendix}
\section{Proof of Theorem~\ref{thm:gen_rr20}}\label{app:modified_measure}

In this section we provide a proof of Theorem~\ref{thm:gen_rr20}. As much of the proof is the same, we defer exposition to \cite{RR20} for brevity and only provide technical details here. First we justify that we may assume $\tau \ge 1$. If the result holds for $\tau = 1$, then for any $\tau < 1$ we may consider instead the matrices $\ma'_i := \frac 1 \tau \ma_i$ which still satisfy $\sum_{i=1}^m |\ma'_i| \preceq \id_n$ and $\sum_{i=1}^n \Tr(|\ma'_i|) = 1$. Then in order to show a measure lower bound for the discrepancy body associated with $\ma_i$, it suffices to do so for the (smaller) body associated with $\ma'_i$. Hence we will assume $\tau \ge 1$ from now on. 

To parallel \cite{RR20}, we let
\begin{equation}\label{eq:msdef}\ms := \alla(\1_m),\quad \tms := \frac{1}{2} \Big(\ms + \frac{\tau}{n} \cdot \id_n\Big),\quad \ma_{C,D} \defeq (C+ D\|x\|_2^2) \cdot \id_n - \alla(x).\end{equation}
Note that $\tms \succeq \frac 1 {2n} \id_n$ by definition. The potential function we will use is:
\[
\Phi_{C,D} (x) := \Tr(\tilde{\ms} \cdot \ma_{C,D} (x)^{-1}).
\]

In order to analyze changes to the potential, we require a generalization of Lemma 11 in~\cite{RR20}.

\begin{lemma} \label{lem:MatrixTaylorApprox}
Let $\ma,\mb,\ms \in \R^{n \times n}$ be symmetric with $\ma, \ms \succ \mzero_n$ and $\normsop{\delta \ma^{-\half} \mb \ma^{-\half}} \leq \frac{1}{2}$.
Then $
 \Tr((\ma-\delta \mb)^{-1}\ms) = \Tr(\ma^{-1}\ms) + \delta \Tr(\ma^{-1}\mb\ma^{-1}\ms) + c\delta^2 \Tr(\ma^{-1}\mb\ma^{-1}\mb\ma^{-1}\ms)
$ for some $c \in [0, 2]$.
\end{lemma}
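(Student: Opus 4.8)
The plan is to prove the scalar-to-matrix lift of the standard resolvent Taylor expansion $(1-\delta t)^{-1} = 1 + \delta t + c\delta^2 t^2$ with $c \in [0,2]$, valid whenever $|\delta t| \le \frac12$. First I would reduce to a symmetric setting by conjugating: write $\ma - \delta\mb = \ma^{\half}(\id_n - \delta \mn)\ma^{\half}$ where $\mn \defeq \ma^{-\half}\mb\ma^{-\half}$ is symmetric with $\normsop{\delta\mn} \le \frac12$ by hypothesis. Then $(\ma-\delta\mb)^{-1} = \ma^{-\half}(\id_n - \delta\mn)^{-1}\ma^{-\half}$, and cyclicity of trace gives $\Tr((\ma-\delta\mb)^{-1}\ms) = \Tr((\id_n-\delta\mn)^{-1}\mm)$ where $\mm \defeq \ma^{-\half}\ms\ma^{-\half}$. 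Crucially $\mm \succeq \mzero_n$ since $\ms \succ \mzero_n$; this positivity is what will let me control the error term's sign.

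Next I would diagonalize $\mn = \mathbf{Q}\boldsymbol{\Lambda}\mathbf{Q}^\top$ with eigenvalues $\{\mu_j\}$ satisfying $|\delta\mu_j| \le \frac12$, and expand $(\id_n - \delta\mn)^{-1} = \id_n + \delta\mn + \delta^2\mn^2\mathbf{E}$, where $\mathbf{E} = \mathbf{Q}\,\diag{(1-\delta\mu_j)^{-1}}\,\mathbf{Q}^\top$ — indeed, in each eigendirection $(1-\delta\mu)^{-1} = 1 + \delta\mu + \delta^2\mu^2(1-\delta\mu)^{-1}$, exactly. Taking traces against $\mm$:
\[
\Tr((\id_n-\delta\mn)^{-1}\mm) = \Tr(\mm) + \delta\Tr(\mn\mm) + \delta^2\Tr(\mn^2\mathbf{E}\mm).
\]
Undoing the conjugation termwise (again by cyclicity) recovers $\Tr(\ma^{-1}\ms) + \delta\Tr(\ma^{-1}\mb\ma^{-1}\ms) + \delta^2\Tr(\ma^{-1}\mb\ma^{-1}\mb\ma^{-1}\ms)\cdot(\text{weighted average})$, so it remains only to show the coefficient $c$ multiplying $\Tr(\ma^{-1}\mb\ma^{-1}\mb\ma^{-1}\ms) = \Tr(\mn^2\mm)$ lies in $[0,2]$.

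For that final bound, the main point is that $|\delta\mu_j| \le \frac12$ implies $(1-\delta\mu_j)^{-1} \in [\frac23, 2] \subseteq [0,2]$, so $\mathbf{E}$ satisfies $\mzero_n \preceq \mathbf{E} \preceq 2\id_n$. Then I would write $\Tr(\mn^2\mathbf{E}\mm)$ using the PSD square roots: since $\mn^2 \succeq \mzero_n$ and $\mm \succeq \mzero_n$, set $\mathbf{P} \defeq \mm^{\half}\mn$ so that $\Tr(\mn^2\mathbf{E}\mm) = \Tr(\mathbf{P}^\top\mathbf{E}\mathbf{P}\cdot\text{—})$... more carefully, use cyclicity to write $\Tr(\mn^2\mathbf{E}\mm) = \Tr(\mn\mm^{\half}\cdot\mm^{\half}\mn\cdot\mathbf{E})$? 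That is not obviously a trace of $\mathbf{E}$ against a PSD matrix because $\mn\mm^{\half}\cdot\mm^{\half}\mn = \mn\mm\mn$ is PSD but $\mathbf{E}$ does not commute with it. The clean route: both $\Tr(\mn^2\mm)$ and $\Tr(\mn^2\mathbf{E}\mm)$ can be written as $\sum_{j,k} |\langle \mathbf{Q}_{:j}, \mm^{\half}\mathbf{Q}_{:k}\rangle|^2 \mu_k^2 \cdot w_k$ by inserting the eigenbasis of $\mn$ and the resolution $\mm = \mm^{\half}\mm^{\half}$, with $w_k = 1$ in the first case and $w_k = (1-\delta\mu_k)^{-1} \in [0,2]$ in the second; since all the coefficients $|\langle\cdot\rangle|^2\mu_k^2 \ge 0$, the ratio is a convex combination-type quantity giving $\Tr(\mn^2\mathbf{E}\mm) = c\,\Tr(\mn^2\mm)$ for some $c \in [0,2]$ (taking $c$ arbitrary in $[0,2]$ if $\Tr(\mn^2\mm) = 0$). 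The expected obstacle is precisely getting this last bookkeeping right without pretending $\mathbf{E}$ commutes with things it doesn't; expressing everything in the joint expansion over the $\mn$-eigenbasis sidesteps that and is the step I would be most careful about.
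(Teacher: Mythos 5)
Your proof is correct but takes a genuinely different route from the paper's. The paper abbreviates $\mm := \delta\ma^{-1}\mb$, expands $(\ma - \delta\mb)^{-1} = \sum_{k\ge 0}\mm^k\ma^{-1}$ as a Neumann series, isolates the $k\le 2$ terms, and bounds each tail term via cyclicity and matrix H\"older as $|\Tr(\mm^k\ma^{-1}\ms)| \le \normsop{\delta\ma^{-1/2}\mb\ma^{-1/2}}^{k-2}\cdot\delta^2\Tr(\ma^{-1}\mb\ma^{-1}\mb\ma^{-1}\ms)$, then sums the geometric series. You instead conjugate to the symmetric resolvent $(\id_n - \delta\mn)^{-1}$ with $\mn = \ma^{-1/2}\mb\ma^{-1/2}$, apply the \emph{exact} scalar identity $(1-t)^{-1} = 1 + t + t^2(1-t)^{-1}$ in the eigenbasis of $\mn$, and handle the remainder with the Parseval/resolution-of-identity bookkeeping you correctly flag as the delicate step (which uses $\ma^{-1/2}\ms\ma^{-1/2}\succeq\mzero_n$). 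Both arguments work. Your route avoids the infinite series and matrix H\"older, yields the slightly sharper $c\in[\tfrac23,2]$ (a nonnegatively-weighted average of the values $(1-\delta\mu_j)^{-1}\in[\tfrac23,2]$) where the paper's tail-summing triangle inequality only gives $c\in[0,2]$, and sidesteps a small imprecision in the paper's proof: the assertion ``$\normsop{\mm}\le\frac12$'' for the nonsymmetric $\mm=\delta\ma^{-1}\mb$ is really a statement about its spectral radius (the operator norm could be larger), though that does suffice for convergence of the series.
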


\begin{proof}
We abbreviate $\mm := \delta \ma^{-1}\mb$. As $\|\mm\|_{\textrm{op}} \leq \frac{1}{2}$, the matrix $\id_n-\mm$
is non-singular. We obtain
\[
(\ma - \delta \mb)^{-1} = (\ma (\id_n - \mm))^{-1} =(\id_n - \mm)^{-1} \ma^{-1} = \sum_{k=0}^\infty \mm^k \ma^{-1},
\]
so that 
\begin{align*} \Tr((\ma-\delta \mb)^{-1}\ms) &=  \sum_{k=0}^\infty \Tr(\mm^k \ma^{-1} \ms) =  \sum_{k=0}^2 \Tr(\mm^k \ma^{-1} \ms) + \sum_{k=3}^\infty \Tr(\mm^k \ma^{-1} \ms) \\
&= \Tr(\ma^{-1}\ms) + \delta\Tr(\ma^{-1} \mb \ma^{-1} \ms) + \delta^2 \Tr(\ma^{-1}\mb\ma^{-1}\mb\ma^{-1}\ms) \\
&+ \sum_{k=3}^\infty \Tr(\mm^k \ma^{-1} \ms).
\end{align*}
We proceed to bound the last sum above. For any $k \ge 3$,
\begin{align*}|\Tr(\mm^k \ma^{-1} \ms)| &= \Abs{\delta^{k}\inprod{\Par{\ma^{-\half}\mb\ma^{-\half}}^{k - 2} }{\ma^{-\half}\mb \ma^{-1} \ms \ma^{-1} \mb \ma^{-\half}}} \\
&\le \normop{(\delta\ma^{-\half}\mb\ma^{-\half})^{k - 2}} \cdot \delta^2 \Tr\Par{\ma^{-\half}\mb \ma^{-1} \ms \ma^{-1} \mb \ma^{-\half}} \\
&\le \normop{\delta\ma^{-\half}\mb\ma^{-\half}}^{k - 2} \cdot \delta^2 \Tr(\ma^{-1}\mb\ma^{-1}\mb\ma^{-1}\ms) \\
&\le \frac{1}{2^{k - 2}} \delta^2 \Tr(\ma^{-1}\mb\ma^{-1}\mb\ma^{-1}\ms),\end{align*}
where the first line used the cyclic property of trace, the second was the matrix H\"older inequality, and the third used $\normop{\ma^k} = \normop{\ma}^k$ for symmetric $\ma$. The conclusion then follows from combining the above two displays and applying the triangle inequality.
\end{proof}

We next give a variant of how the potential changes in a single step (Lemma 12 in \cite{RR20}).

\begin{lemma} \label{lem:PotentialFunctionUpdateInOneIteration}
Fix $\alpha \in (0, \half)$, $D \le 1$, and $m = \Omega(\alpha^{-2})$ for a sufficiently large constant.
Let $x \in \R^m$ and suppose $\ma_{C,D}(x) \succ \mzero$, $\Phi_{C,D}(x) \leq \frac{Dm^2\alpha^2}{10}$, and $\delta := \frac{\alpha}{45m^5 n^2}$.
Define  $F(y)$ so that
\begin{equation}\label{eq:Fydef}
\Phi_{C + \delta^2 F(y),D}(x+\delta y) = \Phi_{C,D}(x).
\end{equation}
Then there is a $\mx \in \R^{m \times m}$ with $\mzero \preceq \mx \preceq \id_m$ and $\Tr(\mx) \geq (1-\alpha^2) \cdot m$ so that $\E_{y \sim \Nor_{\leq m}(\0_m,\mx)}[F(y)] \le 0$
while always $|F(y)| \leq 7Dm^4 n$. Further, $\ma_{C + \delta^2 F(y), D} (x + \delta y) \succ \mzero$.
\end{lemma}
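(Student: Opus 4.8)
\textbf{Proof plan for Lemma~\ref{lem:PotentialFunctionUpdateInOneIteration}.} The plan is to mimic the second-order Taylor expansion argument of Lemma~12 in \cite{RR20}, using our generalized Taylor estimate Lemma~\ref{lem:MatrixTaylorApprox} with $\ms \gets \tms$ in place of the version used there, and to carefully track the modified potential $\Phi_{C,D}(x) = \Tr(\tms \cdot \ma_{C,D}(x)^{-1})$ and the effect of shifting $x$ by $\delta y$. First I would expand $\ma_{C,D}(x + \delta y) = \ma_{C,D}(x) - \delta\mb(y) - \delta^2(\ldots)$, where $\mb(y) \defeq \alla(y) - 2D\inprod{x}{y}\id_n$ collects the first-order terms and the quadratic-in-$\delta$ correction involves $D\norm{y}_2^2\id_n$; since $y$ is drawn from a truncated Gaussian supported on $\norm{y}_2 \le m$, and $\ma_{C,D}(x) \succ \mzero$ with the potential bound giving a lower bound on the smallest eigenvalue of $\ma_{C,D}(x)$ of order $\gtrsim 1/\Phi_{C,D}(x) \gtrsim 1/(Dm^2\alpha^2)$ times $\Tr(\tms) \ge \tau/(2n) \cdot$ (something), I would verify that for $\delta$ as small as $\frac{\alpha}{45m^5n^2}$ the operator-norm smallness hypothesis $\normop{\delta \ma_{C,D}(x)^{-1/2}\mb(y)\ma_{C,D}(x)^{-1/2}} \le \frac12$ of Lemma~\ref{lem:MatrixTaylorApprox} holds deterministically on the support of $y$. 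This is the routine part, though the bookkeeping with the extra trace factor $\tau/n$ needs care.

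Next I would define $F(y)$ implicitly via \eqref{eq:Fydef}, i.e. solve for the scalar shift in the $C$-argument needed to cancel the change in potential. Using that $\Phi_{C + s, D}(x)$ is monotone and smooth in $s$ (its $s$-derivative is $-\Tr(\tms\ma_{C,D}(x)^{-2})$, which is of controlled size because $\ma_{C,D}(x)^{-1}$ has bounded operator norm from the potential hypothesis and $\tms \preceq \id_n$), I would invert the relation to get $F(y)$ and read off the crude bound $|F(y)| \le 7Dm^4n$ by plugging in the worst-case first- and second-order terms over $\norm{y}_2 \le m$. The key identity to extract is the first-order (in $\delta$) expansion $F(y) = $ (linear functional of $\mb(y)$, hence of $y$) $+ O(\delta)\cdot(\text{quadratic in } y)$, obtained by matching the $\delta^1$ terms of Lemma~\ref{lem:MatrixTaylorApprox} applied to both sides of \eqref{eq:Fydef}.

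The main obstacle — and the one genuinely new ingredient relative to \cite{RR20} — is choosing the covariance $\mx \preceq \id_m$ with $\Tr(\mx) \ge (1-\alpha^2)m$ so that $\E_{y\sim\Nor_{\le m}(\0_m,\mx)}[F(y)] \le 0$. The linear-in-$y$ part of $F$ integrates to (essentially) zero for any mean-zero $y$, so what must be controlled is the quadratic part: $\E[F(y)]$ is, up to lower-order $\delta$ corrections, a weighted quadratic form $-\inprod{\mm_\star}{\mx}$ for an explicit PSD matrix $\mm_\star$ built from $\alla$, $\tms$, and $\ma_{C,D}(x)^{-1}$ (the ``dangerous directions''), minus a positive multiple of $\Tr(\mx)$ coming from the $D\norm{y}_2^2$ term. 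Following \cite{RR20}, I would take $\mx$ to be the projection $\id_m - \proj$ onto the orthogonal complement of the top-eigenspace of $\mm_\star$, keeping all but at most $\alpha^2 m$ directions; the trace bound $\Tr(\alla(\1_m)) \le \tau$ together with $\alla(\1_m)\preceq\id_n$ and $m = \Omega(\alpha^{-2})$ is exactly what is needed to argue the discarded ``dangerous'' subspace has dimension at most $\alpha^2 m$, so that $\Tr(\mx)\ge(1-\alpha^2)m$. Verifying that on the surviving subspace the $D$-contribution dominates the residual quadratic form — using $D \le 1$, the potential bound $\Phi_{C,D}(x)\le \frac{Dm^2\alpha^2}{10}$, and truncation at radius $m$ to kill tail terms — gives $\E[F(y)]\le 0$. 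Finally, $\ma_{C+\delta^2 F(y),D}(x+\delta y)\succ\mzero$ follows since the potential is finite at the shifted point (equal to $\Phi_{C,D}(x) < \infty$) and $\ma_{C,D}$ depends continuously on its arguments, so positive definiteness is preserved along the path.
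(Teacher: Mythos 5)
Your high-level plan — expand $\Phi_{C+\delta^2 F(y),D}(x+\delta y)$ via the generalized Taylor estimate Lemma~\ref{lem:MatrixTaylorApprox}, solve implicitly for $F(y)$, and choose $\mx$ to be a projection onto a subspace that avoids ``dangerous'' directions so that the quadratic term is dominated by the $-D\norm{y}_2^2$ contribution — is the right one and matches the paper's structure. However, two of your ingredients are off in ways that would derail the argument.

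First, your subspace construction omits the linear constraints the paper imposes, and these are load-bearing. The paper's $H$ is not the orthogonal complement of a top eigenspace of a derived matrix: it is the coordinate subspace $\{y : y_i = 0,\ i\notin\mathcal{I}\}$ (with $\mathcal{I}$ obtained by thresholding $\Tr(\ma^{-1}\ma_i)$ and controlled purely by Markov's inequality on $\sum_i \Tr(\ma^{-1}\ma_i) = \Tr(\ma^{-1}\ms)$), further intersected with $\inprod{x}{y}=0$ and the three trace-zero constraints $\inprod{\alla(y)}{\ma^{-1}\tms\ma^{-1}}=\inprod{\alla(y)}{\ma^{-1}\tms\ma^{-2}}=\inprod{\alla(y)}{\ma^{-2}\tms\ma^{-1}}=0$. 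The constraint $\inprod{x}{y}=0$ is what makes the first-order perturbation of $\ma_{C,D}$ equal to $-\delta\alla(y)$ rather than $-\delta(\alla(y) - 2D\inprod{x}{y}\id_n)$, and the constraint $\inprod{\alla(y)}{\ma^{-1}\tms\ma^{-1}}=0$ makes the $O(\delta)$ term in the Taylor expansion of \eqref{eq:Fydef} vanish \emph{pointwise}. Without that pointwise cancellation, solving for $F(y)$ gives $F(y) \approx \frac{1}{\delta}\cdot\frac{\Tr(\ma^{-1}\alla(y)\ma^{-1}\tms)}{\Tr(\ma^{-2}\tms)}$, which is only $O(m/\delta)$; with $\delta = \alpha/(45m^5n^2)$ this is far larger than $7Dm^4n$, so your claim that ``the linear part integrates to zero'' is not enough — it must be made to vanish identically, and $|F(y)|$ cannot be read off from worst-case first-order terms. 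The remaining two trace-zero constraints are then needed in the expectation bound to replace $\mb = \tmb - \lambda\id_n$ by $\tmb$ up to an error of order $\delta^2$ (this is Claim~\ref{claim:efbound}).

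Second, you mis-attribute the role of the trace hypothesis $\Tr(\alla(\1_m)) \le \tau$. It does not control the number of dangerous directions — that count is $|[m]\setminus\mathcal{I}| \le \alpha^2 m / 1.1$ by Markov, entirely independently of $\tau$, and the $+4$ for the extra linear constraints is absorbed using $m = \Omega(\alpha^{-2})$. The trace bound's actual role is to make $\tms = \tfrac12(\ms + \tfrac{\tau}{n}\id_n) \succeq \tfrac{1}{2n}\id_n$ well-conditioned from below, which together with $\normop{\tms\ma^{-1}} \le \Phi_{C,D}(x)$ yields $\normop{\ma^{-1}} \le 2n\Phi_{C,D}(x)$. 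That operator-norm bound is what makes the Taylor hypothesis $\normop{\delta\ma^{-1/2}\mb\ma^{-1/2}} \le \tfrac12$ and the final $|F(y)| \le 7Dm^4n$ bound go through, and it is also what forces $\delta$ to have the specific $m^{-5}n^{-2}$ scaling.
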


\begin{proof}[Proof of Lemma~\ref{lem:PotentialFunctionUpdateInOneIteration}]
To simplify notation, we abbreviate matrices 
\[
\ma := \ma_{C,D}(x), \quad  \tmb :=  \alla(y), \quad \mb := \tmb - \delta (D\|y\|_2^2  + F(y)) \id_n.\]
Next, we define an index set
\[
\mathcal{I} := \Big \{i \in [m] \mid \Tr(\ma^{-1} \ma_i) \leq \frac{1.1}{\alpha^2 m} \cdot \Tr(\ma^{-1} \ms)\Big \}.
\]
By Markov's inequality, we have $|\mathcal{I}| \geq (1 - \frac{\alpha^2}{1.1})m$. Consider the subspace
\begin{align*}
H :=\Big\{y \in \R^m \mid y_i = 0 \; \forall i \notin \mathcal{I}, \left<x,y\right> = 0, \inprod{\alla(y)}{\ma^{-1}\tms\ma^{-1}} = 0, & \inprod{\alla(y)}{\ma^{-1}\tms\ma^{-2}} = 0, \\ &\inprod{\alla(y)}{\ma^{-2}\tms\ma^{-1}} = 0 \Big\}
\end{align*}
so $\dim(H) \ge |\mathcal{I}| -4 \ge (1-\alpha^2)m$. 
We choose $\mx$ as the projection matrix onto $H$. The remaining proof is organized in two claims: the first bounds $|F(y)|$, and the second controls $\E F(y)$.

\begin{claim}\label{claim:boundF}
If $y \in H$ and $\norm{y}_2 \le m$, $|F(y)| \le 7Dm^4 n$, $\normop{\mb} \le 4m$, and $\normsop{\delta \ma^{-\half} \mb \ma^{-\half}} \le \half$.
\end{claim}
\begin{proof}
Since \eqref{eq:Fydef} holds, monotonicity of the inverse shows the difference matrix
\[\ma_{C+\delta^2 F(y),D}(x+\delta y)-\ma_{C,D}(x) = \delta^2 (D\|y\|_2^2 + F(y)) \cdot \id_n - \delta \alla(y)\]
has at least one positive and one negative eigenvalue. Since $\delta \alla(y) \succeq -\delta\norm{y}_\infty \id_n \succeq -\delta m \id_n$, in order for the difference matrix to have a positive eigenvalue we must have $F(y) \ge -\frac {2m} \delta$,
and similarly $\delta \alla(y) \preceq \delta m \id_n$ implies $F(y) \le \frac {m} \delta$. Hence, $|F(y)| \le \frac {2m} \delta$, which implies the bound:
\begin{equation}\label{eq:bbound}\normop{\mb} \le \normop{\alla(y)} + \delta\Par{Dm + \frac m \delta} \le 4m.\end{equation}
Further, since $\normsop{\tms \ma^{-1}} \le \Phi_{C, D}(x) \le \frac{Dm^2}{10}$, we have
\begin{equation}\label{eq:abound}
\begin{aligned}
\normsop{\delta \ma^{-\half}\mb\ma^{-\half}} &\le \delta \normsop{\ma^{-\half}}^2 \normsop{\mb} \le 4m\delta \normop{\ma^{-1}} \\
&\le 4m\delta \normsop{\tms^{-1}}\normsop{\tms \ma^{-1}} \le \frac {2Dm^3\delta} 5 \normsop{\tms^{-1}} \le \frac{4Dm^3n \delta}{5} \le \half,
\end{aligned}
\end{equation}
where we used our bounds on $\mb$ and $\tms \ma^{-1}$, and recalled from \eqref{eq:msdef} that $\tms \succeq \frac 1 {2n} \id_n$. It remains to obtain a stronger bound on $|F(y)|$. By using Lemma~\ref{lem:MatrixTaylorApprox} and the definition of the subspace $H$,
\begin{equation}\label{eq:phi_expand}
\begin{aligned}
0 &= \Phi_{C+\delta^2 F(y),D}(x + \delta y) - \Phi_{C,D}(x) = \Tr\Par{\Par{\ma - \delta \mb}^{-1} \tms} - \Tr\Par{\ma^{-1}\tms} \\
&= \delta\Tr\Par{\ma^{-1}\mb\ma^{-1}\tms} + c\delta^2 \Tr\Par{\ma^{-1}\mb\ma^{-1}\mb\ma^{-1}\tms} \\
&= \delta^2\Par{-\Par{D\norm{y}_2^2 + F(y)}\Tr(\ma^{-2}\tms) + c\Tr\Par{\ma^{-1}\mb\ma^{-1}\mb\ma^{-1}\tms}},
\end{aligned}
\end{equation}
for some $c \in [0, 2]$, since $\Tr(\ma^{-1}\alla(y)\ma^{-1}\tms) = 0$. This implies
\[|F(y)| \le \frac{c\Tr\Par{\ma^{-1}\mb\ma^{-1}\mb\ma^{-1}\tms}}{\Tr\Par{\ma^{-2}\tms}} + D\norm{y}_2^2 \le 2\normop{\ma^{-1}}\normop{\mb}^2 + Dm^2 \le 7Dm^4 n,\]
where we used the matrix H\"older inequality, \eqref{eq:bbound}, and \eqref{eq:abound} which implies $\normsop{\ma^{-1}} \le \frac{Dm^2 n}{5}$.
\end{proof}

\begin{claim}\label{claim:efbound}
We have
\[\E_{y \sim \Nor_{\le m}(\0_m, \mx)}\Brack{\Tr\Par{\ma^{-1}\mb\ma^{-1}\mb\ma^{-1}\tms}} \le \Par{\frac{ Dm }{200}+ \frac{1.1}{\alpha^2 m}\Tr\Par{\ma^{-1}\ms}}\Tr\Par{\ma^{-2}\tms}.\]
\end{claim}
\begin{proof}
Let $\lam \defeq \delta(D\norm{y}_2^2 + F(y)) \le 7.7\delta Dm^4 n$. By the last two constraints in the definition of $H$, and using that $\mb = \tmb - \lam\id_n$, for any $y \in H$ with $\norm{y}_2 \le m$,
\begin{align*}
\Abs{\Tr\Par{\ma^{-1}\mb\ma^{-1}\mb\ma^{-1}\tms} - \Tr\Par{\ma^{-1}\tmb\ma^{-1}\tmb\ma^{-1}\tms}} &= \lam^2\Tr\Par{\ma^{-3}\tms} \le \lam^2 \normop{\ma^{-1}} \Tr\Par{\ma^{-2}\tms} \\
&\le 12\delta^2D^3m^{10}n^3 \Tr\Par{\ma^{-2}\tms}.
\end{align*}
Moreover, defining $\mw_i \defeq \tms^{\half}\ma^{-1}\ma_i\ma^{-\half}$ for all $i \in [n]$,
\begin{align*}
\E_{y \sim \Nor_{\le m}(\0_m, \mx)}\Brack{\Tr\Par{\ma^{-1}\tmb\ma^{-1}\tmb\ma^{-1}\tms}} &= \E_{y \sim \Nor_{\leq m}(\0_m,\mx)}\Brack{\sum_{i \in \mathcal{I}} \sum_{j \in \mathcal{I}} y_iy_j\inprod{\mw_i}{\mw_j}} \\
&= \E_{y \sim \Nor_{\leq m}(\0_m,\mx)}\Brack{\norm{\sum_{i \in \mathcal{I}} y_i \mw_i}_{\textup{F}}^2} \le \sum_{i \in \mathcal{I}} \norm{\mw_i}_{\textup{F}}^2 \\
&= \sum_{i \in \mathcal{I}} \Tr\Par{\ma^{-1}\ma_i \ma^{-1}\ma_i \ma^{-1} \tms} \\
&\le \sum_{i \in \mathcal{I}}\Tr\Par{\ma^{-1}\ma_i\ma^{-1}\tms} \Tr\Par{\ma^{-1}\ma_i} \\
&\le \frac{1.1}{\alpha ^2 m}\Tr\Par{\ma^{-1}\ms}\Tr\Par{\ma^{-2}\tms}.
\end{align*}
The second line used Lemma 9 of \cite{RR20} with the fact that when $y$ is truncated, the norm is zero. The fourth line used Lemma 10 of \cite{RR20}, and the last used the definition of $\mathcal{I}$ and $\alla(\1_m) \preceq \id_n$. The conclusion follows by combining the above two displays with our choice of $\delta$.
\end{proof}

Taking expectations over \eqref{eq:phi_expand} and applying Claim~\ref{claim:efbound} shows
\begin{align*}
0 \le \Par{-0.44Dm - \E_{y \sim \Nor_{\leq m}(\0_m,\mx)}[F(y)] + \frac{2.2}{\alpha^2 m}\Tr\Par{\ma^{-1}\ms}}\Tr\Par{\ma^{-2}\tms},
\end{align*}
where we used $\dim(H) \ge \half m$ and Corollary 5 of \cite{RR20} to lower bound $\E_{y \sim \Nor_{\leq m}(\0_m,\mx)}[\norm{y}_2^2] \ge 0.45Dm$. Rearranging, and using $\ms \preceq 2\tms$ and our potential bound, then yields
\begin{align*}
\E_{y \sim \Nor_{\leq m}(\0_m,\mx)}[F(y)] \le \frac{4.4}{\alpha^2 m}\Tr(\ma^{-1}\tms) - 0.44Dm \le 0.
\end{align*}
Finally, note that since $\Tr(\ma^{-1}\tms) \le \frac{Dm^2\alpha^2}{10}$ and $\tms \succeq \frac 1 {2n} \id_n$, no eigenvalue of $\ma$ can be smaller than $\frac{5\tau}{Dm^2\alpha^2 n}$. Hence, the last conclusion on positive semidefiniteness follows from
\[\ma_{C + \delta^2 F(y), D}(x + \delta y) - \ma = \delta^2\Par{D\norm{y}_2^2 + F(y)}\id_n - \delta\alla(y) \succeq -3m\delta \id_n \succ -\frac{1}{2n} \id_n. \qedhere \]
\end{proof}
\restategenrr*
\begin{proof}
Assume the bounds in Lemma~\ref{lem:PotentialFunctionUpdateInOneIteration} are met, and define $\eps \defeq \sqrt{\frac \tau m}$. We overload
\[
\ma_i \gets \begin{pmatrix} \ma_i & \mzero \\ \mzero & -\ma_i \end{pmatrix} \in \R^{2n \times 2n} \text{ for all } i \in [n].
\]
We run a hypothetical algorithm for $T \defeq \frac 1 {\delta^2}$ iterations, with parameter choices
\[C \defeq \frac{10\eps}{\alpha},\; D \defeq \frac{2\eps}{\alpha m},\; \delta \defeq \frac{\alpha }{45 m^5 n^2}.\]
Our algorithm initalizes $x^{(0)} \gets \0_m$ and for $t \in [T]$, lets $\mx^{(t)}$ be the covariance matrix given by Lemma~\ref{lem:PotentialFunctionUpdateInOneIteration} applied to $x^{(t - 1)}$, samples $y^{(t)} \sim \Nor(\0_m, \mx^{(t)})$ and $z^{(t)} \sim \Nor(\0_m, \id_m - \mx^{(t)})$, and updates $x^{(t)} \gets x^{(t - 1)} + \delta y^{(t)}$. Analagously to the proof of Theorem 1 in \cite{RR20}, Azuma's inequality and Gaussian concentration show that all of the following hold with probability at least $\half$.
\begin{enumerate}
    \item $\norm{y^{(t)}}_2 \le m$ for all $t \in [T]$.
    \item $\delta^2 \sum_{t \in [T]}F(y^{(t)}) \le \frac C {10}$.
    \item $\norm{Y}_2^2 \le 5m$ and $\norm{Z}_2^2 \le 5\alpha^2 m$, where $Y \defeq \sum_{t \in [T]} y^{(t)}$ and $Z \defeq \sum_{t \in [T]} z^{(t)}$.
\end{enumerate}
We remark that the first claim above results in the requirement that $n \le 2^{\frac m 5}$, since we are union bounding over $n^4\textup{poly}(m)$ steps and Gaussian concentration fails with probability $2^{-m}$ in each step by Corollary 5 of \cite{RR20}. Further, the initial value of the potential function is $2C^{-1} \tau = \frac{Dm^2\alpha^2}{10}$, so the potential bound in Lemma~\ref{lem:PotentialFunctionUpdateInOneIteration} is met if we update $C$ by $\delta^2 F(y^{(t)})$ each iteration.  
Under these events, since $Y + Z$ is a draw from $\Nor(\0_m, \id_m)$, we have shown 
\[\normop{\alla(Y)} \le (1.1 C + D\norm{Y}_2^2) \le \frac{21\eps}{\alpha}.\]
Hence, at least half of draws from $\Nor(\0_m, \id_m)$ are within distance $\alpha\sqrt{5m}$ from $\frac{21}{\alpha}\set_{\eps, \alla}$. Reparameterizing $\alpha$ by a constant and adjusting $C_0$ then yields the claim, up to the restriction $m = \Omega(\alpha^{-2})$ in Lemma~\ref{lem:PotentialFunctionUpdateInOneIteration}. If instead $\alpha \ge \frac{C_0}{\sqrt{2m}}$, since $\normop{\alla(x)}^2 \le \normf{\alla(x)}^2 \le 2\tau$ with probability $\ge \half$ by Markov's inequality, $\frac{C_0} \alpha \set_{\eps, \alla}$ already covers the Gaussian measure without the addition of $\alpha \sqrt{m}\ball_2^m$.
\end{proof} %
\section{Optimization subroutines}\label{app:modified_boxspec}

\subsection{Discussion of Proposition~\ref{prop:modified_boxspec}}

In this section, we give a discussion of how to modify the algorithm in \cite{JambulapatiT23} to obtain Proposition~\ref{prop:modified_boxspec}. In particular, Proposition~\ref{prop:modified_boxspec} follows from modifying Theorem 3 of \cite{JambulapatiT23} applied to the regularized box-spectraplex primal-dual formulation of \eqref{eq:primaldual_opnorm}, whose notation we will follow throughout this section. We observe that $\tmv(\tma_i) = \tmv(|\tma_i|) = \tmv(\ma_i)$, since we assumed all the $\ma_i \in \PSD^d$, and there is no ``$\mb$'' term in the notation of \cite{JambulapatiT23}, Theorem 3. Further,
\[\normop{\sum_{i \in [m]} \tma_i} = \normop{\sum_{i \in [m]} \ma_i} = \normop{\alla(\1_m)},\]
so in the notation of \cite{JambulapatiT23}, Theorem 3, it suffices to set $L_{\textup{tot}} = L_{\alla} = \normop{\alla(\1_m)}$. It remains how to handle the regularization term, denoted in this section by
\[q(x) \defeq \lam\norm{x - g}_2^2.\]
It is known in the literature how to modify extragradient methods to handle composite terms (e.g.\ Section 5.2, \cite{CarmonJST19}), but because the algorithm in \cite{JambulapatiT23} is somewhat nonstandard, we give a brief description here. As seen in the proofs of Lemma 4, Corollary 2 of \cite{JambulapatiT23}, it suffices to modify the left-hand sides of Eq.\ (10) in \cite{JambulapatiT23} to read for iterates $z = (x, \my)$, $z' = (x', \my')$,
\begin{align*}
\eta\inprod{g(z) + \nabla q(x')}{z' - z^+} &\le V_z^{(\alpha + \beta)}(z^+) - V^{(\alpha)}_{z'}(z^+) - V^{(\alpha)}_z(z'), \\
\eta\inprod{g(z') + \nabla q(x')}{z^+ - u} &\le 2V_z^{(\alpha + \beta)}(u) - 2V^{(\alpha + \beta)}_{z^+}(u) - 2V^{(\alpha + \beta)}_z(z^+) \\
&+ 2V^{\gamma h}_{\bmy}(u^{\mathsf{y}}) - 2V^{\gamma h}_{\bmy^+}(u^{\mathsf{y}}),
\end{align*}
where $u^{\mathsf{y}}$ is the spectraplex component of $u \in [-1, 1]^m \times \Delta^{2n \times 2n}$, and $\bmy$, $\bmy^+$ are auxiliary iterates maintained by the extragradient step oracles of \cite{JambulapatiT23}. Achieving the second equation above is immediate by giving the extragradient step oracle (Algorithm 5) in \cite{JambulapatiT23} the linear term $g(z') + \nabla q(x')$, instead of just $g(z')$. To achieve the first equation, we modify the subproblems solved in Lines 3 and 5 of the gradient step oracle (Algorithm 4) in \cite{JambulapatiT23} to minimize a linear term, plus $r(\cdot, \my)$, plus $\eta q$. It is simple to check that first-order optimality conditions on $x'$ in Corollary 3 then result in the additional $\nabla q(x')$ term in the first line above, and no other proofs are changed.

\subsection{Proof of Lemma~\ref{lem:l2-l1_games}}

\restategames*
\begin{proof}
It suffices to prove that in time $O((k + n)n^2\log^2(m)\eps^{-2})$, we can return a point satisfying
\[\E\Brack{\norm{\ma x}_\infty + \lam\norm{x - v}_2^2} \le \min_{x' \in [-1, 1]^n} \norm{\ma x'}_\infty + \lam\norm{x' - v}_2^2 + \frac \eps 2,\]
since Markov's inequality results in an $\eps$-suboptimal point with probability $\ge \half$, and then we can run $O(\log \frac 1 \delta)$ independent runs and take the best function value. To do this, we reparameterize $\ma$ as the vertical concatenation of $\sqrt{n}\ma$ and $-\sqrt{n}\ma$, and recast the objective as the minimax problem
\begin{equation}\label{eq:minimax_l1l2}\min_{x \in [-\frac 1 {\sqrt n}, \frac 1 {\sqrt n}]^n} \max_{y \in \Delta^m} y^\top \ma x + \lam\norm{\sqrt n x - v}_2^2.\end{equation}
It then suffices to apply Proposition 2 of \cite{CarmonJST20}, following the $\ell_2$-$\ell_1$ local norm setup in Table 6, with the composite extension in Lemma 13 (setting $Q(x, y) = \lam\norm{x - v}_2^2$). Specifically, Proposition 2 of \cite{CarmonJST20} requires a local gradient estimator $\tg$ (see Definition 3 of that paper) for the operator
\[g(x, y) = \Par{\ma^\top y, -\ma x}\]
corresponding to the bilinear component of our minimax problem \eqref{eq:minimax_l1l2}. We use
\[\tg(x, y) = \Par{\ma_{i:}, -\ma_{:j} \cdot \frac{x_j}{p_j}},\]
where $i \in [m]$ is randomly selected with probability $y_i$, and $j \in [n]$ is randomly selected with probability $p_j = x_j^2\norm{x}_2^{-2}$. This estimator $\tg$ is unbiased for $g$, meeting the first criterion of Definition 3 in \cite{CarmonJST20}. Further, since all rows of $\ma$ have $\ell_2$ norm bounded by $R\sqrt n$ (after rescaling by $\sqrt n$), Lemma 21 of \cite{CarmonJST20} with $w_0$ set to the all-zeroes vector shows $\tg$ satisfies the second criterion of Definition 3 in \cite{CarmonJST20} with $L = O(R \sqrt n)$. Finally, for the $\ell_2$-$\ell_1$ local norm setup, Table 6 in \cite{CarmonJST20} shows $\Theta = O(\log m)$. In conclusion, Proposition 2 of \cite{CarmonJST20} shows that after
\[O\Par{\frac{L^2 \Theta}{\eps^2}} = O\Par{\frac{nR^2 \log(m)}{\eps^2}}\]
iterations, we obtain expected suboptimality gap $\frac \eps 2$ as desired. In each iteration, we can explicitly update the vector $x$ and recompute the sampling distribution proportional to $x^2$ in $O(n)$ time. Because the update to $y$ is $k$-sparse, Section 5.1 of \cite{CarmonJST20} (see also Lemma 5.4 of \cite{JainSS23} for a brief summary) shows how to update $y$ and maintain a sampling distribution proportional to it in time $O(k \log(m))$ per iteration. Hence, each iteration takes time $O(n + k\log(m))$ as desired.
\end{proof} %
\section{Low-distortion subgraph construction}\label{app:subgraph_construct}

In this section, we give a proof of the low-distortion subgraph construction claimed in Proposition~\ref{prop:js21_subgraph}.

\restatejssubgraph*
\begin{proof}
This construction is implicit in the proof of  Theorem~A.4 in \cite{JambulapatiS21}: we summarize the details here for completeness. Theorem~1.9 of \cite{JambulapatiS21} gives a ``path sparsification'' algorithm, which takes as input an $n$-node $m$-edge graph $G$ and parameter $q \geq 1$ and returns a $(q, O(\log^5 n))$-path sparsifier with $O(n q \log^3 n)$ edges in $O(m + nq \log^{13} n)$ time. 

In addition, Theorem~2.5 of \cite{JambulapatiS21} provides an algorithm which takes as input a graph $G$, parameters $k, \gamma \geq 2$, and a path sparsification algorithm which runs on $n’$-vertex $m’$-edge graphs and returns a $(10 \beta, \beta)$-path sparsifier (for some value $\beta$) with $S(m’,n’)$ edges in $T(m’,n’)$ time. It outputs a subgraph $H$ with at most 
\[
n-1 + O \left( \frac{m}{\gamma} + \frac{m \log \gamma}{k^2} \right) + S\left( O(m), O\Par{\frac m k} \right)
\]
edges such that $\Tr (\lap_H^{\dagger}\lap_G)\leq m \hat{\alpha}$ for 
\[
\hat{\alpha} = O \left( \exp \left( \sqrt{ 8 \log \gamma \cdot \log \left( 48 \log k \sqrt{\log \gamma} \right) } \right) \log k \sqrt{\log \gamma} \right).
\]
Additionally, the algorithm runs in $O(m + T(O(m), O(\frac m k) )$ time.

Choosing $q = O(\log^5 n')$ in Theorem~1.9 gives an algorithm which runs on $n'$-vertex $m'$-edge graphs, returns a $(10 \beta, \beta)$-path sparsifier for $\beta =  O(\log^5 n')$, and has $S(m',n') = O( n' \log^8 n')$ and $T(m',n') = O(m' + n' \log^{18} n')$. Choosing $k = \gamma \log^{18} m$ in Theorem~2.5 and combining the above gives an algorithm which takes in $G$ and outputs a subgraph $H$ in time 
\[
O\Par{m + T\Par{ O(m), O\Par{\frac m k} } } = O\Par{m + \frac m k\log^{18} m } = O(m)
\]
with 
\begin{align*}
n - 1 + O\left( \frac{m}{\gamma} + \frac{m \log \gamma}{k^2} \right) + S\Par{O(m), O\Par{\frac m k}} &= n-1 + O\left( \frac{m}{\gamma} \right) + \frac{m}{k} \log^8 m \\
&=  n-1 + O\left( \frac{m}{\gamma} \right)
\end{align*}
edges, such that  $\Tr (\lap_H^\dagger \lap_G ) \leq m \hat{\alpha}$ with 
\begin{equation}
\label{eq:alphabound}
\hat{\alpha} = O \left( \exp \left( \sqrt{ 8 \log \gamma \cdot \log \left( 48 \log \left( \gamma \log^{18} n \right) \sqrt{\log \gamma} \right) } \right) \log \left( \gamma \log^{18} n \right) \sqrt{\log \gamma} \right). 
\end{equation}

To simplify the above expression, note that
\[
\log \left(\gamma \log^{18} n \right) \sqrt{\log \gamma} \leq 18\sqrt{\log \gamma} \log \log n  + (\log \gamma)^{\frac 3 2} \leq 20 (\log \log n)^{\frac 3 2} + 20 (\log \gamma)^{\frac 3 2}.
\]
Since $\log(a+b) \leq 2\max \{ \log a , \log b\}$ for $a,b \geq 2$, we have
\begin{align*}
8 \log \gamma \cdot \log \left( 48 \log \left( \gamma \log^{18} n \right) \sqrt{\log \gamma} \right) &\leq 16 \log \gamma \cdot \Par{\log 1000 + \frac 3 2\max \Par{\log \log \gamma, \log \log \log n  } } \\
&\leq 400 + 36 \max \Par{ \log \gamma \cdot \log \log \gamma, \log \gamma \cdot \log \log \log n } \\
&\leq 400 + 36 \max \Par{ \log \gamma \cdot \log \log \gamma , \log \log n \cdot \log \log \log n }
\end{align*}
where the last inequality holds since $\log \log \log n \geq \log \log \gamma$ only when $\gamma \leq \log n$.
Substituting the above bounds into \eqref{eq:alphabound} and using $\sqrt{a+b} \leq \sqrt{a}+\sqrt{b}$, we have
\begin{align*}
\hat{\alpha} &= O\Par{ \exp \Par{ 20 + 6 \max \Par{ \sqrt{\log \gamma \cdot \log \log \gamma}, \sqrt{\log \log n \cdot \log \log \log n }} }} \Par{(\log \log n)^{\frac 3 2} + (\log \gamma)^{\frac 3 2} )} \\
&= \bO\Par{ \exp(6 \sqrt{\log \gamma \cdot \log \log \gamma}) \exp(6 \sqrt{\log \log n \cdot \log \log \log n }) (\log \gamma)^{\frac 3 2} } = O ( \gamma^{o(1)} (\log n)^{o(1)} ).
\end{align*}
Thus the algorithm described above proves the desired claim.
\end{proof}
 \end{appendix}

\end{document}